\documentclass[sigconf,screen,authorversion]{acmart}

\usepackage{amsmath,amsfonts}

\usepackage{amssymb} 
\usepackage{ifxetex}
\usepackage[T1]{fontenc}
\usepackage[utf8]{inputenc}
\usepackage[cal=boondox]{mathalfa}
\usepackage{stmaryrd}
\usepackage{url}
\usepackage{tikz-cd}
\usepackage{amsthm}
\usepackage{mathpartir}

\newenvironment{proofof}[1]
{\begin{proof}[Proof of {#1}]}
{\end{proof}}
\newenvironment{proofsketch}
{\begin{proof}[Proof (sketch)]}
{\end{proof}}
\newenvironment{construction}
{\begin{proof}[Construction]}
{\end{proof}}

\renewcommand{\bar}[1]{\overline{#1}}

\newcommand\sym[1]{\mathsf{#1}}


\newcommand{\cat}[1]{\mathbb{#1}}

\newcommand{\id}{\mathsf{id}}

\newcommand{\slice}[2]{#1/{#2}}

\newcommand{\inv}[1]{#1^{-1}}

\renewcommand{\L}{\mathsf{L}}
\newcommand{\LCof}{\mathsf{L}_{\sF}}
\newcommand{\R}{\mathsf{R}}
\newcommand{\RFam}{\mathsf{R_{Fam}}}
\newcommand{\REl}{\mathsf{R_{El}}}

\newcommand{\tick}{\mathsf{tick}}


\newcommand{\opcat}[1]{{{#1}^{\mathrm{op}}}}

\newcommand{\timeobj}[2]{(#1;#2)}

\newcommand{\PSh}[1]{\mathsf{PSh}(#1)}

\newcommand{\pshfunc}[2]{#2 \cdot #1}

\newcommand{\FSA}{\mathcal{E}}
\newcommand{\FSB}{\mathcal{E'}}
\newcommand{\FSC}{\mathcal{E''}}

\renewcommand{\vartheta}{\delta}



\newcommand{\Set}{\mathsf{Set}}


\newcommand{\Fam}{\mathsf{Fam}}

\newcommand{\trm}{\mathsf{El}}

\newcommand{\p}[0]{\mathsf{p}}
\newcommand{\q}[0]{\mathsf{q}}
\newcommand{\compr}[2]{#1.#2}




\newcommand{\clkel}{\chi}
\newcommand{\clkset}{\chi}





\newcommand{\pair}[2]{\left(#1,#2\right)}
\newcommand{\Univ}[0]{\mathsf{U}}
\newcommand{\El}[1]{\mathsf{El}\,#1}

\newcommand{\Bool}[0]{\mathsf{Bool}}
\newcommand{\Nat}[0]{\nats}

\newcommand{\refl}[1]{\mathsf{refl}\,#1}

\newcommand{\dprod}[3]{\ensuremath{{\textstyle\prod\left(#1 : #2\right) . #3}}}

\newcommand{\subst}[2]{[#2/#1]}

\newcommand{\cuba}[1]{{\mathsf{TL}(#1)}}

\newcommand{\force}{\mathsf{force}}

\newcommand{\inl}{\mathsf{inl}}
\newcommand{\inr}{\mathsf{inr}}


%

\newcommand{\forallcode}{\ensuremath{\overline{\forall}}}

\newcommand{\latbindcode}[3]{\overline{\triangleright}\, (#1 \!:\!#2) . #3}

\newcommand{\timescode}{\overline{\times}}
\newcommand{\gStrcode}{\overline\gStr}
\newcommand{\natscode}{\overline\nats}




\newcommand{\fc}[1]{\mathsf{fc}(#1)}



\newcommand{\later}{\triangleright}
\DeclareMathOperator{\searlier}{\blacktriangleleft}
\DeclareMathOperator{\slater}{\blacktriangleright}

\newcommand{\clk}{\mathrm{Clk}}

\newcommand{\dfix}{\mathsf{dfix}}
\newcommand{\fix}{\mathsf{fix}}

\newcommand{\den}[1]{{\llbracket #1 \rrbracket}}

\newcommand\tabs[2]{\lambda (#1\! :\! #2).}
\newcommand\tickc{\diamond}
\newcommand\tappg[3]{#1\,_{#2}[#3] }
\newcommand\btick[2]{({#1},{#2})}
\newcommand\tapp[2][\tickA]{#2\,[#1] }

\newcommand{\tickA}{\alpha}
\newcommand{\tickB}{\beta}
\newcommand\latbind[2]{{\triangleright}\, (#1 \!: \!#2) .}
\newcommand\toksubst[3][\kappa]{\left[#2/#3\right]}

\newcommand{\tickcsub}[4]{[(#3 :\!  #4)/(#1 \! :\! #2)]}

\newcommand{\clocktype}{\mathsf{clock}}
\newcommand{\clockcnst}{\kappa_0}

\newcommand{\capp}[2][\kappa]{#2\,[#1]}
\newcommand{\cappp}[1]{#1[\bar\kappa]}

\newcommand\gLift[1][\kappa]{\sym{L}^{#1}}
\newcommand\Lift{\sym{L}}

\newcommand\Str{\sym{Str}}
\newcommand\gStr[1][\kappa]{\sym{Str}^{#1}}

\newcommand{\Pfin}{{\mathsf{P}_{\mathsf{f}}}}

\newcommand{\GLTS}[1][]{{\mathsf{LTS}^{#1}}}
\newcommand{\unfoldG}[1][]{{\mathsf{ufld}^{#1}}}

\newcommand{\Simforall}[1]{\mathsf{Sim}_{\forall}(#1)}
\newcommand{\Sim}{\mathsf{Sim}^{\kappa}}
\newcommand{\Simfix}{\mathsf{Sim}^{\kappa}_{\forall}}
\newcommand{\Bisim}{\mathsf{Bisim}^{\kappa}}
\newcommand{\Bisimforall}[1]{\mathsf{Bisim}_{\forall}(#1)}

\newcommand{\basicsub}[2]{[#1\mapsto #2]}
\newcommand{\subex}[3]{#1\basicsub{#2}{#3}}


\newcommand{\cirr}[1][\kappa]{\sym{cirr}^{#1}}
\newcommand{\tirrAx}[1][\kappa]{\sym{tirr}^{#1}}
\newcommand{\pfix}[1][\kappa]{\sym{pfix}^{#1}}

\newcommand{\tirr}{\sym{tirr}}

\newcommand{\nats}{\mathbb{N}}



\newcommand\hastype[4][]{
#2 \vdash_{#1} #3: #4
}
\newcommand\hasnotype[4][]{
#2 \vdash_{#1} #3
}
\newcommand\wfcxt[2][]{#2 \vdash_{#1}}
\newcommand\istype[3][]{
\ensuremath{#2 \vdash_{#1} #3 \, \operatorname{type}}
}

\newcommand\genericjudg[3][]{
#2 \vdash_{#1} #3
}

\newcommand{\istel}[2]{
\ensuremath{#1 \vdash #2 \, \operatorname{tel}}
}
\newcommand{\isconstrs}[2]{
\ensuremath{#1 \vdash #2 \, \operatorname{constrs}}
}
\newcommand{\isconstr}[3]{
\ensuremath{#1;#2 \vdash #3 \, \operatorname{constr}}
}
\newcommand{\iselimlist}[5]{
\ensuremath{{#1} \vdash {#2} : {#3} \rightharpoonup^{#4} {#5}}
}

\newcommand{\isclockelimlist}[5]{
\ensuremath{{#1} \vdash {#2} : {#3} \rightharpoonup^{#4} {#5}}
}

\newcommand{\isbox}[1]{
\ensuremath{\vdash #1 ~ _\square}
}

\newcommand{\istick}[4]{{#1} \vdash {#2} : {#3} \leadsto {#4}}
\newcommand{\isbtick}[4]{{#1} \vdash \btick {#2} {#3} \leadsto {#4}}

\newcommand{\issub}[3]{{#1} \vdash {#2} : {#3}}

\newcommand{\subcxt}{\sqsubseteq}

\newcommand{\Bndr}{\textsf{Bndr}}

\newcommand{\peq}{=}
\newcommand{\jeq}{\equiv}

\newcommand{\beq}{\equiv_b}

\newcommand{\defeq}{\mathbin{\overset{\textsf{def}}{=}}}

\newcommand{\clott}{CloTT}
\newcommand{\ctt}{CTT}
\newcommand{\cctt}{CCTT}


\newcommand{\I}{\mathbb{I}}
\newcommand{\F}{\mathbb{F}}
\renewcommand{\phi}{\varphi}
\newcommand{\Path}[3]{\mathsf{Path}_{#1}(#2, #3)}
\newcommand{\equi}{\simeq}
\newcommand{\comp}{\mathsf{comp}}

\newcommand{\HIT}[2]{\textsf{#1}\,{#2}}
\newcommand{\HITcode}[2]{\overline{\textsf{#1}}\,{#2}}
\newcommand{\con}{\mathsf{con}}
\newcommand{\hcomp}{\mathsf{hcomp}}
\newcommand{\hfillc}{\mathsf{hfill}}
\newcommand{\lemmaD}[1]{\mathsf{lemD}_{#1}}
\newcommand{\lemmaDg}[2]{\mathsf{lemDg}_{#1}(#2)}
\newcommand{\trans}{\mathsf{trans}}
\newcommand{\ctrans}{\mathsf{ctrans}}

\newcommand{\elim}[4]{(#1)-\mathsf{elim}_{#2}(#3, #4)}

\newcommand{\evalbsynclock}[4]{\llparenthesis{#4}\rrparenthesis^{#1}_{#2,#3}}
\newcommand{\evalbsynclockg}[5]{\llparenthesis{#5}\rrparenthesis^{#1}_{#2,#3,#4}}
\newcommand{\out}{\mathsf{out}}

\newcommand{\evalbsem}[1]{\lVert #1 \rVert}

\newcommand{\trunc}[1]{\lVert {#1} \rVert}
\newcommand{\htrunc}[2]{\lVert {#1} \rVert_{#2}}
\newcommand{\sphere}[1]{\mathbb S^{#1}}

\newcommand{\ev}[1]{\mathsf{ev}_{#1}}

\newcommand{\const}{\mathsf{const}}

\newcommand{\sI}{\I}
\newcommand{\sF}{\F}

\newcommand{\fibCwF}{\mathsf{Fib}}

\newcommand{\CubeCat}{\mathcal C}
\newcommand{\dm}[1]{\mathsf{dM}(#1)}

\newcommand{\TimeCat}{\mathcal T}

\newcommand{\psearlier}{\p_{\searlier}}


\newtheorem{remark}{Remark}
\newtheorem{operation}{Operation}

\begin{document}

\title[Greatest HITs]{Greatest HITs: Higher inductive types in coinductive definitions via induction under clocks}

\author{Magnus Baunsgaard Kristensen}
 \affiliation{
   \department{Department of Computer Science}              
   \institution{IT University of Copenhagen}            
   \country{Denmark}                    
 }
%
\author{Rasmus Ejlers M{\o}gelberg}
 \affiliation{
   \department{Department of Computer Science}              
   \institution{IT University of Copenhagen}            
   \country{Denmark}                    
 }
\author{Andrea Vezzosi}
 \affiliation{
   \department{Department of Computer Science}              
   \institution{IT University of Copenhagen}            
   \country{Denmark}                    
 }

\copyrightyear{2022} 
\acmYear{2022} 
\setcopyright{acmlicensed}
\acmConference[LICS '22]{37th Annual ACM/IEEE Symposium on Logic in Computer Science (LICS)}{August 2--5, 2022}{Haifa, Israel}
\acmBooktitle{37th Annual ACM/IEEE Symposium on Logic in Computer Science (LICS) (LICS '22), August 2--5, 2022, Haifa, Israel}

%
\begin{abstract}
We present Clocked Cubical Type Theory, the first type theory combining
multi-clocked guarded recursion with the features of Cubical Type Theory.
Guarded recursion is an abstract form of step-indexing, which can be used
for construction of advanced programming language models. In its multi-clocked
version, it can also be used for coinductive programming and reasoning, encoding
productivity in types. Combining this with Higher Inductive Types
(HITs) the encoding extends to coinductive types that are traditionally hard to represent
in type theory, such as the type of finitely branching labelled transition
systems. 

Among our technical contributions is a new principle of \emph{induction under
clocks}, providing computational content to one of the main axioms required
for encoding coinductive types. This principle is verified using a denotational
semantics in a presheaf model.
\end{abstract}

\maketitle

\section{Introduction}
\label{sec:intro}

Homotopy type theory~\cite{hottbook} is the extension of
Martin-L{\"o}f type theory~\cite{MartinLof:84} with
the univalence axiom and higher inductive types. It can be seen as a
foundation for mathematics, allowing for synthetic approaches to
mathematics, as exemplified by synthetic homotopy theory, as well
as more direct formalisations of mathematics, closer to the traditional set-theory
based developments. Higher Inductive Types (HITs) play a key role in both approaches:
In synthetic homotopy theory these are used to describe spaces
such as the spheres and the torus, and in the more direct formalisations,
HITs express free structures and quotients, both of which have traditionally
been hard to represent and use in type theory.

The kind of mathematics that is particularly important to the LICS community
often involves reasoning about recursion, and in particular programming with
and reasoning about coinductive types. Doing this in type theory has historically
been difficult for a number of reasons. One is that the definitions of the functors
of interest in coalgebra often involve constructions such as finite or
countable powerset (for modelling non-determinism),
or quotients. All 
of these are known to be representable as HITs \cite{FGGW18,quotientingDelay,hottbook}.
Another reason is that programming and reasoning about coinductive types
involves productivity checking, which in most proof assistants is implemented
as non-modular syntactic checks that can lead to considerable overhead
for programmers \cite{NAD:beat}. A third reason is that bisimilarity, which is the
natural notion of identity for coinductive types, rarely (at least provably)
coincides with the built-in identity types of type theory, and so proofs
cannot be transported along bisimilarity proofs.

Multi-clocked guarded recursion~\cite{atkey13icfp} allows for the productivity
requirement for coinductive definitions to be encoded in types. The key ingredient
in this is a modal operator $\later$ (pronounced `later') indexed by clocks $\kappa$,
encoding a notion of time-steps in types. Guarded recursive types such as
$\gStr \equi \nats \times \later^\kappa \gStr$  are recursive types in which the
recursion variable occurs only under a $\later$.
In the case of $\gStr$ the type equivalence above
states that the head of a guarded stream is available now, but the tail only after
a time step on clock $\kappa$. Guarded streams can be defined using a fixed
point operator $\fix^\kappa$ of type $(\later^\kappa A \to A) \to A$. In the case of
$\gStr$ the delay in the domain of the input precisely captures the productivity
requirement for programming with streams. Guarded recursive types can themselves
be encoded as fixed points on a universe. 
Using 
quantification over clocks, the coinductive type of streams can be encoded as
$\Str \defeq \forall\kappa . \gStr$.

Guarded recursion \cite{Nakano:Modality} can also be used as an abstract
approach to step-indexing techniques~\cite{Appel:M01,AppelMRV07}.
In particular,
the type variables in guarded recursive types can also appear in negative positions,
and so can be used to provide models of untyped lambda calculus or solve the
advanced type equations needed for modelling higher-order store~\cite{ToT}
and advanced notions of separation logic~\cite{jung2018iris}.
%
A type theory combining guarded recursion with homotopy type theory will
therefore provide a powerful framework, not just for coinductive reasoning
but also for reasoning about recursion and advanced programming language
features beyond the reach of traditional domain theory.

\subsection{Clocked Cubical Type Theory}
\label{sec:contributions}

This paper presents Clocked Cubical Type Theory (\cctt), the first type theory
to combine all the above mentioned features: Multi-clocked guarded recursion,
HITs and univalence. Rather than basing this on the original
formulation of homotopy type theory we build on Cubical Type
Theory (CTT)~\cite{CTT}, a variant of HoTT based on the cubical model of
type theory, and implemented in the Cubical Agda
proof assistant~\cite{CubicalAgda}. One reason for this choice is that
the operational properties of CTT as well as the concrete implementation
give hope for an implementation of our type theory.
In fact we have already extended the implementation of Guarded Cubical
Agda~\cite{GuardedCubicalAgda} to support most of the new constructions of \cctt.
Another is that the path types of
CTT make for a simple
implementation of the extensionality principle for
the $\later^\kappa$ modality~\cite{GCTT}.

\cctt\ extends \ctt\ with the constructions
of Clocked Type Theory (\clott)~\cite{bahr2017clocks}, a type theory for multi-clocked
guarded recursion. \clott\ uses a Fitch-style approach to programming with the $\later$
modality. This means that elements of modal type are introduced by abstracting
over tick variables $\tickA : \kappa$ and eliminated by application to ticks. One benefit of this
is that let-expressions are avoided, and these do not interact well with dependent types in general.
For example, one can define a dependently typed generalisation of applicative action
\cite{mcbride2008applicative} operator of type
\begin{align} \label{eq:appl:action}
 \later^\kappa({\dprod xAB}) \to \dprod y{\later^\kappa A}{\latbind\tickA \kappa B\subst x{\tapp y}}
\end{align}
In this term the tick variable $\tickA$ should be thought of as evidence that time has passed, which
is used in subterms like $\tapp y$ to access the element of type $A$ delivered by $y : \later^\kappa A$ in
the next time step. CloTT has a single tick constant $\tickc$ that can be used for a controlled
elimination of $\later$. \citet{bahr2017clocks} define a strongly normalising operational semantics for
CloTT, in which $\tickc$ unfolds fixed point operations. Since tick variables $\tickA$ can be substituted
by $\tickc$, their identity is crucial for normalisation. For reasoning in the type theory, however, it is often
necessary that $\tapp y$ does not depend on $\tickA$ up to path equality. This is referred to as the
\emph{tick irrelevance axiom}.

When adding such axioms to a univalent type theory one should be take care to avoid
introducing extra structure that one will eventually need to reason about.
Ideally, an axiom like tick irrelevance
should state that the type of ticks is propositional, i.e., that any two elements can be identified, any two
paths between elements can be identified, and so on for higher dimensions.
In \cctt, there appears to be no such way of formulating the axiom,
because ticks do not form a type. This paper shows how to obtain tick irrelevance
by enriching the language of ticks
with paths $\tirr \,u\, v$ between any pair of ticks $u, v$. This has the additional benefit that it allows for
rules giving computational content to the axiom.

\subsection{Induction under clocks}

The encoding of coinductive streams from guarded streams mentioned above generalises to an
encoding of coinductive solutions to type equations of the form $X \equi F(X)$ for all functors $F$
commuting with clock quantification in the sense that the canonical map
\begin{equation} \label{eq:comm:forall}
 F(\forall\kappa . X) \to \forall\kappa. F(X)
\end{equation}
is an equivalence of types.
Note that this is a semantic condition, rather than the more standard grammars for
coinductive types. For this to be useful, one must have a large supply of such functors,
including functors formed using inductive and higher inductive types.
This property has previously been obtained via axioms~\cite{atkey13icfp}.
Here we formulate a principle of
\emph{induction under clocks}, which gives computational content to these
axioms also in the case of higher inductive types.

Consider for example the finite powerset $\Pfin$ which
can be defined as a HIT given by constructors for empty set, singletons, union,
equalities stating associativity, commutativity and idempotency of union as well as an
axiom forcing $\Pfin(X)$ to be a hset~\citep{FGGW18}. In this case induction under clocks
states that to inhabit a family
over an element of type $\forall \kappa. \Pfin{(X)}$
it suffices to inductively describe the cases for elements of the form
$\lambda \kappa. \con_i(p)$ for each constructor $\con_i$ for $\Pfin$, i.e., for
$\lambda\kappa.\emptyset, \lambda\kappa.\{a\}, \lambda\kappa. \capp x \cup \capp y$ 
and for equality
constructors such as $\lambda\kappa. \mathsf{idem}(\capp x,r)$, where
$\mathsf{idem}$ implements idempotency of union. In these cases, $x,y$ are
assumed to be of type $\forall\kappa . \Pfin(X)$. Using this, one can construct 
an inverse to (\ref{eq:comm:forall}) for $F =\Pfin$, and also prove that both
compositions are identities. As a consequence, the coinductive solution to
$\GLTS[] \equi \Pfin(A \times\GLTS)$ can be encoded as
$\GLTS \defeq \forall\kappa . \GLTS[\kappa]$
where $\GLTS[\kappa] \equi \Pfin(A \times\later^\kappa\GLTS[\kappa])$
is a guarded recursive type. $\GLTS[]$ is the type of $A$-labelled finitely
branching transition systems. We moreover prove that path equality
for this type coincides with bisimilarity.

The results for $\GLTS[]$ mentioned above hold for all types of labels $A$ that are \emph{clock-irrelevant},
meaning that the map $A \to \forall\kappa . A$ is an equivalence for $\kappa$ fresh.
In previous type theories for multi-clocked guarded recursion clock-irrelevance of all types
has been taken as an axiom. Most types are clock-irrelevant, but ensuring this for universes
requires special care both syntactically and semantically~\citep{GDTTmodel}. In this paper
we opt for a simpler solution, taking clock-irrelevance to be a side condition to the
correctness of the encoding of coinductive types. Using the principle of induction
under clocks we show that HITs constructed using clock-irrelevant types are themselves
clock-irrelevant, an important step towards reintroducing clock irrelevance as an axiom.

\subsection{Denotational semantics}

Our type theory is justified by a denotational semantics in the form
of a presheaf model. This model combines previous models of
\ctt~\cite{CTT,OPAx} and \clott~\cite{clottmodel}. A key challenge for
the construction of this model is the definition of a composition
structure on the operation modelling $\later$. We give a general theorem
for defining composition structures for dependent
right adjoints~\cite{drat} which can be used to model other type theories
combining Fitch-style modal operators~\cite{clouston2018fitch} with Cubical Type Theory.

\paragraph{Overview of paper}

The paper is organised as follows. We start by recalling
Cubical Type Theory in \autoref{sec:CTT}, and \autoref{sec:CCTT} extends this
to \cctt.
The encoding of coinductive types using guarded
recursion is recalled in \autoref{sec:coinductive:types}
which also details the example of labelled transition systems.
The syntax for HITs in \cctt\ is presented in \autoref{sec:hits}, which also states
the principle of induction under clocks.
The denotational semantics of Clocked
Cubical Type Theory is sketched in \autoref{sec:model}.
Finally, we discuss related work in \autoref{sec:related}, and
conclude and discuss future work in \autoref{sec:conclusion}.

\section{Cubical Type Theory}
\label{sec:CTT}

Cubical Type Theory \cite{CTT} is an extension of
Martin-L\"{o}f type theory which gives a computational interpretation
to extensionality principles like function extensionality and
univalence. It does so by internalizing the notion from Homotopy Type
Theory that equalities are paths, by explicitly representing them as
maps from an interval object.
The interval, $\I$, is not a type but contexts can contain $i : \I$ assumptions and there is a judgement, $\hastype \Gamma r \I$ which specifies that $r$ is built according to this grammar
\[
r,s ::= 0 \mid 1 \mid i \mid 1 - r \mid r \wedge s \mid r \vee s
\]
where $i$ is taken from the assumptions in $\Gamma$.  Equality of two
interval elements, $\hastype \Gamma {r \jeq s} \I$ corresponds to the
laws of De Morgan algebras.  A good intuition is to think of
$\I$ as the real interval $[0,1]$ with $r \wedge s$ and $r \vee s$
given by minimum and maximum operations.

Given $x,y : A$, we write $\Path A x y$ for the type of paths between $x$ and $y$, which correspond to maps from $\I$ into $A$ which are equal to $x$ or $y$ when applied to the endpoints $0,1$ of the interval $\I$.
\begin{mathpar}
  \inferrule*{\hastype {\Gamma, i : \I} t A}
  {\hastype \Gamma {\lambda i.\,t}
    {\Path A {t\subst i 0} {t\subst i 1}}}
  \and
  \inferrule*{\hastype {\Gamma} p {\Path A {a_0} {a_1}} \\ \hastype \Gamma r \I}
  {\hastype \Gamma {p\,r} A}
\end{mathpar}
We sometimes write $=_A$ or simply $=$  as infix notation for path equality, and use $\jeq$ for judgemental equality.
Path types support the $\beta$ and $\eta$ equalities familiar from function types, but we also have that a path applied to $0$ or $1$ reduces according to the endpoints specified in its type.
\begin{mathpar}
  (\lambda i.\, t)\,r \jeq t\subst i r
  \and
  p \jeq (\lambda i.\,p\,i)
  \and
  p\,0 \jeq a_0
  \and
  p\,1 \jeq a_1
\end{mathpar}
The path witnessing reflexivity is defined as $\mathsf{refl}\,x \defeq \lambda i.\, x : \Path A x x$, given any $x : A$.
It is also straightforward to provide a proof of function extensionality, as it is just a matter of reordering arguments:
\begin{mathpar}
\inferrule*{\hastype \Gamma p {\Pi (x : A).\, {\Path {B} {f\,x} {g\,x}}}}
{\hastype \Gamma {\lambda i.\,\lambda x.\,p\,x\,i} {\Path {\Pi (x : A). {B}} f g}}
\end{mathpar}
Other constructions like transitivity, or more generally transport along path, require a
further primitive operation called composition, which we now recall.

Define first the \emph{face lattice} $\F$, as the free
distributive lattice with generators $(i = 0)$ and $(i = 1)$, and
satisfying the equality $(i = 0) \wedge (i = 1) = 0_\F$. Explicitly, elements $\hastype \Gamma \phi \F$ are given by the following grammar, where $i$ ranges over the interval variables from $\Gamma$:
\[
\phi,\psi ::= 0_\F \mid 1_\F \mid (i = 0) \mid (i = 1) \mid \phi \wedge \psi \mid \phi \vee \psi
\]
Given $\hastype \Gamma \phi \F$ we can form the restricted context
$\Gamma,\phi$.  Types and terms in context $\Gamma,\phi$ are called
\emph{partial}, matching the intuition that they are only defined for
elements of $\Gamma$ that satisfy the constraints specified by $\phi$.
For example while a type $\istype {i : \I} A$ corresponds to a whole
line, a type $\istype {i : \I, (i = 0) \vee (i = 1)} B$ is merely two
disconnected points. Given a partial element $\hastype {\Gamma,\phi} u
A$ we write $\hastype \Gamma t {A[\phi \mapsto u]}$ to mean the
conjunction of $\hastype \Gamma t A$ and $\hastype {\Gamma,\phi} {t
  \jeq u} A$. Likewise we write $\istype \Gamma {A[\phi \mapsto B]}$ to mean
$\istype\Gamma A$ and $\istype{\Gamma, \phi}{A \jeq B}$. 

Composition is given by the following typing rule:
\[
\inferrule*{ \hastype \Gamma \phi \F \\ \istype {\Gamma, i : \I} A \\ \hastype {\Gamma,\phi,i : \I} u A \\ \hastype \Gamma {u_0} {A\subst i 0[\phi \mapsto u\subst i 0]} }
         {\hastype \Gamma {\comp^i_A\,[\phi \mapsto u]\,u_0} {A\subst i 1[\phi \mapsto u\subst i 1]} }
\]
the inputs $u$ and $u_0$ represent the sides and the base of an open
box, while composition gives us the lid, i.e. the side opposite to the
base. The behaviour of composition is specified by $u$ whenever $\phi$
is satisfied, and otherwise depends on the type $A$, consequently when
we introduce new type formers we will also give a corresponding
equation for $\comp$.
As an example we can prove transitivity for paths with the following term:
\begin{mathpar}
  \inferrule*{\hastype \Gamma p {\Path A x y} \\ \hastype \Gamma q {\Path A y z}}
  {\hastype \Gamma {\lambda i.\,\comp^j\,[ i=0 \mapsto x, i = 1 \mapsto q\,j]\,(p\,i)} {\Path A x z}}
\end{mathpar}
The standalone syntax for partial elements is as a list of faces and
terms $[\phi_1~u_1,\ldots,\phi_n~u_n]$, but we will write $[ \vee_i \phi_1 \mapsto [\phi_1~u_1,\ldots,\phi_n~u_n]]$
as $[\phi_1 \mapsto u_1,\ldots,\phi_n \mapsto u_n]$.
We refer to \citet{CTT} for more details on partial elements,
composition, and the \emph{glueing construction}.
The latter is what allows to derive the univalence principle as a
theorem, by turning a partial equivalence into a total one.
Writing $A \equi B$ for the type stating that $A$ and $B$ are
equivalent~\cite{hottbook}, univalence states that the canonical map of type $A
=_\Univ B \to A \equi B$ is itself an equivalence for any types $A, B : \Univ$.
We will use univalence in the rest of the paper, but we will not need
to refer to a specific implementation, so we omit the details in this
brief overview.

From Homotopy Type Theory~\cite{hottbook} we also recall the notion of homotopy
\emph{proposition} which are types for which any two elements are path
equal, and homotopy \emph{set} which are types whose path types are
propositions.
Given any type $A$ its propositional truncation $\trunc{A}_0$ is the
least proposition that admits a map from $A$.
It can be defined as an Higher Inductive Type (HIT), which are inductive
types defined by providing generators for both the type itself and its
path equality. Another example of a HIT is the finite powerset~\cite{FGGW18} of
a type $A$ mentioned in the introduction. 
%
We will discuss HITs in more detail in \autoref{sec:hits}.

\section{Clocked Cubical Type Theory}
\label{sec:CCTT}

\begin{figure*}[tbp]
\begin{center}
\textbf{Context and type formation rules}
\begin{mathpar}
\inferrule*
{\wfcxt{\Gamma} \and \kappa\notin \Gamma}
{\wfcxt{\Gamma, \kappa : \clocktype}}
\and
\inferrule*
{\hastype{\Gamma}\kappa\clocktype \and \tickA\notin \Gamma}
{\wfcxt{\Gamma, \tickA : \kappa}}
\and
  \inferrule*
  {\istype{\Gamma,\tickA:\kappa}{A}}
  {\istype{\Gamma}{\latbind{\tickA}{\kappa} A}}
  \and
  \inferrule*
  {\istype{\Gamma,\kappa : \clocktype}{A}}
  {\istype{\Gamma}{\forall \kappa . A}}
\end{mathpar}
\textbf{Typing rules}
\begin{mathpar}
  \inferrule*
  {\hastype{\Gamma,\kappa : \clocktype}{t}{A}}
  {\hastype{\Gamma}{\lambda\kappa. t}{\forall \kappa . A}}
  \and
  \inferrule*
  {\hastype{\Gamma}{t}{\forall \kappa . A}\\
    \hastype\Gamma{\kappa'}\clocktype}
  {\hastype{\Gamma}{t [\kappa']}{A \subst{\kappa}{\kappa'}}}
  \and
  \inferrule*
  {\hastype{\Gamma,\tickA:\kappa}{t}{A}}
  {\hastype{\Gamma}{\tabs{\tickA}{\kappa} t}{\latbind{\tickA}{\kappa} A}}
  \and
  \inferrule*{ \istick {\Gamma} {u} {\kappa} {\Gamma'} \\
\hastype{\Gamma'} {t} {\latbind\tickA\kappa A}}
{\hastype \Gamma {\tappg {t} {\Gamma'} {u}} {A\subst \tickA u}}
\and
\inferrule*{ \isbtick {\Gamma} {\kappa'} {u} {\Gamma'} \\
\hastype{\Gamma', \kappa : \clocktype} {t} {\later (\tickA : \kappa). A}}
{\hastype \Gamma {\tappg {(\kappa.t)} {\Gamma'} {\btick {\kappa'} u}} {A\tickcsub {\tickA}{\kappa} {u}{\kappa'} }}
 \and
  \inferrule*
  {\hastype{\Gamma}{t}{\later^\kappa A \to A}}
  {\hastype{\Gamma}{\dfix^\kappa\,t}{\later^\kappa A}}
  \and
  \inferrule*
{\hastype[]{\Gamma}{t}{\later^\kappa A \to A} }
{\hastype{\Gamma}{\pfix[\kappa] \, t}{\latbind{\tickA}{\kappa}{\Path{A}{\tapp{(\dfix^{\kappa} t)}}{t(\dfix^{\kappa} t)}}}}
  \and
  \inferrule*
  {\hastype{\Gamma, \tickA : \kappa}{A}{\Univ}}
  {\hastype{\Gamma}{\latbindcode\tickA\kappa A}{\Univ}}
  \and
  \inferrule*
  {\hastype{\Gamma, \kappa : \clocktype}{A}{\Univ}}
  {\hastype{\Gamma}{\forallcode\kappa . A}{\Univ}}
  \and
  \inferrule*
  {\hastype{\Gamma}{t}{A} \\ \genericjudg \Gamma {A \jeq B}}
  {\hastype{\Gamma}{t}B}
  \and
  \inferrule*
  {
  \kappa : \clocktype \in \Gamma}
  {\hastype{\Gamma}\kappa\clocktype}
  \and
  \inferrule*
  {
   x : A \in \Gamma}
  {\hastype{\Gamma}xA}
\end{mathpar}
\textbf{Simple ticks}
\begin{mathpar}
\inferrule*{\Gamma'' \subcxt \Gamma}
           {\istick{\Gamma, \tickA : \kappa, \Gamma'}
                   {\tickA} {\kappa} {\Gamma'',\cuba{\Gamma'}}}
\and
\inferrule*{
\istick {\Gamma} {u} {\kappa} {\Gamma'} \\
\istick {\Gamma} {v} {\kappa} {\Gamma'} \\
\hastype {\Gamma} {r} {\I}}
{\istick{\Gamma} {\tirr(u,v,r)} {\kappa} {\Gamma'}}
\end{mathpar}

\textbf{Forcing ticks}
\begin{mathpar}
\inferrule*{\Gamma' \subcxt \Gamma \\ \hastype {\Gamma'} {\kappa'} {\clocktype}}
{\isbtick {\Gamma} {\kappa'} {\diamond} {\Gamma'}}
\and

\inferrule*{\Gamma'' \subcxt \Gamma}
           {\isbtick{\Gamma, \tickA : \kappa', \Gamma'}
                   {\kappa'} {\tickA}  {\Gamma'',\cuba{\Gamma'}}}
\and
\inferrule*{
\isbtick {\Gamma} {\kappa'} {u} {\Gamma'} \\
\isbtick {\Gamma} {\kappa'} {v} {\Gamma'} \\
\hastype {\Gamma} {r} {\I}}
{\isbtick{\Gamma} {\kappa'} {\tirr(u,v,r)} {\Gamma'}}

\end{mathpar}

%

\textbf{Timeless assumptions}
\begin{align*}
 \cuba{\Gamma, x :A} & = \cuba{\Gamma} &
 \cuba{\Gamma, \tickA : \kappa} & = \cuba{\Gamma} &
 \cuba{\Gamma, i : \I} & = \cuba{\Gamma}, i : \I \\
 \cuba{\Gamma, \kappa : \clocktype} & = \cuba{\Gamma}, \kappa : \clocktype &
 \cuba{\Gamma, \phi} & = \cuba{\Gamma}, \phi &
 \cuba{\cdot} & = \cdot
\end{align*}
\caption{Selected typing rules for \cctt. The rules for ticks use the relation defined as
$\Gamma, \cuba{\Gamma'} \subcxt \Gamma, \Gamma'$}
\label{fig:clott:typing}
\end{center}
\end{figure*}

\begin{figure*}[tbp]
\begin{center}
\textbf{Judgemental equalities on terms}
  \begin{align*}
   \tappg t{\Gamma} u & \jeq \tappg t{\Gamma'} u &
   \tappg {(\kappa .t)}{\Gamma} {\btick \kappa u} & \jeq \tappg {(\kappa .t)}{\Gamma'} {\btick \kappa u} &
    (\lambda\kappa.t)[\kappa']&\jeq  t \subst\kappa{\kappa'} \\
    \lambda \kappa. (t [\kappa]) &\jeq t & 
   \tappg {(\tabs{\tickA}\kappa t)} {} {u} & \jeq t\toksubst{u}{\tickA} &
    \tabs{\tickA}{\kappa} (\tapp t) &\jeq t \\ 
   \tappg {(\kappa. \tabs{\tickA}\kappa t)} {} {\btick {\kappa'} u} & \jeq t \tickcsub {\tickA}{\kappa} {u}{\kappa'} &
    \tappg{(\kappa . t)}{}{\btick{\kappa'}u} & \jeq \tappg {(t \subst\kappa{\kappa'})}{}u &
   \tappg {(\kappa. \dfix^\kappa \,t)} {} {\btick {\kappa'} \tickc} & \jeq  t(\dfix^\kappa \, t)\subst{\kappa}{\kappa'} \\
    \El(\forallcode\kappa . A) & \jeq \forall \kappa . \El(A)   &
    \tappg {(\kappa. \pfix \,t)} {} {\btick {\kappa'} \tickc}\,r & \jeq
  t(\dfix^\kappa \, t)\subst{\kappa}{\kappa'}
    & \El(\latbindcode\tickA\kappa A) & \jeq \latbind\tickA\kappa{\El(A)}
  \end{align*}
\textbf{Judgemental equalities on ticks}
  \begin{align*}
   \tirr(u,v,0) & \jeq u & \tirr(\tickc,\tickc,r) & \jeq \tickc &
   \tirr(u,v,1) & \jeq v &
\end{align*}
\caption{Selected judgemental equality rules of \cctt. The three $\eta$ rules are subject
to the standard conditions of $\kappa$ and $\tickA$, respectively, not appearing in $t$.
As a consequence of the first two rules, the residual context is omitted for tick application in the rules below.
The 8th axiom listed assumes that $\tickc$ does not appear in $u$, so that $u$ can be considered a simple tick.}
\label{fig:clott:equalities}
\end{center}
\end{figure*}

\cctt\ extends \ctt\ with the constructions of
\clott~\cite{bahr2017clocks,clottmodel}, a type theory with
Nakano style guarded recursion, multiple clocks and ticks. This means that
the delay type operator is associated with clocks: $\later^\kappa A$ is the type of data
of type $A$ which is available in the next time step on clock $\kappa$. There are no
operations on clocks, they can only be assumed by placing assumptions such as
$\kappa : \clocktype$ in the context, and abstracted to form elements of type
$\forall\kappa . A$. The status of $\clocktype$ is similar to $\I$ in the sense
that it is not a type, and so, e.g., $\clocktype \to \clocktype$ is not a
wellformed type. It is sometimes convenient to assume a single
clock constant $\clockcnst$, and this can be achieved by a precompilation adding this
to the context. Similarly to function types, path equality in $\forall \kappa.\, A$ is equivalent 
to pointwise equality, and clock quantification preserves truncation levels.

Ticks are evidence that time has passed on a given clock. Tick assumptions in a context
separate a judgement such as $\hastype{\Gamma, \tickA : \kappa, \Gamma'}tA$
into a part ($\Gamma$) of assumptions arriving before the time tick $\tickA$ on the
clock $\kappa$ and the rest of the judgement that happens after. The introduction
rule for ${\latbind\tickA\kappa A}$ in \autoref{fig:clott:typing} can therefore be read
as stating that if $t$ has type $A$ one time step after the assumptions in $\Gamma$,
then $\tabs\tickA\kappa t$ is delayed data of type $A$. Because terms can appear
in types, $\tickA$ can appear in $A$, and must therefore
be mentioned in the type of $\tabs\tickA\kappa t$. In many cases, however,
it does not, and we will then write simply $\later^\kappa A$ for $\latbind\tickA\kappa A$.

Elimination for $\latbind\tickA\kappa A$ is by application to ticks. There are two forms of ticks:
simple ticks and forcing ticks. The judgement for simple ticks
$\istick{\Gamma}{u} {\kappa} {\Gamma'}$
states that $u$ is a tick on the clock $\kappa$ in context $\Gamma$ with \emph{residual
context} $\Gamma'$. The residual context consists of the assumptions that were available
before the tick $u$. In the case of a tick variable $\tickA$, these are the assumptions to
the left of $\tickA$ as well as the \emph{timeless} assumptions to the right of $\tickA$.
Timeless assumptions are interval and clock assumptions ($i : \I$
and $\kappa : \clocktype$) as well as all faces, and $\cuba{\Gamma}$
computes the timeless assumptions of $\Gamma$. The assumption $\Gamma'' \subcxt \Gamma$
defined by the rule 
\[
\Gamma, \cuba{\Gamma'} \subcxt \Gamma, \Gamma'
\]
allows for further trimming of the residual context.
The term $\tappg {t} {\Gamma'} {u}$
applies $t$ to the simple tick $u$. The assumption that $t$ is well-typed in the
residual context $\Gamma'$ should be read as stating that it has the type $\latbind \tickA\kappa A$
before the tick $u$, and can therefore be opened after the tick $u$ to produce something of
type $A\subst \tickA u$. We will often omit the residual context $\Gamma'$ from the term, writing
simply $\tapp[u]{t}$ for $\tappg {t} {\Gamma'} {u}$. Different choices of $\Gamma'$ give the same
term up to judgemental equality.

As basic examples of programming
with simple ticks we mention the dependently typed generalisation of applicative action
(\ref{eq:appl:action}), 
an equivalence of types 
\begin{equation} \label{eq:later:sigma}
 \later^\kappa (\Sigma (x : A) . B(x)) \equi \Sigma (x : \later^\kappa A) . \latbind \tickA\kappa(B(\tapp x)) 
\end{equation}
and the unit
\begin{equation} \label{eq:next}
 \lambda x . \tabs\tickA \kappa x : A \to \later^\kappa A
\end{equation}
Note that the latter
uses the rule for variables which allows these to be introduced from anywhere in the
context, also across ticks. Some Fitch style modal type theories do not allow
such introductions~\cite{clouston2018fitch,drat}.
The timelessness of the interval is needed to type
the extensionality principle 
\[
\Path{\later^\kappa A}xy
\equi
\latbind\tickA\kappa{\Path A{\tapp x}{\tapp y}}
\]
where the left to right direction $\lambda p. \tabs \tickA \kappa{\lambda i. \tapp{p\, i}}$
requires $\tapp{p\, i}$ to be welltyped in a context where $i$ occurs after $\tickA$.

\subsection{Forcing ticks}

Unrestricted elimination of $\later$ is unsound, since any type $A$ with a map
$\later^\kappa A \to A$ can be inhabited using fixed points. However, it is safe to
eliminate $\later^\kappa$ in contexts of the form $\Gamma, \kappa : \clocktype$,
i.e., where no variables mention $\kappa$ in their type. To close such an elimination
rule under substitutions, the clock variable $\kappa$ must be abstracted in the term.

Application to forcing ticks does exactly that:
If $\kappa'$ is a clock in $\Gamma$ then $\isbtick {\Gamma} {\kappa'} {\diamond} {\Gamma}$
is a forcing tick and so if $\hastype{\Gamma, \kappa : \clocktype} {t} {\later (\tickA : \kappa). A}$
then
\[\hastype \Gamma {\tappg {(\kappa.t)} {\Gamma} {\btick {\kappa'} \tickc}} {A\tickcsub {\tickA}{\kappa} {\tickc}{\kappa'}}
\]
This construction binds $\kappa$ in $t$, and 
$A\tickcsub {\tickA}{\kappa} {\tickc}{\kappa'}$ uses a special simultaneous substitution of the pair $(\tickc, \kappa')$
for the pair $(\tickA, \kappa)$. In particular $\tickcsub {\tickA}{\kappa} {\tickc}{\kappa'}$ commutes as expected
with type and term formers, replacing each of $\tickA$ and $\kappa$ with $\tickc$ and $\kappa'$ as intended,
but it will also have to turn a simple tick application like $\tappg t {} \alpha$ into a forcing tick application
$\tappg {(\kappa.t)} {} {\btick {\kappa'} \tickc}$ to preserve typing. This substitution operation was originally
described by \citet{clottmodel}.

Forcing ticks can be used to define the $\force$ operator of \citet{atkey13icfp}
\begin{equation} \label{eq:force}
\force \defeq \lambda x . \lambda \kappa . \tappg{(\kappa . \capp x)}{}{\btick\kappa \tickc} : \forall\kappa . \later^\kappa A \to \forall\kappa . A
\end{equation}
which is used in the encoding of coinductive types. Replacing the delayed substitution of $\kappa'$ 
for $\kappa$ in $\tappg {(\kappa.t)} {\Gamma} {\btick {\kappa'} \tickc}$ with an actual substitution, 
gives the elimination rule used by \citet{bahr2017clocks}.
Here we opt for an explicit substitution because it is easier to type check and give semantics to, which is
why \citet{clottmodel} made a similar choice.
\citet{bahr2017clocks} designed their application rule for $\tickc$ to obtain a normalisation result for their theory.
We expect that using the present rule, one can prove the same property for \cctt.

\subsection{Tick irrelevance}


Since fixed points unfold when applied to $\tickc$, the identity of ticks is crucial
for the operational properties of Clocked Type Theory, in particular for ensuring
strong normalisation. However, on the extensional level the identity of ticks is
irrelevant, as
stated by the \emph{tick irrelevance axiom}
\begin{equation} \label{eq:tirrAx}
 \tirrAx 
  : \Pi(x : \later^\kappa A) . \latbind{\tickA}{\kappa}  \latbind{\tickB}{\kappa} \Path{A}{\tapp{x}}{\tapp[\tickB] x}
\end{equation}
This can be inhabited in \cctt\ using the novel construction $\tirr$ building a path between any pair of ticks
by defining
\[
\tirrAx(t) \defeq \tabs \tickA \kappa {\tabs\tickB\kappa {\lambda i .\tappg t{}{\tirr(\tickA, \tickB, i)}}}
\]
Similar axioms can be constructed for forcing tick applications.
One important application of $\tirr$ is in showing that $\force$ (\ref{eq:force})
is a type equivalence with inverse
$\inv{\force} \defeq\lambda x . \lambda \kappa . \tabs \tickA\kappa{\capp x}$: If $x : \forall\kappa . \later^\kappa A$, then
\begin{align*}
 \inv\force(\force\, x) & \jeq \lambda \kappa . \tabs \tickA \kappa{\tappg{(\kappa .{\capp x})}{}{\btick \kappa \tickc}}\\
 & \peq \lambda \kappa . \tabs \tickA \kappa{\tappg{(\kappa .{\capp x})}{}{\btick \kappa \tickA}}\\
 & \jeq \lambda \kappa . \tabs \tickA \kappa{\tappg{{\capp x}}{}{\tickA}}
 \jeq x
\end{align*}
where the path equality is witnessed by
\[\lambda i . \lambda \kappa . \tabs \tickA \kappa{\tappg{(\kappa .{\capp x})}{}{\btick \kappa{\tirr(\tickc,\tickA, i)}}}.\]

The ability of $\tirr$ to form a path between any two ticks allows us to prove coherence laws between different uses of tick irrelevance. For example we can
construct fillers for boxes whose boundary is given by tick applications as in the lemma below.
\begin{lemma}\label{lem:tirr:coherence}
Let $\hastype{\Gamma'}{t}{\latbind \alpha \kappa A}$ and $\istick{\Gamma, i_0, \ldots, i_n : \I, \phi} {u} {\kappa}{\Gamma'}$ where $\phi = \bigvee_{k,b}\,(i_k = b)$ for $k = 0,\ldots,n$ and $b = 0,1$.
Then we have $\istype{\Gamma, i_0, \ldots, i_n : \I} {B[\phi \mapsto A\subst \alpha u]}$ and a term $\hastype{\Gamma, i_0, \ldots, i_n : \I}{\mathsf{filler}(t,u)}{B[ \phi \mapsto \tappg t {} u]}$.
\end{lemma}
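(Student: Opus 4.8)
The plan is to build the filler by composition in the cubical type theory, using the $\tirr$ combinator to contract the whole boundary tick $u$ to a single fixed tick, and then transport along a degenerate path. Concretely: since the boundary $\phi = \bigvee_{k,b}(i_k = b)$ covers exactly the faces of the $(n+1)$-cube, I want a term of type $B$ on the full cube restricting to $\tappg{t}{}{u}$ there. The key observation is that $\tirr$ lets me build, for each face, a path of ticks from $u$ (restricted to that face) to some canonical tick $\alpha$ (the variable in the residual context, which exists because $t$ is typed in $\Gamma'$ and hence $\Gamma'$ contains an appropriate tick assumption — more precisely, I can pick $u$ itself on the interior as a formal stand-in and use $\tirr$ to glue).

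First I would set up the telescope of interval variables and note that $\istype{\Gamma, i_0,\dots,i_n:\I}{A\subst\alpha u}$ makes sense on the sub-context where $\phi$ holds, so $B$ with $B[\phi\mapsto A\subst\alpha u]$ is obtained by composition: take $B \defeq \comp^{j}_{?}\,[\phi \mapsto \dots]\,(\dots)$ along an auxiliary interval $j$, where along $j$ I slide the tick $u$ to a reference tick using $\tirr(u, u', j)$ for a suitable $u'$ defined on all of the cube (e.g. a tick obtained by weakening a tick variable of $\Gamma'$, which is timeless-stable enough to be available on the whole cube). The endpoints of this slide then give, on one side, the original $A\subst\alpha u$ (when $j=0$, using $\tirr(u,v,0)\jeq u$), and on the other a type that extends over the full cube because it no longer mentions the interval-dependent tick. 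Composition then produces $B$ together with the required restriction equation; the matching term $\mathsf{filler}(t,u)$ is obtained by the same composition applied to $\tappg{t}{}{\tirr(u,u',j)}$ in place of the types, using that $\tappg{t}{}{-}$ is natural in the tick argument and that $\tirr(u,u',0)\jeq u$ recovers $\tappg{t}{}{u}$ on $\phi$.

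I would then verify the boundary equations: on $\phi$, the composition is forced to agree with the partial element I fed in, which by construction restricts to $A\subst\alpha u$ (resp. $\tappg{t}{}{u}$) by the $j=0$ endpoint law for $\tirr$; off $\phi$ the type $B$ is whatever the composition yields, which is fine since the statement only constrains $B$ on $\phi$. The $\eta$- and $\beta$-laws for path types and the judgemental equalities $\tirr(u,v,0)\jeq u$, $\tirr(u,v,1)\jeq v$ from \autoref{fig:clott:equalities} do all the bookkeeping. I would also double-check that the residual-context trimming (using $\Gamma''\subcxt\Gamma$) lets $t$ be applied to each of the ticks $\tirr(u,u',i_0\dots)$ appearing, which it does because $\tirr$ preserves the residual context by its typing rule and because the extra interval variables $i_0,\dots,i_n$ are timeless.

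\textbf{The hard part will be} choosing the auxiliary reference tick $u'$ (equivalently, the degenerate edge along which to compose) so that the resulting type $B$ genuinely extends over the whole cube and not merely over $\phi$: we need a single tick, available uniformly on $\Gamma, i_0,\dots,i_n:\I$, to which $u$ can be $\tirr$-connected on each face simultaneously and coherently. I expect this to work by taking $u'$ to be (a weakening of) a tick variable already present in $\Gamma'$, or by a nested use of $\tirr$ to first connect $u$ to such a variable; the coherence across the multiple faces of the cube is exactly what $\tirr$'s being a genuine interval-indexed path (rather than a mere propositional truncation) buys us, as emphasised in the paragraph preceding the lemma.
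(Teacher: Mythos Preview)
Your overall strategy is right and matches the paper: pick a reference tick defined on the whole cube $\Gamma, i_0,\dots,i_n:\I$, connect it to $u$ on the boundary $\phi$ via $\tirr(-,-,j)$, and then use homogeneous filling/composition to produce $B$ and the filler term. The paper does exactly this, defining $B$ via $\mathsf{hfill}^j[\phi \mapsto A\subst\alpha{p(j)}]\,A\subst\alpha{u_0}$ at $j=1$ and the filler as the corresponding $\hcomp$.

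The gap is in your choice of reference tick. Neither of your candidates works. First, $\Gamma'$ need not contain any tick assumption on $\kappa$: the hypothesis $\hastype{\Gamma'}{t}{\latbind\alpha\kappa A}$ only says $t$ has later type in $\Gamma'$; the $\alpha$ there is \emph{bound} in the type, not an assumption in $\Gamma'$. Even if $\Gamma'$ happened to contain some $\beta:\kappa$, its residual would be a strict prefix of $\Gamma'$, so $\tirr(u,\beta,j)$ would force you to retype $t$ in that smaller context, which you are not given. Second, ``$u$ itself on the interior'' is not available: $u$ is only well-typed under the restriction $\phi$, so it does not extend to the open cube.

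The paper's trick is to take $u_0 \defeq u\subst{i_0}{0}$. Since $(i_0=0)$ is one of the disjuncts of $\phi$, this substitution makes $\phi$ judgementally $1_\F$, so $u_0$ is a tick in $\Gamma, i_1,\dots,i_n:\I$ with the \emph{same} residual $\Gamma'$; weakening by $i_0$ (which is timeless) gives $\istick{\Gamma,i_0,\dots,i_n:\I}{u_0}{\kappa}{\Gamma'}$. Now $p(j)\defeq \tirr(u_0,u,j)$ is well-typed on $\phi$ with residual $\Gamma'$, and the $j=0$ endpoint is $u_0$, which is defined everywhere. This is precisely the ``single tick, available uniformly on the whole cube'' you were looking for, and it requires no further assumptions about $\Gamma'$.
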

For example, the lemma constructs a filler of the diagram below, thus constructing a path from
the composition of $\tappg {\tappg {\tirrAx(t)}{}{\alpha}} {} {\beta}$ and
$\tappg {\tappg {\tirrAx(t)}{}{\beta}} {} {\gamma}$ to $\tappg {\tappg {\tirrAx(t)}{}{\alpha}} {} {\gamma}$,
     \[
      \begin{tikzcd}
        A \ar[dash]{rr}{\tappg {\tappg {\tirrAx(t)}{}{\beta}} {} {\gamma}}
        \ar[dash]{d}[swap]{\tappg {\tappg {\tirrAx(t)}{}{\alpha}} {} {\beta}}
        && A \ar[dash]{d}{\tappg {\tappg {\tirrAx(t)}{}{\alpha}} {} {\gamma}} \\
        A \ar[dash]{rr}{\tappg{t}{}\tickA} && A
      \end{tikzcd}
    \]


The rules for judgemental equality (\autoref{fig:clott:equalities}) include standard
$\beta$ and $\eta$-rules for functions, clock abstraction and tick abstraction. Note that there are two $\beta$
equalities for tick application, one for each kind of ticks. There is also a rule stating that if a forcing tick
does not contain $\tickc$, application to it reduces to application to a simple tick. One consequence of this
is that an $\eta$-rule for forcing tick application can be derived as follows
\begin{align*}
  \tabs{\tickA}{\kappa'} (\tappg {(\kappa . t)}{}{\btick \tickA{\kappa'}}) &\jeq
  \tabs{\tickA}{\kappa'} (\tappg {t\subst\kappa{\kappa'}}{}{\tickA}) \jeq t\subst\kappa{\kappa'}
\end{align*}
Although we do not prove canonicity for our type theory, we set up equalities that we believe are
necessary to compute canonical forms in contexts of free clock and interval variables, but not free tick
variables. For example, $\tirr(\tickc,\tickc,r) \jeq \tickc$ ensures that in a context with no tick variables
all ticks are equal to $\tickc$.

\subsection{Fixed points}

The operator $\dfix^\kappa : (\later^\kappa A \to A) \to \later^\kappa A$ computes fixed
points of productive functions. Using this, one can construct Nakano's fixed point operator
as
\begin{equation} \label{eq:fix}
\fix^\kappa \defeq \lambda f . f(\dfix^\kappa f) : (\later^\kappa A \to A) \to A
\end{equation}

To ensure termination, fixed points do not unfold judgementally, except
when applied to $\tickc$. As in Guarded Cubical Type Theory~\cite{GCTT} we add a path $\pfix$
between a fixed point and its unfolding.
Note that the right hand endpoint of $\pfix t$ is $\fix^\kappa t$ and so, 
\begin{equation} \label{eq:fix:unfold}
\fix^\kappa t \jeq t(\tabs\tickA\kappa{\tapp{\dfix^\kappa t}})
\peq t(\tabs\tickA\kappa{\fix^\kappa t})
\end{equation}
With this we can prove that fixed points are unique, in fact the following stronger statement is true.
\begin{lemma} \label{lem:fix:contractible}
The types $\Sigma (x : A). \Path Ax{f\,(\tabs\tickA\kappa{x})}$ 
and 
\[\Sigma (x : \later^\kappa A). \latbind{\tickA}{\kappa}{\Path{A}{\tapp{x}}{f(x)}}\] 
are contractible for every $f : \later^\kappa A \to A$.
\end{lemma}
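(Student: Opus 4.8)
The plan is to treat both types in tandem: each is pointed, and a pointed type is contractible exactly when every inhabitant is joined to the chosen point by a path, so for each it suffices to name a centre and then, for an arbitrary inhabitant, construct such a path. Write $\nxt(a)$ for $\tabs\tickA\kappa a$, the unit of (\ref{eq:next}). For the second type the centre is literally $(\dfix^\kappa f, \pfix[\kappa]\, f)$: by the typing rule for $\pfix$, the term $\pfix[\kappa]\, f$ already has type $\latbind\tickA\kappa\Path A{\tapp{(\dfix^\kappa f)}}{f(\dfix^\kappa f)}$, which is exactly the fibre of the second type over $\dfix^\kappa f$. For the first type, the extensionality principle $\Path{\later^\kappa A}xy \equi \latbind\tickA\kappa\Path A{\tapp x}{\tapp y}$ turns $\pfix[\kappa]\, f$ into a path $\dfix^\kappa f = \nxt(\fix^\kappa f)$; applying the action of $f$ on paths ($\mathsf{ap}\,f$) and using $\fix^\kappa f \jeq f(\dfix^\kappa f)$ yields a path $p : \fix^\kappa f = f(\nxt(\fix^\kappa f))$ --- that is precisely (\ref{eq:fix:unfold}) --- so the centre is $(\fix^\kappa f, p)$. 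Finally, the two types are interchangeable: using (\ref{eq:later:sigma}), the $\eta$-law for $\later$ and tick irrelevance, the second type is equivalent to $\later^\kappa$ of the first, and $\later^\kappa$ preserves contractibility, so it is enough to contract the first.

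So fix $(x, q)$ in the first type, with $q : x = f(\nxt(x))$. First I would build the first-component path $\fix^\kappa f = x$ by guarded recursion, applying $\fix^\kappa$ at the path type $\Path A{\fix^\kappa f}x$: from the induction hypothesis $h : \later^\kappa(\fix^\kappa f = x)$ the required path is the concatenation of $p$ (which ends at $f(\nxt(\fix^\kappa f))$), then $\mathsf{ap}\,f$ of the path $\nxt(\fix^\kappa f) = \nxt(x)$ that $h$ determines via the extensionality principle, and then $q$ reversed. The second $\Sigma$-component then asks for a path over this first-component path, i.e. a square in $A$ whose sides are, in turn, the first-component path, $p$, the action of $f \circ \nxt$ on that path, and $q$. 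Conceptually this square is nothing but the unfolding path ($\pfix$) of the functional used in the recursion above, rearranged by whiskering with $q$; its construction in \cctt\ should go through with the coherence machinery of Lemma~\ref{lem:tirr:coherence}, which produces exactly the fillers needed when the sides are built from tick applications and uses of $\tirr$. Contractibility of the second type then follows from that of the first via the equivalence above --- or by repeating the recursion directly there, with one extra application of the extensionality principle (legitimate since $\I$ is timeless) to push the corresponding square from under a tick down to $A$.

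The step I expect to be the real obstacle is keeping this second-component coherence honest: the first-component path is a guarded fixed point and satisfies its defining equation only up to the unfolding path $\pfix$, so the square has to be assembled from that unfolding path together with the extensionality isomorphism for $\later$ and the naturality of $\mathsf{ap}$, and arranging that these fit exactly --- rather than up to yet another path one would then have to track --- is where the tick-irrelevance fillers earn their keep. (A more abstract route avoids the explicit centre altogether: prove that every $X$ with $X \equi \later^\kappa X$ is contractible --- by guarded recursion, using that $\mathsf{isContr}$ commutes with $\later^\kappa$, which follows from (\ref{eq:later:sigma}), the commutation of $\later^\kappa$ with $\Pi$-types, and the extensionality principle --- and then note, again by (\ref{eq:later:sigma}), the $\eta$-law and tick irrelevance, that the first type satisfies $X \equi \later^\kappa X$.)
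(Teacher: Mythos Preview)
Your main line is the paper's proof. Same centres; same reduction of the second type to $\later^\kappa$ of the first via (\ref{eq:later:sigma}), tick irrelevance and the $\eta$-law; and the same guarded recursion for the first-component path --- your concatenation $p \cdot \mathsf{ap}_f(\dots) \cdot q^{-1}$ is exactly the paper's $(p^{-1},q^{-1})_*(\mathsf{ap}_f(\dots))$ in transport notation. The one place you hesitate, the second-component square, is easier than you suggest and needs no appeal to Lemma~\ref{lem:tirr:coherence}. Writing $r$ for the recursively defined first-component path, the extensionality map applied to $\nxt\, r$ is \emph{judgmentally} $\mathsf{ap}_{\nxt}(r)$ (both are $\lambda i.\,\tabs\tickA\kappa{r\,i}$), so unfolding $r$ via (\ref{eq:fix:unfold}) already reads $r = p \cdot \mathsf{ap}_{f\circ\nxt}(r) \cdot q^{-1}$, and a rearrangement of this equation is precisely the square you need. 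The paper says the same thing in transport form: $r_* p = q$ is equivalent to $(p,q)_* r = \mathsf{ap}_{f\circ\nxt}(r)$, and one then expands $r$ on the left by its defining unfolding and cancels.

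Your parenthetical alternative route has a gap. The ingredients you cite --- (\ref{eq:later:sigma}), the $\eta$-law, and tick irrelevance --- yield $\later^\kappa T_1 \equi T_2$ (this is exactly the chain both you and the paper use to reduce the second statement to the first), not $T_1 \equi \later^\kappa T_1$. Closing the loop to obtain $T_1 \equi \later^\kappa T_1$ would still require $T_1 \equi T_2$, which those three ingredients do not supply on their own; establishing that equivalence directly runs into essentially the same coherence you were trying to avoid.
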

By lemma \ref{lem:fix:contractible} the pair $(\dfix^\kappa t, \pfix t)$ is uniquely determined up to path, as it witnesses that $\dfix^\kappa t$ is the fixed point of $\lambda x. \tabs \_ \kappa t\,x$.
Moreover both $\dfix^\kappa$ and $\pfix$ unfold when applied to $\tickc$, which means they will not be stuck terms in a context with no tick variables.

The principles above can be used to encode guarded recursive types as fixed points on the universe.
For example, assuming codes for products and natural numbers, we define
\[
 \gStrcode(\Nat) \defeq \fix^\kappa(\lambda X. \natscode \timescode\latbindcode\tickA\kappa{\tapp X})
\]
and then by (\ref{eq:fix:unfold})
 $\gStrcode(\Nat) \peq \natscode \timescode \latbindcode\tickA \kappa{\gStrcode(\Nat)}$
so, defining $\gStr(\Nat) \defeq \El(\gStrcode(\Nat))$ gives the type equivalence
\begin{equation} \label{eq:gstr:unfold}
  \gStr(\Nat) \equi \Nat \times \later^\kappa \gStr(\Nat)
\end{equation}

%

%

Note that we do not assume the \emph{clock irrelevance axiom}
\begin{equation} \label{eq:cirr}
\inferrule*
{\hastype{\Gamma}{t}{\forall\kappa . A} \\ \kappa \notin \fc{A}}
{\hastype\Gamma{\cirr t}{\forall{\kappa'} . \forall{\kappa''} . \Path{A}{t [\kappa']}{t [\kappa'']}}}
\end{equation}
often assumed
in type theories with multi-clock guarded recursion \cite{GDTT,bahr2017clocks}.
Instead we will use the following definition.
\begin{definition} \label{def:cirr}
A type $A$ is \emph{clock-irrelevant} if the canonical map $A \to \forall \kappa . A$
(for $\kappa$ fresh) is an equivalence.
\end{definition}
If $A$ is clock-irrelevant it satisfies axiom (\ref{eq:cirr}), and assuming a clock constant
the two statements are equivalent.
Clock irrelevance is needed for the correctness of encodings of coinductive types.
For example, using (\ref{eq:gstr:unfold}) one can prove
\[
\forall\kappa . \gStr(\Nat) \equi (\forall\kappa. \Nat) \times \forall\kappa .(\later^\kappa \gStr(\Nat))
\equi \Nat \times \forall\kappa . \gStr(\Nat)
\]
assuming $\Nat$ clock-irrelevant, and using that $\force$ is an equivalence.
In this paper, rather than assuming all types clock irrelevant, we will prove clock-irrelevance
for a large class of types and use this to construct final coalgebras for a correspondingly large
class of functors.

\section{Encoding coinductive types}
\label{sec:coinductive:types}

In the previous section we saw that the type $\forall\kappa . \gStr(\Nat)$ satisfies the
type equivalence expected of a type of streams. In this section we prove that (assuming
$\Nat$ clock-irrelevant) this type also satisfies the universal property characterising
streams, as a special case of a more general property. In Section~\ref{sec:hits}
we will show that $\Nat$ is indeed clock-irrelevant (as a special case of \autoref{prop:cirr:HITs})
as are a large collection of inductive and higher inductive types.

Here we essentially adapt the final coalgebra construction from \citet{Mogelberg14}. 
Given a type $I$ consider the category\footnote{Here we mean the naive internal notion where the type of arrows is not necessarily truncated, but we also do not require higher coherences. Same for functor below.}
$I \to \Univ$ of $I$-indexed types whose type of objects is
$I \to \Univ$  and whose morphisms from $X$ to $Y$ is the type
\[
 X \to Y \defeq \Pi (i:I). \, \El(X\,i) \to \El(Y\,i)
\]
An \emph{$I$-indexed endofunctor}
is an endofunctor on $I \to \Univ$.
%

Let $F$ be an $I$-indexed endofunctor, an $F$-\emph{coalgebra} is a
pair $(X,f)$ such that $X : I \to \Univ$ and $f : X \to F\,X$.
We say an $F$-coalgebra $(Y,g)$ is \emph{final} if for all coalgebras $(X,f)$ the following type is contractible
  \[
  \Sigma (h : X \to Y).\, g \circ h \peq F(h) \circ f
  \]
Using contractibility we address the fact that there can be more
than one proof of path equality, as is also done by \citet{Ahrens15}.

If $X : \forall \kappa.~ (I \to \Univ)$ write $\forall\kappa . X$ for
$\lambda (i :I) . \forallcode\kappa .{(\capp X\, i)}$.

\begin{definition} \label{def:comm:forall}
An
$I$-indexed endofunctor $F$ \emph{commutes with clock quantification}
if for all families $X$ the canonical
map \[\mathsf{can}_F : F\,(\forall \kappa. \capp X) \to \forall \kappa.\, F\,(\capp X)\]
is a pointwise equivalence.
\end{definition}

For example, if $A : \Univ$ the constant functor to $A$
commutes with clock quantification if and only if $A$ is clock-irrelevant.

\begin{lemma} \label{lem:comm:kappa}
 The collection of endofunctors commuting with clock quantification is
 closed under composition, pointwise product,
 pointwise $\Pi$, pointwise $\Sigma$ over clock irrelevant types,
 and pointwise universal quantification over clocks.
 If $F$ commutes with clock quantification then the guarded
 recursive type $X$ satisfying $X \equi F(\later^\kappa X)$
 is clock irrelevant. Any path type of a clock irrelevant
 type is clock irrelevant.
\end{lemma}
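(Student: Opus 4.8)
The plan is to handle the three assertions in turn, reducing each to a small stock of structural equivalences for clock quantification. For the closure properties, it suffices in each case to exhibit the canonical map $\mathsf{can}$ of the composite functor as a composite of the canonical maps of the constituent functors with a \emph{structural} equivalence relating $\forall\kappa.(-)$ to the type former at hand; a composite of equivalences being an equivalence then closes the case. The structural equivalences used are $\forall\kappa.(A \times B) \equi (\forall\kappa.A) \times (\forall\kappa.B)$, and $\forall\kappa.\Pi(a:A).B(a) \equi \Pi(a:A).\forall\kappa.B(a)$ and $\forall\kappa.\forall\kappa'.A \equi \forall\kappa'.\forall\kappa.A$ when $A$ does not mention $\kappa$, all of which follow from the principle that path equality in $\forall\kappa.(-)$ is pointwise, together with $\forall\kappa.\Sigma(a:A).B(a) \equi \Sigma(a:A).\forall\kappa.B(a)$ when $A$ is clock-irrelevant --- this last one is the sole place the side condition on $\Sigma$ is needed, and it is proved by using the equivalence $A \equi \forall\kappa.A$ to pull the first component out of the quantifier. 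Composition is then handled by the homotopy $\mathsf{can}_{F\circ G} \simeq \mathsf{can}_F \circ F(\mathsf{can}_G)$ and functoriality of $F$; pointwise product, pointwise $\Pi$, pointwise $\Sigma$ over a clock-irrelevant base, and pointwise $\forall\kappa'$ each pair the corresponding structural equivalence with functoriality of the data not being quantified; and the base cases are fed by the observation, recorded just above the lemma, that a constant functor $A$ commutes with clock quantification exactly when $A$ is clock-irrelevant.

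For the guarded recursive type $X$ with $X \equi F(\later^\kappa X)$, fix a fresh clock $\kappa_0$ and let $e : X \to \forall\kappa_0. X$ be the canonical map $x \mapsto \lambda\kappa_0.x$. Combining the hypothesis that $F$ commutes with clock quantification (with respect to $\kappa_0$), the commutation $\forall\kappa_0.\later^\kappa Y \equi \later^\kappa(\forall\kappa_0.Y)$ of $\later$ with a quantifier over a distinct clock, and the equivalence $X \equi F(\later^\kappa X)$, one shows that the square identifying $e$ with $F(\later^\kappa e)$ along these equivalences commutes, so that $\mathsf{isEquiv}(e)$ and $\mathsf{isEquiv}(F(\later^\kappa e))$ are logically equivalent. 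Since being an equivalence is a proposition, one proves $\mathsf{isEquiv}(e)$ by guarded recursion: $\fix^\kappa$ applied to a map $\later^\kappa\mathsf{isEquiv}(e) \to \mathsf{isEquiv}(e)$ already yields $\mathsf{isEquiv}(e)$, and such a map is built from $\later^\kappa\mathsf{isEquiv}(e)$ by first deriving $\mathsf{isEquiv}(\later^\kappa e)$ (pushing the biinvertibility data through the functor $\later^\kappa$ via applicative action), then $\mathsf{isEquiv}(F(\later^\kappa e))$ (a naive internal functor preserves biinvertibility, hence pointwise equivalences), and finally $\mathsf{isEquiv}(e)$ by the square. This second part is the delicate one: the work is in making the identification of $e$ with $F(\later^\kappa e)$ precise enough --- tracking how the canonical map $X \to \forall\kappa_0.X$ sits inside the chain of equivalences so that the $\fix^\kappa$ step genuinely closes --- and in pinning down that the naive internal functor $F$ does preserve pointwise equivalences.

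Finally, for path types, suppose $A$ is clock-irrelevant, so the canonical map $c : A \to \forall\kappa.A$ is an equivalence; then for $x, y : A$ its action on paths $\mathsf{ap}_c : (x =_A y) \to (c\,x =_{\forall\kappa.A} c\,y)$ is an equivalence. Since $c\,x \equiv \lambda\kappa.x$ and $c\,y \equiv \lambda\kappa.y$, pointwise path equality in $\forall\kappa.(-)$ gives $(c\,x =_{\forall\kappa.A} c\,y) \equi \forall\kappa.(x =_A y)$, and the composite of these two equivalences is exactly the canonical map $(x =_A y) \to \forall\kappa.(x =_A y)$; hence $x =_A y$ is clock-irrelevant.
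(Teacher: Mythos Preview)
Your proposal is correct and follows essentially the same approach as the paper's proof: the same structural equivalences handle the closure properties, the guarded recursive case is argued via the same chain $\forall\kappa'.X \equi \forall\kappa'.F(\later^\kappa X) \equi F(\forall\kappa'.\later^\kappa X) \equi F(\later^\kappa\forall\kappa'.X) \equi F(\later^\kappa X) \equi X$ closed by guarded recursion, and the path-type case is identical. Your phrasing of the recursive case as guarded recursion on the proposition $\mathsf{isEquiv}(e)$, together with your remark about tracking that the \emph{canonical} map is what must be shown to be an equivalence, is a welcome bit of extra care---the paper's presentation of that step is terser and leaves this identification implicit.
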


The following theorem states that all such endofunctors
have final coalgebras. The idea of this originates with
\citet{atkey13icfp}, and our proof is an adaptation of the
proof by \citet{Mogelberg14} of a similar statement in extensional type theory.
It refers to the endofunctor $\later^\kappa$, defined as the pointwise
 extension of the functor $\Univ \to \Univ$ mapping
 $A$ to $\latbindcode\tickA\kappa A$ for a fresh $\tickA$.

\begin{theorem} \label{thm:final:coalg}
  Let $F$ be an $I$-indexed endofunctor which commutes with clock quantification,
  and let $\nu^\kappa(F)$ be the guarded recursive type satisfying
  $\nu^\kappa(F) \equi F(\later^\kappa(\nu^\kappa(F))$. Then
  $\nu(F) \defeq \forall \kappa. \nu^\kappa(F)$ has a final $F$-coalgebra structure.
\end{theorem}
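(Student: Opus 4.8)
The plan is to build the final coalgebra directly, following the guarded-recursive construction of \citet{Mogelberg14}, with clock-irrelevance replaced everywhere by the commutation hypothesis on $F$. First I would equip $\nu(F) = \forall\kappa.\nu^\kappa(F)$ with a coalgebra structure. The guarded type $\nu^\kappa(F)$ comes with an equivalence $\nu^\kappa(F) \equi F(\later^\kappa \nu^\kappa(F))$; applying $\forall\kappa$ pointwise and composing with the equivalence $\forall\kappa.\,\later^\kappa X \equi \forall\kappa. X$ induced by $\force$ (here in its $I$-indexed, pointwise form), I get
\[
 \nu(F) = \forall\kappa.\nu^\kappa(F) \equi \forall\kappa.\,F(\later^\kappa\nu^\kappa(F)) \xleftarrow{\ \mathsf{can}_F\ } F(\forall\kappa.\,\later^\kappa\nu^\kappa(F)) \equi F(\forall\kappa.\,\nu^\kappa(F)) = F(\nu(F)),
\]
where $\mathsf{can}_F$ is an equivalence because $F$ commutes with clock quantification. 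Reading this chain from left to right (inverting $\mathsf{can}_F$) gives the structure map $g : \nu(F) \to F(\nu(F))$; in fact $g$ is an equivalence, which will be convenient later.

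Next I would establish the universal property. Given any coalgebra $(X,f)$ with $f : X \to F(X)$, I first produce a morphism $h : X \to \nu(F)$. The idea is to define, for each clock $\kappa$, a map $h^\kappa : X \to \nu^\kappa(F)$ by guarded recursion: using $\fix^\kappa$, assume $\later^\kappa(X \to \nu^\kappa(F))$, apply the applicative-action machinery and $\nxt$ to turn $f : X \to F(X)$ into a map $X \to F(\later^\kappa\nu^\kappa(F))$, then compose with the unfolding equivalence for $\nu^\kappa(F)$. Abstracting the clock gives $h = \lambda\kappa.\,h^\kappa : X \to \nu(F)$. A short computation using \autoref{lem:fix:contractible} (for the uniqueness of the guarded fixed point) and the $\pfix$ path shows that $h$ is a coalgebra morphism, i.e. $g \circ h \peq F(h) \circ f$.

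The main obstacle is \emph{contractibility} of $\Sigma(h : X \to \nu(F)).\, g \circ h \peq F(h)\circ f$, not merely its inhabitedness. For this I would show that any coalgebra morphism $(h, p)$ is path-equal to the one just constructed. The key step is to observe that, at each clock $\kappa$, the component $h^\kappa$ together with the $\kappa$-restriction of $p$ exhibits $h^\kappa$ as a fixed point of the same productive operator used to define it — one has to massage $p$ through the unfolding equivalences and through $\mathsf{can}_F$ and $\force$ being equivalences to see this — and then \autoref{lem:fix:contractible} forces $h^\kappa$ to agree, uniformly in $\kappa$, with the canonical choice, with the coherence datum pinned down as well. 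Assembling these clockwise identifications under $\lambda\kappa.(-)$, and using that path equality in $\forall\kappa.\,A$ is pointwise, yields a path in the total $\Sigma$-type; contractibility then follows since this holds for an arbitrary element against a fixed centre. Throughout, the delicate bookkeeping is keeping the pointwise ($I$-indexed) equivalences $\mathsf{can}_F$, $\force$ and the unfolding of $\nu^\kappa(F)$ coherent with the functorial action $F(h)$, and it is here that tick-irrelevance coherence (e.g.\ \autoref{lem:tirr:coherence}) is needed to make the squares commute up to the higher paths required by contractibility rather than only up to unconstrained path equality.
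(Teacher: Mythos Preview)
Your proof is correct and uses the same ingredients as the paper: the coalgebra map is exactly $F(\force)\circ\mathsf{can}_F^{-1}\circ(\forall\kappa.\,\out^\kappa)$, and finality ultimately rests on \autoref{lem:fix:contractible}. The paper organizes the contractibility argument differently, and more cleanly. Rather than building a centre and contraction by hand, it first isolates a lemma that $\nu^\kappa(F)$ is a final $(F\circ\later^\kappa)$-coalgebra for each $\kappa$ --- i.e.\ the type $\Sigma(h:X\to\nu^\kappa(F)).\,\out^\kappa\circ h = F(\later^\kappa h)\circ\tilde f^\kappa$ is contractible, which is where \autoref{lem:fix:contractible} enters via the local contractibility of $\later^\kappa$ --- and then uses that $\forall\kappa$ preserves contractibility and commutes with $\Sigma$, finishing by transport along the equivalences you identified. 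This dissolves the bookkeeping you anticipate: once $\force$ is known to be an equivalence, no further tick-irrelevance reasoning is needed, and in particular \autoref{lem:tirr:coherence} plays no role here. Your clockwise assembly would work, but the uniformity-in-$\kappa$ of the extracted paths is precisely what ``$\forall\kappa$ of a contractible type is contractible'' gives for free.
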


\begin{example}
In Section~\ref{sec:hits} we will prove that
$\nats$ is clock-irrelevant, and hence the functor $F(X) = \nats \times X$
commutes with clock quantification. Theorem~\ref{thm:final:coalg} then states correctness
of the encoding of streams as $\forall\kappa . \gStr(\Nat)$.
\end{example}

\begin{example} \label{ex:glts}
 Consider the functor $F(X) = \Pfin(A \times X)$, where $\Pfin$ is the finite powerset
 functor defined as a HIT~\cite{FGGW18},
 and $A$ is assumed to be a clock-irrelevant set. In Section~\ref{sec:hits}
 we will prove that $\Pfin$ commutes with clock quantification,
 which implies that $F$ does the same. Theorem~\ref{thm:final:coalg}
 then gives an encoding of the final coalgebra for $F$. This type plays an important role
 in automata theory as it describes the finitely branching $A$-labelled transition systems.
 Let $\GLTS[\kappa]$ denote fixed point for $F\circ \later^\kappa$, let
 $\GLTS \defeq \forall\kappa . \GLTS[\kappa]$ be the coinductive type and let
 $\unfoldG : \GLTS \to \Pfin(A \times \GLTS)$ be the final coalgebra structure.

 Suppose now $x,y: \GLTS$ and $R : \GLTS \times \GLTS \to \Univ$. Define 
 \[
   \Bisimforall{R}(x,y) = \Simforall{R}(x,y) \times  \Simforall{R}(y,x)
 \]
 where $\Simforall{R}(x,y)$ is
\begin{align*}
 \Pi x', a . (a, x')\in \unfoldG(x) \to
  \exists y' . ((a,y') \in \unfoldG(y)) \times R(x', y')
\end{align*}
where $\in$ is defined by induction~\cite{FGGW18}.
By Lemma~\ref{lem:comm:kappa}, $\GLTS$ is clock irrelevant, and an easy induction
on $X : \Pfin(A \times \GLTS)$ shows that the predicate $(a, x') \in X$ is clock irrelevant.
In Section~\ref{sec:hits}
we will prove that propositional truncation commutes with clock quantification,
and since $\exists$  in homotopy type theory is defined as the composition
of truncation and $\Sigma$~\cite{hottbook}, putting all this together shows that $\Bisimforall -$ defines a
$\GLTS \times \GLTS$-indexed endofunctor commuting with clock quantification,
and so Theorem~\ref{thm:final:coalg}
gives an encoding of bisimilarity as a coinductive type.
\end{example}

\citet{mogelbergPOPL2019} prove that path equality
coincides with bisimilarity for guarded recursive types. Using their results
we can prove the following.

\begin{theorem}\label{thm:bisim:is:identity}
 Two elements of the type $\GLTS$ from Example~\ref{ex:glts}
 are bisimilar if and only if they are path equal. Since both
 these are propositions, they are equivalent as types.
\end{theorem}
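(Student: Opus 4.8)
The plan is to combine two ingredients that are, by this point in the paper, essentially available. First, the encoding of $\GLTS$ as a coinductive type gives us a guarded recursive type $\GLTS[\kappa]$ with $\GLTS[\kappa] \equi \Pfin(A \times \later^\kappa \GLTS[\kappa])$ and $\GLTS \defeq \forall\kappa.\,\GLTS[\kappa]$. Second, the result of \citet{mogelbergPOPL2019} says that for guarded recursive types, the guarded bisimilarity relation $\Bisim$ coincides with path equality. So the proof has the shape: transport the coincidence result from the guarded level (on $\GLTS[\kappa]$) up to the clock-quantified level (on $\GLTS$), showing that $\forall\kappa$-quantified guarded bisimilarity matches the ``global'' bisimilarity $\Bisimforall{-}$ defined in Example~\ref{ex:glts}, and that $\forall\kappa$-quantified guarded path equality matches path equality on $\GLTS$.

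First I would set up the guarded notion of bisimilarity $\Bisim$ on $\GLTS[\kappa]$: the same formula as $\Simforall{R}$ but with $\unfoldG[\kappa] : \GLTS[\kappa] \to \Pfin(A \times \later^\kappa \GLTS[\kappa])$ and with the recursive call guarded under $\later^\kappa$, i.e. $\Sim$ is itself defined by $\fix^\kappa$. Following \citet{mogelbergPOPL2019}, $\Sim$ is a proposition and guarded bisimilarity $\Bisim(x,y)$ is equivalent to $\Path{\GLTS[\kappa]}{x}{y}$. Then I would show that $\forall\kappa.\,\Bisim(\capp x,\capp y)$ is equivalent to $\Bisimforall{R}(x,y)$ for the appropriate $R$; this is the step that uses, repeatedly, that the relevant type formers commute with clock quantification — $\Pfin$ (proved in Section~\ref{sec:hits}), propositional truncation (hence $\exists$), $\Sigma$ over the clock-irrelevant type $A$, the membership predicate $\in$ (clock-irrelevant by the easy induction noted in Example~\ref{ex:glts}), and crucially $\force$ being an equivalence to move the $\later^\kappa$ past the $\forall\kappa$. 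Simultaneously, $\forall\kappa.\,\Path{\GLTS[\kappa]}{\capp x}{\capp y} \equi \Path{\forall\kappa.\GLTS[\kappa]}{x}{y} = \Path{\GLTS}{x}{y}$ since path equality in $\forall\kappa$-types is pointwise. Chaining these equivalences with the fixed $\kappa$-wise coincidence gives $\Bisimforall{R}(x,y) \equi \Path{\GLTS}{x}{y}$. Finally, since $\Sim$ and $\Pfin(-)$-membership are propositions, both $\Bisimforall{R}(x,y)$ and $\Path{\GLTS}{x}{y}$ are propositions (clock quantification preserves truncation levels), so logical equivalence upgrades to type equivalence, which is the last sentence of the statement.

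The main obstacle I expect is the middle step: carefully checking that the \emph{whole} definition of $\Bisimforall{-}$ — not just its outermost layer — is built from functors commuting with clock quantification, so that pushing $\forall\kappa$ inward through the nested $\Pi$, $\exists$, $\times$, $\in$ and, most delicately, through $\later^\kappa$ (requiring the $\force$ equivalence of \eqref{eq:force}) all line up to reconstruct exactly the guarded $\Bisim$ under a clock. This is essentially the bookkeeping already promised in Example~\ref{ex:glts} where it is asserted that $\Bisimforall{-}$ defines an endofunctor commuting with clock quantification; the present proof just has to use that same computation to identify the two relations rather than merely to invoke Theorem~\ref{thm:final:coalg}. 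A secondary subtlety is that the recursive structure of $\Sim$ (a guarded fixed point) must be matched up with the (inductive, via $\in$) recursive structure on the $\Pfin$ argument; here one relies on the unfolding equality \eqref{eq:fix:unfold} for $\fix^\kappa$ together with the induction-under-clocks principle to commute the analysis of $\lambda\kappa.\,X$ for $X : \Pfin(A \times \GLTS)$ with the clock quantifier.
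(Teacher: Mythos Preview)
Your broad strategy---reduce to the guarded coincidence result of M{\o}gelberg--Veltri and then transport across $\forall\kappa$---matches the paper's. But you have misidentified the crux, and there is a real gap.

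The coinductive bisimilarity on $\GLTS$ is not $\Bisimforall{R}$ for some $R$; it is the final coalgebra of that functor, concretely $\forall\kappa.\Simfix$ where $\Simfix$ is the guarded fixed point using the \emph{global} unfold $\unfoldG : \GLTS \to \Pfin(A\times\GLTS)$ and quantifying over $x':\GLTS$. By contrast, the guarded relation $\Sim$ to which M{\o}gelberg--Veltri applies uses the \emph{guarded} unfold $\unfoldG[\kappa]$ and quantifies over $x':\later^\kappa\GLTS[\kappa]$. These are two different guarded fixed points, and relating them is \emph{not} a matter of commuting $\forall\kappa$ with the type formers in $\Bisimforall{-}$ (that commutation was already used to build the coinductive type, and does not by itself identify the two fixed points). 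What is actually needed is a lemma relating the two unfolds---the paper proves $\unfoldG[\kappa]\circ\ev\kappa = \Pfin(A\times f)\circ\unfoldG$ with $f(x)=\tabs\tickA\kappa{x[\kappa]}$---together with a membership-lifting lemma (their Lemma~4.1) to move between $(a,x')\in\unfoldG[\kappa](x[\kappa])$ and $(a,x'')\in\unfoldG(x)$ with $x'=f(x'')$. Your plan does not mention this ingredient, and without it the ``push $\forall\kappa$ inward'' argument cannot close.

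The paper also takes a simpler asymmetric route you might prefer: it does not attempt a full equivalence between $\Simfix$ and $\Sim$. One direction (path $\Rightarrow$ bisimilarity) is immediate from reflexivity of $\Simfix$; for the other it proves only the implication $\Simfix(x,y)\to\Sim(x[\kappa],y[\kappa])$ by guarded recursion, using the commuting-unfold lemma above. Since both relations are propositions, logical equivalence suffices. Your ``secondary subtlety'' about matching guarded recursion against the inductive $\in$ is a red herring; the real work is the unfold comparison.
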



The results of Example~\ref{ex:glts} and Theorem~\ref{thm:bisim:is:identity} 
are liftings of results by \citet{mogelbergPOPL2019} 
for guarded recursive types to coinductive types. 
One benefit of passing to coinductive types 
is the existence of non-causal functions. For example, it is not
possible to write an endofunction on $\gStr(\Nat)$ filtering out every other 
element of the input stream. Instead this can be written as an endofunction
on $\Str(\Nat)$~\cite{atkey13icfp}.
Similar examples can be imagined for
\autoref{ex:glts}, for example, a function that collapses an LTS over labels to
one over pairs of labels. Another benefit is that weak bisimilarity can be defined 
as an equivalence relation on coinductive types. 
\citet{Paviotti:FPC:journal} construct a weak bisimilarity relation
on the guarded recursive type $\gLift X \equi X + \later^\kappa\gLift X$, 
and lift this to a relation on a model of FPC. 
The weak bisimilarity relation on $\gLift X$ is not transitive, in fact, its transitive closure
relates all pairs of values, but the transitive closure of the induced relation on 
$\Lift X \defeq \forall\kappa . \gLift X$ coincides with the 
one studied by \citet{capretta2005general} for the coinductive delay monad.

\section{Higher Inductive Types}
\label{sec:hits}

We now extend \cctt\ with higher inductive
types,  adapting a schema for these from
\citet{CavalloHarper} to
our version of CTT. For simplicity we exclude indexed families, but the
schema is still general enough to cover examples like spheres,
pushouts, W-suspensions \cite{WSusp}, (higher) truncations, as well as
finite and countable powersets. We first present the schema and
the introduction rules for HITs, then (\autoref{sec:hits:clock}) we
present our principle of induction under clocks, which generalises
the elimination rule for HITs.

%

%

\subsection{Introduction and formation}
\label{sec:hits:syn}

\begin{figure}
  \begin{mathpar}
  \inferrule*
      { }
      {\isbox {\cdot}} \and
  \inferrule*
      {\isbox \Psi}
      {\isbox {\Psi , i : \I}}\and
  \inferrule*
      {\wfcxt {\Gamma_0}}
      {\istel {\Gamma_0} {\cdot}} \and
  \inferrule*
      {\istel {\Gamma_0} \Gamma \\ \istype {\Gamma_0,\Gamma} A}
      {\istel {\Gamma_0} {\Gamma , x : A}}
  \end{mathpar}
  \caption{Telescope judgements.}
  \label{fig:tel}
\end{figure}

\begin{figure}
  \begin{center}
  \textbf{Constructor declarations} (presuppose \istel \cdot \Delta)
  \begin{mathpar}
  \inferrule*{ \mathcal{K} = (\ell_1, C_1) \ldots (\ell_n, C_n) \\
    (\forall i)~ \isconstr \Delta {(\ell_1, C_1) \ldots (\ell_{i-1}, C_{i-1})} {C_i}}
  {\isconstrs \Delta {\mathcal{K}}}
  \and
    \inferrule*
          {
              \istel \Delta \Gamma \quad (\forall k)~\istel {\Delta,\Gamma} {\Xi_k}
              \quad \isbox \Psi
              \quad \hastype {\Psi} {\phi} \F \\
              \hastype [\mathcal K,\HIT{H}\delta] {\delta : \Delta,\Gamma,
              \bar{\Xi \to \HIT H \delta},\Psi,\varphi} {e} {\HIT{H}{\delta}} \\
            e = [\varphi_1~M_1 \ldots \varphi_m~M_m] \quad
            (\forall k)~M_k \in \Bndr(\mathcal{K};\bar{\Xi\to\HIT{H}\delta})\\
          }
          {\isconstr {\Delta}{\mathcal{K}} {(\Gamma;\bar{\Xi};\Psi;\varphi;e)}}
          \end{mathpar}
  \textbf{Grammar for boundary terms}
  \begin{align*}
    {N, M} \in \Bndr(\mathcal{K};\Theta)  ::= & \quad
     x~\bar{u} \\ 
    & \mid  \con_{\ell}(\bar{t},\overline{\lambda \xi.\,M},\bar{r}) \\ 
    & \mid \hcomp_{\HIT{H}\delta}^j~[\phi \mapsto M]~{M}_0
  \end{align*}
  \textbf{Formation}
  \[
  \inferrule*{\hastype \Gamma \delta \Delta}
             {\istype \Gamma {\HIT{H}{\delta}} }
  \]
  \textbf{Introductions}
\begin{mathpar}
    \inferrule*
          {(\ell,(\Gamma;\bar{\Xi};\Psi;\varphi;e)) \in \mathcal{K} \\
         \hastype {\Gamma_0} \delta \Delta \\
         \hastype {\Gamma_0} {\bar{t}} {\Gamma[\delta]} \\
         \hastype {\Gamma_0} {\bar{a}} {\bar{\Xi[\delta,\bar{t}]\to \HIT{H}\delta}} \\
         \hastype {\Gamma_0} {\bar{r}} \Psi
          }
          {
      \hastype {\Gamma_0} {\con_\ell(\bar{t},\bar{a},\bar{r})} {\HIT{H}\delta[ \varphi[\bar{r}] \mapsto e[\bar{t},\bar{a},\bar{r} ] ] }
    }
\and
\inferrule*{ \hastype \Gamma \delta \Delta \\ \hastype \Gamma \phi \F \\ \hastype \Gamma {u_0} {\HIT H \delta[\phi \mapsto u\subst i 0]} }
         {\hastype \Gamma {\hcomp^i_{\HIT H \delta}\,[\phi \mapsto u]\,u_0} {\HIT H \delta[\phi \mapsto u\subst i 1]} }
\and
         \inferrule*{\hastype \Gamma \phi \F \\
           \hastype {\Gamma, i : \I} \delta \Delta \\ \hastype {\Gamma, i : \I, \phi} {\delta \jeq \delta[0]} {\Delta}
           \\ \hastype \Gamma {u_0} {\HIT{H}{\delta[0]}}}
           {\hastype {\Gamma} {\trans^i_{\HIT{H}{\delta}}\,\phi\,u_0} {\HIT{H}{\delta[1]} [ \phi \mapsto u_0 ]} }
 \end{mathpar}

\caption{Schema for Higher Inductive Types, $\istype {\delta : \Delta}
  {\HIT H \delta}$.  In the typing of the boundary $e$ the subscript
  $\mathcal K, \HIT{H}{\delta}$ indicates that this judgement can refer
  to the labels from $\mathcal K$ and $\HIT{H}{\delta}$ itself. The
  grammar for boundary terms assumes $x \in \Theta$, that $\overline u, \overline t$ not
  mention variables in $\Theta$ and that $\ell \in \mathcal{K}$.
  For further judgemental equalities for $\trans$ see Appendix~\ref{appendix:trans}.}
\label{fig:hits:constrs}
\end{center}
\end{figure}

The judgements of \autoref{fig:hits:constrs} capture declarations of the form
\[
\begin{array}{l}
  \textsf{data}\,\HIT H {(\delta : \Delta)}\,\textsf{where}\\
  \quad \vdots\\
  \quad \ell_i : (\gamma : \Gamma) \to (\bar x : \overline{\Xi \to \HIT H \delta})
                 \to (\bar{i} : \Psi) \to \HIT H \delta [ \phi \mapsto e ]\\
  \quad \vdots
\end{array}
\]
which list constructors and their types for a new datatype
\textsf{H}, taking parameters in the telescope $\Delta$.
On top of the declared constructors every HIT has an
introduction form for \emph{homogeneous composition},
written $\hcomp^i_{\HIT{H}{\delta}}\,[\phi \mapsto u]\,u_0$, where
$\delta$ is not allowed to depend on $i$, as well as a transport operation.
Following \citet{CTTHITS}, the composition
structure for \textsf{H} is given by combining these two operations.
We omit the details for lack of space,
but just recall that from the homogeneous composition
operator one can derive a
\emph{homogeneous filling} operator $\hfillc^i_A\,[\phi \mapsto
  u]\,u_0 : \Path A {u_0} {\hcomp^i_A\,[\phi \mapsto u]\,u_0}$ which
provides a filler for the box specified by $u$ and $u_0$ and closed by
homogeneous composition \cite{HCompNote}.

A constructor for a HIT \textsf{H} is specified by a tuple \\ $\isconstr \Delta
{\mathcal K} {(\Gamma,\bar \Xi,\Psi,\phi,e)}$ where $\mathcal K$ lists
the previously declared constructors. The telescope $\Gamma$ lists
the non-recursive arguments, and each $\Xi$ in $\bar{\Xi}$ is the
arity for a recursive argument. Path constructors further take a
non-empty telescope of interval
variables $\Psi$ and have a boundary $e$ specified as a partial
element over the formula $\phi$.
The judgements for telescopes are given in \autoref{fig:tel}. Note that
these imply that all of
$\Delta$, $\Gamma$, and $\Xi$ only contain variables of proper types, i.e.,
no interval, face, clock or tick assumptions.
Only the boundary is allowed to refer to $\textsf{H}$
and the previous constructors from $\mathcal K$. We signify this by
adding those as subscripts to the typing judgement for $e$.
Cavallo and Harper require the components $M$ of the boundary to
conform to a dedicated typing judgement, which restricts their shape
and makes it possible to correctly specify the inputs to the dependent eliminator for \textsf{H}.
For conciseness we replicate those restrictions with a grammar,
$\Bndr(\mathcal K,\Theta)$, which specifies that a boundary term must be
built either from an applied recursive argument $x\,\bar{u}$, a
previous constructor, or a use of $\hcomp$.
We also assume a code $\HITcode{H}{\delta} : \Univ_i$
whenever the types in all of the $\Gamma$ and $\Xi$ telescopes in
$\mathcal K$ have a code in $\Univ_i$ as well.

\begin{remark}
 Formally, boundary terms are a separate syntactic entity which we
 silently include in ordinary terms. In particular, they have a separate
 equality judgement which is used when typing systems and compositions
 of boundary terms. This equality is the congruence relation induced by
 reducing compositions and constructors  to their boundary terms. It is
 likely that this equality can be proved to coincide with the one induced by
 equality on terms. 
 This approach is similar to the one of~\citet{CavalloHarper}.
\end{remark}

We now give some concrete examples. For readability we write
$(\ell,(\Gamma;\overline{(x : \Xi \to \HIT H \delta)};\Psi;[\phi_1 \mapsto
  M_1, \ldots, \phi_n \mapsto M_n]))$ in place of
$(\ell,(\Gamma,\bar\Xi,\Psi,\vee_i \phi_i,[\phi_1 \mapsto M_1, \ldots,
  \phi_n \mapsto M_n]))$. We also write $\ell$ rather than $\con_\ell$.

\begin{example} \label{ex:HITs}
\textbf{Pushout}. The context $\Delta$ contains the data of a pushout, i.e.,
$\Delta \defeq (A\, B\, C : \Univ)(f : \El (C \to A))(g : \El (C \to B))$. The pushout has
two point constructors, and a path.
  \begin{itemize}
  \item $(\mathsf{inl}, ((a : \El A);\cdot;\cdot;[]))$
  \item $(\mathsf{inr}, ((b : \El B),\cdot;\cdot;[]))$
  \item $\left(\mathsf{push}, \left( (c : \El C); \cdot; (i : \I); \left[
    \begin{array}{l}
      i = 0 \mapsto \mathsf{inl}(f\,c,\cdot,\cdot)\\
      i = 1 \mapsto \mathsf{inr}(g\,c,\cdot,\cdot)
    \end{array}
  \right]\right)\right)$
  \end{itemize}

  \textbf{Sphere}. We can encode the circle and higher spheres, $\sphere {n}$ for $n \geq 1$,
  as a point and an $n$-dimensional surface
  \begin{itemize}
  \item $(\mathsf{base}, (\cdot;\cdot;\cdot;[]))$
  \item $\left(\mathsf{surface}, \left(\cdot;\cdot;\bar{i} : \I^{n};
    \left[ \begin{array}{l} i_1 = 0 \vee i_1 = 1 \mapsto \mathsf{base}(\cdot;\cdot;\cdot)  \\
        \ldots\\
         i_n = 0 \vee i_n = 1 \mapsto \mathsf{base}(\cdot;\cdot;\cdot)  \end{array}
        \right]\right)\right)$
  \end{itemize}
  Note that this defines $\sphere {n}$ for each external $n$. A type
  of $n$-spheres internally indexed over $n : \nats$ can be defined 
  using suspensions~\cite{hottbook}.

  \textbf{Propositional and Higher Truncation}.
  Let $\Delta \defeq (A : \Univ)$, and write propositional truncation as $\trunc A_0$.
  \begin{itemize}
  \item $(\mathsf{in}, (\El A;\cdot;\cdot;[]))$
  \item $\left(\mathsf{squash}, \left(\cdot; (a_0, a_1 : \trunc A_0); (i : \I);
     \left[ \begin{array}{l} i = 0 \mapsto a_0 \\ i = 1 \mapsto a_1 \end{array} \right]\right)\right)$
  \end{itemize}
  For higher truncations $\htrunc A n$, where $n \geq 0$, we use the hub and spoke construction \cite[Sect. 7.3]{hottbook},
  as the schema does not allow quantifying over paths of the HIT itself.\footnote{We believe the semantics would support doing so, but it would complicate the schema.}
  Instead of $\mathsf{squash}$ then we have the following two constructors:
  \begin{itemize}
  \item $(\mathsf{hub}, (\cdot; (f : \sphere {n+1} \to \htrunc A n); \cdot; []))$
  \item $(\mathsf{spoke},((x : \sphere{n+1}); (f : \sphere {n+1} \to \htrunc A n);(i : \I);e))$
  \end{itemize}
  where $e \defeq [ i = 0 \mapsto f\,x, i = 1 \mapsto \mathsf{hub}(\cdot, f,\cdot)]$.\\[1ex]


\textbf{Finite Powerset}. The finite powerset $\Pfin(A)$ is defined in context 
$\Delta \defeq (A : \Univ)$ using the following constructors
%
  \begin{itemize}
  \item $(\emptyset, (\cdot; \cdot; \cdot; []))$
  \item $(\{-\}, (a : \El A; \cdot; \cdot; []))$
  \item $(\cup ,(\cdot; X : \Pfin (A), Y : \Pfin (A), ;))$
  \item $\left(\mathsf{nl}, \left(\cdot, (X : \Pfin(A)), (i : \I),
    \left[ \begin{array}{l}
         i = 0 \mapsto \emptyset \cup X \\
         i = 1 \mapsto X
        \end{array}
   \right]\right)\right)$
  \item $\!\left(\mathsf{as}, \left(\cdot, (X, Y, Z : \Pfin(A)), (i : \I),
    \left[ \begin{array}{l}
       \! \! \! i = 0 \!\mapsto \!(X \!\cup\! Y) \!\cup \! Z \! \! \!\\
        \! \! \! i = 1\! \mapsto \!X \!\cup\! (Y\! \cup\! Z) \! \!\!
        \end{array}
   \right]\right)\right)$
  \item $\left(\mathsf{comm}, \left(\cdot, (X, Y : \Pfin(A)), (i : \I),
    \left[ \begin{array}{l}
         i = 0 \mapsto X \cup Y \!\! \\
         i = 1 \mapsto Y \cup X\!\!
        \end{array}
   \right]\right)\right)$
  \item $\left(\mathsf{idem}, \left(\cdot, (X : \Pfin(A)), (i : \I),
    \left[ \begin{array}{l}
         i = 0 \mapsto X \cup X \\
         i = 1 \mapsto X
        \end{array}
   \right]\right)\right)$
  \item $(\mathsf{hub}, (\cdot; (f : \sphere {1} \to \Pfin (A)); \cdot; []))$
  \item $(\mathsf{spoke},((s : \sphere{1}); (f : \sphere {1} \to \Pfin (A));(i : \I);e))$
  \end{itemize}
where $e \defeq [ i = 0 \mapsto f\,x, i = 1 \mapsto \mathsf{hub}(\cdot, f,\cdot)]$.\\[1ex]
 \end{example}

\subsection{Induction under clocks}
\label{sec:hits:clock}

We now present the principle of induction under clocks. This is an elimination principle
defining elements of dependent families of the form
$\istype {\Gamma, h : \forall\bar\kappa . \HIT H {(\delta[\bar\kappa])}} D$
for a vector of clock variables $\bar \kappa$ by defining its action on elements
of the form $h = \lambda \bar\kappa . \con_\ell(\bar{t},\bar{a},\bar{r})$ in such
a way that boundary conditions are respected. In the case of
an empty list of clock variables, this principle specialises to the elimination
principle for HITs of \citet{CavalloHarper}, which we refer to as the \emph{usual
elimination principle for $H$}.


\begin{figure*}[tbp]
\begin{center}
\textbf{Elimination lists}
\begin{mathpar}
  \inferrule*
      {\hastype {\Gamma_0} \delta \forall \bar\kappa . \Delta \\
       \istype {\Gamma_0, h : \forall \bar\kappa . \HIT{H}{(\cappp \delta})} D}
      {\isclockelimlist {\Gamma_0} {\cdot} {\mathcal{\cdot}} \delta D}
\and
  \inferrule*
      {
         \isclockelimlist {\Gamma_0} {\mathcal{E}} {\mathcal{K}} \delta D \\
         			\Gamma_0, \gamma : \forall \bar\kappa . \Gamma_i[\cappp \delta],
         			\bar{x} : \bar{(\forall \bar\kappa . \Xi_i[\cappp \delta,\cappp \gamma] \to
         							 \HIT H {(\cappp \delta)})},
			        \bar{y}:\bar{\Pi(\xi : \forall \bar\kappa. \Xi_i[\cappp\delta,\cappp \gamma]).
			        D[\lambda \bar\kappa . \cappp x \cappp \xi]},
			        \\ \bar{i} : \Psi \vdash u (\gamma, x, y, \bar{i})
         			 : D[\lambda \bar\kappa . \con_\ell(\cappp \gamma,
         										  \bar{\cappp x},
         										  \bar{i})]
         			  [\phi \mapsto
             		   \evalbsynclock{\delta}{\mathcal{E}}{\bar{x} \mapsto \bar{y}}{e}]}
        {\isclockelimlist {\Gamma_0} {\mathcal{E},u} {\mathcal{K},(\ell,(\Gamma,\bar{\Xi},\Psi,\phi,e))} \delta {D}}
\end{mathpar}

\textbf{Boundary interpretation}
  \begin{align*}
\evalbsynclock{\delta}{\mathcal E}{\bar{x} \mapsto \bar{y}}{e} &= \evalbsynclockg{\delta}{\mathcal E}{\bar{x} \mapsto \bar{y}}{\cdot}{e}\\
  \evalbsynclockg{\delta}{\mathcal{E}}{\bar{x} \mapsto \bar{y}}{\hat\gamma}{[\varphi_0~M_0, \ldots, \varphi_n~M_n]}
  &= [\varphi_0~\evalbsynclockg{\delta}{\mathcal{E}}{\bar{x} \mapsto \bar{y}}{\hat\gamma}{M_0}, \ldots, \varphi_n~\evalbsynclockg{\delta}{\mathcal{E}}{\bar{x} \mapsto \bar{y}}{\hat\gamma}{M_n}]\\
    \evalbsynclockg{\delta}{\mathcal{E}}{\bar{x} \mapsto \bar{y}}{\hat\gamma}{x_j~\bar{u}} &=
    y_j~\lambda \bar\kappa . \bar{u}[\cappp \delta,\cappp \gamma, \bar i, \cappp {\hat \gamma}]\\
    \evalbsynclockg{\delta}{\mathcal{E}}{\bar{x} \mapsto \bar{y}}{\hat\gamma}{\con_{\ell}(\bar{t_\ell},\overline{\lambda \xi.\,M},
\bar{r_\ell})} &=
    \mathcal{E}_\ell[\lambda \bar\kappa.\bar{t_\ell}[\cappp \delta,\cappp \gamma,\bar i,\cappp {\hat \gamma}],
    \bar{S},\bar{R},\bar{r_\ell}]\\
    & \hspace{.4cm} (S_i = \lambda \bar\kappa.\lambda \bar{\xi_i}.\, M[\cappp \delta,\cappp \gamma,\cappp {\bar x},\bar i, \cappp {\hat \gamma}, \bar{\xi_i}], \hspace{.4cm}  
    R_i = \lambda \xi_i.
    \evalbsynclockg{\delta}{\mathcal{E}}{\bar{x} \mapsto \bar{y}}{(\hat{\gamma},\xi_i)}{M_i}
     )\\
    \evalbsynclockg{\delta}{\mathcal{E}}{\bar{x} \mapsto \bar{y}}{\hat\gamma}{\hcomp^j\,[\psi \mapsto M]\,{M}_0} &=
    \comp^j_{D[v\,j]}\,
         [\psi \mapsto \evalbsynclockg{\delta}{\mathcal{E}}{\bar{x} \mapsto \bar{y}}{(\hat\gamma,j)}{M}]
         \,\evalbsynclockg{\delta}{\mathcal{E}}{\bar{x} \mapsto \bar{y}}{\hat\gamma}{M_0}\\
        & \hspace{.4cm} (v = \mathsf{hfill}^j_{\forall \bar{\kappa}.\,\HIT H {(\cappp \delta)}}
    [\psi \mapsto \lambda \bar{\kappa}.\, M[\cappp \delta,\cappp \gamma,\cappp {\bar x},\bar i,\cappp {\hat \gamma},j]]\, (\lambda \bar{\kappa}.\, M_0[\cappp \delta,\cappp \gamma,\cappp {\bar x},\bar i,\cappp {\hat \gamma}]))
  \end{align*}
%
%
%
  \textbf{Induction under clocks}
\begin{mathpar}
  \inferrule*{
      \hastype \Gamma \delta {\forall \bar\kappa. \Delta} \quad \istype
      {\Gamma,x : \forall \kappa. \HIT H {(\cappp\delta)}} D \\
      \isclockelimlist \Gamma {\mathcal{E}} {\mathcal{K}} \delta D \\
      \hastype \Gamma u {\forall \bar\kappa . \HIT H {(\cappp \delta)}}
        }{\hastype{\Gamma}{\elim{H}{D}{\mathcal{E}}{u}} {D[u]}}
\end{mathpar}
  \textbf{Judgemental equalities}
  \begin{align*}
    \elim{H}{D}{\mathcal{E}}{\lambda \bar\kappa . \con_\ell(\bar{t},
    \bar{a},\bar{r})} & \jeq
    \mathcal{E}_\ell[\bar{\lambda\bar\kappa. t},\bar{\lambda\bar\kappa. a},
    \overline{\lambda \xi.\, \elim{H}{D}{\mathcal{E}}{\lambda\bar\kappa. a(\cappp \xi)}},\bar{r}]\\
    \elim{H}{D}{\mathcal{E}}{\lambda \bar\kappa . \hcomp
    [\phi \mapsto u]\, u_0} & \jeq
    \comp^i_{D[v\, i]} [\phi \mapsto \elim{H}{D}{\mathcal{E}}{\lambda\bar\kappa. u}] \,
    \elim{H}{D}{\mathcal{E}}{\lambda\bar\kappa. u_0}\\
    & \hspace{.4cm} (v = \mathsf{hfill}^i_{\forall \bar\kappa . \HIT H {(\cappp \delta)}}
    [\phi \mapsto \lambda \bar{\kappa}.\,u]\, (\lambda \bar{\kappa}.\,u_0))
  \end{align*}

\caption{The principle of induction under clocks. In the definition of the boundary interpretation
the notation $\mathcal{E}_\ell$ refers to the component of $\mathcal{E}$ for the label $\ell$.
Also the environment $\hat\gamma$ contains the extra assumptions introduced in the $\con$ and $\hcomp$ cases:
in the case of interval variables $\cappp i$ stands for just $i$.
}
\label{fig:hits:clockelim}
\end{center}
\end{figure*}

Figure~\ref{fig:hits:clockelim} presents the rule along with judgemental equalities.
The figure first defines the judgement $\iselimlist \Gamma {\mathcal{E}}
{\mathcal{K}} \delta D$ for an \emph{elimination list} $\mathcal E$,
which contains the premises necessary to handle the constructors
in $\mathcal K$.
The case for $\mathcal K, (\ell,(\Gamma,\bar\Xi,\Psi,\phi,e))$
requires an elimination list of the form $\mathcal E, u$ where
$\mathcal E$ is an elimination list for $\mathcal K$ and $u$ is an
element of the type family $D$ at an index built with $\con_\ell$.
In particular $u$ is typed in a context extended with non-recursive
arguments $\gamma$,
recursive arguments $\bar{x}$, interval
variables $\bar{i}$ for the constructor
$\con_\ell(\gamma,\bar{x},\bar{i})$, and also induction hypotheses
$\bar{y}$
about the variables $\bar{x}$.
The element $u$ also needs to suitably match the boundary $e$ whenever
$\phi$ is satisfied.  To express this we transform $e : \HIT H \delta$
into $\evalbsynclock{\delta}{\mathcal E}{\bar{y}}{e}$, the
corresponding element of the family $D$, built using the elimination
list $\mathcal E$ to handle the previous constructors, and the
induction hypotheses $\bar{y}$ to handle the recursive arguments
$\bar{x}$. This transformation satisfies a typing principle that we omit 
for space reasons.

In the following examples we will see how instantiating induction under clocks
with 1 or 0 clocks respectively induces equivalences which prove
that many HITs commute with clock abstraction. In particular, these
examples verify the results used in \autoref{sec:coinductive:types}.

\begin{example} \label{ex:clockedHITs}
\textbf{Spheres}.
In this case induction under clocks unfolds to the principle
\begin{equation*}
\inferrule*{ \istype {\Gamma, x : \forall \kappa . \sphere n} D \quad
			 \hastype{\Gamma}{u_b}{D[\lambda\kappa. \mathsf{base}]}\quad
 			 \hastype{\Gamma}{t}{\forall \kappa .\sphere n}\\
 			 \hastype{\Gamma, \bar i :\I^n}{u_s}{D[\lambda\kappa. \mathsf{surface}(i)]
 			 [\bigvee_{0\leq k < n} (i_k = 0 \vee i_k = 1) \mapsto u_b]}}
			{\hastype{\Gamma}{\elim{\sphere n}{D}{u_b, u_s}{t}}{D[t]}}
\end{equation*}
Using this, we define $\alpha : \forall\kappa. \sphere n \to \sphere n$ as follows
\[\alpha(t) \defeq \elim{\sphere n}{\sphere n}{\mathsf{base}, \lambda \bar i . \,
\mathsf{surface}(\bar i)}{t}\]
We must verify the boundary condition
which states that 
if a component of
$\bar{i}$ is an endpoint then $u_s (\bar i)\jeq u_b$. This follows from the
boundary condition for $\mathsf{surface}$.

Since $s \jeq \capp[\kappa_0]{(\const \, s)}$ for all $s : \sphere n$, if
$\alpha$ as defined above is a right inverse to $\const$ we will have produced
an equivalence $\sphere n \to \forall \kappa. \, \sphere n$ and shown the sphere
to be clock irrelevant.
We can achieve this with another
application of induction under a clock, inhabiting the type $s = \const
(\alpha(s))$: 
\begin{align*}
\hastype{\Gamma} {&\refl}
{\lambda \kappa. \, \mathsf{base} = \const(\mathsf{base})}\\
\hastype{\Gamma, \bar i : \I^n} {&\refl}
{\lambda \kappa. \, \mathsf{surface}(\bar i) = \const(\mathsf{surface} (\bar i))}
\end{align*}
The boundary condition in this case states that
%
the second case reduces to the first when $\bar i$ contains an endpoint. This
follows from congruence of $\refl$.

\textbf{Propositional and Higher Truncation}.
To show that propositional truncation commutes with clock quantification we first
define a map $\alpha : \forall \kappa. \, \htrunc A {} \to \htrunc{\forall \kappa. \,  A} {}$
by induction under clocks as follows:
\begin{align*}
\hastype{\Gamma, a : \forall \kappa. \, A} { \mathsf{in}(a)}
{\htrunc {\forall \kappa. \, A} {} &}\\
\hastype{\Gamma, x_0, x_1 : \forall \kappa . \, \htrunc{A}{}, y_0 ,y_1 : \htrunc
{\forall \kappa. \, A} {}, i : \I} {& \\ \mathsf{squash}(y_0,y_1,i) }
{\htrunc {\forall \kappa. \, A} {}  [(i=0) \mapsto y_0, & (i=1) \mapsto y_1]}
\end{align*}

The above data defines a map $\alpha$ 
satisfying
\begin{align*}
 \alpha(\lambda \kappa . \mathsf{in}(\capp a)) & = \mathsf{in}(a) \\
 \alpha(\lambda \kappa. \, \mathsf{squash}(\capp {x_0},\capp {x_1},r) &
 = \mathsf{squash}(\alpha (x_0),\alpha (x_1),r)
\end{align*}
Let $\beta : \htrunc{\forall \kappa. \,  A} {} \to \forall \kappa. \, \htrunc A {}$ be
the canonical map. To show that
$\beta \circ \alpha = \id{}$, it suffices by induction under clocks to show
\begin{align*}
 \beta(\alpha(\lambda \kappa . \mathsf{in}(\capp a))) & =  \lambda \kappa . \mathsf{in}(\capp a) \\
 \beta (\alpha(\lambda \kappa. \, \mathsf{squash}(\capp {x_0\!},\capp {x_1\!},r)) &
 = \lambda\kappa .\mathsf{squash}(\capp {x_0\!},\capp {x_1\!},r)
\end{align*}
In the latter case the left hand side reduces to
\[
\lambda\kappa . \mathsf{squash}(\capp{\beta(\alpha (x_0))}, \capp{\beta(\alpha (x_1))},r)
\]
and so the case follows by induction. Note that in both these uses of induction under clocks,
the boundary condition is satisfied. For example, in the latter case, when $r=0$ the term
reduces to the proof of $\beta(\alpha (x_0)) = x_0$ given by the induction hypothesis.
Showing that $\alpha \circ \beta = \id{}$  follows from an 
application of the usual elimination principle. 


For the higher truncations we apply induction under clocks again, starting by
observing that we have terms as follows:
\begin{align*}
\hastype {a : \forall \kappa. A} { & \mathsf{in}(a)} {\htrunc{\forall \kappa. \, A}{n}}\\
\hastype {x, y}
         { & \mathsf{hub}(y\circ \const)} {\htrunc{\forall \kappa. A}{n}}\\
\hastype {s : \forall \kappa. \, \sphere{n+1},
          x, y, i : \I} {&\mathsf{spoke}  (\capp [\kappa_0] s,  y \circ \const, i)}
         {\htrunc{\forall \kappa. A}{n}}
\end{align*}
where $x : \forall \kappa. \, (\sphere{n+1} \to \htrunc{A}{n})$ and
$y : (\forall \kappa.  \sphere{n+1})  \to \htrunc{\forall \kappa. A}{n}$.
For us to apply the principle, the third term would need to reduce the second when
$i=1$, which it does, and $y(s)$ when $i=0$. When $i=0$ we have instead
$\mathsf{spoke}  (\capp [\kappa_0] s, y \circ \const, i) = y (\lambda \kappa.
\, \capp [\kappa_0] s)$. From the sphere example we know that $\sphere{n+1}$ is
clock irrelevant, and hence we have a path $p : \lambda \kappa. \, \capp [\kappa_0] s
= s$. This means that we can obtain the term needed for the induction as
\begin{align*}
\hcomp^j \! \left[ \!\!\!
\begin{array}{l}
 (i=0)\mapsto y\,(p\, j) \\ (i=1)  \mapsto \mathsf{hub}(y\!\circ \!\const)
\end{array}\!\!\!
\right]
 \mathsf{spoke}  (\capp [\kappa_0] s, y\! \circ\! \const, i).
\end{align*}
We omit the details of showing that this map is inverse to the canonical one.

\textbf{Pushout}.
Using constructions similar to the examples above, one can prove
that pushouts commute with $\forall \kappa$ using induction under clocks.
More precisely, let
$A, B, C : \forall \kappa. \Univ$, $f : \forall\kappa . \capp C \to \capp A$,
and $g : \forall\kappa . \capp C \to \capp B$.
%
Then the canonical map
\[
(\forall \kappa. \, A)
\sqcup_{(\forall \kappa. \, C)}(\forall \kappa. \, B)
\to \forall\kappa. ((\capp A)\sqcup_{(\capp C)} (\capp B))
\]
is an equivalence.
%
%

\textbf{Finite Powerset}.
A full account of the constructors for the finite powerset is not enlightening,
so we instead exemplify the induction principle in the cases for singleton, union,
and idempotence constructors:
\begin{align*}
&\hastype{\Gamma, x: \forall \kappa . \capp A} {u_{\{-\} }} {D[\lambda \kappa. \{\capp x\}]}\\
&\hastype{\Gamma, x,x' : \forall \kappa. \Pfin(\capp A), y : D(x), y' : D(x')}
{\\  & \qquad \qquad \qquad \qquad u_{\cup}(x,x',y,y')}{D(\lambda \kappa. \capp x \cup \capp {x'})}\\
&\hastype {\Gamma, x : \forall \kappa. \Pfin(\capp A), y : D(x), i:\I}
{\\ & u_{\mathsf{idem}}(x,y,i)\!} {{D(\lambda\kappa . \mathsf{idem}(\capp x,i))
\left[\arraycolsep=1.4pt 
	\begin{array}{l c l}
		\!(i=0) \!&\mapsto& u_{\cup}(x,x,y,y) \! \\
		\!(i=1)\! &\mapsto& y
    \end{array}
\right]}}
\end{align*}

In addition it will
contain the hub and spoke constructors derived from $\sphere 1$, towards obtaining a set type.
Finally, this principle shows that there is an equivalence $\Pfin(\forall \kappa. A)\simeq
\forall \kappa. \Pfin(A)$ as required for the encoding of labelled transition systems.
The concrete calculations are analogous to those for the pushout or trunctation
depending on the constructor.

\end{example}


As a step towards reintroducing clock-irrelevance as an axiom, one would need to
show that the collection of clock irrelevant types is closed under HIT formation:

\begin{theorem}\label{prop:cirr:HITs}
Let $\istype{\delta : \Delta}{\HIT H \delta}$ be a higher inductive type with constructors
\[\con_\ell : (\gamma \!:\! \Gamma_\ell[\delta])
           (x \!:\! \Xi_\ell [\delta, \gamma] \to \HIT H \delta)
           (r \!:\! \Psi_\ell) \to \HIT H \delta [\phi_\ell \mapsto e_\ell]\]
Then $\HIT H {\delta}$ is clock irrelevant if $\Gamma_i[\delta]$ is clock irrelevant
for all $\ell$.
\end{theorem}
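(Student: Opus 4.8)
The plan is to construct, by the principle of induction under clocks, a quasi-inverse $\alpha : \forall \kappa . \HIT H{\delta} \to \HIT H{\delta}$ to the canonical map $\const : \HIT H{\delta} \to \forall \kappa . \HIT H{\delta}$, and then verify the two round-trips; since being an equivalence is a proposition, a quasi-inverse suffices. This carries out, uniformly for an arbitrary HIT of the schema, the pattern used for the concrete HITs in \autoref{ex:clockedHITs}. Throughout, $\kappa$ is fresh, so neither $\delta$ nor any $\Gamma_\ell[\delta]$ mentions $\kappa$, and I read ``$\Gamma_\ell[\delta]$ clock-irrelevant'' as clock-irrelevance of the iterated $\Sigma$-type $\Sigma\,\Gamma_\ell[\delta]$ — which by \autoref{lem:comm:kappa} follows from clock-irrelevance of each component type in its context. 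Write $(-)^\flat : (\forall\kappa.\,\Sigma\,\Gamma_\ell[\delta]) \to \Sigma\,\Gamma_\ell[\delta]$ for the inverse equivalence and keep its two round-trip paths.

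To define $\alpha$ we supply an elimination list. For a constructor $(\ell,(\Gamma_\ell;\bar\Xi_\ell;\Psi_\ell;\phi_\ell;e_\ell))$, induction under one clock gives us $\gamma : \forall\kappa.\,\Gamma_\ell[\delta]$, recursive arguments $\bar x$, induction hypotheses $\bar y$ with $y_i : \Pi(\xi : \forall\kappa.\,\Xi_{\ell,i}[\delta,\gamma]).\,\HIT H{\delta}$ (the motive $D$ being the constant family $\HIT H{\delta}$), and interval variables $\bar r : \Psi_\ell$, and it requires an element of $\HIT H{\delta}$ agreeing on $\phi_\ell$ with $\evalbsynclock{\delta}{\mathcal E}{\bar x\mapsto\bar y}{e_\ell}$. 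We take a correction of $\con_\ell(\gamma^\flat,\bar a,\bar r)$, where $a_i := \lambda\xi.\,y_i(\lambda\kappa.\xi)$ — using that $\Xi_{\ell,i}$ has no timeless variables so $\lambda\kappa.\xi$ typechecks, after transporting along the clock-irrelevance path $\gamma \sim \lambda\kappa.\gamma^\flat$ (lifted to the $\Xi$-fibration) to land in $\forall\kappa.\,\Xi_{\ell,i}[\delta,\gamma]$. The naive term $\con_\ell(\gamma^\flat,\bar a,\bar r)$ has boundary $e_\ell[\gamma^\flat,\bar a,\bar r]$, which matches $\evalbsynclock{\delta}{\mathcal E}{\bar x\mapsto\bar y}{e_\ell}$ only up to that same path (the boundary interpretation substitutes the still-clocked $\gamma$ under a $\lambda\kappa$, whereas $e_\ell[\gamma^\flat,\dots]$ uses the delooped version), so we correct by an $\hcomp$ along this path exactly as in the $\mathsf{spoke}$ case of the higher-truncation example; the recursive subterms $x_j\,\bar u$ and the $\hcomp$-subterms of $e_\ell$ are handled by the defining equations of the eliminator $\elim{H}{D}{\mathcal E}{-}$ in \autoref{fig:hits:clockelim}. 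This yields $\alpha$ with $\alpha(\lambda\kappa.\con_\ell(\bar t,\bar a,\bar r))$ computing, up to the stated correction, to $\con_\ell((\lambda\kappa.\bar t)^\flat,\dots)$, and $\alpha(\lambda\kappa.\hcomp[\phi\mapsto u]\,u_0)$ computing to a $\comp$ as in the figure.

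For $\const\circ\alpha \sim \id$ on $\forall\kappa.\,\HIT H{\delta}$ we again apply induction under clocks, with motive $D(u) := \Path{\forall\kappa.\,\HIT H{\delta}}{\const(\alpha(u))}{u}$. On $u = \lambda\kappa.\con_\ell(\bar t,\bar a,\bar r)$ the left side computes to $\lambda\kappa.\con_\ell((\lambda\kappa'.\bar t)^\flat,\dots)$, and the required path comes from the round-trip equation $(\lambda\kappa'.(-))^\flat \sim \id$ of the equivalence for $\Sigma\,\Gamma_\ell[\delta]$ applied to $\bar t$, together with the recursive induction hypotheses; the path- and $\hcomp$-constructor cases are discharged using the $\comp$-stability equation of \autoref{fig:hits:clockelim} and the pointwise description of paths in $\forall\kappa.\,(-)$. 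For $\alpha\circ\const \sim \id$ on $\HIT H{\delta}$ — the zero-clock instance, i.e. the usual eliminator for $H$ — we take motive $D'(x) := \Path{\HIT H{\delta}}{\alpha(\lambda\kappa.x)}{x}$; on $x = \con_\ell(\bar t,\bar a,\bar r)$ the left side is $\con_\ell((\lambda\kappa.\bar t)^\flat,\dots)$ and we use $(\lambda\kappa.(-))^\flat \sim \id$ and recursion, with the path- and $\hcomp$-cases again handled by the defining equations. Combining, $\const$ is an equivalence, so $\HIT H{\delta}$ is clock-irrelevant.

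The routine parts are the two constructor-case computations, which only use that $(-)^\flat$ belongs to an equivalence and the defining equations for the eliminator. The genuine difficulty — absent for ordinary inductive types — is verifying the boundary conditions of all three elimination lists in the presence of path and $\hcomp$ constructors whose boundaries $e_\ell$ are arbitrary, possibly nested, $\hcomp$-involving boundary terms over previous constructors. One must thread the clock-irrelevance paths $\gamma \sim \lambda\kappa.\gamma^\flat$ through $\evalbsynclock{\delta}{\mathcal E}{\bar x\mapsto\bar y}{e_\ell}$ coherently, so that the $\hcomp$-corrections at different constructors agree on their overlaps. The clean way to organise this is a single \emph{correction lemma}, proved by induction on boundary terms: the naive interpretation $\con_\ell(\gamma^\flat,\bar a,\bar r)$ and the required interpreted boundary are joined by a canonical filler built from the clock-irrelevance paths, in the box-filling style of \autoref{lem:tirr:coherence}; feeding this filler into the definition of $\alpha$ (and into the two motives) via $\hcomp$ reduces everything else to bookkeeping.
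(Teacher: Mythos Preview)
Your approach is correct in outline and correctly identifies the key difficulty (the correction lemma by induction on boundary terms), but the paper takes a more direct route. Rather than defining a separate inverse $\alpha$ and proving two round-trips --- three inductions under clocks in total --- the paper observes that evaluation at a fixed clock $\kappa_0$ is already a left inverse to $\const$, so it suffices to show it is a right inverse: one constructs, by a \emph{single} induction under clocks, a term of type $\Pi(h : \forall\kappa.\HIT H{\delta}).\,\forall\kappa.\,\capp[\kappa_0] h = \capp h$. The paper organises this via an intermediate modified elimination principle whose non-recursive argument is a plain $\gamma : \Gamma_\ell[\delta]$ rather than $\gamma : \forall\kappa.\Gamma_\ell[\delta]$, converting between the two by transport along the clock-irrelevance path for $\Gamma_\ell$; the candidate term at constructor $\ell$ is then simply $\lambda\kappa.\lambda j.\,\con_\ell(\gamma,\lambda\xi.\,\capp{(y\,(\const\,\xi))}\,j,\bar i)$. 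Your ``correction lemma'' is precisely the paper's $\lemmaDg{\ell}{M}$.

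One point you understate: both approaches need not just the two round-trip paths of the equivalence $\Gamma_\ell[\delta] \simeq \forall\kappa.\Gamma_\ell[\delta]$, but the \emph{half-adjoint coherence} that the right-inverse path at a constant element is (propositionally) $\mathsf{refl}$. In the paper this is Lemma~\ref{lem:cirr:coh}, and it is essential in the constructor case of the correction lemma: when a boundary $e_\ell$ contains a nested $\con_{\ell'}(\bar t,\dots)$, the previously-built case for $\ell'$ has already transported along the clock-irrelevance path for $\Gamma_{\ell'}$, and one must know this transport is trivial on the constant-in-$\kappa$ argument $\bar t$ to recover the expected boundary. Without this coherence your corrections at different constructors will not agree on their overlaps in the way you claim. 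The paper's single-induction approach does not avoid this requirement, but it does avoid threading the corrections through two further round-trip inductions, which in your approach multiplies the bookkeeping considerably.
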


As a special case $\nats$ is clock-irrelevant as needed for the encoding of streams of $\nats$.

\section{Denotational semantics}
\label{sec:model}

Previous work has defined a model of \ctt\ in presheaves over a cube category
\cite{CTT, OPAx} and a model of \clott\ in covariant presheaves over a category
of time objects $\TimeCat$~\cite{clottmodel}. We combine these
by considering the category $\PSh{\CubeCat \times \TimeCat}$ of
covariant presheaves over $\CubeCat \times \TimeCat$,
where $\CubeCat$ is the opposite category of the usual choice of cube
category. Let $\Univ$ be a Hofmann-Streicher universe~\cite{Hofmann-Streicher:lifting} in
$\PSh{\CubeCat \times \TimeCat}$.



Following the approach of~\citet{LOPS}, we construct the model using the internal type theory
of $\PSh{\CubeCat \times \TimeCat}$.
For this, we must provide an interval object $\sI$ and a cofibration object $\sF$ satisfying certain axioms. 
We refer to this data as a \emph{cubical model}, and recall that 
presheaves over $\CubeCat \times \mathcal{D}$
admits a cubical model~\cite{sheafModels20} for any small $\mathcal{D}$, where $\sI$ and 
$\sF$ are taken to be the inclusion of those
from $\PSh{\CubeCat}$.
%
%

Recall~\cite{OPAx} that a \emph{CCHM fibration} $(A,\alpha)$ over a context $\Gamma : \Univ$ is a family
$A :\Gamma \to \Univ$ equipped with a fibration structure $\alpha$.
The denotational semantics is given by the category with families (CwF)~\cite{dybjer1996} $\fibCwF$ constructed as
follows:
%
Contexts are global elements of $\Univ$, families are global
CCHM fibrations, and elements of $(A,\alpha)$ are elements of $A$. 

To model HITs
we follow~\citet{CTTHITS}. 
The principle of induction under clocks is justified semantically using that
$\forall \kappa.(-)$
can be modelled as an $\opcat\omega$-limit~\cite{GDTTmodel} and that the structure maps of this limit
cannot change the outermost constructor in HITs, allowing for an inductive proof.


To model a Fitch-style modality like $\later$ we must define a dependent adjunction \cite{drat}.
%
A dependent adjunction consists of
an endofunctor $L$ and an action on types $R$ such that  elements of
$\istype{L(\Gamma)}{A}$ correspond bijectively to elements of $\istype{\Gamma}{R(A)}$.
Both $R$ and the bijective correspondence are required to be natural in $\Gamma$. 
\citet{clottmodel} define a dependent right adjoint on the slice of $\PSh{\TimeCat}$
over an object of clocks in that model, capturing the dependence of $\later$ on a clock. 
This construction extends to $\PSh{\CubeCat\times\TimeCat}$. To extend 
to $\fibCwF$ we show how to lift CCHM fibration structures from $A$ to $R(A)$ by a general construction. 

\begin{definition}
Let $\cat C$ be a cubical model, and let $L$ be an endofunctor on $\cat C$ which preserves
finite limits. We say that $L$ \emph{preserves the interval} if there is an isomorphism
$L(\sI) \cong \sI$ which preserves endpoints. We say that $L$ \emph{preserves cofibrations}
if it maps cofibrations to cofibrations and preserves pullbacks where the vertical
maps are cofibrations.
\end{definition}

Writing $\compr\Gamma A$ for the comprehension object for the family $A$ over $\Gamma$,
and $[\phi]$ for the family classified by a cofibration $\phi$, 
the condition of preserving cofibrations can be presented in the internal language as an operation
mapping cofibrations $\hastype{\Gamma}{\phi}{\sF}$ on $\Gamma$ to
cofibrations $\hastype{\L\Gamma}{\LCof(\phi)}{\sF}$  such that
$\compr{\L\Gamma}{[\LCof(\phi)]} \cong \L(\compr{\Gamma}{[\phi]})$ as
subobjects of $\L\Gamma$ and satisfying $\LCof(\phi[\sigma]) = \LCof(\phi)[\L\sigma]$.

\begin{theorem} \label{thm:dep:right:adj}
 Let $\cat C$ be a cubical model, let $\L : \cat C \to \cat C$
 be a functor preserving finite products, the interval and cofibrations, and let $\R$ be
 a dependent right adjoint to $\L$. If a family $A$ over $\L \Gamma$ carries a global
 composition structure, so does $\R A$ over $\Gamma$. Moreover, this assignment
 is natural in $\Gamma$.
\end{theorem}

Note that this is a statement about \emph{global} composition structures in the model.
The theorem can not be proved in the internal logic of the topos, but can be proved
in an extension of this using crisp type theory, similarly to
the construction of universes for cubical type theory~\cite{LOPS}.
The reason is that the proof uses the bijective correspondence of
dependent right adjoints which only applies to global terms.

\begin{proofsketch}
 A CCHM composition structure on $R(A)$ corresponds to an
 assignment of any $\sigma : \Delta \times \sI \to \Gamma$ to a rule   
 \[
   \inferrule*
  {
   \hastype{\Delta}{\phi}{\sF}
   \and
   \hastype{\compr{\compr{\Delta}{\sI}}{[\phi[\p]]}}{u}{(\R A)[\sigma\p]}
   \\
   \hastype{\Delta}{u_e}{(\R A)[\sigma \!\circ\! \pair{\id}e]}
   \and \!\!\!\!\!
   \hasnotype{\compr{\Delta}{[\phi]}}{u_e[\p] = u[\compr{\pair{\id_{\Delta}}{e}}{[\phi]}]}{}
  }
  {\hastype{\Delta}{c^{\R A}_{\sigma}\, \phi\, u\, u_e}{(\R A)[\sigma \circ \pair{\id}{1-e}]}}
 \]
 for $e=0,1$, which is natural in $\Delta$  and satisfies
 \[\hasnotype{\compr{\Delta}{[\phi]}}{(c_{\sigma}\, \phi\, u\, u_e)[\p] = u[\compr{\pair{\id}{1-e}}{[\phi]}]}{}
 \]
 Using $(\R A)[\sigma\p] = \R(A[\L(\sigma\p)])$ the assumptions can be transported
 along the bijective correspondence of dependent right adjoints to give terms
\begin{align*}
    \hastype{\L(\compr{\compr{\Delta}{\sI}}{[\phi[\p]]})}{\overline u}{A[\L(\sigma\p)]}
   & &
   \hastype{\L\Delta}{\overline{u_e}}{A[\L(\sigma \circ \pair{\id}e)]}
\end{align*}
Since $\L$ preserves cofibrations 
the domain of $\overline u$ is isomorphic to 
$\compr{\compr{\L(\Delta)}{\sI}}{[\LCof(\phi[\p])]}$, so
this data can be transformed into input data for the composition operation on $A$. 
The result of this composition can then be transported back along the bijective 
correspondence of the dependent right adjoint to give the output for the composition 
operation on $RA$. 
\end{proofsketch}

\section{Related work}
\label{sec:related}

Guarded Cubical Type Theory~\cite{GCTT} combines Cubical Type Theory
with single-clocked guarded recursion.
While this case is useful for many purposes, it cannot be used to encode
coinductive types.
\citet{mogelbergPOPL2019}
extend Guarded Cubical Type Theory with ticks as in CloTT.
They give a model of this calculus including HITs, and
show that bisimilarity coincides with path equality for a large class
of guarded recursive types. This paper can be seen as an extension of
that with multiple clocks, allowing for these results to be lifted from guarded
recursive types to coinductive types. Note that modelling the multiclocked case
is much more complex than the single clock case. In particular, equipping
$\later$ with a composition structure is much more challenging, because the simple
description of the left adjoint in the single clock case allowed for a simple construction.
The extended language of ticks giving computation rules for clock irrelevance
presented here is also new.


The encoding of coinductive types using guarded recursion was first
described by \citet{atkey13icfp} in the simply typed
setting. Since then a number of dependent type theories have been
developed for programming and reasoning with
these~\cite{Clouston:Programming,Mogelberg14}, of which CloTT is the most
advanced. \citet{bahr2017clocks} prove syntactic properties of
CloTT including strong normalisation and canonicity, but only for a
pure calculus without identity or path types. The model of CloTT
constructed by \citet{clottmodel} considers extensional
identity types, but no cubical features.
Coinductive types can also be encoded using a combination of guarded
recursion and a $\Box$-modality~\cite{Clouston:Programming}. This approach
has not been studied in combination with CTT yet, and
also appears to be less flexible, e.g., it does not seem possible
to define nested coinductive types.

Sized types~\cite{HughesPS96} is another approach to encoding
productivity in types, by annotating (co)inductive types
with sizes indicating a bound on the size of the allowed elements.
Specifically, sized types reduce both termination and productivity to (well-founded) induction on sizes.
They have been extensively studied from the syntactic
perspective~\cite{Abel:Wellfounded,Abel:NBE:sized:types,Sacchini13} but
are not well understood from the perspective of denotational semantics.
Our view
is that sized types are closer to working in the models
of type theory, like the one provided here, and guarded recursion is
a more abstract, principled perspective. This is supported by the model of
guarded recursion in sized types constructed by \citet{Veltri:19}.
To our knowledge sized types have never been used to
solve equations with negative occurrences, which is an important application
of guarded recursion.

The coincidence of bisimilarity and
path equality for streams as coinductive records has been proved
in Cubical Agda~\cite{CubicalAgda}, and it is likely that the proof can be extended to general M-types.
%
\citet{VeltriFSCD2021} proves that bisimilarity implies path equality for the
final coalgebra for the $\Pfin$ using sized types in Cubical Agda.
This coincidence should therefore be seen as
a feature of Cubical Type Theory, rather than guarded recursion.
On the other hand, when proving such results guarded recursion is a powerful
framework for ensuring productivity of definitions, as illustrated by the examples
in this paper.

The final coalgebra for the finite powerset functor can be constructed in set theory
as a limit of an $\omega + \omega$-indexed sequence~\cite{Worrell05}.
This construction has been formalised in Cubical Agda by \citet{VeltriFSCD2021}
using the lesser limited principle of omniscience, a weak choice
principle.
%
From this perspective it is
interesting that our model uses $\omega$-indexed step-indexing only,
and therefore $\GLTS[]$ as constructed in \autoref{sec:coinductive:types} is realised
in the model as an $\omega$-limit. This construction works because
of the formulation of $\Pfin$ as a HIT and because the $\omega$-chain is constructed
\emph{externally}, using judgemental equality. In particular, the counter example
constructed in Proposition~5 of \citep{VeltriFSCD2021} uses an $\omega$-chain
of elements whose projections are only equal up to path equality.

Multimodal dependent type theory \citep{gratzer2020multimodal} is a general framework
for dependent modal type theories parametrised over \emph{mode theories}.
By instantiating this
appropriately, one can recover e.g., the basic modal framework needed for internalising
parametricity in type theory \citep{cavallo2020internal,bernardy2015presheaf}, or
for the combination of $\later$ and $\Box$ used for guarded recursion by
\citet{Clouston:Programming}. Generalising this to multiple clocks seems to require a notion
of dependent mode theory, as for example, the modal operator $\later$ depends on the object of 
clocks.

\section{Conclusion and future work}
\label{sec:conclusion}

We have presented the type theory \cctt, and shown that the principle of induction under clocks
can be used to construct a
rich supply of functors for which coinductive types can be encoded using guarded recursion.
This allows for simple programming with a wide range of coinductive types, including ones constructed
using higher inductive types.
We have seen by example how to prove coincidence
of path equality with bisimilarity for these types.
We believe this type theory is useful not just for coinductive reasoning, but also for reasoning about
advanced programming language features using a form of synthetic guarded domain
theory~\cite{Paviotti:FPC:journal,paviottiPCF}. In fact, an earlier version of \cctt\
has already been used for a semantic proof of applicative simulation being 
a congruence for a lambda calculus with finite non-determinism~\cite{mogelberg2021two}.

We are currently developing a prototype implementation of \cctt\ as an extension of Cubical Agda.
The current implementation\footnote{https://github.com/agda/guarded/tree/forcing-ticks} 
includes clock, ticks, $\later$ (and its composition structure), and fixed points, but not
the principle of induction under clocks. We believe this can be simulated using Agda's rewrite 
features~\cite{cockx:rewrite}, 
but should in the long run be build into Agda's pattern matching. The proof of Theorem~\ref{thm:final:coalg}
has been verified in this. 
%

We would also like to prove canonicity for \cctt, building on similar
results for Cubical Type Theory \citep{HCanCTT,huber2019canonicity,sterling2021normalization}
and Clocked Type Theory \citep{bahr2017clocks}.
For Cubical Type Theory
canonicity is proved for terms in a context of only interval variables.
 \citet{bahr2017clocks} prove that in Clocked Type Theory, terms of type $\Nat$ or
$\Bool$ (the only inductive types considered in that paper) 
in contexts with free clock variables, but no other assumptions, reduce to introduction forms.
Generalising this result to general HITs and contexts also containing interval variables should allow 
%
for terms of type $\forall\bar\kappa . H(\capp[\bar \kappa]\delta)$ to reduce
to a form in which the $\beta$-rule for induction under clocks can be applied. On the other hand,
we believe that it should not be necessary to include free tick variables, and so the only tick that
needs to be considered in reductions is $\tickc$. Our equational rules have been designed with
this in mind.


Unlike this paper, other type theories for multi-clocked guarded
recursion~\cite{GDTT, clottmodel} take clock irrelevance
(\ref{eq:cirr}) as an axiom. This requires that universes be indexed by clock contexts, and the
$\later$ modality be restricted to $\later^\kappa : \Univ_{\Delta} \to \Univ_{\Delta}$ for
$\kappa \in \Delta$, because an unrestricted $\later^{\kappa}$ breaks clock
irrelevance~\cite{GDTTmodel}. \citet{GDTTmodel} show how to
construct such universes 
of 
types that are clock-irrelevant in the sense of the map $A \to \forall\kappa. A$ being an isomorphism,
rather than an equivalence. Future work includes constructing larger
universes of types clock-irrelevant in the more liberal sense used in this paper.


\begin{acks}
%
%
This work was supported by a research grant (13156) from VILLUM FONDEN.
\end{acks}

\bibliographystyle{ACM-Reference-Format}
\bibliography{paper}

\appendix
\section{Appendix}

\subsection{Omitted proofs Section~\ref{sec:CCTT}}
\begin{proofof}{Lemma~\ref{lem:fix:contractible}}
Given $f : \later^\kappa A \to A$, let $p$ be the corresponding proof of (\ref{eq:fix:unfold}), then as mentioned the center of contraction for
\[
\Sigma (x : A). \Path Ax{f\,(\tabs\tickA\kappa{x})}
\]
will be the pair $(\fix^\kappa f, p)$.
Then for any other such pair $(h,p_h)$ we have to show $(\fix^\kappa f, p) = (h,p_h)$, which is equivalent to
\[
\Sigma (q : \fix^\kappa f = h).~q_* p = p_h
\]
We define $q$ by guarded recursion
\[
\fix^\kappa\,\lambda (r : \later^\kappa (\fix^\kappa f = h)). (p^{-1},p_h^{-1})_*(\mathsf{ap}_f(\lambda i. \tabs \tickA\kappa {\tapp r\,i}))
\]
We then proceed to prove $q_* p = p_h$ by first observing that it is equivalent to
\[
(p,p_h)_*q = \mathsf{ap}_f(\lambda i. \tabs \tickA\kappa {\tapp q\,i})
\]
so that by expanding $q$ by (\ref{eq:fix:unfold}) on the left hand side and canceling the transports we obtain the right hand side and the proof is concluded.

To prove that $\Sigma (x : \later^\kappa A). \latbind{\tickA}{\kappa}{\Path{A}{\tapp{x}}{f(x)}}$ is
contractible, note that
if $B$ is contractible, so is $\later^\kappa B$. Applying this to the first part of the lemma, we get that 
the following equivalent types are all contractible:
\begin{align*}
 \later^\kappa & (\Sigma (x : A). \Path Ax{f\,(\tabs\tickB\kappa{x})}) \\
 & \equi \Sigma (x : \later^\kappa A). \latbind \tickA\kappa{\Path A{\tapp x}{f\,(\tabs\tickB\kappa{(\tapp x)})}} \\
 & \equi \Sigma (x : \later^\kappa A). \latbind \tickA\kappa{\Path A{\tapp x}{f\,(\tabs\tickB\kappa{(\tapp[\tickB] x)})}} \\
 & \equi \Sigma (x : \later^\kappa A). \latbind \tickA\kappa{\Path A{\tapp x}{f\,(x)}} 
\end{align*}
where the first equivalence is by (\ref{eq:later:sigma}) and the second by tick irrelevance. 
\end{proofof}

\begin{proofof}{Lemma~\ref{lem:tirr:coherence}}
We have
\[\istick{\Gamma, i_0, \ldots, i_n : \I} {u_0 \defeq u\subst{i_0}{0}} {\kappa}{\Gamma'}\]
and
\[\istick{\Gamma, i_0, \ldots, i_n : \I, \phi, j : \I}{p(j) \defeq \tirr(u_0,u,j)}{\kappa}{\Gamma'}\]
Define $\istype{\Gamma, i_0, \ldots, i_n : \I, j : \I} C$ by
\[\mathsf{hfill}^j\,[\phi \mapsto A\subst \alpha {p(j)}]\,A\subst \alpha {u_0}\]
and then take $B \defeq C\,1$.
Then we define $\mathsf{filler}(t,u)$ as \[\hcomp^j_{C}\,[\phi \mapsto \tappg t {} {p(j)}]\,(\tappg t {} {u_0})\]
\end{proofof}

\subsection{Substitution for Tick Application}
\label{app:subst}
In figure~\ref{fig:substitutions} we present the formation rules for
substitutions, based on the ones from \citet{clottmodel}, and extended
to account for the new tick judgemnts and the contexts from cubical
type theory. In what follows we explain how to apply a substitution to a tick application term.

\begin{figure}
\textbf{Substitutions}
\begin{mathpar}
  \inferrule*{ \wfcxt{\Gamma} }{\issub{\Gamma} {[]} {\cdot}}
  \and
  \inferrule*{\issub {\Gamma} {\sigma} {\Gamma'} \\ \hastype \Gamma t {A\sigma}}
             {\issub \Gamma {(\sigma, t)} {\Gamma', x : A}}
             \and
  \inferrule*{\issub \Gamma \sigma {\Gamma'} \\ \hastype \Gamma {\kappa'} \clocktype}
             {\issub \Gamma {(\sigma, \kappa')} {\Gamma', \kappa : \clocktype}}
             \and
  \inferrule*{\issub \Gamma \sigma {\Gamma'} \\ \hastype \Gamma {r} \I}
             {\issub \Gamma {(\sigma, r)} {\Gamma', i : \I}}
             \and
  \inferrule*{\issub \Gamma \sigma {\Gamma'} \\ \hastype {\Gamma'} {\phi} \F \\ \hastype {\Gamma} {\phi\sigma = 1_\F} {\F}}
             {\issub \Gamma {\sigma} {\Gamma', \phi}}
             \and
  \inferrule*{\issub {\Gamma_0} \sigma {\Gamma'} \\ \istick{\Gamma} u {\kappa\sigma} {\Gamma_0}}
             {\issub \Gamma {(\sigma , u)} {\Gamma', \tickA : \kappa}}
             \and
  \inferrule*{\issub {\Gamma_0} \sigma {\Gamma'} \\ \isbtick{\Gamma} {\kappa'} {v} {\Gamma_0}}
             {\issub \Gamma {\sigma , \btick {\kappa'} {v}} {\Gamma', \kappa : \clocktype, \tickA : \kappa}}
\end{mathpar}
  \caption{Formation rules for substitutions.}
  \label{fig:substitutions}
\end{figure}

\newcommand{\residual}[2]{\mathsf{residual}({#1},{#2})}
\newcommand{\bresidual}[2]{\mathsf{bresidual}({#1},{#2})}

\begin{operation}
Given $\issub {\Delta} {\sigma} {\Gamma}$ and 
$\istick {\Gamma} {u} {\kappa} {\Gamma'}$ we 
define an operation $\residual \sigma u$ returning tuples of one of two forms
\begin{itemize}
\item $(\Delta',\sigma')$ such that $\istick{\Delta} {u\sigma} {\kappa\sigma} {\Delta'}$ and $\issub {\Delta'} {\sigma'} {\Gamma'}$, and $\issub {\Delta} {\sigma |_{\Gamma'} \jeq \sigma'} {\Gamma'}$.
\item $(\Delta',\kappa'',\sigma')$ such that $\isbtick{\Delta} {\kappa\sigma} {u\sigma} {\Delta'}$ and \[{\issub {\Delta', \kappa'' : \clocktype} {\sigma'} {\Gamma'}},\] such that $\kappa\sigma' = \kappa''$ and $\issub {\Delta} {\sigma |_{\Gamma'} \jeq \sigma'\subst {\kappa''} {\kappa\sigma}} {\Gamma'}$.
\end{itemize}
\end{operation}
\begin{construction}
The tick $u$ contains tick variables $\tickA_0 : \kappa \ldots \tickA_n : \kappa$, here given in the order they appear in $\Gamma$, so in particular we have $\Gamma = \Gamma_1, \tickA_0 : \kappa, \Gamma_2$ and $\Gamma' \subcxt \Gamma_1, \cuba{\Gamma_2}$.
Then let us look at the restriction of $\sigma$ to $\Gamma_1, \tickA_0 : \kappa$.
We have two cases:
\begin{itemize}
\item[(i)] $(\sigma_1,v)$ with $\issub {\Delta'} {\sigma_1} {\Gamma_1}$ and $\istick {\Delta_0} v {\kappa\sigma_1} {\Delta'}$ with $\Delta_0 \subcxt \Delta$.
\item[(ii)] $(\sigma_0,(\kappa',v))$ with $\Gamma_1 = \Gamma_0, \kappa : \clocktype$, and $\issub {\Delta'} {\sigma_0} {\Gamma_0}$, and  $\isbtick {\Delta_0} {\kappa'} {v} {\Delta'}$, with $\Delta_0 \subcxt \Delta$.
\end{itemize}
In either case the tick $v$ lives in the context $\Delta_0 \subcxt \Delta$ because other components of the substitution $\sigma$, e.g. for $\tickA_1 \ldots \tickA_n$, might have shrunk the context so.

In case (i) we extend $\sigma_1$ to $\issub {\Delta'} {\sigma_1'} {\Gamma_1, \cuba{\Gamma_2}}$ because whenever we have one of $\hastype {\Delta} {\kappa_i} {\clocktype}$, or $\hastype {\Delta} {r} {\I}$, or $\hastype {\Delta} {\phi \jeq 1_\F} {\F}$ we also have the same in $\Delta' \subcxt \Delta$. Finally we take $\sigma'$ to be $\sigma_1' |_{\Gamma'}$, which agrees with $\sigma |_{\Gamma'}$ by construction. The constructed tuple will be $(\Delta',\sigma')$.

In case (ii) we extend $\sigma_0$ first to $\issub {\Delta', \kappa''} {(\sigma_0, \kappa'')} {\Gamma_0, \kappa : \clocktype}$ then to a substitution $\issub {\Delta',\kappa''} {\sigma_0'} {\Gamma_1, \cuba{\Gamma_2}}$ as above, furthermore noting that $\cuba{\Gamma_2}$ does not depend on $\kappa$. Finally we take $\sigma'$ to be $\sigma_0' |_{\Gamma'}$. The substitution $\sigma'\subst {\kappa''} {\kappa'}$ then agrees with $\sigma |_{\Gamma'}$ by construction, and we also have $\kappa\sigma' \jeq \kappa''$. The constructed tuple will be $(\Delta',\kappa'',\sigma')$.

To show that we have the correct typing for $u\sigma$ we observe that $\Delta'$ is smaller as a subcontext of $\Delta$ than the ones the ticks $\tickA_1\sigma \ldots \tickA_n\sigma$ target, so they can all be weakened to fit the typing $\istick {\Delta} {\tickA_i\sigma} {\kappa\sigma_1} {\Delta'}$. In case (i) then we are done by extending this observation to $v$. In case (ii) we can further derive $\isbtick {\Delta} {\kappa\sigma_1} {\tickA_i\sigma} {\Delta'}$ which gives us what we want.
\end{construction}

For $\issub {\Delta} {\sigma} {\Gamma}$, and $\hastype{\Gamma'} t {\latbind \tickA \kappa \, A}$, and 
$\istick {\Gamma} {u} \kappa {\Gamma'}$, we have that substitution commutes with tick application in the 
sense that $(\tappg t {\Gamma'} {u})\sigma$ equals 
\[
  \left\{ \begin{array}{l l}
                              \tappg {t\sigma'} {\Delta'} {u\sigma} & \text{if } \residual{\sigma}{u} = (\Delta',\sigma')\\
                              \tappg {(\kappa''.\, t\sigma')} {\Delta'} {\btick{\kappa\sigma} {u\sigma}}  & \text{if } \residual{\sigma}{u} = (\Delta',\kappa'',\sigma')
                              \end{array}
                       \right.
\]
We want to show that typing is preserved. 
For the first case, we can assume $\hastype {\Delta'} {t\sigma'} {\latbind \tickA {\kappa\sigma'} A(\sigma',\tickA)}$, so by the tick application rule we have $\hastype {\Delta} {\tappg {t\sigma'} {\Delta'} {u\sigma}} {A(\sigma',\tickA)\subst \tickA {u\sigma}}$, where the latter type is equal to $A\subst \tickA u\sigma$ as expected.
For the second case, we can assume $\hastype {\Delta', \kappa''} {t\sigma'} {\latbind \tickA {\kappa\sigma'} A(\sigma',\tickA)}$, so by the forcing tick application rule we have 
\[\hastype {\Delta} {\tappg {(\kappa''.\, t\sigma')} {\Delta'} {\btick {\kappa\sigma} {u\sigma}}} 
       {A(\sigma',\tickA)\subst {\tickA : \kappa''} {u\sigma : \kappa\sigma}},\] where the latter type is equal to $A\subst \tickA u\sigma$ as expected.

\begin{operation}
Given $\issub {\Delta} {\sigma} {\Gamma}$ and 
$\isbtick {\Gamma} {\kappa} {u} {\Gamma'}$ we 
define an operation $\bresidual \sigma {\btick {\kappa} u}$ returning tuples of the form
\begin{itemize}
\item $(\Delta',\sigma')$ such that $\isbtick{\Delta} {\kappa\sigma} {u\sigma} {\Delta'}$ and $\hastype {\Delta'} {\sigma'} {\Gamma'}$, and $\hastype{\Delta} {\sigma |_{\Gamma'} \jeq \sigma'} {\Gamma'}$
\end{itemize}
\end{operation}
\begin{construction}
Here the tick $u$ might not contain any tick variables, in which case we take $\Delta'$ to be $\Delta$ and $\sigma'$ to be $\sigma |_{\Gamma'}$.
If $u$ does contain tick variables then we have cases (i) and (ii) as in the construction of $\residual{-}{-}$. We chose to use $\kappa$ in the assumption $\isbtick {\Gamma} {\kappa} {u} {\Gamma'}$ so that the names in the cases would line up with the previous construction.

In case (i) we construct the tuple $(\Delta',\sigma')$ as we did before, and the same reasoning extends to the well-typing of ${\isbtick{\Delta} {\kappa\sigma} {u\sigma} {\Delta'}}$.

In case (ii) let us recall that here $\Gamma$ is of the form $\Gamma_0, \kappa : \clocktype, \tickA_0 : \kappa, \Gamma_2$, with $\Gamma' \subcxt \Gamma_0, \kappa : \clocktype, \cuba{\Gamma_2}$. We also have $\issub {\Delta'} {(\sigma_0,\kappa')} {\Gamma_0, \kappa : \clocktype}$, which agrees with $\sigma$ restricted to the same context. As before we can extend $(\sigma_0,\kappa')$ to a substitution for $\Gamma_0, \kappa : \clocktype, \cuba{\Gamma_2}$ by using the relevant components of $\sigma$, and finally obtain the desired $\sigma'$ by restriction to $\Gamma'$.
\end{construction}

For $\issub {\Delta} {\sigma} {\Gamma}$, and $\hastype{\Gamma', \kappa : \clocktype} t {\latbind \tickA \kappa A}$, and
$\isbtick {\Gamma} {\kappa'} {u} {\Gamma'}$, we have that substitution commutes with forcing tick application in the following way:
\[
  (\tappg {(\kappa.t)} {\Gamma'} {\btick {\kappa'} u})\sigma = 
                              \tappg {(\kappa.\, t(\sigma',\kappa))} {\Delta'} {\btick{\kappa'\sigma} {u\sigma}} 
\]
where $\bresidual{\sigma}{\btick {\kappa'} u} = (\Delta',\sigma')$. 
We want to show that typing is preserved. We can assume 
\[\hastype {\Delta', \kappa : \clocktype} {t(\sigma',\kappa)} {\latbind \tickA \kappa \,A(\sigma',\kappa,\tickA)},\] 
then by the forcing tick application rule we have 
\[\hastype {\Delta} {\tappg {(\kappa.\, t(\sigma',\kappa))} {\Delta'} {\btick{\kappa'\sigma} {u\sigma}}} {A(\sigma',\kappa,\tickA)\subst {\tickA : \kappa} {u\sigma : \kappa'\sigma}},\] 
where the latter type is equal to $A\subst {\tickA : \kappa} {u : \kappa'}\sigma$ as expected.

\subsection{Omitted proofs Section~\ref{sec:coinductive:types}}

\begin{proofof}{Lemma~\ref{lem:comm:kappa}}
 The case of composition is clear, and products follow from the
 fact that $\forall\kappa . (A \times B) \equi (\forall\kappa .A) \times (\forall\kappa . B)$.
 Likewise, the case of $\Pi$-types follows from the fact that
 $\forall\kappa . \Pi(a:A) .B \equi \Pi(a:A) .\forall\kappa .B$ which can be
 proved by commuting two arguments. In the case of $\Sigma$,
 since $\forall\kappa. (-)$ behaves as a function space from a type
 of clocks, one can prove
 \begin{align*}
\forall\kappa . \Sigma (a :A) . B(a)
 & \equi \Sigma (a : \forall\kappa . A) . \forall\kappa . B(a[\kappa]) \\
 & \equi \Sigma (a : A) . \forall\kappa . B(a)
\end{align*}
 using the assumption that $A$ is clock invariant in the last equivalence.
 The case for universal quantification over clocks uses
 $\forall\kappa. \forall\kappa'. A \equi\forall\kappa'. \forall\kappa. A$.

 In the case of guarded recursive types, first note that if $F$  commutes
 with clock quantification, so does $\later^\kappa F$. This can be proved
 using $\forall\kappa'.\later^\kappa A \equi \later^\kappa \forall\kappa'.A$,
 the left to right map of which maps $a$ to
 \[
 \tabs\tickA\kappa{\lambda\kappa'. \tapp{a[\kappa']}}
 \]
 for $\tickA$ fresh. This map type checks because
 $\cuba{\kappa' : \clocktype} = \kappa' : \clocktype$.
 Using this, we can prove by guarded recursion that
 $X$ is clock irrelevant as follows
\begin{align*}
 \forall\kappa' . X & \equi \forall\kappa' . F(\later^\kappa X) \\
 & \equi F(\forall\kappa' . \later^\kappa X) \\
 & \equi F(\later^\kappa \forall\kappa' . X) \\
 & \equi F(\later^\kappa X)
\end{align*}
using the guarded recursion assumption in the last line.

In the case of path types, if $x,y : A$ then
\begin{align*}
 \forall\kappa. (\Path A x y)
 & \equi \Path {\forall\kappa. A}{\lambda \kappa . x}{\lambda \kappa. y} \\
 & \equi \Path {A}{x}{y}
\end{align*}
The first of these equivalences simply swaps the clock and interval
argument, the second uses the assumption that $A$ is clock invariant, which
means precisely that $\lambda a. \lambda\kappa .a$ is an equivalence,
and so preserves path types.
\end{proofof}

We now give a proof of Theorem~\ref{thm:final:coalg}.
It uses the following lemma establishing the existence of a final
$F\circ\later^\kappa$-coalgebra
for any endofunctor $F$.

\begin{lemma}\label{lemma:guarded:final}
  Let $F$ be an $I$-indexed endofunctor, then for all $\kappa$, the type
  $\nu^\kappa(F) \defeq \fix^\kappa\,(\lambda X. F(\later^\kappa X)) : I \to \Univ$ has a final $F \circ \later^\kappa$-coalgebra structure, i.e., there is a map $\out^\kappa : \nu^\kappa(F) \to F\,(\later^\kappa\,\nu^\kappa(F))$, such that for all maps $f : X \to F (\later^\kappa\,X)$ the following type is contractible
  \[
  \Sigma (h : X \to \nu^\kappa(F)).\, \out^\kappa \circ h \peq F(\later^\kappa(h)) \circ f
  \]
\end{lemma}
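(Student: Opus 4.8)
The plan is to construct the structure map $\out^\kappa$ from the unfolding equivalence and then establish finality by guarded recursion, using Lemma~\ref{lem:fix:contractible} to get contractibility. First I would unfold the definition: by (\ref{eq:fix:unfold}) we have a path $\nu^\kappa(F) \equi F(\later^\kappa(\nu^\kappa(F)))$, and transporting along it (at each $i : I$, using univalence) gives the coalgebra structure $\out^\kappa : \nu^\kappa(F) \to F(\later^\kappa\,\nu^\kappa(F))$ together with its inverse, call it $\mathsf{con}^\kappa$, with $\mathsf{con}^\kappa \circ \out^\kappa \peq \id$ and $\out^\kappa \circ \mathsf{con}^\kappa \peq \id$.

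Next, to prove finality, fix a coalgebra $f : X \to F(\later^\kappa X)$ and consider the type $T \defeq \Sigma(h : X \to \nu^\kappa(F)).\ \out^\kappa \circ h \peq F(\later^\kappa(h)) \circ f$. I would rephrase membership in $T$ as being a fixed point of a suitable operator on $X \to \later^\kappa\,\nu^\kappa(F)$: given a "later-approximation" $k : X \to \later^\kappa\,\nu^\kappa(F)$, one builds a candidate morphism $\Phi(k) \defeq \mathsf{con}^\kappa \circ F(k') \circ f$, where $k'$ turns $k : X \to \later^\kappa\,\nu^\kappa(F)$ into a map $\later^\kappa X \to \later^\kappa\,\nu^\kappa(F)$ using the applicative action (\ref{eq:appl:action}) — here we need that the recursive argument arity of $F$ only sees $X$ through $\later^\kappa$, exactly as in the hypothesis. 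A morphism $h$ together with its commutation square then corresponds, via $\out^\kappa$/$\mathsf{con}^\kappa$ and the $\beta$/$\eta$ rules for $\later$, precisely to a fixed point of $\lambda k.\ \nxt \circ \Phi(k)$ (where $\nxt$ is the unit (\ref{eq:next})); and because this operator factors through $\later^\kappa$, Lemma~\ref{lem:fix:contractible} applies and yields that the type of such fixed-point data is contractible. It remains to check that this correspondence is an equivalence of types $T \equi (\text{fixed-point data})$, which is a matter of unfolding the definitions of $\out^\kappa$ and $\mathsf{con}^\kappa$, applying function extensionality and the $\later$-extensionality principle, and using that $\mathsf{con}^\kappa$ is an equivalence so that pre/post-composition with it preserves path types.

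The main obstacle I expect is the bookkeeping in the second step: getting the commutation condition $\out^\kappa \circ h \peq F(\later^\kappa(h)) \circ f$ into the exact shape of a $\fix^\kappa$-fixed-point equation so that Lemma~\ref{lem:fix:contractible} applies verbatim. This requires being careful that $F(\later^\kappa(-))$ really does factor as "something applied under a single $\later^\kappa$", which is where the functoriality of $F$ on the category $I \to \Univ$ and the naturality of the applicative action are used; and it requires the $\later$-extensionality equivalence $\Path{\later^\kappa A}{x}{y} \equi \latbind\tickA\kappa{\Path A{\tapp x}{\tapp y}}$ to move the commutation path under the modality. Once the correspondence is set up, contractibility is immediate from Lemma~\ref{lem:fix:contractible}, since the center of contraction is $(\dfix^\kappa(\cdot), \pfix(\cdot))$ for the operator above, transported across the equivalence.
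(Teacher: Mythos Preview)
Your overall strategy is the paper's: obtain $\out^\kappa$ (and its inverse) from the unfolding equivalence, then reduce finality to the contractibility of a guarded fixed-point type via Lemma~\ref{lem:fix:contractible}. The difficulty you anticipate in the bookkeeping is real, and in fact your concrete setup contains a type error there.

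You place the fixed-point variable at $k : X \to \later^\kappa\nu^\kappa(F)$ and claim to produce $k' : \later^\kappa X \to \later^\kappa\nu^\kappa(F)$ ``using the applicative action~(\ref{eq:appl:action})''. But the applicative action has domain $\later^\kappa(X \to \nu^\kappa(F))$, not $X \to \later^\kappa\nu^\kappa(F)$, and these two types are not interchangeable: there is a canonical map from the former to the latter, but not conversely. Concretely, the obvious candidate $\tabs\alpha\kappa{\tapp{(k(\tapp y))}}$ fails to type-check because $k(\tapp y)$ is not well-formed in the residual context of $\alpha$. So $\Phi$ as you describe it is not defined.

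The paper's setup sidesteps this by putting the fixed point at $A \defeq X \to \nu^\kappa(F)$ directly and using that $\later^\kappa$ is \emph{locally contractible}: its action on morphisms factors as
\[
(X \to Y)\ \xrightarrow{\ \nxt\ }\ \later^\kappa(X \to Y)\ \longrightarrow\ (\later^\kappa X \to \later^\kappa Y),
\]
the second arrow being the applicative action. Hence the endomap $h \mapsto (\out^\kappa)^{-1}\circ F(\later^\kappa h)\circ f$ on $A$ factors as some $g : \later^\kappa A \to A$ postcomposed with $\nxt$, and Lemma~\ref{lem:fix:contractible} immediately gives contractibility of $\Sigma(h:A).\,h = (\out^\kappa)^{-1}\circ F(\later^\kappa h)\circ f$. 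Postcomposing the equation with the equivalence $\out^\kappa$ then yields the desired form $\out^\kappa\circ h = F(\later^\kappa h)\circ f$; no separate appeal to $\later$-extensionality is needed.
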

\begin{proof}
The map $\out^\kappa$ is given by the equivalence between a fixpoint in the universe and its unfolding, so we also have an inverse ${(\out^\kappa)}^{-1}$. The functor $\later^\kappa$ is locally contractible~\cite{ToT} in the sense that the
action on morphisms factors as a composition of two maps
\[
 (X \to Y) \to \later^{\kappa}(X \to Y) \to (\later^\kappa X \to \later^\kappa Y)
\]
and so also the mapping $\lambda h . {(\out^\kappa)}^{-1} \circ F(\later^\kappa(h)) \circ f$ factors as a
composition
\[
(X \to \nu^\kappa(F)) \to \later^\kappa(X \to \nu^\kappa(F)) \to (X \to \nu^\kappa(F))
\]
Then by uniqueness of fixpoints (Lemma~\ref{lem:fix:contractible}) we get the contractibility of
\[
\Sigma (h : X \to \nu^\kappa(F)).\, h \peq {(\out^\kappa)}^{-1} \circ F(\later^\kappa(h)) \circ f
\]
which in turn is equivalent to our goal by postcomposition with $\out^\kappa$.
\end{proof}

\begin{proofof}{Theorem~\ref{thm:final:coalg}}
  We define the coalgebra $\out : \nu(F) \to F\,\nu(F)$ as $F(\force) \circ \mathsf{can}_F^{-1} \circ \forall \kappa(\out^\kappa)$.
  Given any coalgebra $f : X \to F\,X$, we can extend it to $\tilde{f}^\kappa : X \to F\,(\later^\kappa X)$ for any $\kappa$.
  Then by lemma~\ref{lemma:guarded:final} we have that for any $\kappa$ the type
  \[
  \Sigma (h : X \to \nu^\kappa(F)).\, \out^\kappa \circ h \peq F(\later^\kappa(h)) \circ \tilde{f}^\kappa
  \]
  is contractible. By clock quantification preserving contractibility and commuting with $\Sigma$ types we have that
  \[
  \Sigma (h : \forall \kappa.\,X \to \nu^\kappa(F)).\, \forall\kappa.\,
  (\out^\kappa) \circ \capp h \peq F(\later^\kappa(\capp h)) \circ \tilde{f}^\kappa
  \]
  is also contractible. Then by $\forall \kappa.\,X \to \nu^\kappa(F) \simeq X \to \nu(F)$ and that
  $(\lambda \kappa.\, F(\later^\kappa(\capp h)) \circ \tilde{f}^\kappa) \peq \mathsf{can}_F \circ F(\force^{-1}) \circ F(h)$
  we obtain the desired result. More details on the calculations can be found in \cite{Mogelberg14}.
  \end{proofof}

Example~\ref{ex:glts} uses the following lemma.

\begin{lemma}
 Let $A$ be a clock irrelevant set and let $X : \Pfin(A)$, $a : A$. Then $a \in X$ is clock
 irrelevant.
\end{lemma}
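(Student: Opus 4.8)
The plan is to prove the statement by induction on $X : \Pfin(A)$, using the ordinary elimination principle for $\Pfin$ — not induction under clocks, since $X$ is an ordinary element and we are merely establishing a property of it. First recall from \cite{FGGW18} that the membership predicate $a \in (-) : \Pfin(A) \to \Univ$ is itself defined by the usual elimination of $\Pfin$ as a family of \emph{propositions}, with $a \in \emptyset \defeq \init$, $a \in \{b\} \defeq (a =_A b)$ — a proposition because $A$ is a set — and $a \in (X \cup Y) \defeq (a \in X) \vee (a \in Y)$, the path, hub and spoke constructors of $\Pfin$ being discharged because propositions form a set and $\vee$ is commutative, associative and idempotent with unit $\init$. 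The crucial observation is that clock-irrelevance of a type $T$, i.e.\ that the canonical map $T \to \forall\kappa. T$ is an equivalence, is itself a proposition; hence the family $P(X) \defeq \big((a \in X)\text{ is clock-irrelevant}\big)$ over $X : \Pfin(A)$ is a family of propositions, and the usual elimination principle for $\Pfin$ reduces the goal to the three point-constructor cases.

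For $\emptyset$ I would note that $\init$ is the higher inductive type with no constructors, hence clock-irrelevant by Theorem~\ref{prop:cirr:HITs} (with vacuous hypothesis). For $\{b\}$ I would use that $A$ is clock-irrelevant, so its path type $a =_A b$ is clock-irrelevant by Lemma~\ref{lem:comm:kappa}. For $X \cup Y$, with induction hypotheses that $a \in X$ and $a \in Y$ are clock-irrelevant: the coproduct $B + C$ is the higher inductive type with point constructors $B \to B+C$ and $C \to B+C$ and no path constructors, whose non-recursive arguments are exactly $B$ and $C$, so $(a \in X) + (a \in Y)$ is clock-irrelevant by Theorem~\ref{prop:cirr:HITs}; and propositional truncation $\trunc D$ is the higher inductive type with point constructor $D \to \trunc D$ and the $\mathsf{squash}$ path constructor, whose only non-recursive argument is $D$, so Theorem~\ref{prop:cirr:HITs} applies again to give that $(a \in X) \vee (a \in Y) \equi \trunc{(a\in X) + (a\in Y)}$ is clock-irrelevant. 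This closes the induction.

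This proof is essentially routine once Theorem~\ref{prop:cirr:HITs} and Lemma~\ref{lem:comm:kappa} are available, so I do not expect a genuine obstacle; the one point to get right is the reduction above — recognizing that $a \in X$ and clock-irrelevance are both propositional, so that the HIT $X$ can be eliminated directly into $P$, ignoring all of $\Pfin$'s path, hub and spoke constructors. One caveat is the mild forward reference: the argument invokes Theorem~\ref{prop:cirr:HITs} from \autoref{sec:hits}, but the proof of that theorem does not rely on this lemma, so there is no circularity. If one wished to avoid the forward reference entirely, the coproduct and truncation cases could be treated by hand — using that $\Bool$ is clock-irrelevant to show that the injection chosen by an element of $\forall\kappa.(B+C)$ is constant, and then applying the induction hypotheses componentwise — but invoking Theorem~\ref{prop:cirr:HITs} is cleaner.
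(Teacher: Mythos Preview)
Your approach is the paper's: induction on $X$ into the proposition ``$a \in X$ is clock-irrelevant'', so that only the point constructors need treatment. For $\{b\}$ you and the paper agree exactly (Lemma~\ref{lem:comm:kappa} on path types). The only divergence is at $\cup$: the paper's proof writes that $a \in (Y \cup Z)$ is $(a \in Y) \times (a \in Z)$ and invokes closure under products from Lemma~\ref{lem:comm:kappa} --- this appears to be a slip, since the membership predicate of \cite{FGGW18} uses $\vee$, not $\times$. Your treatment via Theorem~\ref{prop:cirr:HITs} (coproduct, then truncation, each as a HIT with clock-irrelevant non-recursive argument) is the correct repair, and your explicit $\emptyset$ case, which the paper omits, fills a small gap. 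The forward reference to Theorem~\ref{prop:cirr:HITs} is harmless, as you note.
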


\begin{proof}
 The proof is by induction on $X$, which is valid since statements of the form $\mathsf{IsEquiv}(f)$
 are propositions. If $X = \{b\}$ then $a \in X$ is by definition $\Path A ab$, which is
 clock irrelevant by Lemma~\ref{lem:comm:kappa}. If $X = Y \cup Z$ then
 $a \in X$ is $(a \in Y)\times(a \in Z)$ which is clock-irrelevant by induction and
 Lemma~\ref{lem:comm:kappa}.
\end{proof}

We now give a proof of Theorem~\ref{thm:bisim:is:identity}. We will write
\[
  \unfoldG[\kappa] : \GLTS[\kappa] \to \Pfin(A \times \later^\kappa \GLTS[\kappa])
\]
for the equivalence associated with the guarded recursive type $\GLTS[\kappa]$.
First note the following.

\begin{lemma} \label{lem:bisim:is:identity:comm:diag}
 The following diagram commutes up to path equality.
     \[
      \begin{tikzcd}
        \GLTS \ar{r}{\unfoldG} \ar{d}[swap]{\ev{\kappa}} & \Pfin(A \times \GLTS) \ar{d}{\Pfin(A \times f)} \\
        \GLTS[\kappa] \ar{r}{\unfoldG[\kappa]} & \Pfin(A \times \later^\kappa\GLTS[\kappa])
      \end{tikzcd}
    \]
  where $\ev\kappa \defeq \lambda x . x[\kappa]$ and $f = \lambda x . \tabs\tickA\kappa{x[\kappa]}$.
\end{lemma}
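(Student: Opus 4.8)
The plan is to unfold the definition of $\unfoldG$ obtained in the proof of Theorem~\ref{thm:final:coalg} and reduce both composites in the square to the common value $\lambda x.\,\unfoldG[\kappa](\capp x)$. Recall that for $F = \Pfin(A\times-)$ that construction gives
\[
\unfoldG \;=\; \Pfin(A\times\force)\circ\mathsf{can}_F^{-1}\circ\forall\kappa(\unfoldG[\kappa]),
\]
where $\unfoldG[\kappa] = \out^\kappa$ is the equivalence of Lemma~\ref{lemma:guarded:final}, $\force : \forall\kappa.\later^\kappa\GLTS[\kappa]\to\GLTS$ is the relevant instance of (\ref{eq:force}), and $\mathsf{can}_F : \Pfin(A\times\forall\kappa.\later^\kappa\GLTS[\kappa])\to\forall\kappa.\Pfin(A\times\later^\kappa\GLTS[\kappa])$ is the canonical map, which is an equivalence since $F$ commutes with clock quantification (Example~\ref{ex:glts}). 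Below I also write $\ev\kappa$ for the evaluation-at-$\kappa$ maps $\forall\kappa.\later^\kappa\GLTS[\kappa]\to\later^\kappa\GLTS[\kappa]$ and $\forall\kappa.\Pfin(A\times\later^\kappa\GLTS[\kappa])\to\Pfin(A\times\later^\kappa\GLTS[\kappa])$, $z\mapsto\capp z$.

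First I would record three auxiliary facts. \emph{(a)} Unfolding the definitions of $f$ (from the statement) and of $\inv\force$ (Section~\ref{sec:CCTT}) shows $f = \ev\kappa\circ\inv\force$ up to the $\beta$-rule for clock application, so from $\inv\force\circ\force\peq\id$ — the computation verifying that $\force$ is an equivalence, in Section~\ref{sec:CCTT} — we obtain $f\circ\force\peq\ev\kappa$. \emph{(b)} By the definition of the canonical map, $\mathsf{can}_F(v)\,[\kappa]$ equals $\Pfin(A\times\ev\kappa)(v)$ (judgementally, via $\beta$ for clock application); hence $\Pfin(A\times\ev\kappa)\jeq\ev\kappa\circ\mathsf{can}_F$, and since $\mathsf{can}_F$ is an equivalence this yields $\Pfin(A\times\ev\kappa)\circ\mathsf{can}_F^{-1}\peq\ev\kappa$. \emph{(c)} Functoriality of $\Pfin(A\times-)$ gives $\Pfin(A\times f)\circ\Pfin(A\times\force)\peq\Pfin\bigl(A\times(f\circ\force)\bigr)$.

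Chaining (c), then (a), then (b), and finally the $\beta$-rule for clock application gives
\begin{align*}
 \Pfin(A\times f)\circ\unfoldG
 &\peq \Pfin\bigl(A\times(f\circ\force)\bigr)\circ\mathsf{can}_F^{-1}\circ\forall\kappa(\unfoldG[\kappa])\\
 &\peq \Pfin(A\times\ev\kappa)\circ\mathsf{can}_F^{-1}\circ\forall\kappa(\unfoldG[\kappa])\\
 &\peq \ev\kappa\circ\forall\kappa(\unfoldG[\kappa])\\
 &\jeq \lambda x.\,\unfoldG[\kappa](\capp x)\;=\;\unfoldG[\kappa]\circ\ev\kappa,
\end{align*}
so composing these identities — using function extensionality on $\GLTS$ to glue the pointwise steps where necessary — produces the required path between the two composites $\unfoldG[\kappa]\circ\ev\kappa$ and $\Pfin(A\times f)\circ\unfoldG$. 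The argument is essentially bookkeeping; I expect the only care needed to be in (a) and (b): keeping track of the several evaluation-at-$\kappa$ maps occurring at the three different types, and making the canonical map $\mathsf{can}_F$ concrete enough to read off the judgemental identity used in (b). There is no genuine obstacle beyond that.
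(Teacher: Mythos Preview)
Your proposal is correct and follows essentially the same approach as the paper: unfold $\unfoldG$ as $\Pfin(A\times\force)\circ\mathsf{can}_F^{-1}\circ\forall\kappa(\unfoldG[\kappa])$, collapse $\Pfin(A\times f)\circ\Pfin(A\times\force)$ to $\Pfin(A\times\ev\kappa)$ using that $f\circ\force\peq\ev\kappa$, then cancel $\mathsf{can}_F^{-1}$ against $\ev\kappa$ using the defining property of the canonical map, and finally identify $\ev\kappa\circ\forall\kappa(\unfoldG[\kappa])$ with $\unfoldG[\kappa]\circ\ev\kappa$. The paper's proof compresses your steps (a)--(c) into two lines but is otherwise identical.
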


\begin{proof}
 The map $\unfoldG$ is defined to be the composition of the following maps
\begin{align*}
 \forall\kappa . \unfoldG[\kappa] & : \forall\kappa . \GLTS[\kappa]
    \to \forall\kappa . \Pfin(A \times \later^\kappa\GLTS[\kappa]) \\
  \phi & : \forall\kappa . \Pfin(A \times \later^\kappa\GLTS[\kappa])  \to
    \Pfin(A \times \forall\kappa . \later^\kappa\GLTS[\kappa]) \\
  \Pfin(A \times \force) & : \Pfin(A \times \forall\kappa . \later^\kappa\GLTS[\kappa]) \to
  \Pfin(A \times \forall\kappa . \GLTS[\kappa])
\end{align*}
 where $\forall\kappa . \unfoldG[\kappa] (x) = \lambda \kappa. \unfoldG[\kappa](x[\kappa])$, $\phi$
 is the witness that $\Pfin(A \times (-))$ commutes with clock quantification, and
 $\force : \forall\kappa . \later^\kappa\GLTS[\kappa] \to \forall\kappa . \GLTS[\kappa]$
 is the inverse to $\lambda x. \tabs\tickA\kappa x$ up to path equality. By the latter, we get
 the following sequence of path equalities
\begin{align*}
 \Pfin(A \times f) \circ \unfoldG & = \Pfin(A \times \ev\kappa) \circ \phi \circ \forall\kappa . \unfoldG[\kappa] \\
  & = \ev\kappa \circ \forall\kappa . \unfoldG[\kappa] \\
  & = \unfoldG[\kappa] \circ \ev\kappa
\end{align*}
 as desired.
\end{proof}

\begin{proofof}{Theorem~\ref{thm:bisim:is:identity}}
M{\o}gelberg and Veltri~\cite{mogelbergPOPL2019} prove that path equality
coincides with bisimilarity for guarded recursive types. Using their results
we can prove that, given $x,y : \GLTS$
\begin{align}
 \Path \GLTS x y
 & \equi \forall\kappa . \Path {\GLTS[\kappa]}{x [\kappa]}{y[\kappa]} \nonumber \\
 & \equi \forall\kappa .\Bisim(x [\kappa], y[\kappa]) \label{eq:bisim}
\end{align}
where the first equivalence uses functional extensionality for universal quantification over clocks
and the second is \cite[Corollary~5.4]{mogelbergPOPL2019}. Here
$\Bisim(x,y) = \Sim(x,y)\times \Sim(y,x)$ where
\begin{align*}
 \Sim(x,y) \equi \, & \Pi(x' : \later^\kappa\GLTS[\kappa], a : A) . (a, x')\in \unfoldG[\kappa](x) \to \\
 & \phantom{\Pi} \exists y' : \later^\kappa\GLTS[\kappa] . (a,y') \in \unfoldG[\kappa](y) \times \\
 & \phantom{\Pi}\latbind\tickA\kappa \Sim(\tapp{x'}, \tapp{y'})
\end{align*}

We must compare this to bisimilarity of $x$ and $y$ which is defined as
 \[
   \forall\kappa . (\Simfix(x,y) \times  \Simfix(y,x))
 \]
 where
\begin{align*}
 \Simfix(x,y)  \equi\, & \Pi(x' : \GLTS, a : A) . (a, x')\in \unfoldG(x) \to \\
 & \phantom{\Pi} \exists y' : \GLTS . (a,y') \in \unfoldG(y) \times \\
 & \phantom{\Pi}  \latbind\tickA\kappa{\Simfix(x', y')}
\end{align*}
By an easy guarded recursive argument one can show that
$\Simfix$ is a reflexive relation, and from this it follows that
path equality implies bisimilarity. To prove the other implication
it suffices to show that
\[
  \Pi(x,y : \GLTS) . \Simfix(x,y) \to \Sim(x[\kappa],y[\kappa])
\]
and this statement is proved by guarded recursion.
So suppose $x,y : \GLTS$ and $\Simfix(x,y)$. Suppose
further that $x' : \later^\kappa\GLTS[\kappa], a : A$ and
$(a, x')\in \unfoldG[\kappa](x[\kappa])$. By
Lemma~\ref{lem:bisim:is:identity:comm:diag}
this means that $(a,x') \in \Pfin(A \times f)(\unfoldG(x))$
where $f = \lambda x . \tabs\tickA\kappa{x[\kappa]}$.
By~\cite[Lemma~4.1]{mogelbergPOPL2019} there then
(merely, i.e. in the sense of $\exists$) exists an $x'' : \GLTS$
such that $x' = f(x'')$, i.e.,
\begin{equation} \label{eq:bisim:proof:1}
x' = \tabs\tickA\kappa{(x''[\kappa])}
\end{equation}
and $(a,x'') \in \unfoldG(x)$. By the assumption that $\Simfix(x,y)$
there then merely exists a $y'' : \GLTS$ such that
$(a,y'') \in \unfoldG(y)$ and
\begin{equation} \label{eq:bisim:proof:2}
\latbind\tickA\kappa{\Simfix(x'', y'')}.
\end{equation}
Setting $y' = f(y'')$ then, again
by~\cite[Lemma~4.1]{mogelbergPOPL2019}
$(a,y') \in  \Pfin(A \times f)(\unfoldG(y))$ and so by
Lemma~\ref{lem:bisim:is:identity:comm:diag},
$(a,y') \in \unfoldG[\kappa](y[\kappa])$. It remains
to show that
\[
\latbind\tickA\kappa \Sim(\tapp{x'}, \tapp{y'})
\]
which reduces to
\[
\latbind\tickA\kappa \Sim(x''[\kappa], y''[\kappa])
\]
using (\ref{eq:bisim:proof:1}) and definition of $y'$.
This follows by guarded recursion
from (\ref{eq:bisim:proof:2}).
\end{proofof}

\subsection{Omitted proofs \autoref{sec:hits}} \label{app:hits}

\begin{figure*}
\textbf{Rules for equality of boundary terms}
\begin{mathpar}
    \inferrule*{\hasnotype\Gamma{\bar{u} \jeq \bar{v}}{}}{\hasnotype\Gamma{x~\bar{u} \beq x~\bar{v}}{}} \and
    \inferrule*{\hasnotype\Gamma{\bar{t} \jeq \bar{t'}}{} \\ \hasnotype\Gamma{\bar{r} \jeq \bar{r'}}{} \and
    \forall i. (\hasnotype{\Gamma, \bar{ \xi_i}}{M_i \beq M'_i}{})}
    {\hasnotype{\Gamma}{\con_{\ell}(\bar{t},\overline{\lambda \xi.\,M},\bar{r})
    \beq \con_{\ell}(\bar{t'},\overline{\lambda \xi.\,M'},\bar{r'} )}{}} \and
    \inferrule*{\hasnotype{\Gamma, j, \phi}{M \beq M'}{} \\ \hasnotype\Gamma{M_0 \beq M_0'}{}
    }{
    \hasnotype{\Gamma}{\hcomp_{\HIT{H}\delta}^j~[\phi \mapsto M]~{M}_0 \beq
    \hcomp_{\HIT{H}\delta}^j~[\phi \mapsto M']~{M}_0'}{}} \and
    \inferrule*{ e_\ell = [\varphi_1~N_1 \ldots \varphi_m~N_m] \\
      \hasnotype\Gamma{\phi_i \jeq \top}{}
    }{\hasnotype{\Gamma}{\con_{\ell}(\bar{t},\overline{\lambda \xi.\,M},\bar{r})
    \beq N_i(\bar{t},\overline{\lambda \xi.\,M},\bar{r})}{}
    } \and
    \inferrule*{\hasnotype\Gamma{\phi \jeq \top}{}
    }{\hasnotype\Gamma{\hcomp_{\HIT{H}\delta}^j~[\phi \mapsto M]~{M}_0 \beq  M[1/j]}{}
    }
\end{mathpar}
\textbf{Substitution operation}
\begin{align*}
 (x_j~\bar u)(\bar{t},\overline{\lambda \xi.\,M},\bar{r}) & =
 M_j[\bar u [\bar t, \bar r] / \bar{\xi_j}] \\
 \con_{\ell}(\bar{t'},\overline{\lambda \xi'.\,M'},\bar{r'} )(\bar{t},\overline{\lambda \xi.\,M},\bar{r}) & =
 \con_{\ell}(\bar{t'}[\bar t, \bar r] ,\overline{\lambda \xi'.\,M'(\bar{t},\overline{\lambda \xi.\,M},\bar{r}, \bar{\xi'})},\bar{r'}[\bar r] ) \\
 (\hcomp_{\HIT{H}\delta}^j~[\phi \mapsto N]~{N}_0)(\bar{t},\overline{\lambda \xi.\,M},\bar{r})  & =
  \hcomp_{\HIT{H}\delta}^j~[\phi \mapsto (N(\bar{t},\overline{\lambda \xi.\,M},\bar{r},j))]~({N}_0 (\bar{t},\overline{\lambda \xi.\,M},\bar{r}))
\end{align*}
  \caption{The equational theory of boundary terms is the least equivalence relation generated by the rules above.
  The rule for reducing a constructor to its boundary uses the substitution
  also defined above.}
  \label{fig:eq:boundary}
\end{figure*}

The equational theory of boundary terms is given by $\beq$ defined in Figure~\ref{fig:eq:boundary}. This theory is used in the
typing of boundary terms, when typing systems and homogenous compositions. More precisely, the requirement is that for
a system of boundary conditions $[\varphi_1~M_1 \ldots \varphi_m~M_m]$ to be wellformed, it must be the case that
$\phi_i \wedge \phi_j$ implies $M_i \beq M_j$, for any $i,j$. Similarly, for
$\hcomp^{j}_{\HIT H {\delta}} [\psi \mapsto M'] \, M'_0$ to be well typed, we must have $M'[0/j] \beq M'_0$.

The next lemma states that boundaries are well-typed.

\begin{lemma}\label{lem:evalbsyn:typing}
Let $e = [\phi_0 \, M_0, \dots, \phi_{n_\ell} \, M_{n_\ell}]$ be the boundary condition for
$\con_\ell$. Assume $\isclockelimlist {\Gamma} {\mathcal E} {\mathcal K_{<\ell}} {\delta} {D}$; then
the following typing holds:
\begin{align*}
\hastype {\Gamma, \gamma : \forall \bar\kappa . \Gamma_\ell[\cappp \delta],
			\bar{x} : \bar{\forall \bar\kappa . \Xi_\ell[\cappp \delta, \cappp \gamma]\to \HIT H {(\cappp \delta)}},&
			\\ \bar{y} : \bar{\Pi(\xi: \forall \bar\kappa. \Xi[\cappp \delta, \cappp \gamma])
			. D[\lambda \kappa . \cappp x (\cappp \xi)]}, \bar{i} : \Psi_\ell, \phi_\ell &}
			{ \\ \evalbsynclock{\delta}{\mathcal{E}}{\bar{x} \mapsto \bar{y}}{e} }
			{D[\lambda \bar\kappa. \con_\ell(\cappp \gamma, \bar{\capp[\bar\kappa] x}, & \bar{i})]}
\end{align*}
\end{lemma}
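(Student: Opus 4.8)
\begin{proofsketch}
The plan is to prove a more general statement by induction on the structure of boundary terms. For every $M \in \Bndr(\mathcal{K}_{<\ell}; \overline{\forall \bar\kappa.\,\Xi_\ell[\cappp\delta,\cappp\gamma]\to\HIT H{(\cappp\delta)}})$ typed in the context $\Gamma,\gamma,\bar{x},\bar{i}$ further extended by the auxiliary environment $\hat\gamma$ (the list of interval variables and recursive-argument variables $\xi_i$ accumulated by the $\con$ and $\hcomp$ clauses of the boundary interpretation in \autoref{fig:hits:clockelim}), the term $\evalbsynclockg{\delta}{\mathcal{E}}{\bar{x} \mapsto \bar{y}}{\hat\gamma}{M}$ is well-typed and lies in $D$ at the index obtained by applying $\lambda\bar\kappa.(-)$ to the point of $\HIT H{(\cappp\delta)}$ denoted by $M$. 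Simultaneously I would prove that the interpretation is a congruence for the equational theory $\beq$ of \autoref{fig:eq:boundary}: if $M \beq M'$ then $\evalbsynclockg{\delta}{\mathcal{E}}{\bar{x} \mapsto \bar{y}}{\hat\gamma}{M} \jeq \evalbsynclockg{\delta}{\mathcal{E}}{\bar{x} \mapsto \bar{y}}{\hat\gamma}{M'}$. The lemma then follows by taking $M$ to range over the components $M_0,\ldots,M_{n_\ell}$ of $e$ and reassembling the system $[\phi_0\,M_0,\ldots]$, the congruence part ensuring that interpretations of overlapping components agree so that the resulting partial element is well-formed.

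For $M = x_j\,\bar{u}$: since the grammar forbids $\bar u$ from mentioning the recursive variables, $\lambda\bar\kappa.\,\bar{u}[\cappp\delta,\cappp\gamma,\bar i,\cappp{\hat\gamma}]$ is well-typed of type $\forall\bar\kappa.\,\Xi_j[\cappp\delta,\cappp\gamma]$, and applying the induction hypothesis $y_j$ gives an element of $D$ at the required index. For $M = \con_\ell(\bar{t_\ell},\overline{\lambda\xi.\,M},\bar{r_\ell})$ with $\ell \in \mathcal{K}_{<\ell}$: the non-recursive and interval arguments $\bar{t_\ell},\bar{r_\ell}$ do not mention $\Theta$, so they can be $\bar\kappa$-abstracted; the inner boundary terms, typed in $\hat\gamma$ extended by the fresh $\xi_i$, yield by the induction hypothesis the recursive results $R_i$ and, after abstraction, the families $S_i$; feeding $\lambda\bar\kappa.\,\bar{t_\ell}[\ldots]$, $\bar{S}$, $\bar{R}$, $\bar{r_\ell}$ to $\mathcal{E}_\ell$, whose typing is exactly what $\isclockelimlist{\Gamma}{\mathcal{E}}{\mathcal{K}_{<\ell}}{\delta}{D}$ provides, produces an element of the desired type. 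For $M = \hcomp^j_{\HIT H{(\cappp\delta)}}[\psi \mapsto M']\,M'_0$ one forms $\comp^j_{D[v\,j]}$, where $v = \mathsf{hfill}^j_{\forall\bar\kappa.\,\HIT H{(\cappp\delta)}}[\psi \mapsto \lambda\bar\kappa.\,M'[\ldots,j]]\,(\lambda\bar\kappa.\,M'_0[\ldots])$ supplies the line $D[v\,j]$; the base $\evalbsynclockg{\delta}{\mathcal{E}}{\bar{x} \mapsto \bar{y}}{\hat\gamma}{M'_0}$ sits over $v\,0$, the side $\evalbsynclockg{\delta}{\mathcal{E}}{\bar{x} \mapsto \bar{y}}{(\hat\gamma,j)}{M'}$ over $v\,j$, and they agree on $\psi$ by the induction hypothesis for $M'$ together with the well-formedness condition $M'[0/j]\beq M'_0$ and the definitional value of $\mathsf{hfill}$ at $0$.

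The congruence part runs in parallel: the structural rules of \autoref{fig:eq:boundary} reduce to the respective induction hypotheses; the reduction of a constructor to its boundary when $\phi_\ell = \top$ is matched by the equation, recorded in the elimination-list premise, stating that the output of $\mathcal{E}_\ell$ restricts on $\phi_\ell$ to the interpretation of the boundary $e_\ell$; and the reduction of an $\hcomp$ to its side when $\psi = \top$ is matched by the fact that $\comp^j_{D[v\,j]}$ then reduces to $\evalbsynclockg{\delta}{\mathcal{E}}{\bar{x} \mapsto \bar{y}}{(\hat\gamma,1)}{M'} = \evalbsynclockg{\delta}{\mathcal{E}}{\bar{x} \mapsto \bar{y}}{\hat\gamma}{M'[1/j]}$. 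Since this congruence is precisely what makes the $\hcomp$ and system clauses of the interpretation land in well-formed composition and partial-element data, the two inductions genuinely have to be carried out together rather than one after the other.

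I expect the $\hcomp$ case to be the main obstacle: one must check that the data fed to $\comp^j_{D[v\,j]}$ is coherent, which forces one to track the definitional behaviour of $\mathsf{hfill}$ — its value at $j=0$ being $\lambda\bar\kappa.\,M'_0[\ldots]$ and its restriction along $\psi$ being $\lambda\bar\kappa.\,M'[\ldots,j]$ — and then to combine this with the generalised induction hypothesis, including the $\beq$-congruence, to see that the base and the $j=0$ slice of the sides coincide in $D$. A secondary difficulty is the bookkeeping in the $\con_\ell$ case, where the types of $\bar{S}$, $\bar{R}$ and $\lambda\bar\kappa.\,\bar{t_\ell}[\ldots]$ must be matched against the argument types expected by $\mathcal{E}_\ell$ (which speak of $\forall\bar\kappa.\,\Xi$ and $D[\lambda\bar\kappa.\,\cappp x\,\cappp\xi]$); this relies on the grammar restrictions keeping $\bar{t_\ell}$, $\bar{r_\ell}$ and the arguments $\bar u$ free of the recursive variables, so that all the $\bar\kappa$-abstractions involved are well-formed.
\end{proofsketch}
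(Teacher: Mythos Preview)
Your approach is essentially the same as the paper's: generalize to arbitrary boundary terms $M$ in an extended environment $\hat\gamma$, induct on the structure of $M$, and handle the $x_j\,\bar u$, $\con_\ell$, $\hcomp$, and system cases exactly as you outline. The only organizational difference is that the paper factors the $\beq$-congruence out as a separate lemma (Lemma~\ref{lem:bound:inter:eq}), proved by a second induction on $\beq$-derivations rather than simultaneously with the typing; so your claim that the two inductions \emph{must} be carried out together is a slight overstatement, but this is purely a matter of presentation and does not affect correctness.
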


\begin{proof}
Throughout we assume that $\hastype{\Gamma}{\delta}{\forall \kappa . \Delta}$ unless otherwise specified.
Through an induction on the structure of the boundary terms, we prove the
following more general typing:
\begin{align*}
\hastype {\Gamma', \hat\gamma : \forall \bar\kappa.\,\hat\Gamma[\cappp \delta, \capp \gamma]&}
			{ \evalbsynclockg{\delta}{\mathcal{E}}{\bar{x} \mapsto \bar{y}}{\hat \gamma}{M}\\ &}
			{D[\lambda \bar \kappa. M[\cappp \gamma, \bar{\cappp x}, \bar{i},\cappp {\hat \gamma}]]}
\end{align*}
where $\Gamma'$ is defined as follows
\begin{align*}
\Gamma' \defeq & \Gamma, \gamma : \forall \bar\kappa . \Gamma_i[\cappp \delta],
			\bar{x} : \bar{\forall \bar\kappa . \Xi_i[\cappp \delta, \cappp \gamma]\to \HIT H {(\cappp \delta)}},
			\\ & \bar{y} : \bar{(\xi: \forall \bar\kappa. \Xi[\cappp \delta, \cappp \gamma])
			\to D[\lambda \bar \kappa. \cappp x (\cappp \xi)]}, \bar{i} : \Psi_i, \phi_i
\end{align*}
and $M$ is assumed to be a boundary term of type $\HIT{H}{\delta}$ in context
\begin{align*}
\delta : \Delta, \gamma : \Gamma_i[\delta],
			\bar{x} : \Xi_i[\delta, \gamma]\to \HIT H {\delta},
			\bar{i} : \Psi_i, \phi_i, \hat\gamma : \hat\Gamma[\delta,\gamma]
\end{align*}
The desired typing is then the case where $\hat \Gamma$ above is empty, since we
have that $\phi_k$ implies that $\lambda \bar\kappa. \con_i(\ldots)$ reduces to $\lambda \bar\kappa.\,M_k$.
The $\hat \Gamma$ crops up because the interpretation of constructors adds a $\xi: \forall
\bar\kappa. \Xi[\cappp \delta, \cappp \gamma]$ to the context while the $\hcomp$ case adds an
interval variable and a face restriction
to the context for the system of the composition. In the very first step,
unfolding the list of partial elements adds a face restriction as well.
Concretely we proceed as follows in each case.
Also, we write $\sigma$ for the substitution $[\cappp \delta/\delta,\cappp \gamma/\gamma,\bar{\cappp x}/\bar x, \cappp {\hat\gamma}/\hat\gamma]$ and $\tau$ for the same but without the $\bar{x}$ component.
\begin{itemize}
\item
Assume $M = x_j \, \bar{u}$. From the typing assumptions on $M$ we have that
\begin{align*}
\hastype{\delta : \Delta, \gamma : \Gamma_i, \bar{x} : \bar{\Xi_i[\delta, \gamma]
\to \HIT H {\delta}}, \bar{i} : \Psi,\phi_i, \hat\Gamma}{ u }{\Xi_{i,j}}
\end{align*}
This means that $\xi = \lambda \bar\kappa . \bar{u}\tau$ has type
$\forall \bar\kappa . \Xi_{i,j} [\cappp  \delta, \cappp \gamma]$, which is exactly the input to $y_j$, so that $y_j\, \xi$ has type $D[\lambda \bar\kappa.\,\cappp x
(\bar{u}\tau)]$, as desired.

\item
Assume $M = \con_j(\bar{t}, \bar{\lambda \xi . M'}, \bar{r})$. By inductive hypothesis we have
that
\begin{align*}
\hastype {\Gamma', \hat\gamma : \forall \bar\kappa.\, \hat\Gamma[\cappp \delta,\cappp \gamma],
			\xi : \forall\bar\kappa. \Xi_{j,k}[\cappp \delta, \bar{t}\tau]\\}
			{\evalbsynclockg{\delta}{\mathcal{E}}{\bar{x} \mapsto \bar{y}}{(\hat\gamma,\xi)}{M'_k}}
			{D[\lambda \bar\kappa. M'_k\sigma]&}
\end{align*}
This gives us the input necessary to apply $\mathcal{E}_j$. The $\bar{t}$ family has a similar typing
structure to $\bar{u}$ in the previous example so we get $\bar{t}' = \lambda\bar\kappa
. \bar{t}\tau$ of type $\forall \bar\kappa. \Gamma_j[\cappp \delta]$.
We obtain maps
\begin{align*}
S_k &: \forall\bar\kappa.\,\Xi_{j,k}[\cappp \delta, \cappp {\bar t'}] \to \HIT H {(\cappp \delta)}\\
S_k &= \lambda \bar\kappa.\lambda \xi. M'_k\sigma
\end{align*}
directly from the typing assumptions and by inductive hypothesis we obtain maps
\begin{align*}
R_k & : (\xi : \forall\bar\kappa. \Xi_{j,k}[\cappp \delta, \cappp {\bar t'}])
        \to D[\lambda \bar \kappa.\cappp{S_k}(\cappp\xi)]\\
R_k & = \lambda \xi.
      \evalbsynclockg{\delta}{\mathcal{E}}{\bar{x} \mapsto \bar{y}}{(\hat\gamma,\xi)}{M_k'}
\end{align*}
This means we have the required data to apply $\mathcal{E}_j$, and by the
definition of clock abstracted elimination lists we have that
$\mathcal{E}_j[\bar{t}', \bar{S}, \bar{R}, \bar{r}]$ inhabits $D$ over $\lambda\bar\kappa. M\sigma$.

\item
Assume $M = \hcomp^{j}_{\HIT H {\delta}} [\psi \mapsto M'] \, M'_0$. By inductive
hypothesis and the typing assumptions for the $\hcomp$ to be well formed we
have that $\evalbsynclockg{\delta}{\mathcal{E}}{\bar{x}
\mapsto \bar{y}}{\hat\gamma}{M'_0}$ inhabits $D$ over $\lambda \bar \kappa.\, M'_0\sigma$,
and $\evalbsynclockg{\delta}{\mathcal{E}}{\bar{x} \mapsto \bar{y}}{\hat\gamma,j}{M'}$
inhabits $D$ over $\lambda \bar \kappa.\, M'\sigma$ in the context extended by $j : \I$ and restricted by $\psi$. The $v$ term provides a path between $\lambda \bar \kappa.\, M'_0\sigma$ and its composition with $\lambda \bar \kappa.\, M'\sigma$, which means that the composition in $D[v j]$ provides a term over $\lambda \bar\kappa.\, M\sigma$ as desired.
Finally, for the composition to be well typed, we must verify that
\[
\evalbsynclockg{\delta}{\mathcal{E}}{\bar{x} \mapsto \bar{y}}{\hat\gamma,j}{M'}[0/j]
\jeq \evalbsynclockg{\delta}{\mathcal{E}}{\bar{x} \mapsto \bar{y}}{\hat\gamma}{M'_0}
\]
An easy induction shows that the left hand side equals
\[
\evalbsynclockg{\delta}{\mathcal{E}}{\bar{x} \mapsto \bar{y}}{\hat\gamma}{M'[0/j]}\] 
and since
$M_0' \beq M'[0/j]$, this follows from Lemma~\ref{lem:bound:inter:eq} below.
\item
Assume $M = [\phi_0 \, M_0, \dots, \phi_{n_i} \, M_{n_i}]$. In this case we have the following
typing by inductive hypothesis:
\begin{align*}
\hastype {\Gamma' , \phi_k}
			{ \evalbsynclockg{\delta}{\mathcal{E}}{\bar{x} \mapsto \bar{y}}{\hat\gamma}{M_k}}
			{ D[\lambda \bar\kappa. M_k\sigma]}
\end{align*}
Finally, in order to conclude that
\[
[\phi_1 \, \evalbsynclockg{\delta}{\mathcal{E}}{\bar{x} \mapsto \bar{y}}{\hat\gamma}{M_1}, \dots ,
\phi_{n_i} \, \evalbsynclockg{\delta}{\mathcal{E}}{\bar{x} \mapsto \bar{y}}{\hat\gamma}{M_{n_i}}
]
\]
defines a system, and so a partial element of
$D[\lambda \bar\kappa. M\sigma]$, we must show that on faces of the form $\phi_{j} \wedge \phi_k$
the judgemental equality
\[
\evalbsynclockg{\delta}{\mathcal{E}}{\bar{x} \mapsto \bar{y}}{\hat\gamma}{M_j} \jeq
\evalbsynclockg{\delta}{\mathcal{E}}{\bar{x} \mapsto \bar{y}}{\hat\gamma}{M_k}
\]
holds. Since on this face $M_j \beq M_k$, this follows from Lemma~\ref{lem:bound:inter:eq} below. \qedhere
\end{itemize}
\end{proof}

\begin{lemma} \label{lem:bound:inter:eq}
 If $\hasnotype{\Gamma}{M \beq N}{}$ then $\hasnotype{\Gamma}{\evalbsynclockg{\delta}{\mathcal{E}}{\bar{x} \mapsto \bar{y}}{\hat\gamma}M \jeq \evalbsynclockg{\delta}{\mathcal{E}}{\bar{x} \mapsto \bar{y}}{\hat\gamma}N}{}$
\end{lemma}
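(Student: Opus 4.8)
The plan is to induct on the derivation that $M \beq N$, i.e.\ on the generating rules of Figure~\ref{fig:eq:boundary} together with the reflexivity, symmetry and transitivity closure, proving the statement simultaneously with the typing principle for the boundary interpretation that is used (but only sketched) in Lemma~\ref{lem:evalbsyn:typing}. The closure cases are immediate because $\jeq$ is an equivalence relation. For the three congruence rules one unfolds the definition of $\evalbsynclockg{\delta}{\mathcal E}{\bar x\mapsto\bar y}{\hat\gamma}{-}$ on both sides and notes that it is assembled from the interpretations of the immediate subterms using only term formers, the operation $\comp$, and substitutions, all of which are congruences for $\jeq$: in the $x_j\,\bar u$ case the interpretation is $y_j\,(\lambda\bar\kappa.\bar u[\cdots])$ and $\bar u\jeq\bar v$ suffices; in the $\con_\ell$ case it is $\mathcal E_\ell[\lambda\bar\kappa.\bar t[\cdots],\bar S,\bar R,\bar r]$, whose $\bar R$-components are interpretations of the $M_i$ and are handled by the induction hypothesis, whose $\bar S$-components are built from the $M_i$ viewed as ordinary terms and are handled by the auxiliary fact that $M\beq N$ implies $M\jeq N$ when both are regarded as ordinary terms (a short induction matching the $\beq$-rules against the corresponding judgemental equalities of the HIT schema), and whose $\bar t,\bar r$ are handled directly by the congruence hypotheses; the $\hcomp$ case is analogous, additionally using congruence of $\mathsf{hfill}$ and $\comp$ in their type arguments.

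The two reduction rules of Figure~\ref{fig:eq:boundary} need slightly more. For $\hcomp^j_{\HIT H\delta}[\phi\mapsto M]\,M_0 \beq M[1/j]$ under $\phi\jeq\top$, the interpretation of the left-hand side is $\comp^j_{D[v\,j]}[\phi\mapsto\evalbsynclockg{\delta}{\mathcal E}{\bar x\mapsto\bar y}{(\hat\gamma,j)}{M}]\,\evalbsynclockg{\delta}{\mathcal E}{\bar x\mapsto\bar y}{\hat\gamma}{M_0}$, which for $\phi\jeq\top$ reduces by the $\comp$ equality to $\evalbsynclockg{\delta}{\mathcal E}{\bar x\mapsto\bar y}{(\hat\gamma,j)}{M}[1/j]$, and a routine induction on boundary terms gives $\evalbsynclockg{\delta}{\mathcal E}{\bar x\mapsto\bar y}{(\hat\gamma,j)}{M}[b/j] \jeq \evalbsynclockg{\delta}{\mathcal E}{\bar x\mapsto\bar y}{\hat\gamma}{M[b/j]}$ — this is exactly the sub-fact already invoked in the $\hcomp$ case of the proof of Lemma~\ref{lem:evalbsyn:typing}. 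For $\con_\ell(\bar t,\overline{\lambda\xi.M},\bar r) \beq N_i(\bar t,\overline{\lambda\xi.M},\bar r)$ under $\varphi_i\jeq\top$, where $e_\ell = [\varphi_1\,N_1,\dots]$, the interpretation of the left-hand side is $\mathcal E_\ell[\lambda\bar\kappa.\bar t[\cdots],\bar S,\bar R,\bar r]$, and by the boundary condition built into the judgement $\isclockelimlist{\Gamma}{\mathcal E}{\mathcal K}{\delta}{D}$ this equals, on the face $\phi_\ell$ that $\varphi_i\jeq\top$ forces, the interpretation $\evalbsynclockg{\delta}{\mathcal E_{<\ell}}{\bar S\mapsto\bar R}{\cdot}{N_i}$ with $\gamma=\lambda\bar\kappa.\bar t[\cdots]$ and $\bar i=\bar r$ substituted. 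It then remains to identify this with $\evalbsynclockg{\delta}{\mathcal E}{\bar x\mapsto\bar y}{\hat\gamma}{N_i(\bar t,\overline{\lambda\xi.M},\bar r)}$, which is precisely a substitution lemma: the boundary interpretation commutes up to $\jeq$ with the boundary-term substitution operation of Figure~\ref{fig:eq:boundary}, sending each recursive-argument variable $x_j$ to the pair $(S_j,R_j)$ and the ordinary variables $\gamma,\bar i$ to $\bar t,\bar r$.

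The main obstacle is this substitution lemma. It goes by structural induction on the boundary term being substituted into, but the base case $x_j\,\bar u$ forces a preliminary commutation fact — that the boundary interpretation commutes with substituting plain terms for the non-recursive variables carried in the environment $\hat\gamma$ and with supplying recursive-call arguments — so what reads as a one-line statement in fact requires keeping careful track of which variables are ``recursive'' and which are ``ordinary'' throughout the interpretation. Nothing here is conceptually deep, but it is the point at which the informal convention of silently including boundary terms among ordinary terms must be made precise, which is also why it is cleanest to run this induction in parallel with the typing principle for the boundary interpretation.
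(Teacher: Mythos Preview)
Your proposal is correct and follows essentially the same approach as the paper: induction on the derivation of $M \beq N$, with the constructor-reduction rule as the interesting case, reduced via the boundary condition on $\mathcal E_\ell$ to a substitution lemma asserting that the boundary interpretation commutes with the boundary-term substitution of Figure~\ref{fig:eq:boundary}. The paper's proof is terser---it only spells out that single case and omits the substitution lemma as a ``straight forward induction''---while you additionally make explicit the congruence cases (including the observation that the $\bar S$-components require $M \beq N \Rightarrow M \jeq N$ for boundary terms viewed as ordinary terms) and the $\hcomp$-reduction case; these are exactly the details the paper leaves implicit.
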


\begin{proof}
The proof is by induction on the proof of $M \beq N$. The interesting case is the reduction of a constructor
to its boundary, which requires showing that $\phi_i \jeq \top$  implies
\begin{equation} \label{eq:interp:bound:con:reduction}
\evalbsynclockg{\delta}{\mathcal{E}}{\bar{x} \mapsto \bar{y}}{\hat\gamma}{\con_{\ell}(\bar{t},\overline{\lambda \xi.\,M},\bar{r})}
\jeq \evalbsynclockg{\delta}{\mathcal{E}}{\bar{x} \mapsto \bar{y}}{\hat\gamma}{N_i(\bar{t},\overline{\lambda \xi.\,M},\bar{r})}
\end{equation}
The left hand side of this equation is $\mathcal{E}_\ell[\bar{t}', \bar{S}, \bar{R}, \bar{r}]$, which under the
assumption that $\phi_i \jeq \top$ equals
\[
\evalbsynclockg{\delta}{\mathcal{E}}{\bar{x} \mapsto \bar{y}}{\hat\gamma}{N_i}[\bar{t}', \bar{S}, \bar{R}, \bar{r}/
\bar \gamma, \bar x, \bar y, \bar i]
\]
It thus suffices to prove equality of the above with the right hand side of
(\ref{eq:interp:bound:con:reduction}). We omit the straight forward induction proof of this substitution property.
\end{proof}

\emph{Details of higher truncation.}
Recall that we defined a map $\alpha : \forall \kappa. \, \htrunc{A}{n} \to
\htrunc{\forall \kappa. \, A}{n}$. For $f : \forall \kappa . \, \sphere {n+1} \to
\htrunc{A}{n}$ we let $f' = \lambda s. \, \alpha(\lambda\kappa. \, (\capp f)(s))$
and $p_s : \lambda \kappa. \, \capp[\kappa_0] s = s$ is the path extracted from clock irrelevance
of $\sphere {n+1}$. Note that from the definition of $\alpha$, we get the following reductions:
\begin{align*}
\alpha(\lambda \kappa . \mathsf{in}(\capp a)) & \equiv \mathsf{in}(a) \\
\alpha(\lambda \kappa . \mathsf{hub}(\capp f)) & \equiv \mathsf{hub}(f') \\
\alpha(\lambda\kappa. \, \mathsf{spoke}(\capp s, \capp f, i)) &\equiv \\
\hcomp^j  [(i=0) \mapsto & \alpha(\lambda \kappa . \, (\capp f) (\capp {(p_s\, j)})),\\
          (i=1) \mapsto & \mathsf{hub}(f')]  \\
            \mathsf{spoke} & (\capp [\kappa_0] s,  f', i)
\end{align*}

For ease of reasoning, we write out the reductions for the canonical map $\beta$:

\begin{align*}
\beta(\mathsf{in}(a)) & \equiv \lambda \kappa . \mathsf{in}(\capp a) \\
\beta(\mathsf{hub}(f)) & \equiv \lambda\kappa. \mathsf{hub}(\lambda s. \, \capp {(\beta (f (s)))}) \\
\beta(\mathsf{spoke}(s, f, i)) &\equiv \lambda\kappa. \, \mathsf{spoke}(s, \lambda s. \, \capp {(\beta (f (s)))}, i)
\end{align*}

We show that the two maps are inverse to one another by induction, leaving out the
trivial first case. For $\mathsf{hub}$ we reason as follows:

\begin{align*}
\alpha(\beta(\mathsf{hub}(f)))
& \equiv \alpha(\lambda\kappa. \mathsf{hub}(\lambda s. \, \capp {(\beta (f (s)))}))\\
& \equiv \mathsf{hub}(\lambda s . \, \alpha(\lambda\kappa. \, \capp {(\beta (f (s)))}))\\
& \equiv \mathsf{hub}(\lambda s . \, \alpha(\beta (f (s))))\\
& = \mathsf{hub}(\lambda s . \, f (s))\\
& \equiv \mathsf{hub}(f)
\end{align*}
\begin{align*}
\beta(\alpha(\lambda \kappa . \mathsf{hub} & (\capp f)))
 \equiv \beta(\mathsf{hub} (\lambda s. \, \alpha(\lambda\kappa. \, (\capp f)(s))))\\
& \equiv \lambda\kappa' . \, \mathsf{hub}(\lambda s . \, \capp [\kappa'] {\beta(\alpha(\lambda \kappa. \, (\capp f)(s)))})\\
& = \lambda\kappa' . \, \mathsf{hub}(\lambda s . \, \capp [\kappa'] {(\lambda \kappa. \, (\capp f)(s))})\\
& \equiv \lambda\kappa . \, \mathsf{hub}(\lambda s . \, (\capp f)(s))\\
& \equiv \lambda\kappa . \, \mathsf{hub}(\capp f)
\end{align*}
Note that the only non-judgemental equality is the inductively justified cancellation
of the compositions i.e., the path $q : \lambda s . \, \alpha(\beta (f (s))) = f$
and $r : \lambda s.\, \beta(\alpha(\lambda \kappa. \, (\capp f)(s))) = \lambda s.
\, \lambda \kappa. \, (\capp f)(s)$ under the hub constructor.

We now supply the first calculation for the $\mathsf{spoke}$ constructor:
\begin{align*}
\alpha(\beta(  \mathsf{spoke} & (s, f, i))) \equiv
\alpha(\lambda\kappa. \, \mathsf{spoke}(s, \lambda s. \, \capp {(\beta (f (s)))}, i))\\
\equiv \hcomp^j [& (i=0) \mapsto \alpha(\lambda \kappa. \, \capp {(\beta (f ((\capp {p_{\const \, s}\, j}))))}) ,\\
                 & (i=1) \mapsto \mathsf{hub}(\lambda s . \, \alpha(\lambda\kappa. \, \capp {(\beta (f (s)))}))]\\
                 & \, \mathsf{spoke}(s,  \lambda s . \, \alpha(\lambda\kappa. \, \capp {(\beta (f (s)))}), i) \\
& = \mathsf{spoke}(s,  \lambda s . \, \alpha(\lambda\kappa. \, \capp {(\beta (f (s)))}), i)\\
& \equiv \mathsf{spoke}(s,  \lambda s . \, \alpha(\beta(f(s))), i)\\
& = \mathsf{spoke}(s, f, i)
\end{align*}

For us to apply induction with the above path it would need to reduce strictly to
the recursive call at $s$, $\lambda j. \, q(j)(s)$, on $i=0$ and the $\mathsf{hub}$
case, $\lambda j. \, \mathsf{hub}(\lambda s.\, q(j)(s))$, on $i=1$. This is too
tall an ask, but luckily it is sufficient for our purposes that we can show such
reductions up to a path. The path used for the induction is then defined by an
appropriate composition as in the definition of the map $\alpha$.

The calculation above uses two non-trivial paths. The first is the reduction of a
composition to its base, which we note restricts to $\lambda j.\, \alpha(\lambda
\kappa. \, \capp {(\beta (f ((\capp {p_{\const \, s}\, j}))))})$ and reflexivity on $i=0$ and $i=1$
respectively. The second is exactly an application of $q$ from above to the
function input, or more concretely the path $\lambda j. \mathsf{spoke}(s,
\lambda s. \, q(j)(s)), i)$. The reductions of spoke then mean that this reduces
to exactly the paths we are looking for. Since $p_{\const}$ is given by clock
irrelevance at a constant function it is path equal to reflexivity by lemma~\ref{lem:cirr:coh}.
Cancelling this path at $i=0$ and reflexivity at $i=1$ with a composition then
yields the desired path.

\begin{align*}
\beta(\alpha(\lambda\kappa. \,  \mathsf{spoke}(\capp s, & \capp f, i))) \\
\equiv \beta(\hcomp^j  [(i=0) & \mapsto \alpha(\lambda\kappa. \, (\capp f) (\capp {(p_s\, j)})), \\
                        (i=1) & \mapsto  \mathsf{hub}(\lambda s. \, \alpha(\lambda\kappa. \, (\capp f)(s))))] \\
                    \mathsf{spoke} & (\capp [\kappa_0] s,  \lambda s. \, \alpha(\lambda\kappa. \, (\capp f)(s))), i))\\
\equiv \hcomp^j  [(i=0) \mapsto & \beta(\alpha(\lambda\kappa. \, (\capp f) (\capp {(p_s\, j)}))),\\
                  (i=1) \mapsto & \beta(\mathsf{hub}(\lambda s. \, \alpha(\lambda\kappa. \, (\capp f)(s)))))] \\
                  \beta  (\mathsf{spoke}  ( & \capp [\kappa_0] s,  \lambda s. \, \alpha(\lambda\kappa. \, (\capp f)(s))), i)))\\
 \equiv \hcomp^j  [(i=0) \mapsto & \beta(\alpha(\lambda\kappa. \, (\capp f) (\capp {(p_s\, j)}))),\\
                   (i=1) \mapsto & \lambda\kappa' . \,
                            \mathsf{hub}(\lambda s . \, \capp [\kappa'] {\beta(\alpha(\lambda \kappa. \, (\capp f)(s)))})] \\
                   \lambda\kappa' . \mathsf{spoke} & (\capp [\kappa_0] s, \lambda s. \, \capp [\kappa'] {\beta(\alpha(\lambda\kappa. \, (\capp f)(s))))}, i)\\
= \lambda\kappa' . \mathsf{spoke} & (\capp [\kappa_0] s, \lambda s. \, \capp [\kappa'] {\beta(\alpha(\lambda\kappa. \, (\capp f)(s))))}, i)\\
= \lambda\kappa' . \mathsf{spoke} & (\capp [\kappa']  s, \lambda s. \, \capp [\kappa'] {\beta(\alpha(\lambda\kappa. \, (\capp f)(s))))}, i)\\
= \lambda\kappa . \mathsf{spoke}( & \capp s, \capp f, i)
\end{align*}

This time the boundary obligation is that on $i=0$ the above must reduce to
$\lambda j .\lambda \kappa. \, \capp {r(j)(\capp s)}$ and on $i=1$ it must reduce
to $\lambda j. \,\lambda\kappa.\, \mathsf{hub} (\lambda s. \, \capp{r(j)(\capp s)})$.
The calculation is a composition of three non-trivial paths. We first
reduce the composition to its base, which reduces to reflexivity on $i=1$ and
$\lambda j.\, \beta(\alpha(\lambda\kappa. \, (\capp f) (\capp {(p_s\, j)})))$ on
$i=0$. The next path is again the path extracted from clock irrelevance, this time
the one that shows that $\capp [\kappa_0] s = \capp [\kappa'] s$, which is
exactly the inverse application of clock irrelevance. This reduces to reflexivity
on $i=1$ and the path $\lambda j.\, \beta(\alpha(\lambda\kappa. \, (\capp f)
(\capp {(p_s^{-1}\, j)})))$ on $i=0$. At this point the composite path has the
shape $\refl \circ \lambda j. \, \beta(\alpha(p'(j)) \circ \lambda j. \, \beta(\alpha(p'^{-1}(j))$
on $i=0$ and $\refl^3$ on $(i=1)$, meaning that it reduces to $\refl$ up to a path
in either case. The last path is then the inductively obtained path
\[\lambda j. \, \lambda\kappa . \mathsf{spoke} (\capp s, \lambda s. \, \capp {r(j)(s)}, i).\]
The desired reduction now again follows from an application of the equalities governing
the behavior of $\mathsf{spoke}$.

\emph{Proving \autoref{prop:cirr:HITs}.}
We proceed by constructing a map \[(h : \forall \kappa.
\, \HIT H \delta) \to \forall \kappa. \, \capp h = \capp[\kappa_0] h.\] To do
this we first show a slightly modified induction under clocks principle with
constant $\delta$ and $\Gamma_{(-)}$ parameters. Using this modified principle
we construct the map in two stages: first we provide a candidate case for each
constructor and secondly we show that the candidate terms satisfy the appropriate
boundary condition up to path equality. The latter will allow us to rectify the
terms given by the former constructions via a composition to obtain the input for
the modified principle, thus allowing us to define a map of the desired type.
The new induction principle needs a modified version of the boundary condition,
given by transporting the usual one along the equivalence between the two, which
means that we need to, at each stage, cohere with the transport in the earlier
stages.

Before beginning the proof we need to introduce some notation and prove a small
lemma about the structure of HIT constructors.

\begin{lemma}
Let $\HIT H \delta$ be a HIT with constructors taking input
\begin{align*}
& \gamma^0 : \Gamma_i^0[\delta], \gamma^1 : \Gamma_i^1[\delta],\dots,
& x : \Xi_i^0 [\delta, \bar\gamma] \to \HIT H \delta, \dots \, \text{and} \,\,
\bar i : \Psi_i
\end{align*}
and boundary conditions $\phi_i \vdash e_i$. Then there exists an equivalent HIT
$\HIT {H'} \delta$ with constructors taking input of the form
\begin{align*}
& \gamma : \Gamma_i[\delta],
x : \Xi_i [\delta, \bar\gamma] \to \HIT {H'} \delta \, \text{and} \,\,
\bar i : \Psi_i
\end{align*}
with boundary conditions $\phi_i \vdash e_i$ where $\phi_i$ is of the form
$\bigvee (i=0) \vee (i=1)$ with $i$ ranging over all variables in $\Psi$. We say
that HITs satisfying these three criteria are of reduced form.
\end{lemma}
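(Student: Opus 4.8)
The plan is to reach reduced form by two independent rewritings of the declaration, neither of which changes the underlying type up to equivalence, and then to verify the equivalence. First I would deal with the shape of the non-recursive data. By the telescope judgements of \autoref{fig:tel}, each recursive arity $\Xi_i^k$ consists only of proper types and is typed over $\delta : \Delta$ together with the non-recursive arguments preceding it; in particular it can mention neither interval variables nor other recursive arguments. Hence the (curried) declared type of each $\con_i$ may be re-associated so that all non-recursive telescopes $\Gamma_i^0,\Gamma_i^1,\dots$ are amalgamated into a single $\Gamma_i$, followed by all recursive arguments $\bar x$, followed by $\bar\imath : \Psi_i$, without altering the type up to the evident isomorphism of iterated products. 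Applying this uniformly, and carrying the boundary terms $e_i$ along (they refer only to the same data), yields a HIT in the shape of \autoref{fig:hits:constrs} that is definitionally isomorphic to $\HIT H\delta$.

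Second I would complete the boundary of every path constructor to the full boundary of its interval cube. Processing the constructors in declaration order, suppose the earlier ones already have full boundary and consider $\con_i$ with $\Psi_i = (\bar\imath : \I^n)$ and boundary $\phi_i \vdash e_i$ where $\phi_i \subsetneq \bigvee_{k}\bigl((i_k=0)\vee(i_k=1)\bigr)$. Since every HIT carries the homogeneous composition operator $\hcomp$, hence the homogeneous filling operator $\hfillc$, the partial element $e_i$ --- defined on $\phi_i$ --- can be filled out along the faces not already covered by $\phi_i$ to a partial element $e_i'$ over $\partial_i \defeq \bigvee_{k}\bigl((i_k=0)\vee(i_k=1)\bigr)$ that restricts to $e_i$ on $\phi_i$, possibly at the cost of adjoining auxiliary point and full-boundary path constructors (which are themselves already in reduced form); compatibility of $e_i'$ on overlapping faces holds up to the boundary equality $\beq$ of \autoref{fig:eq:boundary}, so $[\partial_i \mapsto e_i']$ is a well-formed system and defines a constructor $\con_i'$ of a HIT $\HIT{H'}\delta$ of reduced form. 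This is the same device by which the full composition structure of a HIT is assembled from $\hcomp$ and transport, cf.\ \citet{CTTHITS}.

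Finally I would exhibit the equivalence $\HIT H\delta \simeq \HIT{H'}\delta$. Using the elimination principle of \citet{CavalloHarper} I would define a map $\HIT H\delta \to \HIT{H'}\delta$ sending a point constructor to the corresponding $\con_i'$ and a path constructor $\con_i$ to a homogeneous composition of $\con_i'$ with the filler witnessing $\restrict{e_i'}{\phi_i} \jeq e_i$, and a map $\HIT{H'}\delta \to \HIT H\delta$ symmetrically, using that $\con_i$ sits inside the completed boundary of $\con_i'$. The boundary side conditions of the eliminator hold either judgementally or, on the faces where completion introduced a genuine difference, up to a path that is absorbed by a further $\hcomp$, exactly as in the higher-truncation computations earlier in this appendix; a last application of the eliminator then shows the two maps mutually inverse, the recursive-argument cases closing under the induction hypotheses and the path-constructor cases under the coherences between the chosen fillers. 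I expect the main obstacle to be precisely this last step: choosing the fillers $e_i'$ coherently enough that all of these boundary conditions line up --- judgementally where possible, otherwise up to a definite path --- simultaneously across both eliminations, which is a delicate but routine exercise in cubical bookkeeping.
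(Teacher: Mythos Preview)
Your first step matches the paper's: bundle the non-recursive telescopes into a single iterated $\Sigma$-type. The paper additionally bundles the recursive arities $\Xi_i^0, \Xi_i^1, \dots$ into one $\Xi_i$ by taking their coproduct, so that a single $x : \Xi_i \to \HIT{H'}\delta$ encodes the whole tuple of recursive arguments; you should make this explicit since the target shape has a single $x$.

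For completing the boundary the paper takes a different and more direct route. Rather than extending $e_i$ with fillers, it adds, for each face $(j=b)$ missing from the disjunctive normal form of $\phi_i$, a fresh lower-dimensional constructor $\con_i^{(j=b)}$ with interval telescope $\Psi_i \setminus \{j\}$ and the same $\Gamma_i, \Xi_i$; its own boundary is completed by the same recursion if needed, and once defined it is used verbatim as the $(j=b)$-component of the enlarged boundary of $\con_i$. No $\hcomp$ enters the new boundary terms. This buys two things over your proposal. First, it is uniform even when there is nothing to fill from---a path constructor with $\phi_i = 0_\F$ and no prior data gives your $\hfillc$ no base, and you would fall back on auxiliary constructors anyway. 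Second, it makes the equivalence genuinely immediate rather than the delicate bookkeeping you anticipate: the map $\HIT{H'}\delta \to \HIT{H}\delta$ sends each auxiliary $\con_i^{(j=b)}$ to the face $\con_i(-,-,\bar r[j \mapsto b])$ of the original constructor, and both round-trips are proved by a straight induction with no filler-coherence to track. Your approach is not wrong---the boundary grammar admits $\hcomp$, and with enough auxiliary constructors it can be pushed through---but the appeal to the $\hcomp$/$\trans$ decomposition of \citet{CTTHITS} is a red herring (that device assembles $\comp$, which is orthogonal to completing a partial boundary specification), and the filler route creates exactly the coherence debt you flag at the end.
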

\begin{proof}
Modifying the $\Gamma$ and $\Xi \to \HIT H \delta$ input is trivial, simply take
$\Gamma_i [\delta]$ to be an iterated $\Sigma$ type consisting of the $\Gamma_i^j$
and $\Xi_i[\delta, \gamma] \defeq \Xi_i^j[\delta, \gamma] + \Xi_i^j[\delta, \gamma]
+ \dots$. This procedure clearly yields an equivalent HIT, so we make this assumption
freely.

We achieve boundaries of the desired shape by adding constructors to specify the
full boundary, noting that $\bigvee (i=0) \vee (i=1)$ as above is the second
largest element of the face lattice specifying the entire boundary of a cube but
not the interior. We call it the total face relative to $\Psi$. Let $\con_i$ be
a constructor of $\HIT H \delta$ with boundary
extent $\phi_i$ and interval input $\Psi_i$. We add a number of constructors to
$\HIT {H'} \delta$ recursively in the following way: Write $\phi_i$ in disjunctive
normal form. For each $j \in \Psi_i$ if $(j=0)$ and $(j=1)$ appear as disjuncts
of $\phi_i$ we add $\con_i$ as is. Say that $(j=0)$ is missing; we then add a constructor
$\con_i^{(j=0)}$ with the same $\Gamma$ and $\Xi$ as $\con_i$. This new constructor
then has interval input $\Psi$ with the $j$ variable deleted. Now consider the face
$\phi_i \wedge (j=0)$. If this is the total face relative to $\Psi$ without $j$,
we define the boundary term of $\con_i^{(j=0)}$ to be the boundary of $\con_i$
restricted to $(j=0)$ and proceed with the next interval variable in $\Psi$. If
it is not the total face and is missing for instance $k\in \Psi$, we repeat the
procedure, defining a new constructor $\con_i^{(j=0)\wedge (k=0))}$ in the same way.
Having recursively defined this new constructor, we use it for the boundary at $k=0$.

It is immediate that we can define mutually inverse maps between $\HIT H \delta$
and $\HIT {H'} \delta$ using their respective elimination principles.
\end{proof}

The point of the modified boundary shape is that constructors with such boundaries
exactly correspond to constructors for heterogeneous, iterated path types. This means that for
HITs of the above form we can treat the end result $(i:\Psi) \to A[\phi \mapsto
e]$ as a proper type, allowing it to appear in e.g., $\Sigma$-types. We treat
it as we would path types, introducing terms of them by abstraction and eliminating
from them by application. As for path types we can compose iterated paths, and we
record this fact in the following lemma:

\begin{lemma}\label{lemma:transp:ext}
\[
\inferrule*{\hastype{\Gamma}{p}{\Pi (i : \Psi).\, A[ \phi \mapsto u ]} \\
\hastype{\Gamma, (i : \Psi), \phi}{q}{u = v}}
{\hastype{\Gamma}{(i.\,q)^*\,p}{\Pi (i : \Psi).\, A[ \phi \mapsto v ]}}
\]
\end{lemma}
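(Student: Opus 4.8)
The plan is to produce the transported iterated path by a single homogeneous composition, in exact analogy with transport (or concatenation) for ordinary paths. Concretely, I would set
\[
(i.q)^*\,p \;\defeq\; \lambda i.\, \hcomp^j_{A}\,[\phi \mapsto q\,j]\,(p\,i),
\]
where $j$ is a fresh interval variable not occurring in $\Psi$: for each point of the cube $\Psi$ this is the box whose base is $p\,i$ and whose tube along $\phi$ runs from $u$ to $v$ following $q$, its lid being the desired element of $A$; the $\lambda$-abstraction over the telescope $\Psi$ reassembles these lids into an iterated path. Since $A$ is constant in $j$, this homogeneous composition is an instance of the $\comp$ operator recalled in \autoref{sec:CTT}.

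The first step is the well-formedness side condition for the $\hcomp$, namely that the tube and the base agree at $j=0$ over $\phi$. This is immediate: from $q : u = v$ we have $q\,0 \jeq u$, and from $p : \Pi(i:\Psi).A[\phi\mapsto u]$ we have $p\,i \jeq u$ whenever $\phi$ holds, so both sides are $u$ there. The second step is to read off the boundary of the result over $\phi$: by the computation rule for $\hcomp$ on the face $\phi$, the composite equals the tube at $j=1$, that is $q\,1 \jeq v$. Hence $\lambda i.\, \hcomp^j_A[\phi\mapsto q\,j]\,(p\,i)$ has type $\Pi(i:\Psi).A[\phi\mapsto v]$, which is the conclusion. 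If naturality in $\Gamma$ is needed downstream it follows at once, since $\hcomp$, $\lambda$-abstraction and application all commute with substitution.

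I do not expect a genuine obstacle here; the only point requiring care is the bookkeeping around the face $\phi$. In the intended use $\phi$ is the total face relative to $\Psi$ — every interval variable of $\Psi$ pinned to an endpoint — so one must keep the auxiliary direction $j$ disjoint from all variables of $\Psi$ and check the two side conditions above on this (possibly high-dimensional) face rather than at a single endpoint. With that in place the argument is just the observation that the generalised path type $\Pi(i:\Psi).A[\phi\mapsto -]$, which the surrounding development treats on a par with ordinary path types, inherits a transport operation from the ambient CTT composition structure, and the explicit $\hcomp$ is what realises it with the strict boundary.
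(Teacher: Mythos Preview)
Your proposal is correct and is essentially identical to the paper's own proof, which also defines $(i.\,q)^*\,p \defeq \lambda i.\, \hcomp^j\,[\phi \mapsto q\,j]\,(p\,i)$. Your additional verification of the side conditions (agreement of tube and base at $j=0$, and the resulting boundary at $j=1$) is exactly what is needed and is left implicit in the paper.
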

\begin{proof}
\(
(i.\,q)^*\,p \defeq \lambda i.\, \hcomp^j\,[\phi \mapsto q\,j]\,(p\,i)
\)
\end{proof}

\newcommand{\Elim}{\mathsf{Elim}}
\newcommand{\Elimc}{\mathsf{Elim}^{\const}}


We define the following types in context where $\delta : \Delta$, and $D$ a family over $\forall \kappa. \HIT{H}{\delta}$, where
$\bar{u_{<\ell}} : \mathcal{E}_{<\ell} (\delta,D)$ as specified below, and $\delta : \forall\kappa. \Gamma_\ell(\delta)$:
\begin{align*}
\mathcal{P}_\ell &(\delta,D,\bar{u_{<\ell}},\gamma) \defeq \\\Pi
        & (x : (\forall \kappa .\, \Xi_\ell[\delta,\capp \gamma] \to
             \HIT H \delta))\\
        & (y : \Pi(\xi : \forall \kappa.\, \Xi_\ell[\delta,\capp \gamma]).
        D[\lambda \kappa . \capp x (\capp \xi)])\\
        & (i : \Psi_\ell).\\
        & D[\lambda \kappa. \, \con_\ell (\capp \gamma, \capp x, i)]
        [\phi_\ell \mapsto \, \evalbsynclock{\delta}{\bar{u_{<\ell}}}{x \mapsto y}{e_\ell}] \\
\mathcal{E}_\ell &(\delta,D,\bar{u_{<\ell}}) \defeq
    \Pi (\gamma : \forall \kappa . \Gamma_\ell[\delta])
        .\, \mathcal{P}_\ell(\delta,D,\bar{u_{<\ell}},\gamma)\\
\mathcal{E}'_\ell &(\delta, D, \bar{u_{<\ell}}) \defeq
    \Pi (\gamma : \Gamma_\ell[\delta])
        .\, \mathcal{P}_\ell(\delta,D,\bar{u_{<\ell}},\lambda \_.\,\gamma)
\end{align*}
where
\begin{align*}
\mathcal{E} (\delta,D) & =  \Sigma (u_{\ell_0} : \mathcal{E}_{\ell_0} (\delta)).\,
  \Sigma (u_{\ell_1} : \mathcal{E}_{\ell_1} (\delta, u_{\ell_0})).\\
  &  \ldots\,
  \mathcal{E}_{\ell_n} (\delta, (u_{\ell_0},u_{\ell_1}, \ldots, u_{\ell_{n-1}}))
\end{align*}
and $\mathcal{E}_{<\ell} (\delta,D)$ is the prefix of the above iterated
$\Sigma$ types containing fields only for the labels below $\ell$.
The above definition is well-founded because $\mathcal{E}_\ell(\delta,-)$ only refers to $\mathcal{E}_{\ell'}(\delta,-)$ for labels $\ell' < \ell$.

\newcommand{\cirrg}[1]{\mathsf{cirr}_{\Gamma_{#1}}}
\newcommand{\cirrgcoh}[1]{\mathsf{cirr\!\!-\!\!coh}_{\Gamma_{#1}}}
\begin{lemma}\label{lem:cirr:coh}
  For each $\ell$ we have
  \begin{align*}
  \cirrg{\ell} & : \Pi (\gamma : \forall \kappa. \Gamma_\ell[\delta]).\, \lambda \_.\, \capp[\kappa_0] \gamma = \gamma\\
  \cirrgcoh{\ell} & : \Pi (\gamma : \Gamma_\ell[\delta]).\, \mathsf{refl} = \mathsf{cirr}_{\Gamma_\ell}\,(\lambda \_.\, \gamma)
  \end{align*}
\end{lemma}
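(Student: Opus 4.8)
The plan is to obtain both terms from a coherent (half-adjoint) equivalence structure on the canonical map $\const : \Gamma_\ell[\delta] \to \forall\kappa.\,\Gamma_\ell[\delta]$, $\const(\gamma) = \lambda\_.\,\gamma$. In the reduced form assumed throughout the proof of \autoref{prop:cirr:HITs} each $\Gamma_\ell[\delta]$ is a single ($\Sigma$-)type, and by the hypothesis of that theorem it is clock irrelevant, i.e.\ $\const$ is an equivalence (Definition~\ref{def:cirr}). The key observation is that clock application $h \mapsto \capp[\kappa_0] h$ is a \emph{judgemental} retraction of $\const$: by the $\beta$-rule for clock abstraction $\capp[\kappa_0]{(\const\gamma)} \jeq \gamma$, so $\lambda\gamma.\,\mathsf{refl}_\gamma$ typechecks as a unit $\eta : \id \sim (\capp[\kappa_0]{-})\circ\const$.

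First I would upgrade $\const$ to a half-adjoint equivalence whose quasi-inverse is $\capp[\kappa_0]{-}$ and whose unit is exactly this $\eta$. Since $\const$ is an equivalence retracted by $\capp[\kappa_0]{-}$, the latter is a quasi-inverse of $\const$, so some $\epsilon_0 : \const\circ(\capp[\kappa_0]{-}) \sim \id$ exists; feeding the quasi-inverse data $(\capp[\kappa_0]{-},\,\eta,\,\epsilon_0)$ into the standard adjointification lemma~\cite{hottbook} yields a counit $\epsilon : \const\circ(\capp[\kappa_0]{-}) \sim \id$ together with a triangle identity $\tau : \Pi(\gamma : \Gamma_\ell[\delta]).\,\mathsf{ap}_{\const}(\eta_\gamma) = \epsilon_{\const\gamma}$, the adjointification leaving $\eta$ unchanged. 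I then set $\cirrg{\ell} := \epsilon$: unfolding, $\epsilon_\gamma$ is a path $\const(\capp[\kappa_0]\gamma) = \gamma$, which is exactly the desired $\lambda\_.\,\capp[\kappa_0]\gamma = \gamma$.

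For the coherence I take $\cirrgcoh{\ell} := \tau$ (up to a symmetry of paths, depending on the orientation convention in the adjointification). Indeed, for $\gamma : \Gamma_\ell[\delta]$ we have $\eta_\gamma \jeq \mathsf{refl}_\gamma$, hence $\mathsf{ap}_{\const}(\eta_\gamma) \jeq \mathsf{refl}_{\const\gamma}$ in cubical type theory, so $\tau_\gamma$ is a path $\mathsf{refl}_{\const\gamma} = \epsilon_{\const\gamma} = \cirrg{\ell}(\lambda\_.\,\gamma)$; using once more that $\capp[\kappa_0]{(\const\gamma)}\jeq\gamma$, the right-hand side is a loop at $\const\gamma$, so this is precisely $\mathsf{refl} = \cirrg{\ell}(\lambda\_.\,\gamma)$. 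There is no real difficulty here; the only point requiring care is routing the argument through the judgemental retraction $\capp[\kappa_0]{-}$ so that $\eta$ is literally $\mathsf{refl}$ — that is what makes $\cirrgcoh{\ell}$ come out as a $2$-cell directly against $\mathsf{refl}$, as is needed when it is later used to cohere with the transports of boundary conditions.
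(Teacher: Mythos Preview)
Your proposal is correct and is essentially the same argument as the paper's: both use clock irrelevance of $\Gamma_\ell$ together with the judgemental retraction $\capp[\kappa_0]{(\const\gamma)}\jeq\gamma$ (so the unit is literally $\mathsf{refl}$), upgrade to a half-adjoint equivalence via the standard adjointification, and then project the counit as $\cirrg{\ell}$ and the triangle identity as $\cirrgcoh{\ell}$. Your write-up is merely more explicit about why the adjointification leaves the $\mathsf{refl}$ unit untouched, which is the point that makes $\cirrgcoh{\ell}$ land on $\mathsf{refl}$.
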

\begin{proof}
From clock irrelevance of $\Gamma_\ell$ we get a proof that the constant map from $\Gamma_\ell[\delta]$ to $\forall \kappa. \Gamma_\ell[\delta]$ has a right inverse. Moreover it has application to $\kappa_0$ as a left inverse with reflexivity as the proof. From this we obtain a proof that the constant map is an half adjoint equivalence, from which we can project $\cirrg{\ell}$ and $\cirrgcoh{\ell}$.
\end{proof}

\begin{lemma}
For each $\ell$, $\delta$ and $u_{<\ell}$ we have an equivalence of types
$f_\ell : \mathcal{E}'_\ell(\delta,D,u_{<\ell}) \simeq \mathcal{E}_\ell(\delta,D,u_{<\ell})$.
\end{lemma}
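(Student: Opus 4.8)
The plan is to realise $f_\ell$ as the inverse of precomposition along the canonical map $\const : \Gamma_\ell[\delta] \to \forall\kappa.\Gamma_\ell[\delta]$. By the standing hypothesis of \autoref{prop:cirr:HITs}, $\Gamma_\ell[\delta]$ is clock-irrelevant, which by Definition~\ref{def:cirr} means precisely that $\const$ is an equivalence.

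First I would unfold the definitions and observe that $\mathcal{E}'_\ell(\delta,D,u_{<\ell})$ is \emph{literally} the type $\Pi(\gamma : \Gamma_\ell[\delta]).\,\mathcal{P}_\ell(\delta,D,u_{<\ell},\const\,\gamma)$, since the term $\lambda\_.\,\gamma$ appearing in its definition is exactly $\const\,\gamma$. Thus $\mathcal{E}'_\ell(\delta,D,u_{<\ell})$ is obtained from $\mathcal{E}_\ell(\delta,D,u_{<\ell}) = \Pi(\gamma : \forall\kappa.\Gamma_\ell[\delta]).\,\mathcal{P}_\ell(\delta,D,u_{<\ell},\gamma)$ by precomposing the bound variable with $\const$, and the map $\lambda u.\,\lambda\gamma.\,u\,(\const\,\gamma)$ is precisely the precomposition map $(-)\circ\const$ on $\Pi$-types. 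Implicit in this step is that $\mathcal{P}_\ell(\delta,D,u_{<\ell},-)$ is a well-defined type family on $\forall\kappa.\Gamma_\ell[\delta]$; the only non-routine point there is that its boundary component $\evalbsynclock{\delta}{u_{<\ell}}{x\mapsto y}{e_\ell}$ has the stated type, which is Lemma~\ref{lem:evalbsyn:typing} applied with the clock vector taken to be the singleton $\kappa$.

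Next I would invoke the standard fact that precomposition with an equivalence is an equivalence: for $e : A \simeq B$ and any $P : B \to \Univ$, the map $\Pi(b:B).\,P(b) \to \Pi(a:A).\,P(e\,a)$ is an equivalence, with an explicit inverse sending $g$ to $\lambda b.\,(\mathsf{ap}_e(\eta\,b))_*\,(g\,(e^{-1}\,b))$, where $\eta$ witnesses $e\circ e^{-1} = \id$, and with the witnessing homotopies built from the remaining coherence data of the half-adjoint equivalence structure on $e$. Instantiating with $e = \const : \Gamma_\ell[\delta] \simeq \forall\kappa.\Gamma_\ell[\delta]$ and $P = \mathcal{P}_\ell(\delta,D,u_{<\ell},-)$ shows that $(-)\circ\const$ is an equivalence; I then set $f_\ell$ to be its inverse, which can be spelled out concretely using $\cirrg{\ell}$ from Lemma~\ref{lem:cirr:coh} to transport $u'\,(\capp[\kappa_0]\gamma)$ from the fibre over $\lambda\_.\,\capp[\kappa_0]\gamma$ to the one over $\gamma$.

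There is essentially no genuine obstacle beyond bookkeeping: the content is entirely ``precomposition with an equivalence is an equivalence'' combined with clock-irrelevance of $\Gamma_\ell[\delta]$. The one mildly delicate point is to make sure the definitional identification $\lambda\_.\,\gamma \jeq \const\,\gamma$ is used, so that $\mathcal{E}'_\ell$ really is a precomposition of $\mathcal{E}_\ell$ and no extra coherence is needed to move between the two descriptions.
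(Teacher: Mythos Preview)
Your proposal is correct and matches the paper's approach: the paper defines $f_\ell(u') \defeq \lambda \gamma.\, \cirrg{\ell}(\gamma)^* (u'\,(\capp[\kappa_0] \gamma))$, which is exactly the explicit inverse to precomposition-by-$\const$ that you describe at the end. Your framing via ``precomposition with an equivalence is an equivalence'' makes the equivalence claim more explicit than the paper's terse proof, but the underlying construction is the same.
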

\begin{proof}
We will use $\cirrg{\ell}$ to transport
$\mathcal{P}_\ell(\delta,D,u_{<\ell},\lambda \_.\, \capp[\kappa_0] \gamma)$ to
$\mathcal{P}_\ell(\delta,D,u_{<\ell},\gamma)$. Concretely we define
\[
f_\ell(u') \defeq \lambda \gamma.\, \cirrg{\ell}(\gamma)^* (u'\,(\capp[\kappa_0] \gamma))
\]
\end{proof}

We then define $\mathcal{E}' (\delta,D)$ as the iterated sigma type
\begin{align*}
\Sigma (u_{\ell_0} : \mathcal{E}'_{\ell_0} (\delta,D)).
  \Sigma (u_{\ell_1} : \mathcal{E}'_{\ell_1} (\delta,D, f_{\ell_0}(u_{\ell_0})).\\
  \ldots
  \mathcal{E}'_{\ell_n} (\delta,D, f(\bar{u_{<\ell_{n}}})
\end{align*}
where we wrote $f(\bar{u_{<\ell_{n}}})$ in place of
\[
f_{\ell_0}(u_{\ell_0}),f_{\ell_1}(u_{\ell_1}), \ldots, f_{\ell_{n-1}}(u_{\ell_{n-1}})
\]
as we will do going forward. In fact the family $f_\ell$ can be collected into an equivalence of type
$\mathcal{E}' (\delta,D) \simeq \mathcal{E} (\delta,D)$ which also restricts to the $<\ell$ case.

\begin{lemma}\label{lem:pi-eliminators}
Let $\delta : \Delta$, and $\istype
        {h : \forall \kappa. \, \HIT H (\delta)} D$ and let $t : \forall \kappa . \,
        \HIT H \delta$.
\begin{itemize}
  \item[(a)] Elimination under a single clock allows us to
        produce a term of the type $\Elim(\delta, D, t) \defeq \mathcal{E}(\delta,D) \to D[t]$.
  \item[(b)] 
        Clock elimination with
        constant $\Gamma_\ell$ parameter allows us to produce a term of the type
        $\Elimc(\delta, D, t) \defeq \mathcal{E}'(\delta,D) \to D[t]$
\end{itemize}
\end{lemma}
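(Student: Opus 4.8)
The plan is to derive this lemma directly from the principle of induction under clocks of Figure~\ref{fig:hits:clockelim}, instantiated with the single clock variable $\kappa$ and the constant parameter family $\lambda\kappa.\,\delta$, after recognising that the packaged type $\mathcal{E}(\delta,D)$ is simply a repackaging of an elimination list.

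For part~(a), I would first unfold $\mathcal{E}(\delta,D)$: an element of it is an iterated tuple $(u_{\ell_0},\dots,u_{\ell_n})$ with each $u_\ell : \mathcal{E}_\ell(\delta,D,\bar{u_{<\ell}})$, and $\mathcal{E}_\ell(\delta,D,\bar{u_{<\ell}})$ unfolds through $\mathcal{P}_\ell$ to precisely the $\Pi$-type over the non-recursive argument $\gamma : \forall\kappa.\,\Gamma_\ell[\delta]$, the recursive arguments $\bar x$, the induction hypotheses $\bar y$ and the interval variables $\bar i$, with codomain $D[\lambda\kappa.\,\con_\ell(\capp\gamma,\bar{\capp x},\bar i)][\phi_\ell \mapsto \evalbsynclock{\delta}{\bar{u_{<\ell}}}{x\mapsto y}{e_\ell}]$. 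This is exactly the type demanded of the component $u$ in the rule defining $\isclockelimlist{\Gamma}{\mathcal{E},u}{\mathcal{K},(\ell,(\Gamma,\bar{\Xi},\Psi,\phi,e))}{\delta}{D}$; that the boundary term on the right-hand side has the stated type is exactly Lemma~\ref{lem:evalbsyn:typing}. Hence a given element of $\mathcal{E}(\delta,D)$ \emph{is} an elimination list $\isclockelimlist{\Gamma}{\mathcal{E}}{\mathcal{K}}{\delta}{D}$ for the full list $\mathcal{K}$ of constructors of $H$, and applying the induction-under-clocks rule to it and to $t$ produces $\elim{H}{D}{\mathcal{E}}{t} : D[t]$. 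Since the construction is uniform in the tuple, it defines the map $\Elim(\delta,D,t) : \mathcal{E}(\delta,D) \to D[t]$.

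For part~(b), I would reduce to~(a) using the equivalences established just before the statement: for each $\ell$ and each valid prefix $\bar{u_{<\ell}}$ the map $f_\ell : \mathcal{E}'_\ell(\delta,D,\bar{u_{<\ell}}) \simeq \mathcal{E}_\ell(\delta,D,\bar{u_{<\ell}})$ transports a constructor case with constant $\Gamma_\ell$-argument $\gamma$ along the path $\cirrg{\ell}(\gamma)$ of Lemma~\ref{lem:cirr:coh}. Since transport along a path in the base sends a term respecting a boundary condition to one respecting the transported boundary, $f_\ell(u')$ is again a well-formed element of $\mathcal{E}_\ell$; and because $\mathcal{E}'(\delta,D)$ is defined so that each later field depends on $f$ applied to the earlier fields, these $f_\ell$ assemble, by induction on the list of labels, into an equivalence $\mathcal{E}'(\delta,D) \simeq \mathcal{E}(\delta,D)$. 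Then $\Elimc(\delta,D,t)$ is simply the composite of this equivalence with $\Elim(\delta,D,t)$ from~(a).

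The main obstacle I expect is the bookkeeping in part~(a): carefully checking that the packaged definitions $\mathcal{E}_\ell$ and $\mathcal{P}_\ell$ match the premises of the $\isclockelimlist$ judgement exactly, boundary conditions included, and — in part~(b) — that transporting each constructor case along $\cirrg{\ell}$ moves its boundary obligation compatibly with the way $\mathcal{E}'$ threads $f$ through the prefixes. Neither step requires a new idea beyond Lemma~\ref{lem:evalbsyn:typing} and the preceding equivalence lemma, so I would keep both at the level of "unfold the definitions and match them up".
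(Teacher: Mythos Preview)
Your proposal is correct and matches the paper's approach exactly: the paper's proof says only that (a) is direct from the typing of induction under a single clock, and that (b) follows by composing (a) with the equivalence $f : \mathcal{E}'(\delta,D) \simeq \mathcal{E}(\delta,D)$ assembled from the $f_\ell$. Your elaboration of the bookkeeping is accurate and is precisely what the paper leaves implicit.
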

\begin{proof}
Case (a) is direct from typing of elimination under a single clock,
case (b) follows by composing (a) with the equivalence $f$.
\end{proof}

In the following we will fix $D[t]$ to be $\forall \kappa. \capp[\kappa_0] t = \capp t$.

\begin{lemma}\label{lem:cirrhits:mapcases}
For each $\ell$ we can type $b_\ell$ as shown:
\begin{align*}
\mathcal{T}_\ell(\delta) \defeq
    \Pi & (\gamma : \Gamma_\ell[\delta])\\
        & (x : \forall \kappa .\, \Xi_\ell[\delta,\gamma] \to
             \HIT H \delta)\\
        & (y : \Pi(\xi : \forall \kappa.\, \Xi_\ell[\delta,\gamma]).
        D[\lambda \kappa . \capp x (\capp \xi)]).\\
        & (i : \Psi_\ell).\\
        & D[\lambda \kappa. \, \con_\ell (\gamma, \capp x, i)]\\
        &\hspace{.4cm}[\phi_\ell \mapsto \lambda \kappa.\, \lambda j.\, e_\ell[\gamma,\lambda \xi.\, \capp{y\,(\lambda \_.\, \xi)}\, j, i]] \\
\hastype{\delta : \Delta}
{b_\ell \defeq& \lambda \gamma\,x\,y\,i\,\kappa\,j.\, \con_\ell(\gamma, \lambda \xi.\, \capp{y\,(\lambda \_.\, \xi)}\,j, i)}
{\mathcal{T}_\ell(\delta)}
\end{align*}
\end{lemma}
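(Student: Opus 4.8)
This is a typing check: $b_\ell$ is obtained by applying $\con_\ell$ dimension-wise along the path supplied by the induction hypothesis $y$, and the plan is to fix $\gamma : \Gamma_\ell[\delta]$, $x : \forall\kappa.\,\Xi_\ell[\delta,\gamma]\to\HIT{H}{\delta}$, $y$ and $i : \Psi_\ell$ as in $\mathcal{T}_\ell(\delta)$ and verify that the displayed definition of $b_\ell$ has the stated type, by unfolding $D$ and tracking the clock and path structure. All the clock abstractions occurring below are well formed because the schema telescopes $\Delta, \Gamma_\ell, \Xi_\ell$ contain no interval, face, clock or tick variables.

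First I would observe that for $\xi : \Xi_\ell[\delta,\gamma]$ the constant family $\lambda\_.\,\xi$ has type $\forall\kappa.\,\Xi_\ell[\delta,\gamma]$; applying $y$ to it, then applying the result at $\kappa$, and using $\beta$ for clock application together with the definition $D[t] \jeq \forall\kappa.\,(\capp[\clockcnst]{t} = \capp{t})$, yields
\[
  \capp{y\,(\lambda\_.\,\xi)} : \idty{\HIT{H}{\delta}}{\capp[\clockcnst]{x}\,\xi}{\capp{x}\,\xi}
\]
so that for $j : \I$ we get $\capp{y\,(\lambda\_.\,\xi)}\,j : \HIT{H}{\delta}$ and hence $\lambda\xi.\,\capp{y\,(\lambda\_.\,\xi)}\,j : \Xi_\ell[\delta,\gamma]\to\HIT{H}{\delta}$. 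Feeding this into the introduction rule for $\con_\ell$ (with parameters $\gamma$ and $i$) gives $\con_\ell(\gamma,\lambda\xi.\,\capp{y\,(\lambda\_.\,\xi)}\,j,i):\HIT{H}{\delta}$. Next I would compute the $j$-endpoints: by endpoint reduction for path application and $\eta$ for functions, $\lambda\xi.\,\capp{y\,(\lambda\_.\,\xi)}\,0 \jeq \capp[\clockcnst]{x}$ and $\lambda\xi.\,\capp{y\,(\lambda\_.\,\xi)}\,1 \jeq \capp{x}$, so $\lambda j.\,\con_\ell(\gamma,\lambda\xi.\,\capp{y\,(\lambda\_.\,\xi)}\,j,i)$ is a path from $\con_\ell(\gamma,\capp[\clockcnst]{x},i)$ to $\con_\ell(\gamma,\capp{x},i)$, and abstracting over $\kappa$ produces an element of $D[\lambda\kappa.\,\con_\ell(\gamma,\capp{x},i)]$.

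Finally I would discharge the boundary condition. The introduction rule for $\con_\ell$ says that $\con_\ell(\gamma,a,i) \jeq e_\ell[\gamma,a,i]$ on the face $\phi_\ell$, for \emph{any} $a : \Xi_\ell[\delta,\gamma]\to\HIT{H}{\delta}$; instantiating $a \defeq \lambda\xi.\,\capp{y\,(\lambda\_.\,\xi)}\,j$ and using that judgemental equality is a congruence for the $\lambda\kappa$- and $\lambda j$-abstractions, the element built above is, on $\phi_\ell$, judgementally equal to $\lambda\kappa.\,\lambda j.\,e_\ell[\gamma,\lambda\xi.\,\capp{y\,(\lambda\_.\,\xi)}\,j,i]$, which is exactly the partial element demanded by $\mathcal{T}_\ell(\delta)$. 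Abstracting over $\gamma,x,y,i$ then gives $b_\ell : \mathcal{T}_\ell(\delta)$. I do not expect a genuine obstacle; the only point needing care is that the $\eta$- and path-endpoint reductions line up so that both the endpoints of the target type and the boundary equation hold \emph{judgementally} rather than merely up to path. This is precisely why $\mathcal{T}_\ell$ carries this ``naive'' boundary $\lambda\kappa\lambda j.\,e_\ell[\ldots]$, and why reconciling it with the interpreted boundary $\evalbsynclock{\delta}{\mathcal{E}}{x \mapsto y}{e_\ell}$ is left to the subsequent rectification step.
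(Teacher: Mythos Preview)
Your proposal is correct and takes the same approach as the paper, which simply records that the typing follows directly from the introduction rule for $\con_\ell$. You have spelled out in detail exactly the verification the paper leaves implicit: that the recursive argument built from $y$ has the right path type with the right endpoints, and that the boundary equality on $\phi_\ell$ is inherited judgementally from the constructor's own boundary rule.
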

\begin{proof}
Follows directly by the typing rule for $\con_\ell$.
\end{proof}
Note that $\mathcal{T}_\ell(\delta)$ differs from $\mathcal{E}'_\ell(\delta,D,\bar{u_{<l}})$ only in the boundary of the final result. We bridge this gap with lemma~\ref{lem:cirrhits:boundary}.

\newcommand{\lconst}[1]{\lambda \_.\, {#1}}
\begin{lemma}\label{lem:cirrhits:boundary}
The terms defined in lemma \ref{lem:cirrhits:mapcases} satisfy the boundary
conditions of lemma \ref{lem:pi-eliminators} (b) up to path equality, i.e., for each $\ell$ we have a
term $\lemmaD{\ell}$ of type
\[
\lambda \kappa.\, \lambda j .\, e_\ell[\gamma,\lambda \xi.\, \capp{y\,(\lconst \xi)}\, j, i]
=
\evalbsynclock{\lconst \delta}{\bar{v_{<\ell}}}{x \mapsto y}{e_\ell}[\lambda \_.\, \gamma]
\]
in the appropriate context, and where each $v_\ell$ is defined as
\begin{align*}
v_\ell :\, & \mathcal{E}(\delta,D,\bar{v_{<\ell}})\\
v_\ell \defeq & f_\ell(g_\ell(b_\ell))\\
g_\ell :\, & \mathcal{T}_\ell(\delta) \to \mathcal{E}'(\delta,D,\bar{v_{<l}})\\
g_\ell(t) \defeq & \lambda \gamma.\lambda x.\lambda y. (i. \lemmaD{\ell})^* (t\,\gamma\,x\,y)
\end{align*}
\end{lemma}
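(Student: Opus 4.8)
The plan is to build the required coherence term $\lemmaD{\ell}$ by reducing the mismatch between the two boundary expressions to a finite composite of paths coming from (i) the interpretation of the boundary $e_\ell$ under $\bar{v_{<\ell}}$ unfolding according to the cases of \autoref{fig:hits:clockelim}, and (ii) the transport along $\cirrg{\ell}$ that was used to build each $v_{\ell'}$. First I would prove an auxiliary substitution/naturality lemma describing how $\evalbsynclock{\lconst\delta}{\bar{v_{<\ell}}}{x \mapsto y}{M}$ behaves for a boundary term $M \in \Bndr(\mathcal K_{<\ell};\Theta)$ when each $v_{\ell'} = f_{\ell'}(g_{\ell'}(b_{\ell'}))$ is plugged in: concretely, that it is path equal to $\lambda\kappa.\lambda j.\, M[\gamma,\lambda\xi.\,\capp{y\,(\lconst\xi)}\,j, i]$ \emph{up to the transports} $\cirrg{\ell'}$ contributed by the $f_{\ell'}$ wrappers and the coherence $\cirrgcoh{\ell'}$ that trivialises them on constant arguments. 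This is the heart of the argument and I expect it to be the main obstacle.

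I would carry out that auxiliary lemma by induction on the structure of $M$, following exactly the three clauses of the boundary interpretation. In the variable case $M = x_j\,\bar u$, the interpretation produces $y_j$ applied to $\lambda\kappa.\bar u[\dots]$; since here the induction hypotheses $y$ are instantiated at constant families $\lconst\xi$, the outer $\forall\kappa$ is trivial and matching against $\capp{y_j(\lconst\xi)}\,j$ is immediate. In the constructor case $M = \con_{\ell'}(\bar{t_{\ell'}},\overline{\lambda\xi.M},\bar{r_{\ell'}})$ the interpretation calls $\mathcal E_{\ell'} = v_{\ell'} = f_{\ell'}(g_{\ell'}(b_{\ell'}))$; unfolding $f_{\ell'}$ we pick up a transport along $\cirrg{\ell'}(\lambda\kappa.\bar{t_{\ell'}}[\dots])$, but that argument is \emph{constant} in $\kappa$ (it is $\lconst{(\bar{t_{\ell'}}[\dots])}$ after the substitution), so by $\cirrgcoh{\ell'}$ this transport is path equal to transport along $\mathsf{refl}$, i.e.\ the identity; then $g_{\ell'}(b_{\ell'})$ reduces $b_{\ell'}$ which by \autoref{lem:cirrhits:mapcases} is literally $\lambda\gamma\,x\,y\,i\,\kappa\,j.\,\con_{\ell'}(\dots)$, and the recursive calls $R_i$ are handled by the inductive hypothesis. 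In the $\hcomp$ case one does the analogous manipulation, using that $\comp$ commutes with the clock abstraction and interval abstraction, and that $\mathsf{hfill}$ of a constant box is the constant path, so the correcting path is again assembled by \autoref{lemma:transp:ext} and induction. Finally, the top-level system $e_\ell = [\varphi_1\,M_1,\dots]$ is handled componentwise, using \autoref{lem:bound:inter:eq} to see that the correcting paths agree on overlaps $\varphi_i\wedge\varphi_j$, so they glue to a partial element over $\phi_\ell$.

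With the auxiliary lemma in hand, $\lemmaD{\ell}$ is exactly the composite path it produces, read in the context where $\gamma$ has been specialised to $\lconst\gamma$ (so that the $\cirrgcoh{\ell'}$ trivialisations apply to every earlier label as well, since the $v_{<\ell}$ only ever see constant $\gamma$ arguments when we are proving the boundary for $\con_\ell$ with a constant parameter). The endpoint data of $\lemmaD{\ell}$ — that at $i = b$ for each interval endpoint $b$ it restricts compatibly with the previously chosen boundary witnesses — follows because both sides of the claimed equation restrict, on $\phi_\ell$, to terms built from the same $b_{\ell'}$ and induction hypotheses, and the induction in the auxiliary lemma is set up to respect face restrictions (exactly as in the proof of \autoref{lem:evalbsyn:typing}). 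The remaining routine verification is that $g_\ell(b_\ell)$, defined by composing $b_\ell$ with $(i.\lemmaD{\ell})^*$ via \autoref{lemma:transp:ext}, indeed lands in $\mathcal E'_\ell(\delta,D,\bar{v_{<\ell}})$ with the correct (transported) boundary, and then $v_\ell \defeq f_\ell(g_\ell(b_\ell))$ lands in $\mathcal E_\ell(\delta,D,\bar{v_{<\ell}})$ by the equivalence $f_\ell$; this is immediate from the typing of those two operations and needs no further computation. The only genuinely delicate bookkeeping is keeping track of which transports are trivialised by $\cirrgcoh{\ell'}$ and in what order the compositions are taken, which is why I flag the auxiliary induction as the main obstacle.
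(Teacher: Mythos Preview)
Your overall strategy matches the paper's: induct on the boundary term $M$, in the constructor case unfold $v_{\ell'} = f_{\ell'}(g_{\ell'}(b_{\ell'}))$ and use $\cirrgcoh{\ell'}$ to kill the transport coming from $f_{\ell'}$, and assemble the result from the recursive calls. But there is a genuine gap in how you handle the compatibility with boundary-term equality $\beq$, and filling it forces a change in the setup.

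The point is that the path you are building, call it $\lemmaDg{\ell}{M}$, must respect $\beq$ \emph{strictly}: when a system $[\varphi_1\,M_1,\dots]$ is formed you need the paths for $M_i$ and $M_j$ to be judgementally equal on $\varphi_i\wedge\varphi_j$, and when $\phi_{\ell'}[\bar r]$ holds you need the path built in the $\con_{\ell'}$ case to be judgementally equal to the path for $e_{\ell'}[\bar t,\lambda\xi.M',\bar r]$. Your appeal to Lemma~\ref{lem:bound:inter:eq} does not help here: that lemma concerns $\evalbsynclockg{}{}{}{}{-}$, not the new path operator you are defining. To get the required substitution property $\lemmaDg{\ell}{N[\bar t,\lambda\xi.M',\bar r]} \jeq \lemmaDg{\ell'}{N}[\dots]$, the paper first \emph{generalises} the statement by abstracting over an auxiliary $y_K : \Pi(\xi:\Xi_\ell).D[\lambda\kappa.\capp x\,\xi]$ together with a path $\tilde y : \Pi\xi.\,y_K\,\xi = y(\lconst\xi)$; the variable case then returns $\tilde y\,\bar u$ rather than $\mathsf{refl}$. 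Only with this extra slack can one express what the inner $y_K,\tilde y$ become after substituting $M'$ for $x$ (namely $R_K=\lambda\xi.\lambda\kappa.\lambda j.M'\sigma$ and $\tilde R=\lemmaDg{\ell}{M'}$). Without it your induction hypothesis is too rigid to match up with the unfolded boundary.

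Two smaller corrections. First, $g_{\ell'}(b_{\ell'})$ does not ``reduce to $b_{\ell'}$'': it is $b_{\ell'}$ wrapped in the $\hcomp$ implementing $(i.\lemmaD{\ell'})^*$, and the connecting path back to the base (the filling) is a genuine, non-trivial ingredient --- in fact, on $\phi_{\ell'}$ it \emph{is} the earlier $\lemmaD{\ell'}$, which is exactly what you need. Second, in both the $\con$ and $\hcomp$ cases the composite you obtain is only path-equal, not judgementally equal, to the target on the relevant face; the paper closes each case with an extra $\hcomp$ along that mediating path to enforce the strict boundary. Your sketch of the $\hcomp$ case (``$\comp$ commutes with clock abstraction'', ``$\mathsf{hfill}$ of a constant box is constant'') is not how the argument goes: one instead observes that the two candidate fillers close the same open box and uses the resulting contraction, then rectifies as above.
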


\begin{proof}
To arrive at the desired conclusion we need a generalized version of the lemma
which keeps track of how the proof in each case coheres with the earlier stages.
We prove the following:

\begin{align*}
\Omega_\ell \defeq &\delta : \Delta, \gamma : \Gamma_\ell, x : \forall \kappa.\, \Xi_\ell \to \HIT{H}{\delta},\\
& y : \Pi(\xi : \forall \kappa.\, \Xi_\ell) . D[\lambda \kappa.\, \capp x\,(\capp \xi)],\\
& y_K : \Pi(\xi : \Xi_\ell) . D [\lambda \kappa.\, \capp x\, \xi],\\
& \tilde y : \Pi(\xi : \Xi_\ell) . y_K\,\xi = y (\lambda \_.\, \xi) ,\\
&i : \Psi_\ell,\phi_\ell
\end{align*}
\begin{align*}
\hastype{\Omega_\ell, \hat\Gamma&
}
{\lemmaDg \ell M\\ &}
{
(\lambda \kappa.\, \lambda j .\, M[\lambda \xi.\, \capp{y_K\,\xi}\,j/x])
\\&=
\evalbsynclockg{\lconst \delta}{\bar{v_{<\ell}}}{x \mapsto y}{\hat\gamma}{M}
[\lambda \_.\, \gamma/\gamma,\lambda \_.\,\hat\gamma/\hat\gamma]
}
\end{align*}
$\lemmaD \ell$ is then defined as $$\lemmaDg \ell {e_\ell}[\lambda \xi.\, y\,(\lambda \_. \xi)/y_K, \lambda \xi.\, \mathsf{refl}/ \tilde y]$$
we will write $\sigma$ for $[\lambda \xi.\, \capp{y_K\,\xi}\,j/x]$ and
$\tau$ for $[\lambda \_.\, \gamma/\gamma,\lambda \_.\,\hat\gamma/\hat\gamma]$.
The extra $y_K$ and $\tilde{y}$ parameters provide what to do for the
case $M = x_j\,\bar{u}$ where we will apply $\tilde{y}$ to
$\bar{u}$ to obtain the necessary equality
between the applications of $y_K$ and $y$.
This allows us to derive that $\lemmaDg {\ell} {-}$ commutes with substitution in the following way,
\[
\lemmaDg{\ell}{M[\bar{t},\lambda \xi.\, M',\bar{r}]} \jeq
\lemmaDg{\ell}{M}[\bar{t},S,R,R_K,\tilde{R},\bar{r}]
\]
where
\begin{align*}
S &\defeq \lambda \kappa. \lambda \xi.\, M'[\capp x/x]\\
R &\defeq \lambda \xi.\, \evalbsynclockg{\lconst\delta}{\bar{v_{<\ell}}}{x \mapsto y}{(\hat\gamma,\xi)}{M'}\tau\\
R_K &\defeq \lambda \xi. \lambda \kappa.\lambda j.\, M'\sigma\\
\tilde{R}& \defeq \lambda \xi.\, \lemmaDg{\ell}{M'}
\end{align*}
and moreover we have $\lemmaDg{\ell'}{M} = \lemmaDg{\ell}{M}$ for $\ell' < \ell$, whenever $M$ only
contains constructor nodes with labels smaller than $\ell'$.
We will use these properties for the constructor case of $\lemmaDg{\ell}{-}$.

Finally, we need to show that $\lemmaDg{\ell}{-}$ preserves judgemental equality in the sense that $M \beq N$ implies
\[\lemmaDg{\ell}{M} \jeq \lemmaDg{\ell}{N}\]
This is needed to show that $\lemmaDg{\ell}{-}$ is well-defined in the case
of systems as in the proof of Lemma~\ref{lem:evalbsyn:typing}, and to define $\lemmaDg{\ell}{-}$ in the case of homogeneous
compositions.

We induct on the structure of $M$.
In case $ M = x\,\bar{u}$, we need to build a path between
$\lambda \kappa.\, \lambda j .\, \capp{(y_K\,(\bar{u}))}\,j$
and
\[y (\lambda \kappa.\, \bar u [\cappp{(\lambda \_.\, \gamma)}/\gamma,\cappp{(\lambda \_.\,\hat\gamma)}/\hat\gamma]).\]
The clock applications on the right hand side simplify, making the body of the lambda abstraction constant in $\kappa$. The left hand side $\eta$-contracts to $y_K\,\bar{u}$, so we can conclude by setting $\lemmaDg{\ell}{x_j\,\bar{u}}$ equal to $\tilde{y}\,\bar{u}$.

In the case $M = \hcomp^{j'}\,[\psi \mapsto M']\,M_0'$, we must build a path between
$\lambda \kappa.\, \lambda j .\, \hcomp^{j'}\,[\psi \mapsto M'\sigma]\,M_0'\sigma$
and
\[\comp^{j'}_{D[v\tau\,j']}\,[\psi \mapsto \evalbsynclockg{\lconst \delta}{\bar{v_{<\ell}}}{x \mapsto y}{(\hat\gamma,j')}{M'}\tau]\,(\evalbsynclockg{\delta}{\bar{v_{<\ell}}}{x \mapsto y}{\hat\gamma}{M_0'}\tau)
.\]
Let $p$ be the path connecting
\begin{align*}
\lambda \kappa. \lambda j. \hcomp^{j'}\,[ &\psi \mapsto \capp{\evalbsynclockg{\lconst \delta}{\bar{v_{<\ell}}}{x \mapsto y}{(\hat\gamma,j')}{M'}\tau}\,j] \\ & \,(\capp{\evalbsynclockg{\delta}{\bar{v_{<\ell}}}{x \mapsto y}{\hat\gamma}{M_0'}\tau}\,j)
\end{align*}
to the right hand side, obtained from the fact that both terms fill the same open box.
We let $q$ be the path connecting the left hand side to $p\,0$, obtained by combining $\lemmaDg{\ell}{M'}$ and $\lemmaDg{\ell}{M'_0}$ with $\hcomp^{j'}$. Note that this is well-defined because $\lemmaDg{\ell}{-}$ preserves judgemental equality.
By transitivity we get $q \cdot p$ connecting the desired endpoints.

To prove that $\lemmaDg{\ell}{-}$ preserves judgemental equality, we need that $\lemmaDg{\ell}{M}$ when restricted by $\psi$ is strictly equal to $\lemmaDg{\ell}{M'}$. Fortunately that is already true for $q$, while $p$ is a constant path under those conditions, so by the right unit law we have path from $q \cdot p$ to $\lemmaDg{\ell}{M'}$. Using an $\hcomp$ with this latter path we define $\lemmaDg{\ell}{M}$ so that it satifies the strict equality.

In case $M = \con_{\ell'}(\bar{t},\lambda \xi.\, M',\bar{r})$, we need to build a path between
$\lambda \kappa.\, \lambda j .\, \con_{\ell'}(\bar{t},\lambda \xi.\, M'\sigma,\bar{r})$
and $v_{\ell'}(\lconst {\bar t},S,R,\bar{r})$ where
$S = \lambda \kappa. \lambda \xi.\, M'[\capp x/x]$
and $R = \lambda \xi.\, \evalbsynclockg{\lconst\delta}{\bar{v_{<\ell}}}{x \mapsto y}{(\hat\gamma,\xi)}{M'}\tau$. We will work right to left by expanding the right hand side definition.
By definition we have $v_{\ell'} = f_{\ell'}(g_{\ell'}(b_{\ell'}))$, so $v_{\ell'}(\lconst {\bar t})$ is equal to
\[\cirrg{{\ell'}}(\lconst {\bar t})^*(g_{\ell'}(b_{\ell'})(\bar{t})),\] 
so that by $\cirrgcoh{\ell}(\bar{t})$ the right hand side is equal to $g_{\ell'}(b_{\ell'})(\bar{t},S,R,\bar{r})$, let us call this path $p_1$. Note that $p_1$ will be a constant path when $\phi_{\ell'}[\bar r] = 1_\F$ as it will collapse to an equality between the boundaries of its endpoints, specified by their type.
By definition we have
$g_{\ell'}(b_{\ell'})(\bar{t},S,R,\bar{r})$ equal to
$((i_{\ell'}.\lemmaD{{\ell'}}[\bar{t},S,R])^*(b_{\ell'}(\bar{t},S,R)))(\bar{r})$,
which is defined as an $\hcomp$ and so by filling it is equal to the base of the composition $b_{\ell'}(\bar{t},S,R,\bar{r})$, let us call this path $p_2$. Note that when $\phi_{\ell'}[\bar r] = 1_\F$ we will have $p_2 = \lambda i_{\ell'}.\, \lemmaD{{\ell'}}[\bar{t},S,R]$.
Finally $b_{\ell'}(\bar{t},S,R,\bar{r})$ is equal to
\[\lambda \kappa. \lambda j.\,\con_{\ell'}(\bar{t},\lambda \xi.\, \capp{R\,(\lconst \xi)}\,j,\bar{r}),\]
which is path equal to the left hand side by $\lemmaDg{\ell}{M'}$ and $\con_{\ell'}$ itself, we call the resulting path $p_3$. Note that when $\phi_{\ell'}[\bar r] = 1_\F$ we will have $p_3$ built from $e_{\ell'}$ and $\lemmaDg{\ell}{M'}$ instead.
By transitivity, $p_3 \cdot p_2 \cdot p_1$ forms a path between the left and right hand sides.

To preserve judgemental equality, when $\phi_{\ell'}[\bar r] = 1_\F$, the path $\lemmaDg{\ell}{M}$ must be
equal to $\lemmaDg{\ell}{e_{\ell'}[\bar{t},\lambda \xi.\, M',\bar{r}]}$ which in turn is equal to
$\lemmaDg{{\ell'}}{e_{\ell'}}[\bar{t},S,R,R_K,\tilde{R},\bar{r}]$ by the commuting with substitution property, and where
$R_K = \lambda \xi. \lambda \kappa.\lambda j.\, M'\sigma$ and $\tilde{R}$ is given by $\lemmaDg{\ell}{M'}$. To build the necessary path we first contract the singleton pair $(R_K,\tilde{R})$, so that the only non-trivial path in the composition is $p_2$, which as we noted matches $\lemmaD{{\ell'}}$, i.e.
\[\lemmaDg{{\ell'}}{e_{\ell'}}[\bar{t},S,R,R,\xi.\,\mathsf{refl},\bar{r}].\]
Using this, we define $\lemmaDg{\ell}{M}$ as an $\hcomp$ of the path just defined and $p_3 \cdot p_2 \cdot p_1$.
\end{proof}

\begin{proofof}{\autoref{prop:cirr:HITs}}
The $v_\ell$ terms from lemma \ref{lem:cirrhits:boundary} collectively form a proof of $\mathcal{E}(\delta,D)$, so by lemma \ref{lem:pi-eliminators} we conclude $\Pi(t : \forall \kappa. \HIT{H}{\delta}). D[t] \jeq \Pi(t : \forall \kappa. \HIT{H}{\delta}). \forall \kappa. \capp[\kappa_0] t = \capp t$, as required.
\end{proofof}

\subsection{Composition Structure for Higher Inductive Types (Sect.~\ref{sec:hits:syn})}
\label{appendix:trans}
Following \cite{CTTHITS}, for any HIT $\istype {\delta : \Delta} {\HIT{H}{\delta}}$ we define its composition operation,
$\comp$, in terms of the $\trans$ and $\hcomp$ operations, resulting in the following judgemental equality rule
\[
\comp^i_{\HIT{H}{\delta}}\,[\phi \mapsto u]\,u_0 \jeq
  \hcomp^i_{\HIT{H}{\delta\subst i 1}}\,[\phi \mapsto v(i) ]\,
     (\trans^i_{\HIT{H}{\delta}}\,\phi\,u_0)
\]
where $v(i)$ is $\trans^j_{\HIT{H}{\delta \subst i {i \vee j}}}\,(\phi \vee i = 1)\,(u\,i)$.

Furthermore we include judgemental equalities for $\trans$ when applied
to elements of the HIT built by homogeneous composition or by
constructors.
In Section~$3.4$ of \cite{CTTHITS}, the authors describe how $\trans$ computes when applied to constructors specified by a signature of the form
\[
\mathsf{c} : (\bar{x} : \bar{A}(\delta)) (\bar{i} : \I) \to \HIT{H}{\delta} [ \phi \mapsto e]
\]
where $\bar{A}(\delta)$ is a telescope including both non-recursive
and recursive arguments. They are able to treat both kinds of
arguments uniformly by working in a variation of cubical type theory
where $\trans$ and $\hcomp$ are the primitive operations for all types
while $\comp$ is derived, so that they can use
$\trans^i_{\bar{A}(\delta)}$ to transport all the arguments at once.
We have kept $\comp$ as the primitive in our type theory, but we can
reuse their description as long as we show how to transport the
arguments for the constructors in our schema, i.e., provide a
replacement for $\trans^i_{\bar{A}(\delta)}$.

Given a constructor declaration $(\Gamma,\bar{\Xi},\Psi,\phi,e)$ and $\delta : \I \to \Delta$ we have to define $\trans^i_{\Gamma[\delta\,i],\Theta_{\bar{\Xi}[\delta\,i],\HIT{H}{(\delta\,i)}}}\,\psi\,(\bar{t},\bar{a})$. The $\Gamma[\delta\,i]$ part of the telescope can be dealt with using $\comp$, as shown in \cite{CTTHITS} by the definition of $\ctrans$: a transport operation derived from composition.
We are left with having to define $\trans^i_{\Theta_{\bar{\Xi}[\delta\,i,\tilde{t}[i]],\HIT{H}{\delta\,i}}}\,\psi\,(\bar{t},\bar{a})$ where $\tilde{t}$ connects $\bar{t}$ to the result of transporting it. It is then sufficient to show how to transport elements of
$C(i) := \Xi_k[\delta\,i,\tilde{t}[i]] \to \HIT{H}{\delta\,i}$ for each $\Xi_k$ in $\bar{\Xi}$.
Here we follow the recipe for transport in function types \cite{HCompNote}, like so
\[
\trans^i_{C(i)}\,\psi\,a_k = \lambda \xi.
\trans^i_{\HIT{H}{\delta\,i}}\,\psi\,(a_k\,(\ctrans^i_{\Xi_k[\delta\,(1 - i),\tilde{t}[1 - i]]}\,\psi\,\xi))
\]
where we make use of the fact that $a_k\,(-)$ is a subtree of $\con(\bar{t},\bar{a},\bar{i})$ so it is well-founded to recursively transport.
On top of the above, the transport operation commutes
with homogeneous composition, as described in Sect. 3.2 of \cite{CTTHITS}.

\subsection{Detailed version of Section~\ref{sec:model}}
\label{app:model}

The standard models of both Cubical Type Theory and Clocked Type Theory are based
on presheaf categories. In this section we recall these models and show how to
model the combined \cctt\ in a presheaf category over the
product of the categories used in the interpretations of Cubical and Clocked Type Theory.
One of the challenges in constructing the model is to equip the types of Clocked Type Theory
(such as $\latbind \tickA\kappa A$) with composition structures as required to model
types in Cubical Type Theory.

In this paper, following the convention of~\citet{clottmodel} (but breaking with the convention of
\citet{CTT}) we will work with \emph{covariant} presheaves. Recall that a covariant presheaf over a
category $\cat C$ is a family of sets $X(c)$ indexed by objects of $\cat C$ together with
a map mapping $f : c \to c'$ in $\cat C$  and $x \in X(c)$ to $f\cdot x \in X(c')$,
respecting identities and composition:
\begin{align} \label{eq:functoriality}
  \id_{c} \cdot x & = x &
 g \cdot (f\cdot x) & = (gf)\cdot x
\end{align}
We write $\PSh{\cat C}$ for the category of covariant presheaves on $\cat C$.
We recall the notion of Category with Family (CwF)~\cite{dybjer1996} a standard notion of
model of dependent type theory.

\begin{definition} \label{def:cwf}
 A category with family comprises:
 \begin{enumerate}
\item A category $\cat C$. We use $\Gamma, \Delta$ to range over objects of $\cat C$
\item A functor $\Fam : \opcat{\cat C} \to \Set$. Elements $A \in \Fam(\Gamma)$ are
referred to as \emph{families} over $\Gamma$. If $\sigma : \Delta \to \Gamma$
and $A \in \Fam(\Gamma)$ we write $A[\sigma]$ for $\Fam(\sigma)(A)$
\item A functor $\trm$ associating to each $\Gamma$ and each $A\in \Fam(\Gamma)$
a set $\trm (A)$ of \emph{elements} of $A$, and to each $\sigma : \Delta \to \Gamma$ a
mapping $(-)[\sigma] : \trm(A) \to \trm{(A[\sigma])}$.
\item A \emph{comprehension} operation mapping a family $A$ over $\Gamma$
to an object $\compr\Gamma A$ such that maps $\Delta\to \compr\Gamma A$
correspond bijectively to pair of maps $\sigma : \Delta \to \Gamma$ and elements
$t \in A[\sigma]$, naturally in $\Delta$.
\end{enumerate}
\end{definition}
We often refer to a CwF simply by the name of the underlying category $\cat C$. A CwF gives
rise to a model of dependent type theory~\cite{Hofmann1997} in which contexts are modelled
as objects of $\cat C$, types as families and terms as elements.  Recall that any presheaf
category is the underlying category of a CwF where families over an object $\Gamma$
are indexed families of sets $X(c, \gamma)$ for $\gamma \in \Gamma(c)$ with maps
$f\cdot (-) : X(c, \gamma) \to X(c', f\cdot \gamma)$ satisfying the equations
(\ref{eq:functoriality}), and elements  are assignments mapping
each $\gamma \in \Gamma(c)$ to $t(\gamma) \in X(c,\gamma)$ such that
$t(f\cdot \gamma) = f\cdot t(\gamma)$.

We recall also the category $\int \Gamma$ of elements for a covariant presheaf $\Gamma$
over a category $\cat C$. This has as objects pairs $(c, \gamma)$ where $\gamma \in \Gamma(c)$
and morphisms from $(c, \gamma)$ to $(c', \gamma')$ morphisms $f: c \to c'$ such that
$f \cdot \gamma = \gamma'$. Note that a family over $\Gamma$ is precisely the same as a
presheaf over $\int \Gamma$. We will also use the equivalence
\begin{equation} \label{eq:slice:presheaf}
 \slice{\PSh{\cat C}}{\Gamma} \equi \PSh{\textstyle{\int} \Gamma}
\end{equation}
which states that a slice of a presheaf category is itself a presheaf category.

\subsubsection{Modelling Cubical Type Theory}
\label{sec:model:cubical}

The standard model of Cubical Type Theory~\cite{CTT} uses presheaves over the
\emph{category of cubes}. Here, since we use covariant presheaves, we will write $\CubeCat$
for the opposite of the category used by \citet{CTT}. So $\CubeCat$ has as objects
finite sets $I, J ,K$ and as morphisms from $I$ to $J$ maps $f : I \to \dm J$ where
$\dm J$ is the free de Morgan algebra on the set $J$. Composition is the standard Kleisli
composition, using the fact that $\dm{-}$  is a monad.
From the model construction we recall in particular the interval
object $\sI(I) \defeq \dm{I}$, representing the singleton set, and the face lattice
$\sF$, which at stage $I$ is the free distributive lattice generated by
elements $(i = 0)$ and $(i = 1)$ for each $i \in I$, and relations $(i
= 0) \wedge (i = 1) = \bot_\sF$.

This model construction can be extended to work for any
category of the form $\PSh{\CubeCat \times \cat D}$.
To do so we rely on the Orton and Pitts' axiomatisation of models of CTT~\cite{OPAx,LOPS}.
Rather than recalling these axioms, we recall the sufficient conditions summarised by
\citet{sheafModels20} for a presheaf topos to satisfy these axioms.
These conditions are as follows.
\begin{itemize}
\item The interval object $\sI$ is connected, and a bounded distributive
  algebra structure with distinct 0 and 1 elements.
  Exponentiation by $\sI$ has a right adjoint.
\item The canonical map from $\sF$ to the subobject classifier is a
monomorphism, and the universal cofibration $\top : 1 \to \sF$ is a levelwise
  decidable inclusion. Monomorphisms that can be described as pullbacks
  of the latter are referred to as \emph{cofibrations}.
  The interval endpoint inclusions $0, 1 : 1
  \to \sI$ must be cofibrations, and cofibrations must be closed under finite
  union (finite disjunction), composition (dependent conjunction), and
  universal quantification over $\sI$.
\end{itemize}
A \emph{cubical model} is a presheaf category satisfying the axioms above.
In $\PSh{\CubeCat \times \cat D}$, defining both objects as constant
on the $\cat D$ component, using the corresponding objects in
$\PSh{\CubeCat}$
\begin{align*}
  \sI(I,d) &\defeq \sI(I) &
  \sF(I,d) & \defeq \sF(I)
\end{align*}
make $\PSh{\CubeCat \times \cat D}$  a cubical model,
as also observed by \citet{sheafModels20}.

Given such a model, \citet{OPAx} express
the structure sufficient to model CTT using the internal language
of the presheaf topos as an extensional type theory.
We recall here some definitions that will be relevant later. We assume a
$\omega+1$ long hierarchy of Grothendieck universes, leading to a
corresponding hierarchy of universes a la Russell $\Univ_i$ in the
internal language, each classifying presheaves of the appropriate size.

\begin{definition}
A \emph{CCHM fibration} $(A,\alpha)$ over a type $\Gamma : \Univ_\omega$ is a family
$A : \Gamma \to \Univ_\omega$ with a fibration structure $\alpha : \mathsf{isFib}\,\Gamma\,A$ where
\begin{align*}
  \mathsf{isFib}\,\Gamma\,A \defeq &\, (e : \{0,1\})(p : \sI \to \Gamma) \to \mathsf{Comp}\,e\,(A \circ p)\\
  \mathsf{Comp}\,e\,A \defeq &\, (\phi : \sF)(u : [\phi] \to (i : \sI) \to A\,i) \\
  & \to \{ u_0 : A\,e \mid \phi \Rightarrow u\,e = u_0 \}\\
  & \to \{ u_1 : A\,\bar{e} \mid \phi \Rightarrow u\,\bar{e} = u_1 \}
\end{align*}
where $\bar{e}$ sends $0 : \sI$ to $1$ and vice versa.
\end{definition}
Notice
$\mathsf{Comp}\,0\,A$ closely matches the signature of the composition
operation from CTT.

We can then build the following Category with Families of fibrant types, which we call $\fibCwF$:
\begin{itemize}
\item A context $\Gamma$ is a global element of $\Univ_\omega$.
\item A family over $\Gamma$ is a global CCHM fibration over $\Gamma$.
\item An element $t$ of a family $(A,\alpha)$ over $\Gamma$ is a global element of $(\gamma : \Gamma) \to A\,\gamma$.
\end{itemize}
it then follows that $\fibCwF$ models the type formers of Cubical Type Theory.

\begin{remark}
  Cubical Type Theory as presented by \citet{CTT} includes specific
  judgemental equalities for the composition operator $\comp^i_A$
  according to the shape of $A$, matching the behaviour of the given fibration structures in their model.
  The constructions by \citet{OPAx} do not necessarily
  produce fibration structures that satisfy the same equalities.
  One way to deal with this is to use
  alternative formulations of these rules~\cite{HCompNote,CubicalAgda}.
  Since these equalities are mainly relevant for operational
  properties of CTT, we will ignore this issue in this paper.
\end{remark}
\subsubsection{Modelling Clocked Cubical Type Theory}

In the previous section we saw how to model Cubical Type Theory in any category of
the form $\PSh{\CubeCat \times \cat D}$. We now define the category $\TimeCat$ of
\emph{time objects} and extend the model to a model of \cctt\
in $\PSh{\CubeCat \times \TimeCat}$. The objects of $\TimeCat$ are pairs
$\timeobj{\FSA}\delta$, where $\FSA$ is a finite set (to be though of as a set of semantic clocks),
and $\delta : \FSA \to \nats$ is a map associating to each clock a finite amount of time steps
that are left on that clock. A morphism $\sigma : \timeobj{\FSA}\delta \to \timeobj{\FSB}{\delta'}$
is a map $\sigma : \FSA \to \FSB$ such that $\delta' \sigma \leq \delta$ in the
pointwise order. This can be understood as a generalisation of the topos of trees
model $\PSh{\opcat{\omega}}$ of guarded recursion~\cite{ToT},
where the indexing category is restricted to objects where the first component $\FSA$
is a singleton. The category $\PSh{\TimeCat}$ has previously been used to model
type theories with guarded recursion and multiple clocks~\cite{GDTTmodel,clottmodel},
and (in a slight variation) Guarded Computational Type Theory~\cite{sterling2018guarded}.

The category $\PSh{\CubeCat \times \TimeCat}$ has an object of clocks $\clk$ defined as
$\clk(I, \timeobj{\FSA}{\delta}) = \FSA$. This can be used to model assumptions of the
form $\kappa : \clocktype$ in contexts, and the types $\forall\kappa . A$ can be
modelled as $\Pi$-types. To model $\later$, recall that this is a Fitch-style modal
type operator and that these can be modelled using dependent right adjoint types~\cite{drat}.

\begin{definition} \label{def:drajoint}
 Let $\cat C, \cat D$ be CwFs and let $\L : \cat C \to \cat D$ be a functor between
 the underlying categories. A dependent right adjoint to $\L$ is a mapping associating
 to a family $A$ over $\L\Gamma$ a family $\R A$ over $\Gamma$ and a bijective
 correspondence between elements of $A$ and elements of $\R A$, both
 natural in $\Gamma$.
\end{definition}

The naturality requirement for $\R$ means that if $\gamma : \Gamma' \to \Gamma$
and $A$ is a family over $\L \Gamma$, then $(\R A)[\sigma] = \R(A [\L\sigma])$.
Writing $\overline{(-)}$ for both directions of the bijective correspondence on elements,
the naturality condition for this means that
if 
$t$ is an element of $A$ over $\L\Gamma$, then
$\overline t[\gamma] = \overline{t[\L\gamma]}$,
and similarly for the opposite direction.

Given an endofunctor $\L$ on a CwF $\cat C$ as well as a dependent right adjoint $\R$
to $\L$, one can model a Fitch-style modal operator by modelling
extensions of contexts with ticks by $\L$ and the modal operator
by $\R$. In the case of Clocked Type Theory,
the ticks, as well as the modal operators are indexed over an object $\clk$ in
the model: Semantically, the hypothesis $\hastype{\Gamma}\kappa\clocktype$
of the context extension rule for $\wfcxt{\Gamma, \tickA : \kappa}$
corresponds to an element of the slice category over $\clk$.
By (\ref{eq:slice:presheaf}) the slice category over $\clk$
is itself a presheaf category
and therefore carries a natural CwF structure. In fact, if $\clkel : \Gamma \to \clk$
is an object in the slice category, then families over $\clkset$ in the slice category
correspond bijectively to families in $\PSh{\CubeCat \times \TimeCat}$
over $\Gamma$. Exploiting this, \citet{clottmodel} describe how
to model ticks on clocks using a dependent right adjoint $\slater$ to an endofunctor
$\searlier$ on the slice category over $\clk$. In this model
$\wfcxt{\Gamma, \tickA : \kappa}$ is interpreted as the domain of
$\searlier(\den\Gamma, \den\kappa)$ and $\latbind\tickA\kappa A$ as the
dependent right adjoint $\slater$ applied to $\den A$.
The bijective correspondence on elements then models the
bijective correspondence between terms $\hastype{\Gamma, \tickA : \kappa}{t}A$ and
terms $\hastype{\Gamma}u{\latbind\tickA\kappa A}$ given by tick abstraction and application.
Tick weakening, which syntactically corresponds to context projections
$\Gamma, \tickA : \kappa \to \Gamma$ can be modelled using a natural transformation
from $\searlier$ to the identity.

\subsubsection{Composition structure for dependent right adjoints}
\label{sec:comp:dep:right:adj}

Before recalling the dependent right adjoint structure on
the slice category $\slice{\PSh{\CubeCat \times \TimeCat}}{\clk}$
we now give general conditions ensuring that dependent right adjoint types
on a cubical model 
carry composition structure. The conditions are all on the left adjoint functor.

\begin{definition} \label{def:pres:cofib}
 Let $\cat C$ be a cubical model
 with interval object $\sI$ and let $\L : \cat C \to \cat C$ be a finite product
 preserving functor.
 \begin{enumerate}
  \item We say $\L$ preserves the interval if there is an isomorphism $\L(\sI) \cong \sI$
    preserving endpoints, i.e., such that the following commutes for $e = 0,1$.
    \[
      \begin{tikzcd}
        1 \ar{r}{e} \ar{d}{\cong} & \sI \ar{d}{\cong} \\
        \L1 \ar{r}{\L e} & \L\sI
      \end{tikzcd}
    \]
   \item We say $\L$ \emph{preserves cofibrations} if $\L i : \L A \to \L B$ is a cofibration
   whenever $i : A \to B$ is, and if the diagram on the right below is a pullback whenever
   $i$ is a cofibration and the diagram on the left is a pullback
    \[
      \begin{tikzcd}
        C \ar{r}{a} \ar{d}{j} & A \ar{d}{i} \\
        D \ar{r}{b} & B
      \end{tikzcd} \hspace{1cm}
      \begin{tikzcd}
        \L C \ar{r}{\L a} \ar{d}{\L j} & \L A \ar{d}{\L i} \\
        \L D \ar{r}{\L b} & \L B
      \end{tikzcd}
    \]
\end{enumerate}
\end{definition}
The condition of preserving cofibrations corresponds to giving an operation
mapping cofibrations $\hastype{\Gamma}{\phi}{\sF}$ on $\Gamma$ to
cofibrations $\hastype{\L\Gamma}{\LCof(\phi)}{\sF}$  such that
$\compr{\L\Gamma}{[\LCof(\phi)]} \cong \L(\compr{\Gamma}{[\phi]})$ as
subobjects of $\L\Gamma$ and satisfying $\LCof(\phi[\sigma]) = \LCof(\phi)[\L\sigma]$.
Here $[\phi]$ is the family classified by $\phi$.

\begin{theorem} \label{thm:dep:right:adj}
 Let $\cat C$ be a cubical model, let $\L : \cat C \to \cat C$
 be a functor preserving finite products, the interval and cofibrations, and let $\R$ be
 a dependent right adjoint to $\L$. If a family $A$ over $\L \Gamma$ carries a global
 composition structure, so does $\R A$ over $\Gamma$. Moreover, this assignment
 is natural in $\Gamma$.
\end{theorem}

Note that this is a statement about \emph{global} composition structures in the model.
The theorem can not be proved in the internal logic of the topos, but can be proved
in an extension of this using crisp type theory, similarly to
the construction of universes for cubical type theory in crisp type theory~\cite{LOPS}.
The reason is that the proof uses the bijective correspondence of
Definition~\ref{def:drajoint} which only applies to global terms.

To prove Theorem~\ref{thm:dep:right:adj}, we need the following lemma
giving an alternative description of the data of a composition structure.
The lemma uses standard CwF notation writing $\p : \compr{\Delta}A \to \Delta$
for the projection out of a comprehension object, and (less standard) notation
$\compr{\sigma}{A} : \compr\Delta{A[\sigma]} \to \compr\Gamma{A}$ if
$\sigma : \Delta \to \Gamma$ for the functoriality of comprehension in the
first component defined in the language of CwFs as $\pair{\sigma\p}{\q}$.

\begin{lemma} \label{lem:comp:struct:alternative}
 Let $\Gamma$ be a global element of $\Univ_\omega$ and let
 $A : \Gamma \to \Univ_\omega$. To give a composition structure
 on $A$ corresponds to giving, for each global element $\Delta$ of
 $\Univ_\omega$ and map $\sigma : \Delta \times \sI \to \Gamma$,
 an assignment expressed as the rule
 \[
   \inferrule*
  {
   \hastype{\Delta}{\phi}{\sF}
   \and
   \hastype{\compr{\compr{\Delta}{\sI}}{[\phi[\p]]}}{u}{A[\sigma\p]}
   \\
   \hastype{\Delta}{u_e}{A[\sigma \circ \pair{\id}e]}
   \and
   \hasnotype{\compr{\Delta}{[\phi]}}{u_e[\p] = u[\compr{\pair{\id_{\Delta}}{e}}{[\phi]}]}{}
  }
  {\hastype{\Delta}{c_{\sigma}\, \phi\, u\, u_e}{A[\sigma \circ \pair{\id}{1-e}]}}
 \]
 for each $e \in \{0,1\}$, natural in $\Delta$ satisfying
 \[
  \hasnotype{\compr{\Delta}{[\phi]}}{(c_{\sigma}\, \phi\, u\, u_e)[\p] = u[\compr{\pair{\id}{1-e}}{[\phi]}]}{}
 \]
\end{lemma}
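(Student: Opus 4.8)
The plan is to prove this by unwinding the internal definition of a \emph{global} composition structure, using the universal property of dependent products together with the exponential adjunction in the presheaf topos (equivalently, the Kripke--Joyal/Yoneda reading of the internal extensional type theory of the cubical model). Recall that a global composition structure on $A$ over $\Gamma$ is a global element of $\mathsf{isFib}\,\Gamma\,A = \Pi(e:\{0,1\})\,\Pi(p:\sI\to\Gamma).\,\mathsf{Comp}\,e\,(A\circ p)$. Since $\{0,1\}$ is the constant two-element type, such an element is just a pair of global elements, one for each $e\in\{0,1\}$; fix one such $e$. By the universal property of $\Pi$, a global element of $\Pi(p:\sI\to\Gamma).\,\mathsf{Comp}\,e\,(A\circ p)$ is the same as a section of the family $\mathsf{Comp}\,e\,(A\circ p)$ over the closed type $\sI\to\Gamma$; and by the exponential adjunction a generalized element of $\sI\to\Gamma$ at a context $\Delta$ is precisely a map $\sigma:\compr{\Delta}{\sI}\to\Gamma$, i.e.\ $\sigma:\Delta\times\sI\to\Gamma$. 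Hence the data is exactly: for every context $\Delta$ and every $\sigma:\Delta\times\sI\to\Gamma$, an element of $\mathsf{Comp}\,e\,(A\circ\sigma)$ at stage $\Delta$, natural in $\Delta$ --- which is the shape of the rule in the statement.

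The second step is to unwind $\mathsf{Comp}\,e\,(A\circ\sigma)$ at stage $\Delta$ term by term. The leading $\Pi(\phi:\sF)$ contributes a cofibration $\phi$ on $\Delta$. The argument $u:[\phi]\to(i:\sI)\to(A\circ\sigma)\,i$, after restricting by $[\phi]$ and then adjoining $\sI$, is an element of $A[\sigma\p]$ over $\compr{\compr{\Delta}{[\phi]}}{\sI}$; using that extension by the cofibration $[\phi]$ and by the fixed object $\sI$ commute, $\compr{\compr{\Delta}{[\phi]}}{\sI}\cong\compr{\compr{\Delta}{\sI}}{[\phi[\p]]}$, so this is exactly the $u$ in the statement. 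Evaluating the generalized path $\sigma$ at the endpoint $e$ (resp.\ $\bar e$) is precomposition with $\pair{\id_{\Delta}}{e}$ (resp.\ $\pair{\id_{\Delta}}{1-e}$), so $(A\circ\sigma)\,e = A[\sigma\circ\pair{\id}{e}]$ and $(A\circ\sigma)\,\bar e = A[\sigma\circ\pair{\id}{1-e}]$; hence the base $u_e$ and the output $c_\sigma\,\phi\,u\,u_e$ carry precisely the types displayed. Finally, since the internal propositions $[\phi]$ are proof-irrelevant, the subset-type constraints $\phi\Rightarrow u\,e = u_e$ and $\phi\Rightarrow u\,\bar e = c_\sigma\,\phi\,u\,u_e$ become the judgemental equalities over $\compr{\Delta}{[\phi]}$ stated in the lemma, namely $u_e[\p]\jeq u[\compr{\pair{\id}{e}}{[\phi]}]$ and $(c_\sigma\,\phi\,u\,u_e)[\p]\jeq u[\compr{\pair{\id}{1-e}}{[\phi]}]$. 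Naturality of the section in $\Delta$ matches the naturality requirement on $c_{(-)}$. Running the correspondence in reverse yields the inverse assignment, establishing the claimed bijection.

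I expect the work here to be bookkeeping rather than conceptual. The points to get exactly right are: the passage through the universal property of $\Pi$ and the exponential adjunction that turns ``$\forall (p:\sI\to\Gamma)$'' into ``for every $\Delta$ and every $\sigma:\Delta\times\sI\to\Gamma$, naturally in $\Delta$''; the comprehension identity $\compr{\compr{\Delta}{[\phi]}}{\sI}\cong\compr{\compr{\Delta}{\sI}}{[\phi[\p]]}$; and the identification of endpoint evaluation with precomposition by $\pair{\id}{e}$, so that all the types and side conditions line up on the nose. There is no obstruction of substance: the lemma is an external reformulation of the internal notion, recorded precisely so that in the proof of Theorem~\ref{thm:dep:right:adj} the composition data for $RA$ can be produced by transporting the composition data for $A$ along the dependent-right-adjoint bijection of Definition~\ref{def:drajoint}, which only makes sense for global terms and so requires this ``$\Delta$-indexed, natural'' presentation.
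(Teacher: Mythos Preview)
Your proposal is correct and follows essentially the same route as the paper: split over $e$, use the universal property of $\Pi$ together with the exponential adjunction to convert the global section into a $\Delta$-natural assignment indexed by $\sigma : \Delta \times \sI \to \Gamma$, and then uncurry the remaining arguments of $\mathsf{Comp}$ one by one. If anything, you spell out more of the bookkeeping (the comprehension swap $\compr{\compr{\Delta}{[\phi]}}{\sI}\cong\compr{\compr{\Delta}{\sI}}{[\phi[\p]]}$ and the endpoint identifications) than the paper does.
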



\begin{proof}
 Given $e$, to give the part of $\mathsf{isFib}$ corresponding to $e$ corresponds to giving
 $\hastype{p: \sI \to \Gamma}c{\mathsf{Comp}\,e\,(A\circ p)}$ in the model. This in turn
 corresponds to an assignment of morphisms $\tau : \Delta \to (\sI \to \Gamma)$ to terms
 $\hastype{\delta : \Delta}{c_\tau}{\mathsf{Comp}\,e\,(A\circ \tau(\delta))}$ natural in $\Delta$.
 By uncurrying, the latter corresponds to an assignment mapping $\sigma : \Delta \times \sI \to \Gamma$ to
 $\hastype{\delta : \Delta}{c_\sigma}{\mathsf{Comp}\,e\,(A\circ \sigma(\delta, -))}$, also natural
 in $\Delta$. By further uncurrying the arguments to the composition operator
 and using similar naturality arguments in each case, we arrive at the description in the lemma.
\end{proof}

\begin{proofof}{Theorem~\ref{thm:dep:right:adj}}
 By Lemma~\ref{lem:comp:struct:alternative} it suffices to give an assignment
 mapping $\sigma : \Delta \times \sI \to \Gamma$ to a rule
  \[
   \inferrule*
  {
   \hastype{\Delta}{\phi}{\sF}
   \and
   \hastype{\compr{\compr{\Delta}{\sI}}{[\phi[\p]]}}{u}{(\R A)[\sigma\p]}
   \\
   \hastype{\Delta}{u_e}{(\R A)[\sigma \circ \pair{\id}e]}
   \and
   \hasnotype{\compr{\Delta}{[\phi]}}{u_e[\p] = u[\compr{\pair{\id_{\Delta}}{e}}{[\phi]}]}{}
  }
  {\hastype{\Delta}{c^{\R A}_{\sigma}\, \phi\, u\, u_e}{(\R A)[\sigma \circ \pair{\id}{1-e}]}}
 \]
 natural in $\Delta$ and satisfying the equality of Lemma~\ref{lem:comp:struct:alternative}.
 Using $(\R A)[\sigma\p] = \R(A[\L(\sigma\p)])$ the assumptions correspond to
\begin{align*}
    \hastype{\L(\compr{\compr{\Delta}{\sI}}{[\phi[\p]]})}{\overline u}{A[\L(\sigma\p)]}
   & &
   \hastype{\L\Delta}{\overline{u_e}}{A[\L(\sigma \circ \pair{\id}e)]}
\end{align*}
satisfying
\begin{equation} \label{eq:boundary:ubar}
\hasnotype{\L(\compr{\Delta}{[\phi]})}{\overline{u_e}[\L\p] =
\overline u[\L(\compr{\pair{\id_{\Delta}}{e}}{[\phi]})]}{}
\end{equation}
Since $\L$ preserves cofibrations, by the notation introduced after
Definition~\ref{def:pres:cofib},
$\L(\compr{\compr{\Delta}{\sI}}{[\phi[\p]]} \cong
\compr{\L(\compr{\Delta}{\sI})}{[\LCof(\phi[\p])]}$ as subobjects of
$\L(\compr{\Delta}{\sI})$. For simplicity we will leave this isomorphism implicit.
Moreover, since $\L$ preserves finite products
and the interval, there is an isomorphism
$\xi_{\Delta} : \compr{\L\Delta}{\sI} \cong \L(\compr\Delta\sI)$
natural in $\Delta$ such that
\begin{equation} \label{eq:L:sI:endpoints}
      \begin{tikzcd}
        & \L \Delta \ar{ld}[swap]{\pair{\id{}}e} \ar{rd}{\L\pair{\id{}}e}  \\
        \compr{\L\Delta}{\sI} \ar{rr}{\xi_{\Delta}} && \L(\compr\Delta\sI)
      \end{tikzcd}
\end{equation}
Then, since $\L(\sigma\p)\circ \compr{\xi_\Delta}{[\LCof(\phi[\p])]} = \L(\sigma) \circ \xi_\Delta\circ \p$
\begin{align*}
    \hastype{\compr{\compr{\L(\Delta)}{\sI}}{[\LCof(\phi)[\p]]}}{\overline u[\compr{\xi_\Delta}{[\LCof(\phi[\p])]}]}{A[\L(\sigma) \circ \xi_\Delta][\p]}
\end{align*}
and by (\ref{eq:L:sI:endpoints})
\begin{align*}
   \hastype{\L\Delta}{\overline{u_e}}{A[\L(\sigma)\circ \xi_\Delta][\pair{\id}e]}
\end{align*}
We can therefore apply the composition structure for $A$ as in
Lemma~\ref{lem:comp:struct:alternative} in the case of
$L(\sigma)\circ \xi_\Delta : \compr{\L\Delta}{\sI} \to \L\Gamma$,
the cofibration $\hastype{\L\Delta}{\LCof(\phi)}{\sF}$,
and the terms $\overline u[\compr{\xi_\Delta}{[\LCof(\phi[\p])]}]$ and
$\overline{u_e}$, if only we can prove that
\[
  \hasnotype{\compr{\L \Delta}{[\LCof(\phi)]}}{\overline{u_e}[\p] =
  \overline u[\compr{\xi_\Delta}{[\LCof(\phi[\p])]}][\compr{\pair{\id_{\L\Delta}}{e}}{[\LCof(\phi)]}]}{}
\]
Up to the isomorphism $\compr{\L \Delta}{[\LCof(\phi)]} \cong \L(\compr{\Delta}{[\phi]})$ the context
projection $\p$ equals $\L\p$, and so in context $\L(\compr{\Delta}{[\phi]})$ the left hand side
reduces to $\overline{u_e}[\L\p]$.
The right hand side reduces using (\ref{eq:L:sI:endpoints}) to
\[
 \overline u[\compr{\L\!\pair{\id_{\Delta}}{e}}{\LCof(\phi)}]
\]
which up to the isomorphism $\compr{\L \Delta}{[\LCof(\phi)]} \cong \L(\compr{\Delta}{[\phi]})$
equals \[\overline u[\L(\compr{\pair{\id_{\Delta}}{e}}{[\phi])}],\] and so the required
equality follows from (\ref{eq:boundary:ubar}). The composition structure for $A$ therefore gives
\[
 \hastype{\L\Delta}
   {c^{A}_{L(\sigma)\circ \xi_\Delta} \, \LCof(\phi)\, \overline u[\dots]\, \overline{u_e}}
   {A[L(\sigma)\circ \xi_\Delta \circ \pair{\id}{1-e}]}
\]
which, since $A[L(\sigma)\circ \xi_\Delta \circ \pair{\id}{1-e}] = A[L(\sigma\circ \pair{\id}{1-e})]$,
corresponds to a term
\[
\hastype{\Delta}{c_\sigma^{\R A} \, \phi\, u\, u_e}{(\R A)[\sigma\circ \pair{\id}{1-e}]}
\]
To show the equality
\[
\hasnotype{\compr{\Delta}{\phi}}{(c_\sigma^{\R A} \, \phi\, u\, u_e)[\p] =
 u[\compr{\pair{\id}{1-e}}{[\phi]}]}{}
\]
is equivalent to showing
\[
\hasnotype{\L(\compr{\Delta}{\phi})}{\overline{(c_\sigma^{\R A} \, \phi\, u\, u_e)[\p]} =
 \overline{u[\compr{\pair{\id}{1-e}}{[\phi]}]}}{}
\]
which up to the isomorphism $\L(\compr{\Delta}{\phi})\cong \compr{\L\Delta}{[\LCof(\phi)]}$
corresponds to showing that the term
\begin{equation} \label{eq:cRA:lhs}
 \hasnotype{\compr{\L\Delta}{\LCof(\phi)}}
   {(c^{A}_{L(\sigma)\circ \xi_\Delta} \, \LCof(\phi)\, \,
     \overline u[\compr{\xi_\Delta}{[\LCof(\phi[\p])]}]
   \,\, \overline{u_e})[\p]}
   {A[L(\sigma)\circ \xi_\Delta \circ \pair{\id}{1-e}]}
\end{equation}
equals
\begin{equation}\label{eq:cRA:rhs}
 \hasnotype{\compr{\L\Delta}{\LCof(\phi)}}{\overline u[\compr{\L(\pair\id{1-e})}{[\LCof(\phi)]}]}{}
\end{equation}
By the equality rule for the composition structure on $A$, (\ref{eq:cRA:lhs}) equals
\[
 \overline u[\compr{\xi_\Delta}{[\LCof(\phi[\p])]}][\compr{\pair\id{1-e}}{[\LCof(\phi)]}]
\]
which equals (\ref{eq:cRA:rhs}) by (\ref{eq:L:sI:endpoints}).
\end{proofof}

\subsubsection{The dependent right adjoint}
\label{sec:dep:right:adj}


We now specialise to the particular cubical model structure on $\PSh{\CubeCat\times \TimeCat}$.
The only assumption we will need for this is that $\sI$ and $\sF$ are included from
$\PSh{\CubeCat}$ as in the cubical model structures used by~\citet{sheafModels20}.

We first recall the structure of the dependent right adjoint in details.
\citet{clottmodel} define this structure for $\PSh{\TimeCat}$ but the constructions
carry over directly to $\PSh{\CubeCat\times \TimeCat}$.
The structure arises as an extension of an endo-adjunction on the slice category
as in the following lemma slightly generalised from~\citet{drat}, to which we refer for
details.

\begin{lemma} \label{lem:dradjoint}
 Let $\cat C$ and $\cat D$ be CwFs and let $\L : \cat C \to \cat D$ be a functor between the
 underlying categories with a right adjoint $\R$. Suppose $\R$ extends to families and elements as
 in the following data
 \begin{enumerate}
\item An operation mapping families $A$ over $\Gamma$ in $\cat D$ to families
$\RFam (A)$ over $\R \Gamma$ satisfying 
$\RFam(A[\gamma]) = (\RFam (A))[R\gamma]$
\item An operation mapping elements $t$ of $A$ to elements
$\REl (t)$ of $\RFam (A)$ satisfying 
$\REl(t[\gamma]) = (\REl (t))[R\gamma]$.
\end{enumerate}
Then $\L$ has a dependent right adjoint mapping families $A$ over $\L \Gamma$ to
$\R A = (\RFam A)[\eta]$ where $\eta : \Gamma \to \R \L \Gamma$ is the unit of
the adjunction.
\end{lemma}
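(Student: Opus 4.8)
The plan is to verify directly that the assignment $A \mapsto \R A := (\RFam A)[\eta]$, together with a transposition on elements defined below, satisfies the two clauses of Definition~\ref{def:drajoint}. Write $\eta : \id_{\cat C} \Rightarrow \R\L$ and $\epsilon : \L\R \Rightarrow \id_{\cat D}$ for the unit and counit of $\L \adj \R$. The first step is bookkeeping: if $A$ is a family over $\L\Gamma$ then $\RFam A$ is a family over $\R\L\Gamma$, so $\R A = (\RFam A)[\eta_\Gamma]$ is a family over $\Gamma$; and for $\sigma : \Gamma' \to \Gamma$ in $\cat C$ the substitution law $(\R A)[\sigma] = \R(A[\L\sigma])$ follows from the stability hypothesis on $\RFam$ together with naturality of $\eta$ (namely $\R\L\sigma \circ \eta_{\Gamma'} = \eta_\Gamma \circ \sigma$), by the chain
\[
 (\R A)[\sigma] = \RFam(A)[\eta_\Gamma][\sigma] = \RFam(A)[\R\L\sigma][\eta_{\Gamma'}] = \RFam(A[\L\sigma])[\eta_{\Gamma'}] = \R(A[\L\sigma]).
\]

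For the bijection on elements I would take the forward map $\trm_{\L\Gamma}(A) \to \trm_\Gamma(\R A)$ to be $t \mapsto \REl(t)[\eta_\Gamma]$, which is well typed since $\REl(t)$ is an element of $\RFam A$ over $\R\L\Gamma$ and restriction along $\eta_\Gamma$ produces an element of $\RFam(A)[\eta_\Gamma] = \R A$; its naturality in $\Gamma$ is the same computation as above with $\REl$ in place of $\RFam$. For the inverse map, an element $s$ of $\R A$ over $\Gamma$ corresponds, by the universal property of comprehension and the pullback description of substituted families, to a morphism $\Gamma \to \compr{\R\L\Gamma}{\RFam A}$ over $\eta_\Gamma$. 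The generic element gives a canonical comparison morphism $\R(\compr{\L\Gamma}{A}) \to \compr{\R\L\Gamma}{\RFam A}$ — induced by the pair $(\R\p,\ \REl(\q))$, using $\RFam(A[\p]) = \RFam(A)[\R\p]$ — compatible with the projections to $\R\L\Gamma$. Transposing $s$ through this comparison and across the adjunction $\L \adj \R$ yields a morphism $\L\Gamma \to \compr{\L\Gamma}{A}$ which, by the triangle identity $\epsilon_{\L\Gamma}\circ \L\eta_\Gamma = \id_{\L\Gamma}$ and naturality of $\epsilon$, is a section of $\p : \compr{\L\Gamma}{A} \to \L\Gamma$, hence an element of $A$ over $\L\Gamma$.

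It then remains to check that the two assignments are mutually inverse and natural in $\Gamma$; both reduce, after unwinding, to the adjunction triangle identities, naturality of $\eta$ and $\epsilon$, and the stability equations $\RFam(A[\gamma]) = \RFam(A)[\R\gamma]$ and $\REl(t[\gamma]) = \REl(t)[\R\gamma]$. Assembling these yields precisely a dependent right adjoint to $\L$ in the sense of Definition~\ref{def:drajoint}. I expect the main obstacle to be the coherence bookkeeping around the comparison morphism $\R(\compr{\L\Gamma}{A}) \to \compr{\R\L\Gamma}{\RFam A}$: one must verify that passing $s$ through it and back is inverse to $t \mapsto \REl(t)[\eta_\Gamma]$ — equivalently, that this comparison induces a bijection on morphisms over $\eta_\Gamma$ — and that it interacts correctly with substitution, which is what makes the transposition natural; everything else is routine diagram chasing.
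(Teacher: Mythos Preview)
The paper does not supply a proof of this lemma; the sentence introducing it says it is ``slightly generalised from~\citet{drat}, to which we refer for details,'' so there is no in-paper argument to compare against. Your outline --- defining $\R A = (\RFam A)[\eta_\Gamma]$, checking its stability under substitution via naturality of $\eta$, and taking the forward transposition to be $t \mapsto \REl(t)[\eta_\Gamma]$ --- is the standard route and those parts are correct.

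The point you flag as the ``main obstacle'' is, however, a genuine gap rather than coherence bookkeeping. Your inverse needs to lift a map $\Gamma \to \compr{\R\L\Gamma}{\RFam A}$ over $\eta_\Gamma$ along the comparison $c = \langle \R\p,\ \REl(\q)\rangle : \R(\compr{\L\Gamma}{A}) \to \compr{\R\L\Gamma}{\RFam A}$, and for the two transpositions to be mutually inverse you need $c$ to be an isomorphism --- equivalently, that $\R$ \emph{preserves comprehension}. This does not follow from items~1 and~2 alone. A minimal counterexample in the CwF of sets: take $\cat C = \cat D$, $\L = \R = \id$, and set $\RFam(A) = A \times A$ fiberwise with $\REl(t) = (t,t)$; both naturality clauses hold, yet $\R A = A \times A$ and $\trm_\Gamma(\R A) \cong \trm_\Gamma(A)^2$, which is not in natural bijection with $\trm_{\L\Gamma}(A) = \trm_\Gamma(A)$. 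In the reference~\cite{drat} the corresponding result assumes $\R$ is a CwF morphism, which includes precisely the condition that $c$ is invertible; the present lemma should be read with that hypothesis implicit. Once it is added, your argument goes through exactly as you sketch: $c^{-1}$ followed by the adjunction transpose gives the inverse element, and the two round-trips reduce to the triangle identities together with the stability equations for $\RFam$ and $\REl$.
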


The endoadjunction on the slice category is best described by using the equivalent description
of the slice category as $\PSh{\int \clk}$. The right adjoint is the simplest to describe and
is similar to the functor $\slater$ on the topos-of-trees~\cite{ToT}:
\[
 \slater\Gamma(I,\timeobj{\FSA}{\delta}, \lambda) =
\begin{cases}
 \Gamma(I,\timeobj{\FSA}{\subex\delta{\lambda}{n}}, \lambda) & \text{if } \delta(\lambda) = n\!+\!1 \\
 1 & \text{if }\delta(\lambda) = 0
\end{cases}
\]
Here $\subex\delta{\lambda}{n}(\lambda) = n$
and $\subex\delta{\lambda}{n}(\lambda') = \delta(\lambda')$ for $\lambda' \neq \lambda$
and $1$ in the second clause is a singleton set. This lifts to families and elements in the sense
of Lemma~\ref{lem:dradjoint}, for example, if $A$ is a family over $\Gamma$ then
\[
 \slater_{\Fam}(A)(I,\timeobj{\FSA}{\delta}, \lambda)(\gamma) =
\begin{cases}
 A(\gamma) & \text{if } \delta(\lambda) = n\!+\!1 \\
 1 & \text{if }\delta(\lambda) = 0
\end{cases}
\]
The left adjoint can be concretely described as
$\searlier \Gamma(I,\timeobj{\FSA}{\delta}, \lambda)$ having as elements pairs $\pair{\sigma}{x}$
such that $\pair{\id_I}{\sigma} : (I,\timeobj{\FSB}{\delta'}, \lambda')
\to (I,\timeobj{\FSA}{\delta}, \lambda)$ with $\delta'(\lambda') > \delta (\lambda)$ and
$x \in \Gamma(I,\timeobj{\FSB}{\delta'}, \lambda')$ considered up to the
equivalence relation $\sim$ generated by $\pair{\sigma\tau}{x} \sim \pair{\sigma}{\pair\id\tau \cdot x}$.
We refer to~\citet{clottmodel} for the details of the adjunction structure as well as an
abstract description of the left adjoint.

There is a natural transformation $\psearlier : \searlier \to \id$ defined as $\psearlier\pair{\sigma}x =
\pair\id\sigma \cdot x$. This is used in our model to interpret tick-weakening.
%

\begin{theorem} \label{thm:earlier:comp}
 Let $\Gamma$ be an object of $\PSh{\CubeCat \times \TimeCat}$ and $\kappa : \Gamma \to \clk$.
 Suppose $A$ is a family over the domain of $\searlier(\Gamma, \kappa)$ which carries a composition
 structure, then $\slater A$ carries a composition structure as a family over $\Gamma$.
\end{theorem}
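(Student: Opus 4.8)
\begin{proofof}{Theorem~\ref{thm:earlier:comp}}
The plan is to obtain this as a direct instance of Theorem~\ref{thm:dep:right:adj}, so that essentially all the work consists in checking its hypotheses for the functor $\searlier$. First I would note that, by the equivalence \eqref{eq:slice:presheaf}, the slice category $\slice{\PSh{\CubeCat \times \TimeCat}}{\clk}$ is equivalent to the presheaf category $\PSh{\int\clk}$, and that since $\clk(I,\timeobj\FSA\delta) = \FSA$ the clock object is pulled back along the projection $\CubeCat\times\TimeCat\to\TimeCat$ from a presheaf on $\TimeCat$; hence $\int\clk$ is of the form $\CubeCat\times\mathcal D$ for a small category $\mathcal D$, and by \citet{sheafModels20} it carries a cubical model structure whose interval $\sI$ and cofibration object $\sF$ are the ones inherited from $\PSh{\CubeCat}$. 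In this cubical model $\searlier$ is an endofunctor, and $\slater$, lifted to families and elements via Lemma~\ref{lem:dradjoint} and following \citet{clottmodel}, is a dependent right adjoint to it. Moreover, the standard equivalence between families over a slice object $(\Gamma,\kappa)$ and families over $\Gamma$ in $\PSh{\CubeCat\times\TimeCat}$ preserves fibrancy and composition structures, precisely because $\sI$ and $\sF$ are the inherited ones; so it suffices to produce a composition structure on $\slater A$ as a family over $(\Gamma,\kappa)$ inside the slice cubical model.

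Next I would verify the three conditions on $\searlier$ required by Theorem~\ref{thm:dep:right:adj}. Preservation of finite products — and, as demanded by Definition~\ref{def:pres:cofib}, of pullbacks along cofibrations; in fact $\searlier$ preserves all finite limits — is established by \citet{clottmodel} and can also be read off the concrete description of $\searlier$ recalled above, so I would simply cite it. For the remaining two conditions I would exploit that $\sI$ and $\sF$ are constant in the $\mathcal D$-component and show, by a short computation, that $\searlier X \cong X$ naturally whenever $X$ is such a constant presheaf: every time-object morphism $\sigma$ occurring in a pair $\pair\sigma x$ of $\searlier X$ acts as the identity on $X$, so the quotient by $\pair{\sigma\tau}x \sim \pair\sigma{\pair\id\tau\cdot x}$ collapses each fibre to $X$ itself. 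This gives $\searlier(\sI)\cong\sI$ and $\searlier(\sF)\cong\sF$; since the endpoint inclusions $0,1\colon 1\to\sI$ and the generating cofibration $\top\colon 1\to\sF$ are maps between constant presheaves they are preserved, and then, using that an arbitrary cofibration is a pullback of $\top$ and that $\searlier$ preserves that pullback, $\searlier$ maps cofibrations to cofibrations. Together this shows that $\searlier$ preserves finite products, the interval and cofibrations.

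Finally I would apply Theorem~\ref{thm:dep:right:adj} to this cubical model, the functor $\searlier$ and the dependent right adjoint $\slater$: the composition structure on $A$ over $\searlier(\Gamma,\kappa)$ is transported to one on $\slater A$ over $(\Gamma,\kappa)$, which under the equivalence above is the required composition structure on $\slater A$ as a family over $\Gamma$; naturality is inherited from the corresponding clause of Theorem~\ref{thm:dep:right:adj}. As with that theorem, the argument takes place in crisp type theory, since it uses the bijective correspondence of the dependent right adjoint, which applies only to global terms. I expect the main obstacle to be the computation $\searlier X\cong X$ for presheaves constant in the time direction together with checking its compatibility with the endpoint and cofibration structure; everything else is bookkeeping or an appeal to Theorem~\ref{thm:dep:right:adj} and to \citet{clottmodel}.
\end{proofof}
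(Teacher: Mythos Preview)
Your reduction to Theorem~\ref{thm:dep:right:adj} is exactly the paper's strategy, and your plan for verifying the hypotheses on $\searlier$ is correct in outline. The paper differs only in tactics: it proves finite-product preservation of $\searlier$ directly using coproducts in $\int\clk$ (Lemma~\ref{lem:int:clk:coprod}) rather than citing \citet{clottmodel}, and for cofibration preservation it uses a pullback-pasting argument with the $\psearlier$ naturality squares (exploiting that $\sF$ is constant in the time direction) rather than going through $\searlier\sF\cong\sF$ plus general pullback preservation. Both routes are valid.

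One small gap worth flagging: your claim that for time-constant $X$ the generating relation $\pair{\sigma\tau}x\sim\pair\sigma x$ ``collapses each fibre to $X$ itself'' is not quite complete as stated, since this generator only relates $\sigma$ to its extensions $\sigma\tau$. To identify two arbitrary representatives $\pair\sigma x$ and $\pair\tau x$ you still need the indexing diagram to be connected. This is precisely what Lemma~\ref{lem:int:clk:coprod} supplies: both representatives relate to $\pair{[\sigma,\tau]}x$ via the factorisations $\sigma=[\sigma,\tau]\circ\inl$ and $\tau=[\sigma,\tau]\circ\inr$. The same lemma is what makes the colimit defining $\searlier$ filtered, and hence what would justify your stronger claim that $\searlier$ preserves all finite limits; so unless that fact is already recorded in \citet{clottmodel}, you will need Lemma~\ref{lem:int:clk:coprod} either way.
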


\begin{lemma} \label{lem:int:clk:coprod}
 The category $\int \clk$ for $\clk$ considered an object in $\PSh{\TimeCat}$ has coproducts.
\end{lemma}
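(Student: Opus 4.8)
The plan is to exhibit binary coproducts in $\int\clk$ explicitly and verify the universal property by unwinding definitions; I do not expect a real obstacle, only one design choice that must be made correctly. First I would recall that, since $\clk$ is the presheaf with $\clk(\timeobj{\FSA}{\delta}) = \FSA$ on $\TimeCat$, an object of $\int\clk$ is a triple $(\timeobj{\FSA}{\delta}, \lambda)$ consisting of a time object $\timeobj{\FSA}{\delta}$ (a finite set $\FSA$ with a map $\delta : \FSA \to \nats$) and a distinguished clock $\lambda \in \FSA$, while a morphism $(\timeobj{\FSA}{\delta}, \lambda) \to (\timeobj{\FSB}{\delta'}, \mu)$ is a morphism $\sigma : \timeobj{\FSA}{\delta} \to \timeobj{\FSB}{\delta'}$ of $\TimeCat$, i.e.\ a map $\sigma : \FSA \to \FSB$ with $\delta' \circ \sigma \leq \delta$ pointwise, satisfying $\sigma(\lambda) = \mu$. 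Thus $\int\clk$ behaves like a category of ``pointed time objects'' with budget-decreasing maps.

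Next I would define the candidate coproduct of $(\timeobj{\FSA}{\delta}, \lambda)$ and $(\timeobj{\FSB}{\delta'}, \mu)$ to be $(\timeobj{\FSC}{\gamma}, \nu)$, where $\FSC$ is the quotient of the disjoint union $\FSA \sqcup \FSB$ that identifies $\lambda$ with $\mu$, $\nu \in \FSC$ is the resulting common class, and $\gamma$ sends an unidentified element of $\FSA$ to its $\delta$-value, an unidentified element of $\FSB$ to its $\delta'$-value, and $\nu$ to $\min(\delta(\lambda),\delta'(\mu))$. The coprojection $\iota_1 : (\timeobj{\FSA}{\delta},\lambda) \to (\timeobj{\FSC}{\gamma},\nu)$ is the evident map $\FSA \to \FSC$; it sends $\lambda$ to $\nu$, and $\gamma \circ \iota_1 \leq \delta$ holds since it is an equality on unidentified elements and at $\lambda$ one has $\gamma(\nu) = \min(\delta(\lambda),\delta'(\mu)) \leq \delta(\lambda)$. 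The coprojection $\iota_2$ is symmetric.

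Finally I would verify the universal property. A cocone on $(\timeobj{\FSD}{\zeta}, \xi)$ is a pair of maps $\sigma : \FSA \to \FSD$, $\tau : \FSB \to \FSD$ with $\zeta\sigma \leq \delta$, $\zeta\tau \leq \delta'$ and $\sigma(\lambda) = \xi = \tau(\mu)$. Since $\sigma$ and $\tau$ agree at the identified point they glue to a unique map $\rho : \FSC \to \FSD$ with $\rho\iota_1 = \sigma$ and $\rho\iota_2 = \tau$; it automatically sends $\nu$ to $\xi$, so it is a morphism of $\int\clk$, and $\zeta\rho \leq \gamma$ holds on unidentified elements by hypothesis and at $\nu$ because $\zeta(\xi) = \zeta(\sigma(\lambda)) \leq \delta(\lambda)$ and likewise $\zeta(\xi) \leq \delta'(\mu)$, hence $\zeta(\xi) \leq \gamma(\nu)$. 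Uniqueness of $\rho$ is immediate, as $\FSC$ is covered by the images of $\iota_1$ and $\iota_2$. This establishes binary coproducts, and hence all non-empty finite coproducts by associativity; there is no initial object, since a distinguished clock always carries a finite budget and so admits no morphism into an object whose distinguished clock has a strictly larger budget, and the binary case is the one used afterwards. The only point needing care is the budget $\min(\delta(\lambda),\delta'(\mu))$ at the glued clock: a larger value would break one of the coprojections and a smaller one would obstruct the factorisation, so this choice is forced.
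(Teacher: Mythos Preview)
Your construction is correct and is essentially the same as the paper's: the paper also forms the coproduct by taking the disjoint union of $\FSA\setminus\{\lambda\}$, $\FSB\setminus\{\lambda'\}$, and a fresh singleton $\{\lambda''\}$ with budget $\min(\delta(\lambda),\delta'(\lambda'))$ at the distinguished clock, which is exactly your quotient presentation up to isomorphism. Your write-up is in fact more detailed, since the paper omits the verification of the universal property and does not discuss the initial object (only binary coproducts are used later).
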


\begin{proof}
 The coproduct of $(\timeobj{\FSA}{\delta}, \lambda)$ and $(\timeobj{\FSB}{\delta'}, \lambda')$
 is the object $(\timeobj{\FSC}{\delta''}, \lambda'')$ where $\lambda''$ is fresh, $\FSC$ is the
 disjoint union of $\FSA \setminus \{\lambda\}$ and $\FSB \setminus \{\lambda'\}$ and $\{\lambda''\}$
 and $\delta''$ agrees with $\delta$ on $\FSA \setminus \{\lambda\}$, with $\delta'$ on
 $\FSB \setminus \{\lambda'\}$ and maps $\lambda''$ to the minimum of $\delta(\lambda)$ and
 $\delta'(\lambda')$.
\end{proof}

\begin{proofof}{Theorem~\ref{thm:earlier:comp}}
 Recall that families and terms in the CwF of the slice category $\PSh{\int\clk}$ over an object
 corresponding to an element $\kappa : \Gamma \to \clk$ correspond bijectively to families and
 elements of the CwF of $\PSh{\CubeCat \times \TimeCat}$ over $\Gamma$. Since this
 correspondence also respects the interpretation of the internal dependent type theory, it suffices
 to show that $\slater A$ carries a composition structure as expressed on the slice category, when
 $A$ does. By Theorem~\ref{thm:dep:right:adj} this reduces to showing that $\searlier$ preserves
 finite products, the interval and cofibrations.

 Each component $\searlier 1(I,\timeobj{\FSA}{\delta}, \lambda)$ of $\searlier 1$
 is easily seen to be inhabited. If $(\sigma, \star)$ and
 $(\tau, \star)$ are two elements of $\searlier 1(I,\timeobj{\FSA}{\delta}, \lambda)$, then both of these are related
 under the equivalence relation used in the definition of $\searlier$ to
 $([\sigma, \tau], \star)$, where $[\sigma, \tau]$ is the copairing of $\sigma$ and
 $\tau$ out of the coproduct of their domains, which exists by Lemma~\ref{lem:int:clk:coprod}.
 So $\searlier$ preserves the terminal object.

 Writing $[\pair\sigma {\pair xy}]$ for the equivalence class represented by $\pair\sigma {\pair xy}$, the map
 \[\searlier (A \times B)(I,\timeobj{\FSA}{\delta}, \lambda) \to (\searlier A \times \searlier B)(I,\timeobj{\FSA}{\delta}, \lambda)\]
 maps $[\pair\sigma {\pair xy}]$ to $\pair{[\pair\sigma x]}{[\pair\sigma y]}$, and the inverse maps
 \[\pair{[\pair\sigma x]}{[\pair\tau y]}\] to $[\pair{[\sigma, \tau]} {\pair {\inl (x)}{\inr (y)}}]$.

 For the interval, the map $\psearlier : \searlier \sI \to \sI$, which (as decribed in the main
 text) is defined as $\psearlier\pair{\sigma}x = \pair\id\sigma \cdot x$ has an inverse which at
 $(I,\timeobj{\FSA}{\delta}, \lambda)$ maps $x \in \sI(I)$ to $\pair \sigma x$ where
 $\sigma : \timeobj{\FSA}{\delta'} \to \timeobj{\FSA}{\delta}$ is tracked by the identity
 and $\delta'$ agrees with $\delta$ everywhere except at $\lambda$ where it is one higher.
 This clearly preserves endpoints.

 For cofibrations, we first show that $\searlier$ preserves pullbacks of cofibrations.
 Suppose that the diagram on the left below is a pullback with $i$ and $j$ cofibrations.
     \[
      \begin{tikzcd}
        C \ar{r}{a} \ar{d}{j} & A \ar{d}{i} \\
        D \ar{r}{b} & B
      \end{tikzcd} \hspace{1cm}
      \begin{tikzcd}
        \searlier C \ar{r}{\searlier a} \ar{d}{\searlier j} & \searlier A \ar{d}{\searlier i} \\
        \searlier D \ar{r}{\searlier b} & \searlier B
      \end{tikzcd}
    \]
We must show that also the diagram in the right is a pullback. Since $\sF$ is constant
in the time dimension, it follows, since $i$ is a fibration that any naturality square of the form
     \[
      \begin{tikzcd}
        A(I,\timeobj{\FSA}{\delta}, \lambda) \ar{r}{\pair{\id{}}{\sigma}\cdot (-)} \ar{d}{i}
        & A(I,\timeobj{\FSB}{\delta'}, \lambda') \ar{d}{i} \\
        B(I,\timeobj{\FSA}{\delta}, \lambda) \ar{r}{\pair{\id{}}{\sigma}\cdot (-)} & B(I,\timeobj{\FSB}{\delta'}, \lambda')
      \end{tikzcd}
    \]
 is a pullback. From this it follows that the square on the right below is a pullback diagram.
      \[
      \begin{tikzcd}
        \searlier C \ar{r}{\searlier a} \ar{d}{\searlier j} & \searlier A \ar{d}{\searlier i} \ar{r}{\psearlier} & A \ar{d}{i} \\
        \searlier D \ar{r}{\searlier b} & \searlier B \ar{r}{\psearlier} & B
      \end{tikzcd}
      \]
  By the pullback pasting diagram it therefore suffices to show that the outer diagram is a pullback,
  which follows again by the pullback pasting diagram applied to
      \[
      \begin{tikzcd}
        \searlier C \ar{r}{\psearlier} \ar{d}{\searlier j} & C \ar{d}{j} \ar{r}{a} & A \ar{d}{i} \\
        \searlier D \ar{r}{\psearlier} & D \ar{r}{b} & B
      \end{tikzcd}
      \]
  and naturality of $\psearlier$. 

 Suppose now that $\chi_A : B \to \sF$ classifies the cofibration $i : A \to B$.
 A similar argument to the one above for $\sI$ shows that $\psearlier : \searlier \sF \to \sF$
 is an isomorphism, which preserves truth. Then
 $\psearlier \circ \searlier (\chi_A)$ classifies $\searlier i$, so also $\searlier i$ is a cofibration.
\end{proofof}

\subsubsection{Interpreting ticks and tick application}
\label{sec:interp:ticks}

A simple tick judgement $\istick\Gamma \kappa u{\Gamma'}$ is interpreted as a map in $\PSh{\int\clk}$ from
$\Gamma$ to $\searlier{\Gamma'}$.
Here, for simplicity, we keep the clock $\kappa$ implicit. An element $t$
of the family $\slater A$ over $\Gamma'$ corresponds bijectively to an element $\overline t$
of $A$ over $\searlier {\Gamma'}$, which can then be reindexed along the tick $u$ to interpret tick application.
A forcing tick judgement $\isbtick{\Gamma}{\kappa}{u}{\Gamma'}$ is interpreted as a map from $\Gamma$
to $\searlier (\compr{\Gamma'}{\clk})$ in $\PSh{\int\clk}$, where the domain is considered an element of the slice
category over $\clk$ with map given by $\kappa$, and $\compr{\Gamma'}{\clk}$ an object with map
given by the second projection. The assumption of the rule for forcing tick application corresponds semantically
to an element of a family $\slater A$ over an object $\compr{\Gamma'}{\clk}$, which corresponds bijectively
to an element of $A$ over $\searlier (\compr{\Gamma'}{\clk})$. Tick application is then interpreted by reindexing
this element along the interpretation of the forcing tick.

To define the interpretation of ticks, say that family $A$ over an object $\Gamma$
is \emph{timeless} if the map
\[
\pair{\searlier(\p)}{\q[\psearlier]}: \searlier(\compr{\Gamma}A) \to \compr{(\searlier\Gamma)}{A[\psearlier]}
\]
is an isomorphism.

\begin{lemma} \label{lem:timeless}
 The families $\sI$ and $\clk$, as well as any cofibration, are timeless. As a consequence
 \[
 \searlier(\den{\Gamma, \cuba{\Gamma'}}) \cong \compr{\searlier(\den{\Gamma})}{\den{\cuba{\Gamma'}}}
 \]
 in $\PSh{\int\clk}$.
\end{lemma}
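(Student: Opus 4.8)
The plan is to first reformulate the notion of timelessness in more convenient terms. For a family $A$ over an object $\Gamma$, the object $\compr{\searlier\Gamma}{A[\psearlier]}$ is by definition the pullback of the projection $\p : \compr\Gamma A \to \Gamma$ along $\psearlier : \searlier\Gamma \to \Gamma$, and the comparison map from $\searlier(\compr\Gamma A)$ into this pullback — determined by $\searlier\p$ and by $\psearlier : \searlier(\compr\Gamma A) \to \compr\Gamma A$, which form a cone by naturality of $\psearlier$ at $\p$ — is exactly the map $\pair{\searlier\p}{\q[\psearlier]}$ occurring in the definition. Hence $A$ is \emph{timeless} if and only if the naturality square of $\psearlier$ at $\p$ is a pullback in $\PSh{\int\clk}$, and it suffices to check this when $A$ is $\sI$, $\clk$, or an arbitrary cofibration $[\phi]$.

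For a cofibration this is already contained in the proof of Theorem~\ref{thm:earlier:comp}: there it is shown, using that $\sF$ is constant in the time-step direction and that the map classifying a cofibration is itself a cofibration, that every naturality square of $\psearlier$ at a cofibration is a pullback, which is precisely timelessness of $[\phi]$. For $\sI$ I would use that $\sI$ is pulled back from the constant family, so that $\compr\Gamma\sI \cong \Gamma \times \sI$ with $\p$ the first projection; since $\searlier$ preserves finite products and $\psearlier : \searlier\sI \to \sI$ is an isomorphism preserving endpoints (both established in the proof of Theorem~\ref{thm:earlier:comp}), the $\psearlier$-square at the first projection $\Gamma\times\sI \to \Gamma$ is identified with the evident pullback square whose corner is $\searlier\Gamma\times\sI$. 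The case of $\clk$ is analogous: $\clk$ is constant in the time-step component of $\TimeCat$, which is the only component $\searlier$ modifies, so I would check directly from the concrete description of $\searlier$ — equivalently, that $\psearlier : \searlier\clk \to \clk$ is an isomorphism, by the same kind of calculation used for $\sI$ and $\sF$ in the proof of Theorem~\ref{thm:earlier:comp} — that the $\psearlier$-square at the projection out of $\compr\Gamma\clk$ is a pullback.

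For the displayed consequence, note that by the defining equations for $\cuba{-}$ the context $\cuba{\Gamma'}$ is a telescope all of whose entries are interval assumptions $i : \I$, clock assumptions $\kappa : \clocktype$, or faces $\phi$. I would then argue by induction on the length of $\cuba{\Gamma'}$, using: (i) the three cases just established as base steps; (ii) that timelessness is closed under telescope extension — if $A$ is timeless over $\Delta$ and $B$ is timeless over $\compr\Delta A$, then the extension of $\compr\Delta A$ by $B$ is timeless over $\Delta$, obtained by composing the isomorphism for $B$ with that for $A$ (and matching up the reindexings along $\psearlier$); and (iii) that for each of $\sI$, $\clk$, and a face, reindexing along $\psearlier$ leaves the family unchanged up to the canonical identifications, since these families do not see the time-step data that $\psearlier$ alters. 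Chaining the resulting isomorphisms yields $\searlier(\den{\Gamma,\cuba{\Gamma'}}) \cong \compr{\searlier\den\Gamma}{\den{\cuba{\Gamma'}}[\psearlier]} \cong \compr{\searlier\den\Gamma}{\den{\cuba{\Gamma'}}}$, as required.

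I expect the $\clk$ case to be the main obstacle, since it requires combining the precise form of $\clk$ as a family in the sliced CwF $\PSh{\int\clk}$ with the somewhat involved concrete formula for the left adjoint $\searlier$; by contrast the $\sI$ and cofibration cases are essentially re-packagings of facts already proved in the course of Theorem~\ref{thm:earlier:comp}. The remaining work — the inductive bookkeeping in the last paragraph, tracking reindexings along $\psearlier$ and the isomorphism supplied by the inductive hypothesis — is routine.
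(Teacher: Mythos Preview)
Your proposal is correct and follows essentially the same route as the paper. The pullback reformulation of timelessness is a clean way to state what is going on, and it lets you recycle the cofibration case directly from the proof of Theorem~\ref{thm:earlier:comp}, whereas the paper repeats a concrete inverse construction there; for $\sI$ and $\clk$ both you and the paper reduce to showing $\psearlier$ is an isomorphism on the object in question and then invoke product preservation. One small caveat: your parenthetical that $\searlier$ ``only modifies the time-step component'' is not literally true of the explicit description of $\searlier$ (the representing $\sigma$ may change the clock set $\FSA$), so the $\clk$ case really does require the normalisation argument the paper carries out---any $[\pair{\sigma}{\kappa}]$ in $\searlier\clk$ can be brought to the form $[\pair{\id}{\sigma(\kappa)}]$ via the defining equivalence relation. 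You already flag this as the main obstacle, so this is just a warning not to lean on the informal phrasing when you fill in the details.
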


\begin{proof}
As noted in the proof of Theorem~\ref{thm:earlier:comp}, cofibrations $A$ over $\Gamma$ are constant in the time
dimension in the sense that for any
$\sigma : (\timeobj{\FSA}{\delta}, \lambda) \to (\timeobj{\FSB}{\delta'}, \lambda')$
the map
\[
\pair{\id}\sigma \cdot (-) : A(\gamma) \to A(\pair{\id}\sigma \cdot \gamma)
\]
is an isomorphism.
This means that the inverse to the map of the lemma can be defined to map the element
\[(\pair\sigma\gamma, a)\in (\compr{(\searlier\Gamma)}{A[\psearlier]})(I,\timeobj{\FSA}{\delta},\lambda)\]
to $[(\sigma, (\gamma, a'))] \in \searlier(\compr{\Gamma}A)(I,\timeobj{\FSA}{\delta}, \lambda)$ where
$a'$ is the unique element such that $\pair {\id_I}\sigma \cdot a' = a$. A similar argument applies in the case of $\sI$.

In the case of $\clk$, first note that
\begin{align*}
 \searlier(\compr \Gamma \clk) & = \searlier(\Gamma \times \clk) \cong \searlier\Gamma\times \searlier\clk
\end{align*}
since, as we saw in the proof of Theorem~\ref{thm:earlier:comp}, $\searlier$ preserves finite products.
It thus suffices to show that $\clk \cong \searlier\clk$. In order to prove this, we first prove that any element in
$\searlier(\clk)(I,\timeobj{\FSA}{\delta},\lambda)$
can be represented on the form $\pair{\id}{\kappa}$, for
$\id : (\timeobj{\FSA}{\delta+1},\lambda) \to (\timeobj{\FSA}{\delta},\lambda)$
and $\kappa \in \FSA$. Here $\delta+1$ is the composition of $\delta$ with the successor function. This is true,
because an element represented by $(\sigma, \kappa)$ where
$\sigma : (\timeobj{\FSB}{\delta'},\lambda') \to (\timeobj{\FSA}{\delta},\lambda)$
and $\kappa \in \FSB$ is equivalent to the element given by $(\sigma, \kappa)$ where $\sigma$ now is considered a map from
$(\timeobj{\FSB}{\delta'+1},\lambda')$ to $(\timeobj{\FSA}{\delta},\lambda)$. This element is in turn equivalent to
$\pair \id{\sigma(\kappa)}$ as required. As a consequence, we can define an inverse to $\psearlier$ to map
$\kappa \in \clk(\timeobj{\FSA}{\delta},\lambda)$ to $[\pair{\id}{\kappa}]$.

The last statement of the lemma now follows by induction on $\Gamma'$.
\end{proof}

For $\istick\Gamma \kappa u{\Gamma'}$, let $n$ be the number of ticks on clock $\kappa$
in $\Gamma \setminus \Gamma'$. Then the interpretation of $u$ factors through $\searlier^n \Gamma'$, and
a tick assumption is interpreted as the appropriate projection $\searlier^n \to \searlier$. Likewise
if $\isbtick{\Gamma}{\kappa}{u}{\Gamma'}$ let $n$ be the number of ticks on $\kappa$ in $\Gamma \setminus \Gamma'$,
then the interpretation of $u$ factors through $\searlier^n(\compr{\Gamma'}{\clk})$ with tick variables interpreted
as projections. The construction $\tirr$ is interpreted using the following lemma.

\begin{lemma} \label{lem:interp:tirr}
 For any $n\geq 1$ there exists a unique natural transformation $\alpha : \searlier^n \to \searlier$ such that
 $\psearlier \circ \alpha = \psearlier^n$, and a unique $\beta : \searlier^n(\compr{-}{\clk}) \to \searlier(\compr{-}{\clk})$
 commuting with the projection to $\compr{-}{\clk}$.
\end{lemma}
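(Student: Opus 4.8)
The plan is to construct $\alpha$ and $\beta$ by applying the functor $\searlier$ to iterates of the natural transformation $\psearlier : \searlier \to \id$, and then to prove uniqueness by reducing to representable presheaves, where compatibility with $\psearlier^n$ becomes rigid.

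For existence I would first write $\psearlier^0 \defeq \id$ and, for $k \geq 1$, let $\psearlier^k : \searlier^k \to \id$ be the natural transformation obtained by composing $\psearlier$ with itself $k$ times; naturality of $\psearlier$ makes the two evident ways of doing this agree, so the notation is unambiguous and in particular $\psearlier^k = \psearlier \circ \searlier(\psearlier^{k-1})$. Then I would set $\alpha \defeq \searlier(\psearlier^{n-1}) : \searlier^n = \searlier \circ \searlier^{n-1} \to \searlier \circ \id = \searlier$, so that $\psearlier \circ \alpha = \psearlier \circ \searlier(\psearlier^{n-1}) = \psearlier^n$ by the displayed identity. Applying the same recipe with the functor $\compr{-}{\clk}$ in place of $\id$ gives $\beta : \searlier^n(\compr{-}{\clk}) \to \searlier(\compr{-}{\clk})$ with $\psearlier \circ \beta = \psearlier^n$ for the same reason.

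For uniqueness I would argue as follows. Since $\searlier$ is a left adjoint (to $\slater$) it preserves colimits, hence so do $\searlier^n$ and $\searlier$; a natural transformation between colimit-preserving endofunctors of $\PSh{\int\clk}$ is determined by its components at the representables $\y c$, so it suffices to check that the constraint $\psearlier \circ (-) = \psearlier^n$ pins down each such component, and for this it is enough that the component of $\psearlier$ at every $\y c$ is a monomorphism. Using the adjunction $\searlier \adj \slater$ and the explicit formula for $\slater$ from Section~\ref{sec:dep:right:adj}, one computes $\hom(\searlier\,\y c, X) \cong (\slater X)(c)$, which equals $X(c^{-})$ when the distinguished clock of $c$ has a positive number of time steps remaining — where $c^{-}$ is obtained from $c$ by decrementing that number — and is $1$ otherwise; hence $\searlier\,\y c \cong \y{(c^{-})}$, or $\searlier\,\y c \cong 1$ in the degenerate case. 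Under this identification $\psearlier : \searlier\,\y c \to \y c$ is $\mathbf{y}$ applied to the canonical morphism $c \to c^{-}$ of $\int\clk$ that is tracked by the identity on clocks (respectively the unique map out of the terminal object); a morphism of $\int\clk$ tracked by an identity is an epimorphism, so $\mathbf{y}$ sends it to a monomorphism, and every map out of the terminal object is monic — so $\psearlier$ is monic on representables, and hence $\alpha$, and likewise $\beta$, is unique. For $\beta$ one also uses that $\compr{-}{\clk} = (-) \times \clk$ preserves colimits and that $\int\clk$ has finite coproducts (Lemma~\ref{lem:int:clk:coprod}), so that $\searlier^k(\compr{\y c}{\clk})$ is a colimit of $\searlier^k$ applied to representables and the monomorphism argument applies to each summand.

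The hard part is uniqueness, and specifically the reduction just sketched: without the $\psearlier$-condition there are in general several natural transformations $\searlier^n \to \searlier$, since $\int\clk$ has non-identity morphisms that leave the time data fixed while permuting or merging clocks, each inducing a distinct natural transformation; the condition $\psearlier \circ (-) = \psearlier^n$ is exactly what excludes these, but exploiting it requires the identification of $\searlier$ on representables above. The degenerate cases where $\searlier^k\,\y c$ collapses to the terminal presheaf, and the colimit bookkeeping for $\beta$, are routine but need a little care.
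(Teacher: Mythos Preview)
Your existence argument and your uniqueness argument for $\alpha$ are correct and pleasantly different from the paper's. The paper proceeds by first giving an explicit description of $\searlier^n\Gamma$ as equivalence classes $[\pair\sigma x]$ with $\delta'(\lambda')\geq\delta(\lambda)+n$, and then shows by an element chase (first at representables with the identity element, then at general $\Gamma$ via Yoneda) that any $\alpha$ satisfying the constraint must send $[\pair\sigma x]$ to $[\pair\sigma x]$. Your route via the adjunction $\searlier\adj\slater$ to compute $\searlier\,\y c$ and then argue that $\psearlier$ is monic on representables is cleaner and avoids the explicit presentation of $\searlier^n$. One small slip: in the degenerate case $\delta(\lambda)=0$ the equation $\hom(\searlier\,\y c,X)\cong 1$ for all $X$ says that $\searlier\,\y c$ is \emph{initial}, not terminal; this does not affect your conclusion, since the unique map out of the initial presheaf is vacuously monic, but the sentence ``every map out of the terminal object is monic'' is being applied to the wrong object.

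Your argument for $\beta$, however, has a genuine gap. Having reduced to representable $\Gamma=\y c$ you need $\psearlier:\searlier(\compr{\y c}{\clk})\to\compr{\y c}{\clk}$ to be a monomorphism, and you try to obtain this by writing $\compr{\y c}{\clk}$ as a colimit of representables and invoking the monicity of $\psearlier$ on each. But a colimit of monomorphisms need not be a monomorphism, and $\compr{\y c}{\clk}$ is not in general a coproduct of representables (the canonical colimit is over the category of elements of $\clk$, which is not discrete), so the phrase ``the monomorphism argument applies to each summand'' does not conclude. The paper sidesteps this entirely: it cites the fact (Proposition~7.1 of~\cite{clottmodel}) that $\psearlier:\searlier(\compr{\Gamma}{\clk})\to\compr{\Gamma}{\clk}$ is an \emph{isomorphism} for every $\Gamma$, so $\psearlier^n$ is too, and $\beta$ is forced to be $\psearlier^{-1}\circ\psearlier^n$. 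You should use this instead; it makes the second half of the lemma a one-liner.
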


In particular, this means that $\tirr$ is interpreted using constant paths in our model.
Before proving Lemma~\ref{lem:interp:tirr} we first give an alternative description of $\searlier^n$ similar
to the explicit description for $\searlier$ itself.

\begin{lemma}
 Let $n\geq 0$. Up to isomorphism, the functor $\searlier^n$ maps $\Gamma$ to the presheaf given
 at $(I,\timeobj{\FSA}{\delta}, \lambda)$
 by equivalence classes of pairs $\pair{\sigma}{x}$ such that $\pair{\id_I}{\sigma} : (I,\timeobj{\FSB}{\delta'}, \lambda')
\to (I,\timeobj{\FSA}{\delta}, \lambda)$ with $\delta'(\lambda') \geq \delta (\lambda) + n$ and
$x \in \Gamma(I,\timeobj{\FSB}{\delta'}, \lambda')$ considered up to the
equivalence relation $\sim$ generated by $\pair{\sigma\tau}{x} \sim \pair{\sigma}{\pair\id\tau \cdot x}$.
The projection $\psearlier : \searlier^{n+1} \to \searlier^n$ maps $[\pair{\sigma}{x}]$ to $[\pair{\sigma}{x}]$.
\end{lemma}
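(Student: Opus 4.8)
The plan is to argue by induction on $n$, bootstrapping from the concrete description of $\searlier$ recalled above, which is precisely the case $n=1$ (the inequality $\delta'(\lambda') > \delta(\lambda)$ there is the same as $\delta'(\lambda') \geq \delta(\lambda)+1$). For the base case $n = 0$ there is nothing to do beyond noting $\searlier^0 = \id$ and that the presheaf $P_0$ described in the statement is canonically isomorphic to $\Gamma$: the assignment $[\pair{\sigma}{x}] \mapsto \pair{\id_I}{\sigma} \cdot x \in \Gamma(I, \timeobj{\FSA}{\delta}, \lambda)$ is well defined because the generating relation $\pair{\sigma\tau}{x} \sim \pair{\sigma}{\pair{\id}{\tau}\cdot x}$ is exactly functoriality of $\Gamma$, and $\gamma \mapsto [\pair{\id}{\gamma}]$ is a two-sided inverse; the constraint $\delta'(\lambda') \geq \delta(\lambda)$ is vacuous at level $0$ because any $\int\clk$-morphism $\pair{\id_I}{\sigma} \colon (I, \timeobj{\FSB}{\delta'}, \lambda') \to (I, \timeobj{\FSA}{\delta}, \lambda)$ already forces $\delta(\lambda) = \delta(\sigma\lambda') \leq \delta'(\lambda')$.

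For the inductive step, write $P_n$ for the presheaf described at level $n$ and suppose $\searlier^n\Gamma \cong P_n\Gamma$ naturally. Since $\searlier^{n+1} = \searlier \circ \searlier^n$, applying the concrete description of $\searlier$ to $P_n\Gamma$ presents $\searlier^{n+1}\Gamma(I, \timeobj{\FSA}{\delta}, \lambda)$ as classes $[\pair{\sigma}{y}]$ with $\pair{\id_I}{\sigma} \colon (I, \timeobj{\FSB}{\delta'}, \lambda') \to (I, \timeobj{\FSA}{\delta}, \lambda)$, $\delta'(\lambda') > \delta(\lambda)$, and $y = [\pair{\rho}{x}] \in P_n\Gamma(I, \timeobj{\FSB}{\delta'}, \lambda')$, so $\pair{\id_I}{\rho} \colon (I, \timeobj{\FSC}{\delta''}, \lambda'') \to (I, \timeobj{\FSB}{\delta'}, \lambda')$ with $\delta''(\lambda'') \geq \delta'(\lambda') + n$. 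I would then define $\Phi \colon \searlier^{n+1}\Gamma \to P_{n+1}\Gamma$ by $[\pair{\sigma}{[\pair{\rho}{x}]}] \mapsto [\pair{\sigma\rho}{x}]$, which lands in $P_{n+1}\Gamma$ because $\delta''(\lambda'') \geq \delta'(\lambda') + n \geq (\delta(\lambda)+1) + n$, together with its candidate inverse $\Psi \colon [\pair{\tilde\sigma}{x}] \mapsto [\pair{\tilde\sigma}{[\pair{\id}{x}]}]$. The remaining work is a sequence of diagram chases, all resting on the fact that covariant reindexing in these colimit-style presheaves is precomposition on the first component, $\pair{\id_I}{\eta}\cdot[\pair{\sigma}{x}] = [\pair{\eta\sigma}{x}]$: (i) $\Phi$ and $\Psi$ are well defined on equivalence classes; (ii) $\Phi\Psi = \id$ is immediate, and $\Psi\Phi = \id$ needs a single application of the generating relation of $\searlier$; (iii) both maps are natural. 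Finally, for the statement about $\psearlier$, observe that $\psearlier \colon \searlier(\searlier^n\Gamma) \to \searlier^n\Gamma$ sends $\pair{\sigma}{y} \mapsto \pair{\id_I}{\sigma}\cdot y$, so under $\searlier^n\Gamma \cong P_n\Gamma$ it sends $\pair{\sigma}{[\pair{\rho}{x}]}$ to $[\pair{\sigma\rho}{x}] = \Phi[\pair{\sigma}{[\pair{\rho}{x}]}]$; hence after transport along the isomorphisms $\psearlier$ is literally $[\pair{\sigma}{x}] \mapsto [\pair{\sigma}{x}]$, with only the inequality on $\delta''(\lambda'')$ relaxing from $\geq \delta(\lambda)+n+1$ to $\geq \delta(\lambda)+n$. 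Since this formula depends neither on representatives nor on which of the $n+1$ factors one peels off, it also shows that the candidate natural transformations $\searlier^{n+1} \to \searlier^n$ all coincide, so that ``the projection'' is unambiguous.

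The only genuinely delicate point is the double bookkeeping of equivalence relations in step (i): one must verify that the relation on $\searlier(P_n\Gamma)$ --- generated by $\pair{\sigma\tau}{y} \sim \pair{\sigma}{\pair{\id}{\tau}\cdot y}$ with $y$ itself ranging over $P_n\Gamma$-classes --- descends under $\Phi$ to the single-level relation defining $P_{n+1}\Gamma$. I expect this to be the main obstacle, though it is bookkeeping rather than real difficulty: once reindexing in $P_n\Gamma$ is identified with precomposition, the outer $\searlier$-generating move $\pair{\sigma\tau}{[\pair{\rho}{x}]} \sim \pair{\sigma}{[\pair{\tau\rho}{x}]}$ is carried by $\Phi$ to an identity ($[\pair{\sigma\tau\rho}{x}]$ on both sides, by associativity), while the inner $P_n\Gamma$-generating move $\pair{\sigma}{[\pair{\rho_1\mu}{x}]} \sim \pair{\sigma}{[\pair{\rho_1}{\pair{\id}{\mu}\cdot x}]}$ is carried to the single $P_{n+1}\Gamma$-move $\pair{\sigma\rho_1\mu}{x} \sim \pair{\sigma\rho_1}{\pair{\id}{\mu}\cdot x}$; hence $\Phi$ descends to classes, and the symmetric check for $\Psi$ is identical, using the $\searlier$-move and a $P_n\Gamma$-move.
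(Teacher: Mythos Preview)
Your proposal is correct and follows exactly the approach the paper indicates: the paper omits the proof entirely, saying only ``We omit the routine proof of this, which is by induction on $n$,'' and your argument supplies precisely that induction, including the bookkeeping for the nested equivalence relations and the identification of $\psearlier$ under the isomorphisms. There is nothing to add.
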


We omit the routine proof of this, which is by induction on $n$.

\begin{proofof}{Lemma~\ref{lem:interp:tirr}}
  Suppose $\alpha : \searlier^n \to \searlier$ is such that $\psearlier \circ \alpha = \psearlier^n$. We will show that
  $\alpha([\pair\sigma x]) = ([\pair\sigma x])$ for any
  $[\pair\sigma x] \in (\searlier^n\Gamma)(I,\timeobj{\FSA}{\delta}, \lambda)$.

  First consider the case of a representable object \linebreak
  $\Gamma = y(I,\timeobj{\FSB}{\delta'}, \lambda')$, and an element of the form
  \[([(\sigma, \id{})]) \in \searlier^n(y(I,\timeobj{\FSB}{\delta'}, \lambda'))(I,\timeobj{\FSA}{\delta}, \lambda).\]
  That is, $\pair{\id_I}\sigma : (I,\timeobj{\FSB}{\delta'}, \lambda') \to (I,\timeobj{\FSA}{\delta}, \lambda)$
  and $\delta'(\lambda') \geq \delta(\lambda) + n$ and $\id{}$ is the identity on $(I,\timeobj{\FSB}{\delta'}, \lambda')$
  which is an element of $y(I,\timeobj{\FSB}{\delta'}, \lambda')$. Then $\alpha([(\sigma, \id{})])$ must be on the
  form $([\pair\tau\rho])$, and since $\psearlier([\pair\tau\rho]) = \pair {\id_I}\tau\circ \rho$ and
  $\psearlier^n([(\sigma, \id{})]) = \pair{\id_I}\sigma$, the assumption implies
  $\pair {\id_I}\tau\circ\rho = \pair {\id_I}\sigma$, and in particular that $\rho$ is on the form
  $\pair{\id_I}{\rho'}$. Therefore
  \begin{align*}
    \alpha([(\sigma, \id{})]) &= [\pair\tau{\pair{\id_I}{\rho'}}] =
    [\pair\tau{\pair{\id_I}{\rho'}\cdot \id{}}] \\ &= [\pair{\tau\rho'}{\id{}}] = [(\sigma, \id{})]
  \end{align*}

  Now, consider a general $[\pair\sigma x] \in (\searlier^n\Gamma)(I,\timeobj{\FSA}{\delta}, \lambda)$. That is,
  $\pair\id\sigma : (I,\timeobj{\FSB}{\delta'}, \lambda') \to (I,\timeobj{\FSA}{\delta}, \lambda)$
  and $\delta'(\lambda') \geq \delta(\lambda) + n$ and $x \in \Gamma (I,\timeobj{\FSB}{\delta'}, \lambda')$.
  Let $\gamma : y(I,\timeobj{\FSB}{\delta'}, \lambda') \to \Gamma$ be the element corresponding to $x$ by
  the yoneda lemma, i.e., $\gamma\pair f\sigma = \pair f\sigma\cdot x$. Then
\begin{align*}
 \alpha([\pair\sigma x]) & = \alpha([\pair\sigma{\gamma (\id)}]) \\
  & = \alpha(\searlier^n(\gamma)[\pair\sigma\id]) \\
  & = \searlier(\gamma)(\alpha([\pair\sigma\id])) \\
  & = \searlier(\gamma)([\pair\sigma\id]) \\
  & = [\pair\sigma x]
\end{align*}

For the second statement of the theorem, recall that \citep[Proposition 7.1]{clottmodel} states that
$\psearlier : \searlier(\compr\Gamma\clk) \to \compr\Gamma\clk$ is an isomorphism. Therefore,
also the projection $\psearlier^n : \searlier^n(\compr\Gamma\clk) \to \compr\Gamma\clk$
is an isomorphism, from which the statement follows.
\end{proofof}

\subsubsection{Semantics for HITs}
\label{sec:hits:sem}

We provide semantics for any HIT definable in the schema with a direct generalization of the method employed in \cite{CTTHITS}.
To each HIT signature given we associate a notion of algebra.
An initial such algebra is then constructed and shown to support the necessary structure to model the HIT.
For this section we denote objects of the time category by $x, x',\dots$ to avoid notational clashes.

Reflecting the syntax of higher inductive types, we will work with homogeneous composition and transport separately.
The semantic counterpoint to $\hcomp$ is the notion of a fibrancy structure.
A fibrancy structure for a type global type $A$ is an operation taking a face $\phi:\F$, an element $u_0:A$ and a partial path $p:A^{\I}$ defined on the extent $\phi$ and equal to $u_0$ at the endpoint $0:\I$.
It then produces an element $h_A(\phi, u_0, u) : A$ which is equal to $u(1)$ on extent $\phi$.

Consider a list of constructors $\isconstrs \Delta {\mathcal{K}}$, consisting of constructors $(\ell_i,(\Gamma_i;\bar{\Xi_i};\Psi_i;\varphi_i;e_i)) \in \mathcal{K}$ for each $1\leq i\leq n$, and let $\Delta' \vdash \delta: \den \Delta$.
A $\mathcal{K}, \delta$-algebra is a presheaf $A\in \slice{\PSh{\CubeCat \times \TimeCat}}{\Delta'}$ with a fibrancy structure $h_A$, and for each $(\ell_i,(\Gamma_i;\bar{\Xi_i};\Psi_i;\varphi_i;e_i))$ semantic constructors $c_i^A$.
A semantic constructor is a map $c_i : \Pi(\gamma : \den {\Gamma_i}(\delta)) .\overline{(\den{\Xi_i}(\delta, \gamma)} \to A) \to \den {\Psi} \to A$.
Given such $c_i^A$'s, we can interpret the boundary conditions given by the constructors, and the semantic constructors are subject to the boundary coherence conditions generated by this interpretation.

As in the syntax, we define the semantic boundary interpretation $\evalbsem{-}$ by induction over the grammar for these terms.
We denote the list of constructors in $\mathcal{K}$ before the $i$'th constructor by $\mathcal{K}_{<i}$, and the list of semantic constructors for these by $C_{<i}$.
The boundary coherence condition states that \[c_i^A(\gamma, \theta, \bar{i}) = \evalbsem{M_{i,j}}_{C_{<i}}(\gamma, \theta, \bar{i})\] when $\phi_{i,j}(\bar{i})$, which is the direct interpretation of \[[\varphi_{i,0}~\evalbsem{M_{i,0}}, \ldots, \varphi_{i,n}~\evalbsem{M_{i,n}}].\]
We define the boundary interpretation for the remaining grammar as follows:
\begin{align*}
\evalbsem{x_i(\bar{u}(\gamma, \theta, i))}_{C_{<i}} &=
\den{\theta_i}_{\id, \den{\bar{u}}} \\
\evalbsem{\con_j(\bar{t}, \overline{\lambda \xi. N}, \bar{i})}_{C_{<i}} &=
c_j^A(\den{\bar{t}}, \overline{\lambda \xi .\evalbsem{N}_{C_{<i}}}, \den{\bar{i}}) \\
\evalbsem{\hcomp[\phi \mapsto u] \, u_0 }_{C_{<i}} &= h_A(\phi, \evalbsem{u_0}_{C_{<i}}, \evalbsem{u}_{C_{<i}})
\end{align*}
A morphism of $\mathcal{K}, \delta$-algebras $f: A\to B$ is a natural transformation preserving all $\mathcal{K},\delta$-algebra structure up to equality.
For constructors this means that $f(c_i^A(\gamma, \theta, r)) = c_i^B(\gamma, f \circ \theta, r)$.

We will now define a presheaf $H^{pre}$, and carve out the initial $\mathcal{K},\delta$-algebra as a subpresheaf $H\subset H^{pre}$ following the construction exemplified in \cite{CTTHITS}.
Let $\rho \in \den{\Delta}(I,x)$.
The presheaf $H^{pre}$ consists of formal constructor elements, and formal solutions to composition problems.
Concretely the presheaf $H^{pre}(I, x, \rho)$ will contain elements of two forms:
Firstly formal constructors $con_i(\gamma, \theta, r)$, where $\gamma \in \den{\Gamma_i}(I, x, \rho)$, $r\in \den{\Psi}(I, x, \rho)$, $\den{\phi_i}(I,r) \neq 1_\F$, and $\theta$ is a family indexed by $f \colon (I,x) \to (J,x')$ and an element $\xi \in \den{\Xi_{i,j}}(J, x', f\cdot \rho, f \cdot \gamma)$ of elements of $H^{pre}(J, x', f\cdot \rho, f \cdot \gamma)$.
Secondly $\hcomp$ elements of the form $\hcomp[\psi \mapsto u]\, u_0$ where $\psi \neq 1 \in \F(I), u_0 \in H^{pre}(I, x, \rho)$ and $u$ is a family indexed by pairs of a map $f : I\to J$ in $\CubeCat$, such that $\pshfunc \phi f = 1_\F$, and $r\in \I(J)$ of elements $u_{f,r} \in H^{pre}(J, x, (f,\id) \cdot \rho)$.
To define the action of $H^{pre}$ on morphisms, we first need to give interpretations of boundary terms.
While this is not formally the same case as for general algebras, the interpretation follows the same structure and so we leave out the definition here.
We denote this boundary interpretation in $H^{pre}$ again by $\evalbsem{-}$, owing to the fact that when the definition is complete, the two will coincide.
Let $g: (I,x) \to (J,x')$.
We define the action on elements on $H^{pre}$ as follows:

\begin{align*}
H^{pre}(g)&( con_i(\bar{t}, \bar{a}, \bar{r})) := \\
& \begin{cases}
	\evalbsem{M(g\cdot \bar{t}, g.\bar{a}, g\cdot \bar{r})}_{C} & \text{if } \phi_i(g\cdot \bar{r})\\
	con_i(g\cdot \bar{t}, g.\bar{a}, g \cdot \bar{r}) & \text{else}
\end{cases}\\
H^{pre}(g)&(\hcomp[\psi\mapsto u] u_0) :=\\
&\begin{cases}
	u_{g,1} & \text{if } \psi(g \cdot \bar{r})\\
	\hcomp [ g \cdot \psi \mapsto g.u] (g \cdot u_0) & \text{else}
\end{cases}
\end{align*}
Here $g.\bar{a}$ is given by $(g.\bar{a})_{i,f,\xi} = a_{i,fg,\xi}$, $g.u$ is $u$ similarly reindexed by the cubical component of $g$ and $C$ is list of constructors $con_i$ on which the boundary interpretation is based.

Now we carve out those elements where the families taken as input are natural.
This means that we include elements $con_i(\gamma, \theta, r)$ where each $\theta_{f,\xi}$ is in $H$ and $\theta_{gf,g \cdot \xi} = g \cdot \theta_{f,\xi}$.
Additionally, we include those $\hcomp[\psi \mapsto u] \, u_0$ elements where $u_0$ and each $u_{f,r}$ is in $H$, $u_{gf,(g, \id) \cdot r} = (g, \id) \cdot u_{f,r}$ and $(f,\id) \cdot u_0 = u_{f, 0}$ for suitable maps $f$ such that $f \cdot \psi = 1_\F$.

\begin{lemma} \label{lem:initial:alg}
The presheaf $H$ is the initial $\mathcal{K}, \delta$-algebra for each $\delta:\Delta$. As a consequence, it has a unique map to any cubical type $A$ with the structure of a $\mathcal{K}, \delta$-algebra.
\end{lemma}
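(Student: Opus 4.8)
The plan is to prove Lemma~\ref{lem:initial:alg} by the standard pattern for initial-algebra constructions in presheaf categories, as in \citet{CTTHITS}: first verify that $H$ is a $\mathcal{K},\delta$-algebra, then show it is initial by exhibiting a unique algebra morphism to any other algebra. The two halves are largely independent, and the first is mostly bookkeeping that the carve-out was done correctly.

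First I would check that $H$ carries the required structure. The semantic constructors $c_i^H$ are defined by $c_i^H(\gamma,\theta,r) = con_i(\gamma,\theta,r)$ when $\den{\phi_i}(I,r)\neq 1_\F$, and $c_i^H(\gamma,\theta,r) = \evalbsem{e_i}(\gamma,\theta,r)$ (i.e.\ the formal constructor's boundary component at the relevant face) when $\den{\phi_i}(I,r) = 1_\F$; naturality of the $\theta$-family is exactly the condition imposed in the carve-out, so $con_i(\gamma,\theta,r)$ really lies in $H$. One must check $c_i^H$ is natural in $(I,x,\rho)$ --- this is immediate from the definition of $H^{pre}(g)$ on formal constructors, using that $\evalbsem{-}$ is itself natural --- and that the boundary coherence condition $c_i^H(\gamma,\theta,\bar r) = \evalbsem{M_{i,j}}_{C_{<i}}(\gamma,\theta,\bar r)$ when $\phi_{i,j}(\bar r)$ holds is satisfied, which again follows from the definition of $H^{pre}(g)$ together with the fact that the boundary interpretation in $H^{pre}$ was set up to agree with $\evalbsem{-}$. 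The fibrancy structure $h_H$ is given by $h_H(\psi,u_0,u) = \hcomp[\psi\mapsto u]\,u_0$ when $\psi\neq 1_\F$ and by $u_1$ when $\psi = 1_\F$; the carve-out conditions on $\hcomp$-elements are precisely naturality of the $u$-family plus the base condition $(f,\id)\cdot u_0 = u_{f,0}$, so this is well defined, natural, and satisfies $h_H(\psi,u_0,u) = u_1$ on extent $\psi$ by construction of $H^{pre}(g)$ on $\hcomp$-elements.

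Next I would prove initiality. Let $(A,h_A,\{c_i^A\})$ be any $\mathcal{K},\delta$-algebra. Define a candidate morphism $m : H \to A$ by recursion on the (well-founded) structure of elements of $H$: $m(con_i(\gamma,\theta,r)) = c_i^A(\gamma,\,\lambda f\,\xi.\,m(\theta_{f,\xi}),\,r)$ and $m(\hcomp[\psi\mapsto u]\,u_0) = h_A(\psi,\,m(u_0),\,\lambda f\,r.\,m(u_{f,r}))$; this recursion terminates because the $\theta$- and $u$-families are indexed over elements that are structurally smaller, exactly as in \citet{CTTHITS}. I would then verify: (i) $m$ is natural, by induction on the structure of the argument, using naturality of $c_i^A$ and $h_A$ and the case split in $H^{pre}(g)$ (the interesting point is that when a formal constructor or $\hcomp$ falls onto a face where it reduces to a boundary term, $m$ still agrees because $c_i^A$ and $h_A$ satisfy the boundary/base equations, and $\evalbsem{-}$ in $H^{pre}$ agrees with $\evalbsem{-}$ for $A$ under $m$); (ii) $m$ preserves the algebra structure, which holds definitionally from how $m$ was defined; (iii) uniqueness: any algebra morphism $m'$ must satisfy the same two recursive equations, hence agrees with $m$ on all of $H$ by induction on structure. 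The final ``as a consequence'' clause is then the special case $A$ an arbitrary cubical type equipped with a $\mathcal{K},\delta$-algebra structure.

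The main obstacle I expect is not the recursion itself but the careful handling of the face-collapse cases in verifying naturality of $m$ (and well-definedness of $c_i^H$): one must check that the interpretation $\evalbsem{-}$ used to define $H^{pre}$'s restriction maps is coherent with the boundary-term equational theory $\beq$ (so that the value on a collapsed face does not depend on how the collapse is presented), and that this interpretation commutes with substitution of the constructor arguments --- this is the semantic analogue of Lemma~\ref{lem:bound:inter:eq} and its substitution property, and it is the place where the bookkeeping is genuinely delicate. Since the excerpt explicitly says ``a direct generalization of the method employed in \cite{CTTHITS}'', I would cite that construction for the routine parts and only spell out the points where the extra generality of our schema (nested recursive arities $\bar\Xi$, previous constructors appearing in boundaries) requires a genuinely new argument.
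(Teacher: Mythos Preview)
Your proposal is correct and follows essentially the same approach as the paper: define the algebra morphism $e:H\to A$ by structural recursion on elements, sending $con_i(\bar t,\bar a,\bar r)$ to $c_i^A(\bar t,\overline{e(a)},\bar r)$ and $\hcomp[\phi\mapsto u]\,u_0$ to $h_A(\phi,e(u),e(u_0))$, verifying naturality simultaneously with the definition. The paper's proof is in fact much terser than yours---it omits the explicit verification that $H$ itself is an algebra, the uniqueness argument, and the face-collapse case analysis you flag---so your plan is a faithful and more complete elaboration of the same method.
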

\begin{proof}
Let $A$ be a $\mathcal{K}, \delta$-algebra with fibrancy structure $h_A$ and constructors semantic $c_i$.
We are tasked with producing a natural transformation $e:H\to A$, and we define $e(u)$ by induction on the structure of $u$:
In the case where $u = con_i(\bar{t}, \bar{a}, \bar{r})$, we define $e(u) := c_i(\bar{t}, \overline{e(a)}, \bar{r})$, and in the case $u= \hcomp[\phi \mapsto u] u_0$ we define $e(u):= h_A(\phi, e(u), e(u_0))$.
In both cases, $e(x)$ is $e$ applied levelwise to the family $x$.
While making this definition, we have to verify naturality at each stage, which follows in each case from the conditions on the $c_i$'s or the computational behaviour of the fibrancy structures.
\end{proof}

\begin{lemma}\label{lem:semHIT:hascomp}
The presheaf $H$ carries a composition structure.
\end{lemma}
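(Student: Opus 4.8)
The plan is to follow the strategy of \citet{CTTHITS}, splitting the composition structure on $H$ into a homogeneous composition operation $\hcomp_H$ and a transport operation $\trans_H$, and then recombining them into $\comp^i_{H[\delta]}$ via the formula recalled in Appendix~\ref{appendix:trans}.

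Homogeneous composition is essentially free. By construction $H^{pre}$ already contains formal $\hcomp$ elements $\hcomp[\psi \mapsto u]\,u_0$, and the subpresheaf $H \subseteq H^{pre}$ was carved out precisely so that the families appearing in such an element are natural. Hence for $\phi : \F$, an element $u_0$ of $H$, and a compatible partial path $u$ valued in $H$, the element $\hcomp[\phi \mapsto u]\,u_0$ again lies in $H$; set $h_H(\phi, u_0, u)$ to be this element. That it equals $u_0$ at the base and equals $u(1)$ on extent $\phi$ is immediate from the definition of the restriction maps $H^{pre}(g)$ (the clause returning $u_{g,1}$ when $\psi(g\cdot\bar r)$ holds). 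In other words, $h_H$ is exactly the fibrancy structure already present as part of the data making $H$ the initial $\mathcal K,\delta$-algebra of Lemma~\ref{lem:initial:alg}.

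The substance of the proof is the definition of $\trans_H$. Given $\delta : \I \to \den\Delta$, a face $\phi$ on which $\delta$ is constant, and an element $u_0$ of $H[\delta(0)]$, we define $\trans^i_{H[\delta]}\,\phi\,u_0$ in $H[\delta(1)]$ by recursion on the structure of $u_0$. If $u_0 = con_\ell(\gamma,\theta,r)$, we first transport the non-recursive arguments $\gamma$ along $\delta$ using the composition structure on the cubical family $\den{\Gamma_\ell}$ (available because all types in the telescopes $\Gamma_\ell, \Xi_\ell$ are interpreted as families in $\fibCwF$, their $\later$-subterms being handled by Theorem~\ref{thm:earlier:comp}), obtaining $\gamma'$ over $\delta(1)$ together with a path $\tilde\gamma$ from $\gamma$ to $\gamma'$; then we transport the recursive arguments following the recipe for transport in function types~\cite{HCompNote}, composing transport in the now varying argument type $\den{\Xi_{\ell,k}}[\delta,\tilde\gamma]$ with a recursive call to $\trans_H$ on the output — this recursion is well-founded since each $\theta_{f,\xi}$ is a proper subtree of $con_\ell(\gamma,\theta,r)$ — and $r$ is left untouched, as $\den\Psi$ consists of interval variables only and so does not depend on $\delta$. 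This produces $con_\ell(\gamma',\theta',r)$. If $u_0 = \hcomp[\psi\mapsto u]\,u_0'$ we stipulate that $\trans_H$ commute with homogeneous composition, as in Section~3.2 of \citet{CTTHITS}, applying $\trans_H$ recursively to $u_0'$ and to each component of $u$. One then checks naturality of $\trans_H$ at each stage and that $\trans^i_{H[\delta]}\,\phi\,u_0 = u_0$ whenever $\phi$ holds; the latter holds because on $\phi$ the family $\delta$ is constant, so all transports in the argument types degenerate to identities.

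I expect the main obstacle to be boundary coherence in the constructor case: one must verify that $con_\ell(\gamma',\theta',r)$ still satisfies the boundary condition $e_\ell$, i.e.\ that $\evalbsem{e_\ell}$ is respected by $\trans_H$. This reduces to showing that $\trans_H$ commutes with the semantic constructors $c_i^H$ and with $h_H$, which must be proved by the same induction that defines $\trans_H$ — exactly the delicate bookkeeping of \citet{CTTHITS}, replayed in $\PSh{\CubeCat\times\TimeCat}$; the extension to the time dimension is harmless since $\sI$, $\sF$ and $\den\Psi$ are all constant in the $\TimeCat$-component, so the cubical arguments apply levelwise. Finally, with $\trans_H$ and $\hcomp_H$ in hand and satisfying their mutual compatibility equations, the formula of Appendix~\ref{appendix:trans} defines $\comp^i_{H[\delta]}$, which by the standard argument of \citet{CTTHITS} is a CCHM composition structure (equivalently, satisfies the conditions of Lemma~\ref{lem:comp:struct:alternative}).
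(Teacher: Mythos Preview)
Your proposal is correct and follows essentially the same approach as the paper: split composition into $\hcomp$ (which is built into $H$ as the fibrancy structure) and $\trans$ (defined by structural recursion on elements, following the recipe of Appendix~\ref{appendix:trans} and \citet{CTTHITS}), then recombine via Lemma~5 of \citet{CTTHITS}. The paper's proof sketch phrases the definition of $\trans_H$ as ``by initiality'' rather than explicit structural recursion, but this is the same thing, and your more detailed account of the constructor and $\hcomp$ cases, including the boundary-coherence obligation, accurately reflects what that sketch unpacks to.
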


The proof of this lemma uses the fact that $\Delta$, $\Gamma_i$ and $\Xi_{i,j}$ all carry transport structures.
Given those we define a transport structure on $H$ by initiality,
following the syntactic description of the action of $\trans$ on
constructors in Section~\ref{appendix:trans}. Combining this with the
homogeneous composition structure we conclude the proof by Lemma 5 of
\cite{CTTHITS}.

\begin{proposition}\label{lem:semHIT:depElim}
The presheaf $H$ supports the dependent elimination principle of $\textsf{H}$.
\end{proposition}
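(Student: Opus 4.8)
The plan is to run the standard ``displayed algebra'' argument of \citet{CTTHITS} one more time, now for the dependent eliminator — which in this zero-clock instance is exactly the usual elimination principle of \citet{CavalloHarper}. So fix $\delta : \den\Delta$, a family $D$ over $\HIT H \delta$ carrying a composition structure, and for each constructor $(\ell_i,(\Gamma_i;\bar\Xi_i;\Psi_i;\varphi_i;e_i))$ a method $u_i$ of the type prescribed by the elimination-list judgement, i.e.\ an element of $D$ over $\con_{\ell_i}(\gamma,\bar x,\bar\imath)$ that restricts on $\varphi_i$ to the appropriate boundary interpretation $\evalbsynclock{\delta}{\mathcal E}{\bar x\mapsto\bar y}{e_i}$. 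The first step is to repackage this data as a \emph{displayed $\mathcal K,\delta$-algebra}: a presheaf over $\Delta'$ with a fibrancy structure and semantic constructors that project compatibly to those of $H$.

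Concretely I would take $\tilde H \defeq \compr{H}{D}$, the comprehension of $D$ over $H$, so an element over $(I,x,\rho)$ is a pair $(h,d)$ with $d\in D(I,x,\rho,h)$. Its semantic constructor $c_i^{\tilde H}$ sends $((\gamma,\_),\theta',\bar r)$ — where the family $\theta'_{f,\xi}=(\theta_{f,\xi},\,\mathrm{IH}_{f,\xi})$ records both a recursive argument $\theta_{f,\xi}\in H$ and an induction hypothesis $\mathrm{IH}_{f,\xi}\in D$ over it — to the pair $\bigl(c_i^{H}(\gamma,\theta,\bar r),\,u_i(\gamma,\theta,\mathrm{IH},\bar r)\bigr)$. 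Its fibrancy structure $h_{\tilde H}(\phi,(u_0,d_0),p)$ takes first component $h_H(\phi,u_0,p_1)$, with $p_1$ the first projection of the partial path $p$, and second component obtained by composing $d_0$ along $D$ over the homogeneous filler of the box $(u_0,p_1)$ in $H$; this is precisely the semantic reading of the $\comp^i_{D[v\,i]}$ appearing in the $\hcomp$-clause of the $\beta$-rules of Figure~\ref{fig:hits:clockelim}, and it is here that the full composition structure on $D$ (not merely $\hcomp$) is used. Checking that $\tilde H$ is a $\mathcal K,\delta$-algebra reduces to the boundary coherence conditions, which split componentwise: the first components are exactly the conditions already verified for $H$, and the second follow from the prescribed boundaries of the methods $u_i$ once one establishes a semantic counterpart of Lemma~\ref{lem:evalbsyn:typing}, namely that the boundary interpretation $\evalbsem{-}$ computed inside $\tilde H$ projects, in its second component, to the interpretation of $\evalbsynclock{\delta}{\mathcal E}{\bar x\mapsto\bar y}{e_i}$ in $D$. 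That is a routine induction over the boundary-term grammar $x\,\bar u$, $\con_j$, $\hcomp$, systems.

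With $\tilde H$ an algebra, Lemma~\ref{lem:initial:alg} supplies a unique algebra morphism $s:H\to\tilde H$. The first projection $\pi:\tilde H\to H$ is visibly an algebra morphism, so $\pi\circ s:H\to H$ is one as well, whence $\pi\circ s=\id_H$ by the uniqueness part of initiality. Therefore $s(h)=(h,\,\mathsf{elim}(h))$ for a natural assignment $h\mapsto\mathsf{elim}(h)\in D(h)$, which is the required dependent section; naturality of $\mathsf{elim}$ is inherited from that of $s$. The computation rules are read off the algebra-morphism equations for $s$: preservation of $c_i^H$ gives $\mathsf{elim}(\con_{\ell_i}(\gamma,\bar a,\bar r))=u_i(\dots,\overline{\mathsf{elim}\circ a},\dots)$, and preservation of the fibrancy structure gives the $\hcomp$-clause; the clause for the derived composition $\comp$ then follows because $\comp$ is built from $\trans$ and $\hcomp$ exactly as in Appendix~\ref{appendix:trans}, using that $D$, $\Delta$, $\Gamma_i$ and $\Xi_{i,j}$ all carry transport structure, as in the proof of Lemma~\ref{lem:semHIT:hascomp}.

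The main obstacle I expect is the displayed composition/transport structure on $\tilde H$: one must equip $\compr{H}{D}$ with a fibrancy structure — and, together with a displayed transport, a full composition structure — that is strictly compatible with the one on $H$ and whose action on constructors matches the $\beta$-rule of Figure~\ref{fig:hits:clockelim} on the nose, including the filler $v$. This requires transporting $D$-data along fillers in $H$ and checking the relevant equalities hold judgementally. Everything else — initiality, the semantic boundary lemma, and extracting the $\beta$-rules — is bookkeeping of the same kind already carried out by \citet{CTTHITS}.
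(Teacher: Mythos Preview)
Your proposal is correct and is precisely the argument the paper intends: the paper's proof is the single line ``This follows from Lemma~\ref{lem:initial:alg} as in~\cite{CTTHITS}'', and what you have written is a faithful unfolding of that citation --- the displayed/total algebra $\tilde H = \compr H D$, initiality giving a section $s$, and uniqueness forcing $\pi\circ s=\id_H$. Your identification of the main technical point (the strictly compatible fibrancy/transport structure on $\tilde H$ matching the $\hcomp$-$\beta$ rule via the filler $v$) is also where the work in \cite{CTTHITS} lies, so there is no gap.
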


This follows from Lemma~\ref{lem:initial:alg} as in~\cite{CTTHITS}.

In addition to Lemma~\ref{lem:semHIT:hascomp} and Proposition~\ref{lem:semHIT:depElim},
we need to verify that $H$ supports the formation and introduction rules, as well as the judgemental equalities
These verifications are routine.

\subsubsection{Verifying the principle of induction under clocks}

The principle of induction under clocks is verified in the model by the following theorem:
\begin{theorem}\label{thm:clockielim}
Consider a HIT $\istype{\Delta} {\HIT H {}}$, a type family
\[\istype{\Gamma, \forall \bar\kappa . \HIT H {(\capp[\bar\kappa]\delta)}}{D}\] and a term
$\hastype{\Gamma} {\delta} {\forall \bar\kappa . \Delta}$. Suppose that we
semantically have an elimination list, i.e., for each $i$ we have a term
\[u_i : \den{D[\lambda \bar\kappa . \con_i (\capp[\bar\kappa] \gamma, \bar{\capp[\bar\kappa] x}, \psi)]}\]
over the context
\begin{align*}
\left\llbracket \arraycolsep=1.4pt
\begin{array}{l}
	\Gamma,
	\gamma  : \forall \bar\kappa . \Gamma_i[\capp[\bar\kappa] \delta],
    \bar{x} : \bar{\forall\bar\kappa.\Xi_i[\capp[\bar\kappa] \delta,\capp[\bar\kappa] \gamma] \to \HIT H {(\capp[\bar\kappa] \delta)}},\\
	\bar{y} : \bar{\left( \forall \bar\kappa. \Xi_i[\capp[\bar\kappa]\delta,\capp[\bar\kappa] \gamma] \right)
	\to D[\lambda \bar\kappa. \lambda \xi. \capp[\bar\kappa] x\,\capp[\bar\kappa] \xi]},
	\psi : \Psi_i \vdash
\end{array} \right\rrbracket.
\end{align*}
Then we can construct a term of $\den{D}$ over
$\den{\Gamma.\forall\bar\kappa. \HIT H {(\capp[\bar\kappa] \delta)}}$.
\end{theorem}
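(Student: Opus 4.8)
The plan is to reduce the statement to the semantic fact recalled in Section~\ref{sec:model}: that $\forall\bar\kappa.(-)$ is modelled as an $\opcat\omega$-limit, and that the structure maps of this limit cannot change the outermost constructor of a HIT. Concretely, the interpretation $\den{\forall\bar\kappa.\HIT H{(\capp[\bar\kappa]\delta)}}$ at a stage is, up to isomorphism, a limit of a chain whose stages are the interpretations of $\HIT H{}$ at the successive time-object restrictions. By Lemma~\ref{lem:initial:alg}, $\den{\HIT H{}}$ is built from formal constructor elements $con_i(\gamma,\theta,r)$ and formal $\hcomp$ elements, and by the computation of the restriction maps in that construction, the reindexing maps along morphisms of the time category send a $con_i$-element to a $con_i$-element (and an $\hcomp$-element to an $\hcomp$-element) unless a boundary face becomes satisfied. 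So an element of $\den{\forall\bar\kappa.\HIT H{(\capp[\bar\kappa]\delta)}}$ is, coherently across the chain, either everywhere of the form $\lambda\bar\kappa.\con_i(\ldots)$ or everywhere of the form $\lambda\bar\kappa.\hcomp(\ldots)$.

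The key steps, in order. First I would make precise the $\opcat\omega$-limit description of $\den{\forall\bar\kappa.\HIT H{(\capp[\bar\kappa]\delta)}}$ and observe, using the explicit presheaf $H$ from Section~\ref{sec:hits:sem}, that the outermost syntactic shape ($con_i$ versus $\hcomp$, and in the $con_i$ case the label $i$) is preserved by all restriction maps, hence is a well-defined invariant of an element of the limit; this is the substance of the ``structure maps cannot change the outermost constructor'' remark. Second, using this invariant I would perform an induction over the structure of $u : \forall\bar\kappa.\HIT H{(\capp[\bar\kappa]\delta)}$, mirroring the syntactic induction-under-clocks rule. In the base case $u = \lambda\bar\kappa.\con_i(\capp[\bar\kappa]\gamma,\bar{\capp[\bar\kappa]x},\psi)$, the data extracted from $u$ are exactly a $\gamma:\forall\bar\kappa.\den{\Gamma_i}$, a vector $\bar x$, a vector $\bar i:\den{\Psi_i}$, and—by recursion under the limit—induction hypotheses $\bar y$; feeding these to $u_i$ yields an element of $\den{D}$, and the boundary-coherence conditions on the $u_i$ (expressed via $\evalbsem{-}$, matching $\evalbsynclock{\delta}{\mathcal E}{\bar x\mapsto\bar y}{e_i}$) guarantee this is compatible with the cases where a face of $\phi_i$ holds. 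In the $\hcomp$ case I would use Theorem~\ref{thm:dep:right:adj}/the composition structure on $D$ together with $\mathsf{hfill}$ over $\forall\bar\kappa.\HIT H{(\capp[\bar\kappa]\delta)}$, exactly as in the second judgemental equality of Figure~\ref{fig:hits:clockelim}. Third, I would check that the construction is natural/stable under reindexing in $\Gamma$, so that it assembles into a genuine term over $\den{\Gamma.\forall\bar\kappa.\HIT H{(\capp[\bar\kappa]\delta)}}$, and that it validates the stated $\beta$-equalities on constructor and $\hcomp$ inputs.

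The main obstacle I expect is the coherence bookkeeping for the boundary conditions across the limit. Because the boundary $e_i$ may itself mention earlier constructors and recursive arguments, the transformation $\evalbsem{e_i}$ interleaves the $u_j$ for $j<i$ and the induction hypotheses $\bar y$; one must show that the element produced from $u_i$ agrees on the extent $\phi_i$ with the element the induction would produce by recursing into $e_i$—and this must hold not just at one stage but compatibly along all restriction maps of the $\opcat\omega$-chain, including when the chain maps collapse a constructor to its boundary. This is the semantic analogue of Lemma~\ref{lem:evalbsyn:typing} and Lemma~\ref{lem:bound:inter:eq}, and I would handle it by the same strategy: prove a strengthened statement by induction on the grammar of boundary terms, tracking an auxiliary environment $\hat\gamma$ for the extra interval/recursive variables introduced in the $\con$ and $\hcomp$ cases, and invoking the $\beq$-equational theory of boundary terms (Figure~\ref{fig:eq:boundary}) to discharge the systems and the $\hcomp$-base equalities. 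The remaining verifications—that the resulting map is natural in $\Gamma$ and satisfies the judgemental equalities—are routine once the boundary coherence is in place.
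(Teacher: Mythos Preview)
Your proposal is correct and follows essentially the same approach as the paper: the $\opcat\omega$-limit description of $\forall\bar\kappa$, the observation that the tick structure maps preserve the outermost constructor (because they are the identity on the cube component and boundaries $\phi_i$ only involve interval variables), structural induction on the $0$th component of the limit element, and the naturality verification including the case where a general morphism triggers a boundary reduction. Two small remarks: the reference to Theorem~\ref{thm:dep:right:adj} in the $\hcomp$ case is unnecessary (you only need the composition structure on $D$ itself), and your worry that ``chain maps collapse a constructor to its boundary'' does not arise---the tick maps never do this, which is precisely why the induction is well-founded; the boundary-collapse case only appears in the naturality check for general morphisms $f$, where it is handled (as you say) by an induction on the grammar of $e$.
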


\begin{proof}
The semantics of clock quantification of $A$ is given by the semantic $\Pi$ over the clock object, $\Pi \clk . \den{A}$.
It is shown in \cite{GDTTmodel} this is equivalent at $I,x,\rho$ to the limit
$$\lim_n \den{A}(I, x+ \lambda_n, \iota \cdot \rho, \lambda)$$
with structure maps given by the map $\tick^\lambda : (I, x+ \lambda_{n+1}) \to (I, x+ \lambda_n)$. Here $(\mathcal{E},f) + \lambda_n = (\mathcal{E}\cup \{ \lambda\} ,f[\lambda \mapsto n])$.
The map $\tick^\lambda$ is given by the identity maps, so it can be viewed as letting a single time step pass on the clock $\lambda$.
Nominally such a family depends on the choice of fresh clock $\lambda$, but of course this dependence goes away since we have isomorphisms $x+\lambda_n \cong x+\lambda'_n$ in the time category.
Hence we will suppress the change of $\lambda$, as long as these are fresh meaning that the environment variable is of the form $\iota \cdot \rho$.

The action of morphisms on HIT's in the model can only change the constructor type by acting on the cube
component. As a consequence of this, the structure of elements in $\den{\forall \kappa . \HIT H {(\capp
\delta)}}$ are determined at each level by their structure at 0.

It is sufficient to work with $d:\den{\forall \kappa . \Delta}(I,x)$, with the proof for
general $\Gamma$ following by pulling back along the term $\hastype{\Gamma} {d} {\forall \kappa.
\Delta}$.
We start by making explicit the data of the assumed terms $u_i$. The input to a $u_i$ is as
follows:
\begin{itemize}
\item[(a)]
We have a family $t = (t_n) \in \den{\forall \kappa. \capp{\Gamma_i}}(I,x,d)$.
\item[(b)]
For each $j$ we have a family
\[a=(a^j_n) \in \den{\forall\kappa .\Xi_{i,j}
[\capp \delta, \capp \gamma] \to \HIT H {(\capp \delta)}}(I, x, d, t).\]
This can be further unfolded as follows: We have for each $n$,
$f: (I, x+\lambda_n) \to (J,x')$ and
$$z \in \den{\Xi_{i,j}[\capp \delta, \capp \gamma]}(J,x',f\cdot d_n,f\cdot t_n, f(\lambda))$$
an element $a^j_{n,f,z} \in \den{\HIT H {(\capp\delta)}}(J,x',f\cdot d_n, f(\lambda))$.
\item[(c)]
The next family is indexed by $j$ as above, a morphism $g:(I,x) \to (J,x')$ and a family
\[z:=(z_n) \in \den{\forall \kappa . \Xi_{i,j}[\capp\delta,\capp \gamma]}(J,x',g\cdot d, g\cdot t).\]
Note here that this means we have a family $h:=(h_n)\in \den{\forall \kappa. \HIT H {(\capp \delta)}}
(J,x',g\cdot d)$ given by $h_n := a^j_{n, g + \lambda_n, z_n}$. We then have an element
$b^j_{n,z} \in \den{D}(J,x',g\cdot d,h)$, and we denote the family by $b$.
\item[(d)]
An element $r\in \Psi_i(I)$.
\end{itemize}
When given this, we have an element
$$(\con_i(t_n, a_n, r))_n \in \den{\forall \kappa. \HIT H {(\capp \delta)}}(I,x,d)$$
and we assume that we can inhabit $D$ over this element, i.e., that we have
$$u_i(t,a,b,r) \in \den{D}(I,x,d,(\con_i(t_n, a_n, r))_n).$$

We need to inhabit
$\den{D}$ at an arbitrary element
\[h \in \den{\forall \kappa. \HIT H {(\capp \delta)}}(I,x,d).\]
Observe that the structure of the family $h$
is completely determined by the structure of $h_0$. This holds because the family
is compatible with the tick maps, and these are trivial in the cube component which
means in particular that they cannot change the outer constructor of an element of
the HIT. This means that we only have to inhabit it at families either of the form
$(\con_i(t_n, a_n, r))$ or of the form $(\hcomp[\phi \mapsto u^n] \, u^n_0)$. We can
do this by induction on the structure, since it is the same for each component of the
family, and can therefore assume that we can inhabit $\den{D}$ at structurally simpler
families. Explicitly, the induction is done on the structure of the 0 component in the
family, so that the inductive hypothesis says that whenever we have a family $(h'_n)$
such that $h'_0$ is structurally simpler than $h_0$, we can inhabit $\den{D}$ over this
family. We denote the element over a family $h$ by $\alpha(h)$.

We consider first the case where we are given a family
$$(\con_i(t_n, a_n, r))_n \in \den{\forall \kappa. \HIT H {(\capp \delta)}}(I,x,d).$$
Note that $(t_n)$ is forced to be compatible with tick maps and that $r$ is constant.
This means that we have $(t_n)_n \in \den{\forall\kappa. \Gamma_i[\capp \delta]}(I,x,d)$
and $r\in \Psi(I)$, and we have the data of (a) and (d).
The implied typing of $a^j_n$ is a family over $f:(I,x+\lambda_n) \to
(J,x')$ and $z\in \den{\Xi_{i,j}[\capp \delta, \capp \gamma]}(J,x',f\cdot d_n, f\cdot t_n, f(\lambda))$
of elements $a^j_{n,f,z}\in \den{\HIT H {(\capp\delta)}}(J,x',f\cdot d_n, f(\lambda))$. This is exactly
the typing of (b), so we need only to construct the family in (c). We need this to be a family over
$g:(I,x) \to (J,x')$ and $z:=(z_n) \in \den{\forall \kappa .
\Xi_{i,j}[\capp\delta,\capp \gamma]}(J,x',g\cdot d, g\cdot t)$. Given this, we again define
the family
\[h:=(h_n)\in \den{\forall \kappa. \HIT H {(\capp \delta)}} (J,x',g\cdot d)\]
by $h_n := a^j_{n, g + \lambda_n, z_n}$. Note that this family is compatible with ticks, and
hence we obtain $\alpha((h_n)) \ in \den{D}(J,x',g\cdot d,h)$ by inductive hypothesis,
since $h_0$ is given by a component of the structurally simpler $a$. We define
$$\alpha((\con_i(t_n, a_n, r))_n) = u_i(t, a, \alpha((h_n)), r).$$

Consider now the $\hcomp$ case. Since $\den{D}$ is a type, it carries a composition structure.
The strategy for inhabiting $\den{D}$ will be to construct an appropriate composition problem
and apply the inductive hypothesis. Consider a family $h=(\hcomp [\phi \mapsto u^n]\, u^n_0) \in
\den{\forall \kappa. \HIT H {(\capp \delta)}}(I,x,d)$. Here the implicit typing is as follows:
For each $n$, we have for each $f: I \to J$ such that $f \cdot \phi = \top$ and $s \in \Psi(J)$
an element $u^n_{f,s} \in \den{\HIT H {(\capp \delta)}}(J,x+\lambda_n, f\cdot d_n, \lambda)$ and
an element $u^n_0 \in \den{\HIT H {(\capp \delta)}}(J,x+\lambda_n, d_n, \lambda)$. The $u^n_0$
family has the right shape for us to employ our induction hypothesis to inhabit $\den{D}$ over
it. Because the shape of $f$ and $s$ does not depend on the time component, we also have for each
$f$ a family in $n$ given by $(u^n_{f,s})$. All these families have 0 component structurally simpler,
hence we can inhabit $\den{D}$ over each of them by $\alpha((u_0^n))$ and $\alpha((u_{f,i}^n))$. We will
write $\alpha((u^n))$ for the family given at $f,i$ by $\alpha((u_{f,i}^n))$.
We can then define $\alpha(h) = \comp_{D[v] }[\phi \mapsto \alpha((u^n))]\, \alpha((u^n_0))$ where $v  =
(\mathsf{hfill}_{\den{\HIT H {\capp \delta}}}[\phi \mapsto u^n]\, u^n_0)$.

It must be verified that $\alpha$ is natural in morphisms of the category while
defining it. Both because we need $\alpha$ to be natural for the desired conclusion
and because the recursive calls used to define $\alpha$ require it. Let $f:(I,x)\to
(J,x')$. Naturality means that $f\cdot \alpha(h) = \alpha(f\cdot h)$, which unfolds as follows:
\begin{itemize}
\item
Consider a family $h=(\con_i(t_n,a_n,r))$. By definition $\alpha(h) = u_i(t, a, b, r)$ where
$b_{g,z} = \alpha((a^j_{n, g + \lambda_n, z_n}))$.
If $f$ does not trigger a boundary condition for $\con_i$ we have
$$f\cdot \alpha(h) = u_i (f\cdot t, f . a, f . b, f\cdot r) \text{ and }$$
$$(f\cdot h)_n = \con_i((f+\lambda_n)\cdot t_n, (f+\lambda_n). a_n, (f+\lambda_n)\cdot r).$$
Recall here that $g.a_n$ is defined to be the family $a_n$ is given by composing the map index
with $g$ and acting on the input index.
Applying $\alpha$ to this yields $u_i( ((f+\lambda_n)\cdot t_n), ((f+\lambda_n) . a_n),
b', f \cdot r)$ where $b'$ is defined from $((f+\lambda_n) . a_n)$ as above, i.e.,
$$b'_{g,z} = \alpha((  ((f+\lambda_n) .a)_{n, g + \lambda_n, z_n})).$$
By definition of
the morphism action on a limit we have $((f+\lambda_n)\cdot t_n) = f\cdot (t_n)$ and $((f+\lambda_n).
a_n) = f. (a_n)$. The action of $f$ and $f+\lambda_n$ are the same on $r$, since they have the same cubical
component. It remains to argue that $f. b$ coincides with the $b'$ defined from the family $f. a$, which
already a part of the naturality of the $u_i$'s. We check this componentwise for the family, so we have to
show that $(f.b)_{g,z}  = b'_{g,z}$. The left hand side is by definition equal to $b_{gf, f\cdot z} =
\alpha((a^j_{n, gf + \lambda_n,(f+\lambda_n)\cdot z_n}))$, while the right hand side is equal to
$\alpha((((f+\lambda_n).a)^j_{g+\lambda_n, z_n}))$. Unfolding the right hand side we get
$\alpha(((a)^j_{(g+\lambda_n)(f+\lambda_n),(f+\lambda_n)\cdot z_n}))$, which is the same as the left hand
side since $(g+\lambda_n)(f+\lambda_n) = gf + \lambda_n$.

If $f$ triggers a boundary condition of $\con_i$ we have by assumption that
$$f\cdot \alpha(h) = f\cdot u_i(t,a,b,r) = \den{\evalbsynclock{\delta}{\mathcal{E}}
{\bar{x} \mapsto \bar{y}}{e}}(f\cdot t,f\cdot a,f\cdot  b,f\cdot r)$$
where $e$ is the boundary term of $\con_i$.
On the other hand, the structure of $\HIT H {}$ means that $\alpha(f\cdot h) =
\alpha((\evalbsem{e}((f+\lambda_n) \cdot t_n,(f+\lambda_n) . a_n,(f+\lambda_n) \cdot r))$ where
$\evalbsem{e}$ is the boundary evaluation for $\HIT H {(\capp\delta)}$. Equality of these two can be shown
by induction on the structure of $e$.

\item
Consider a family $h=(\hcomp[\phi \mapsto u^n]\, u_0^n)$. By definition $\alpha(h) =
\comp_{D[v]}[\phi \mapsto \alpha((u^n))]\, \alpha((u_0^n))$. If $f$ does not make $\phi$ true
we have to verify that
$$f\cdot \comp_{D[v]}[\phi \mapsto \alpha((u^n))]\, \alpha((u_0^n)) = \alpha(f\cdot h).$$
We unfold the right hand side:
\begin{align*}
& \alpha(f\cdot h) = \\
& \textcolor{white}{==} \alpha((\hcomp[f\cdot \phi \mapsto (f+\lambda_n) . u^n]\, (f+\lambda_n) \cdot u_0^n)) \\ &=
\comp_{D[f\cdot v]}[f \cdot\phi \mapsto
\alpha(((f+\lambda_n) . u^n))] \\
& \textcolor{white}{====================}\alpha(((f+\lambda_n) \cdot u_0^n))\\
&=\comp_{D[f\cdot v]}[f\cdot \phi \mapsto f . \alpha((u^n))]\, f\cdot\alpha((u_0^n))\\ &=
f\cdot \comp_{D[v]}[\phi \mapsto \alpha((u^n))]\, \alpha((u_0^n))
\end{align*}
The second to last step above follows by the inductive hypothesis that $\alpha$ is natural on
simpler families, and the last line is exactly the left hand side of the original equation.

If $f$ makes $\phi$ true, both composition in $D$ and $\hcomp$ reduce to appropriate
components of the input, i.e., we have $\alpha(f\cdot h) = \alpha((u_{(f+\lambda),1}^n))$
and $f\cdot \alpha(h) = \alpha((u^n))_{f,1}$ for the right and left hand sides of the
equality we are trying to show. These agree by definition of $\alpha((u^n))_{f,1}$.

\end{itemize}

This means that $\alpha$ assembles into a term of $\den{D}$ over $\den{\Gamma.\forall\kappa.
\HIT H {(\capp \delta)}}$ as desired. It is moreover clear from the definition of $\alpha$
that it will validate the computation rules of the induction principle, since the value on
constructor families was given directly by the appropriate $u_i$ and the value on $\hcomp$
families was given by the appropriate composition in $D$.
\end{proof}

\end{document}